\newtheorem{claim}{Claim}[section]
\newcommand{\complexityclass}[1]{\textsf{\textup{\small #1}}\xspace}
\newcommand{\np}{\complexityclass{NP}}
\newcommand{\conp}{\complexityclass{co-NP}}
\newcommand{\pls}{\complexityclass{PLS}}
\definecolor{TUMBlue}{HTML}{0065BD}
\tikzstyle{arrow}=[->,>=latex]
\tikzset{ 
protovertex/.style={
  draw,
  circle,
  inner sep=0,
  minimum size=.2cm}
}
\tikzset{ 
indicatorvertex/.style={
  fill,
  circle,
  inner sep=0,
  minimum size=.1cm}
}
\newcommand{\partition}{\pi}
\newcommand{\spartition}{\partition_0} 
\newcommand{\util}{v}
\newcommand{\potential}{\Lambda}
\newcommand{\N}{\mathcal{N}}
\newcommand{\existpb}[2]{\textsc{$\exists$-#1-Sequence-#2}\xspace}
\newcommand{\convergpb}[2]{\textsc{$\forall$-#1-Sequence-#2}\xspace}
\newcommand{\singleton}{singleton partition\xspace}
\newcommand{\decpb}[3]{\medskip
\noindent\begin{tabularx}{\columnwidth}{lX}
\hline
\multicolumn{2}{l}{#1} \\
\hline
Input: & #2 \\ 
Question: & #3 \\ 
\hline
\end{tabularx}
\medskip
}
\newcolumntype{L}[1]{>{\raggedright\let\newline\\\arraybackslash\hspace{0pt}}m{#1}}
\newcolumntype{C}[1]{>{\centering\let\newline\\\arraybackslash\hspace{0pt}}m{#1}}
\newcolumntype{R}[1]{>{\raggedleft\let\newline\\\arraybackslash\hspace{0pt}}m{#1}}
\newcolumntype{N}{@{}m{0pt}@{}}
\DeclareFontFamily{U}{matha}{\hyphenchar\font45}
\DeclareFontShape{U}{matha}{m}{n}{
      <5> <6> <7> <8> <9> <10> gen * matha
      <10.95> matha10 <12> <14.4> <17.28> <20.74> <24.88> matha12
      }{}
\DeclareSymbolFont{matha}{U}{matha}{m}{n}
\DeclareMathSymbol{\positiveResultSymbol}{0}{matha}{"60}
\DeclareMathSymbol{\negativeResultSymbol}{0}{matha}{"61}
  \providecommand\BibTeX{{% 
    \normalfont B\kern-0.5em{\scshape i\kern-0.25em b}\kern-0.8em\TeX}}}
\begin{document}

\title{Reaching Individually Stable Coalition Structures}

\author{Felix Brandt}
\email{brandtf@in.tum.de}
\affiliation{% 
  \institution{Department of Informatics, Technical University of Munich}
  \country{Germany}
}
\author{Martin Bullinger}
\email{bullinge@in.tum.de}
\affiliation{% 
  \institution{Department of Informatics, Technical University of Munich}
  \country{Germany}
}
\author{Ana\"elle Wilczynski}
\email{anaelle.wilczynski@centralesupelec.fr}
\affiliation{% 
  \institution{MICS, CentraleSup\'elec, Universit\'e Paris-Saclay}
  \country{France}
}

\begin{abstract}
The formal study of coalition formation in multi-agent systems is typically realized in the framework of hedonic games, which originate from economic theory. The main focus of this branch of research has been on the existence and the computational complexity of deciding the existence of coalition structures that satisfy various stability criteria. The actual process of forming coalitions based on individual behavior has received little attention. In this paper, we study the convergence of simple dynamics leading to stable partitions in a variety of established classes of hedonic games including anonymous, dichotomous, fractional, and hedonic diversity games. The dynamics we consider is based on individual stability: an agent will join another coalition if she is better off and no member of the welcoming coalition is worse off.

Our results are threefold. First, we identify conditions for the (fast) convergence of our dynamics. To this end, we develop new techniques based on the simultaneous usage of multiple intertwined potential functions and establish a reduction uncovering a close relationship between anonymous hedonic games and hedonic diversity games. Second, we provide elaborate counterexamples determining tight boundaries for the existence of individually stable partitions. Third, we study the computational complexity of problems related to the coalition formation dynamics. In particular, we settle open problems suggested by \citet{BoJa02a}, \citet{BBS14a}, and \citet{BoEl20a}.
\end{abstract}

\begin{CCSXML}
<ccs2012>
   <concept>
       <concept_id>10010147.10010178.10010219.10010220</concept_id>
       <concept_desc>Computing methodologies~Multi-agent systems</concept_desc>
       <concept_significance>500</concept_significance>
       </concept>
   <concept>
       <concept_id>10003752.10003809</concept_id>
       <concept_desc>Theory of computation~Design and analysis of algorithms</concept_desc>
       <concept_significance>500</concept_significance>
       </concept>
 </ccs2012>
\end{CCSXML}

\ccsdesc[500]{Computing methodologies~Multi-agent systems}
\ccsdesc[500]{Theory of computation~Design and analysis of algorithms}

\keywords{coalition formation, hedonic games, individual stability, game dynamics}

\maketitle

\section{Introduction}

Coalitions and coalition formation are central concerns in the study of multi-agent systems as well as cooperative game theory. 
Typical real-world examples include individuals joining clubs or societies such as orchestras, choirs, or sport teams, countries organizing themselves in international bodies like the European Union (EU) or the North Atlantic Treaty Organization (NATO), students living together in shared flats, or employees forming unions. 
The formal study of coalition formation is often realized using so-called hedonic games, which originate from economic theory. In these games, the central goal is to identify coalition structures (henceforth partitions) that satisfy various stability criteria based on the agents' preferences over coalitions, that is, over subsets of agents. A partition is defined to be stable if single agents or groups of agents cannot gain by deviating from the current partition by means of leaving their current coalition and joining another coalition, or forming a new one. The permitted deviations can be constrained by other agents, for instance by means of contracts with an existing coalition or by the necessity of consent when admitting a new member. These constraints lead to a large body of stability concepts \citep{AzSa15a}. Two important and well-studied questions in this context concern the existence of stable partitions in restricted classes of hedonic games and the computational complexity of finding a stable partition. However, stability is only concerned with the end-state of the coalition formation process and ignores how these desirable partitions can actually be reached. Essentially, an underlying assumption in most of the existing work is that there is a central authority that receives the preferences of all agents, computes a stable partition as an end-state, and has the means to establish this partition among the agents. By contrast, our work focuses on simple dynamics, where starting with some partition (e.g., the partition of singletons), agents deliberately decide to join and leave coalitions based on their individual preferences. We study the convergence of such a process and the stable partitions that can arise from it. For example, in some cases the only partition satisfying a certain stability criterion is the grand coalition consisting of all agents, while the dynamics based on the agents' individual decisions can never reach this partition and is doomed to cycle. 

The dynamics we consider is based on \emph{individual stability}, a natural notion of stability going back to \citet{DrGr80a}: an agent will join another coalition if she is better off and no member of the welcoming coalition is worse off.
Individual stability is suitable to model the situations mentioned above. For instance, by Article 49 of the Treaty on European Union, admitting new members to the EU requires the unanimous approval of the current members. Similarly, by Article 10 of their founding treaty, unanimous agreement of all parties is necessary to become a member of the NATO. Also, for joining a choir or orchestra it is often necessary to audition successfully, and joining a shared flat requires the consent of all current residents.
This distinguishes individual stability from Nash stability, which ignores the consent of members of the welcoming coalition. 

The analysis of coalition formation processes provides more insight in the natural behavior of agents and the conditions that are required to guarantee that desirable social outcomes can be reached without a central authority. 
Similar dynamic processes have been studied for matchings, which can be seen as a special domain of coalition formation where only coalitions of size~2 are allowed \citep[e.g.,][]{RoVa90a,AbRo95a,
BrWi19a}. 
More recently, dynamics of coalition formation have also come under scrutiny in the context of hedonic games \citep{BFFMM18a,HVW18a,CMM19a,FMM21a,BBT22a}.
While coalition formation dynamics
are an object of study worthy for itself, they can also be used as a means to design algorithms for the computation of stable outcomes, and have been implicitly used for this purpose before. For example, the algorithm by \citet{BoEl20a} for finding an individually stable partition in hedonic diversity games predefines a promising partition and then reaches an individually stable partition by running the dynamics from there. Similarly, the algorithm by \citet{BoJa02a} for finding an individually stable partition on games with ordered characteristics, a generalization of anonymous hedonic games, runs the dynamics using a specific sequence of deviations starting from the singleton partition.

In addition, the study of dynamics offers a more fine-grained view in games where the static concepts of stability only gives limited information. We will see that there exist classes of hedonic games in which individually stable partitions are guaranteed to exist but dynamics cycle can be doomed to cycle when executing from specific starting configurations. Two important examples are fractional hedonic games with non-negative weights and hedonic diversity games. For the former, the grand coalition is individually stable and for the latter, individually stable partitions always exist \citep{BoEl20a}. By contrast, our results show that dynamics can behave very differently, illustrating that dynamics can offer a broader picture on stability.

In many cases, the convergence of the dynamics of deviations follows from the existence of 
potential functions, whose local optima form individually stable states.
Generalizing a result by \citet{BoJa02a}, \citet{Suks15a} has shown via a potential function argument that an individually stable---and even a Nash stable---partition always exists in subset-neutral hedonic games, a generalization of symmetric additively-separable hedonic games.
Using the same potential function, it can straightforwardly be shown that the dynamics converge.\footnote{By inclusion, convergence also holds for symmetric additively-separable hedonic games. Symmetry is essential for this result to hold since an individually stable partition may not exist in additively-separable hedonic games, even under additional restrictions~\citep{BoJa02a}.}

\iffalse 
Since the class of subset-neutral hedonic games include symmetric additively-separable hedonic games, convergence is also guaranteed for the latter.\footnote{Symmetry is essential for this 
result to hold since an IS partition may not exist in additively-separable hedonic games, even under additional restrictions~\citep{BoJa02a}.}
\fi

Another example are hedonic games with the common ranking property, a class of hedonic games where preferences are induced by a common global order \citep{FaSc88a}. Here, the dynamics associated with core-stable deviations is known to converge to a core-stable partition that is also Pareto-optimal\footnote{A partition is Pareto-optimal if every partition preferred to this partition by some agent is worse for another agent.}, thanks to a potential function argument~\citep{CaKi19a}.
The same potential function implies convergence of the dynamics based on individual stability.

\iffalse 
Note that this positive result cannot be extended to NS-deviations since an NS partition may even not exist (the simplest example is an instance with two agents $1$ and $2$ and the following preferences: $1:\{1,2\}\succ\{1\}$ and $2:\{2\}\succ\{1,2\}$). 
\fi

In this paper, we study the coalition formation dynamics based on individual stability for a variety of classes of hedonic games, including anonymous hedonic games (AHGs), hedonic diversity games (HDGs), fractional hedonic games (FHGs), and dichotomous hedonic games (DHGs).
Computational questions related to the dynamics are investigated in two ways: the existence of a \emph{path to stability}, that is the existence of a sequence of deviations that leads to a stable state, and the \emph{guarantee of convergence} where every sequence of deviations should lead to a stable state.
The former gives an optimistic view on the behavior of the dynamics and may be used to motivate the choice of reachable stable partitions (we can exclude ``artificial'' stable partitions that may never naturally form).
If such a sequence can be computed efficiently, it enables a central authority to coordinate the deviations towards a stable partition.
On the other hand, guaranteed convergence allows agents to reach a stable outcome without further coordination. This provides strong stability guarantees under more pessimistic assumptions on the agents' behavior. Whether we obtain positive or negative results concerning the convergence of the dynamics depends on various dimensions of the input concerning the initial partition, restrictions imposed on the agents' preferences, and the selection of deviations. We identify clear boundaries to computational tractability based on these specifications.
Our main results are summarized as follows.
\begin{itemize}% 
\item In AHGs, the dynamics is guaranteed to converge for (naturally) single-peaked preferences. 
On the other hand, we provide a 15-agent example showing the non-existence of individually stable partitions in general AHGs. 
The previously known smallest counterexample by \citet{BoJa02a} requires 63 agents and the existence of smaller examples was an acknowledged open problem \citep[see][]{Ball04a,BoEl20a}.
\item We provide an elaborate reduction for HDGs that eventually establishes a close relationship to AHGs and show guaranteed convergence of the dynamics for strict and naturally singled-peaked preferences when starting from the \singleton and agents' deviations satisfy a weak constraint.
In contrast to empirical evidence reported by \citet{BoEl20a}, we show that all of the above assumptions are essential for the convergence result. In particular, cycling of the dynamics is possible if the starting partition is the \singleton and preferences are restricted to be strict and naturally single-peaked.
\item In FHGs, the dynamics is guaranteed to converge for simple symmetric preferences when starting from the \singleton or when preferences form an acyclic digraph.
On the other hand, we show that individually stable partitions need not exist in general symmetric FHGs,
which was left as an open problem by \citet{BBS14a}.
\item For each of the above classes and DHGs, we identify computational boundaries. In particular, we show that deciding whether there is a sequence of deviations leading to an individually stable partition is \np-hard while deciding whether all sequences of deviations lead to an individually stable partition is \conp-hard. Some of these results hold under preference restrictions and even when starting from the singleton partition.
\end{itemize}
For the sake of readability, some (parts of) proofs are omitted from the body of the paper, they can nevertheless be found in the appendix.

\section{Preliminaries and Model}

Let $N=[n]=\{1,\dots,n\}$ be a set of $n$ agents.
The goal of a coalition formation problem is to partition the agents into different disjoint coalitions according to their preferences. 
Formally, a solution is a \emph{partition} of $N$, i.e., a subset $\partition\subseteq 2^N$ such that $\bigcup_{C\in \partition} C = N$, and for every pair $C,D\in \partition$, it holds that $C = D$ or $C\cap D = \emptyset$. An element of a partition is called \emph{coalition}, and given a partition~$\partition$, we denote by $\partition(i)$ the coalition containing agent~$i$. 
Two prominent partitions are the \emph{\singleton} $\partition$ given by $\partition(i)=\{i\}$ for every agent $i\in N$, and the \emph{grand coalition} $\partition$ given by $\partition=\{N\}$.

Since we focus on dynamics of deviations, we assume that there exists an initial partition $\spartition$, which could be a natural initial state (such as the singleton partition) or the outcome of a previous coalition formation process.

\subsection{Classes of Hedonic Games}

In a hedonic game, the agents only express preferences over the coalitions to which they belong, i.e., there are no externalities.
Let $\N_i$ denote all possible coalitions containing agent $i$, i.e., $\N_i=\{C\subseteq N: i\in C\}$.
A hedonic game is defined by a tuple $(N,(\succsim_i)_{i\in N})$ where $\succsim_i$ is a weak order over $\N_i$ which represents the preferences of agent $i$, i.e., $C\succ_i C'$ means that agent $i$ strictly prefers coalition $C$ to coalition $C'$, and $C\sim_i C'$ means that agent $i$ is indifferent between coalitions $C$ and $C'$.
Since $|\N_i|=2^{n-1}$, the preferences 
are rarely given explicitly, but rather in some concise representation.
These representations give rise to several classes of hedonic games:
\begin{itemize}% 

\iffalse 
\item \emph{Subset-neutral hedonic games}~\citep{Suks15a}: The preferences of the agents for a coalition are separable according to a global evaluation of every subcoalition to which they belong, i.e., there exists a global utility function $\util:2^N\setminus\{\emptyset\}\rightarrow\mathbb{R}$ such that $\partition(i)\succsim_i\partition'(i)$ iff $\sum_{C\subseteq\partition(i),\,i\in C}\util(C)\geq\sum_{C\subseteq\partition'(i),\,i\in C}\util(C)$ for every agent $i$.
An important class of such games are \emph{symmetric additively-separable hedonic games}.
\item \emph{Hedonic games with common ranking property}~\citep{FaSc88a}: The agents have correlated preferences in the sense that there exists a global weak order $\unrhd$ over all coalitions $2^N\setminus\{\emptyset\}$ such that $\partition(i)\succsim_i\partition'(i)$ iff $\partition(i)\unrhd\partition'(i)$ holds for every agent $i\in N$.
\fi 

\item \emph{Anonymous hedonic games (AHGs)}~\citep{BoJa02a}: The agents only care about the size of the coalition they belong to, i.e., 
for each agent $i\in N$, there exists a weak order $\succsim_i^S$ over integers in $[n]$ (superscript $S$ for sizes) such that $\partition(i) \succsim_i \partition'(i)$ iff $|\partition(i)| \succsim_i^S |\partition'(i)|$.
\item \emph{Hedonic diversity games (HDGs)}~\citep{BEI19a}: The agents are divided into two different types (or colors). We call them red and blue agents and they are represented by the subsets $R\subseteq N$ and $B\subseteq N$, respectively, such that $N=R\cup B$ and $R\cap B = \emptyset$.
Each agent only cares about the proportion of red agents present in her own coalition, i.e., for each agent $i\in N$, there exists a weak order $\succsim_i^F$ over $\{\frac{p}{q}:p\in [|R|]\cup\{0\}, q\in[n]\}$ (superscript $F$ for fractions) such that $\partition(i) \succsim_i \partition'(i)$ iff $\frac{|R\cap \partition(i)|}{|\partition(i)|} \succsim_i^F \frac{|R\cap \partition'(i)|}{|\partition'(i)|}$.\footnote{To keep the notation concise, we abuse notation by omitting the superscripts of $\succsim_i^S$ and $\succsim_i^F$ when they are clear from the context. Hence, $\succsim_i$ may also denote agent $i$'s preference order over coalition sizes in case of an AHG, or over fractions in case of an HDG.}
\item \emph{Fractional Hedonic Games (FHGs)}~\citep{ABB+17a}: The agents evaluate a coalition by assessing how much they like each of its members on average, i.e., 
for each agent $i$, there exists a utility function $\util_i:N\rightarrow\mathbb{R}$ where $\util_i(i)=0$ such that $\partition(i)\succsim_i\partition'(i)$ iff $\frac{\sum_{j\in \partition(i)}\util_i(j)}{|\partition(i)|}\geq \frac{\sum_{j\in \partition'(i)}\util_i(j)}{|\partition'(i)|}$.
An FHG can be represented by a weighted directed graph 
$G=(N,E)$ where, for the sake of readability, only non-null utilities are mentioned, i.e., $(i,j)\in E$ iff $\util_i(j)\neq 0$, and the weight of an arc $(i,j)\in E$ is equal to $\util_i(j)$.
An FHG is \emph{symmetric} if $\util_i(j)=\util_j(i)$ for every pair of agents~$i$ and~$j$; since two opposite arcs have the same weight, 
the representation by a graph can be simplified by directly considering a weighted 
undirected graph $G=(N,E)$ with weights $\util(i,j)$ on each edge $\{i,j\}\in E$.
An FHG is \emph{simple} if $\util_i:N\rightarrow\{0,1\}$ for every agent~$i$; since all arcs have the same weight, the representation by a graph can be simplified by directly considering an unweighted directed graph $G=(N,E)$ where $(i,j)\in E$ iff $\util_i(j)=1$.
We say that an FHG is \emph{simple asymmetric} if, for every pair of agents~$i$ and~$j$, $\util_i(j)\in\{0,1\}$ and $\util_i(j)=1$ implies $\util_j(i)=0$, i.e., it can be represented by an asymmetric directed graph.
\item \emph{Dichotomous hedonic games (DHGs)}: The agents only approve or disapprove coalitions, 
i.e., for each agent $i$ there exists a utility function $\util_i:\N_i\rightarrow \{0,1\}$ such that $\partition(i)\succsim_i\partition'(i)$ iff $\util_i(\partition(i))\geq \util_i(\partition'(i))$.
When the preferences are represented by a propositional formula, such games are called \emph{Boolean hedonic games}~\citep{AHLW16a}.
\end{itemize}
An anonymous game (or hedonic diversity game) is \emph{generally single-peaked} if there exists a linear order $>$ over integers in $[n]$ (or over ratios in $\{\frac{p}{q}:p\in [|R|]\cup \{0\}, q\in[n]\}$) such that for each agent $i\in N$ and each triple of integers $x,y,z\in [n]$ (or $x,y,z\in \{\frac{p}{q}:p\in |R|\cup \{0\}, q\in[n]\}$) with $x>y>z$ or $z>y>x$, $x\succ_i^S y$ implies $y \succsim_i^S z$ (or $x\succ_i^F y$ implies $y \succsim_i^F z$).
The obvious linear order $>$ that comes to mind is, of course, the natural order over integers (or over rational numbers). We refer to such games as \emph{naturally single-peaked}.
Clearly, a naturally single-peaked preference profile is generally single-peaked but the converse is not true. 

Up to our best knowledge, most papers on hedonic games only use naturally single-peaked preferences as single-peaked preferences \citep[see, e.g.,][]{BoJa02a}, since they deal with preferences over integers or fractions.
We have introduced generally single-peaked preferences, by considering any type of given order over integers or fractions, in the spirit of the initial definition of \citet{Blac48a} for social choice, where alternatives do not have an inherent order. 
By generalizing the single-peaked notion on hedonic games, we aim to capture the frontier of tractability with respect to this concept and thus strengthen our results. 

To improve the presentation when we display preferences for a large number of agents, we also write them in the form $i \colon X \succ Y \succ Z$, where $i\in N$ is an agent, and $X$, $Y$, and $Z$ are coalitions, or depending on the context, integers (representing sizes of coalitions in AHGs) or rational numbers (representing fractions in HDGs). 
If there are multiple agents $C\subseteq N$ with identical, we also denote there preferences in the form $C \colon X \succ Y \succ Z$. 
Note that we usually do not fully specify preferences in this notation, but focus on the relevant part of the preferences which affects a certain example or proof.

\subsection{Dynamics of Individually Stable Deviations}

Starting from the initial partition, agents can leave and join coalitions in order to improve their well-being.
We focus on unilateral deviations, which occur when a single agent decides to move from one coalition to another.
A \emph {unilateral deviation} performed by agent $i$ transforms a partition $\partition$ into a partition $\partition'$ 
where $\pi(i)\neq\pi'(i)$ and, for all agents $j\neq i$, it holds that $\pi(j)\setminus\{i\} = \pi'(j)\setminus\{i\}$.

Since agents are assumed to be rational, agents only engage in a unilateral deviation if it makes them better off, i.e., $\partition'(i)\succ_i \partition(i)$.
Any partition in which no such deviation is possible is called \emph{Nash stable (NS)}.

This type of deviation can be refined by additionally requiring that no agent in the welcoming coalition is worse off when agent $i$ joins. Formally,
a unilateral deviation performed by agent $i$ who moves from coalition $\partition(i)$ to $\partition'(i)$ is an \emph{IS deviation} if $\partition'(i)\succ_i \partition(i)$ and $\partition'(i)=\partition'(j)\succsim_j \partition(j)$ for all agents $j\in\partition'(i)$. 
A partition in which no IS deviation is possible is called \emph{individually stable (IS)}.
Clearly, an NS partition is also IS.\footnote{It is possible to weaken the notion of individual stability even further by also requiring that no member of the \emph{former} coalition of agent $i$ is worse off. The resulting stability notion is called contractual individual stability and guarantees convergence of our dynamics.}
In this article, we focus on dynamics based on IS deviations.
By definition, all terminal states of the dynamics have to be IS partitions.

We are mainly concerned with whether sequences of IS deviations can reach or always reach an IS partition.
If there exists a sequence of IS deviations leading to an IS partition, i.e., a path to stability, then although the agents perform myopic deviations, they can optimistically reach (or can be guided towards) a stable partition.
The corresponding decision problem is described as follows.

 \decpb{\existpb{IS}{[HG]}}{Instance of a particular class of hedonic games [HG], initial partition $\spartition$}{Does there exist a sequence of IS deviations starting from $\spartition$ leading to an IS partition?}

In order to provide some guarantee, we also examine whether \emph{all} sequences of IS deviations terminate.  Whenever this is the case, we say that the dynamics \emph{converges}. 
 The corresponding decision problem is described below.

 \decpb{\convergpb{IS}{[HG]}}{Instance of a particular class of hedonic games [HG], initial partition $\spartition$}{Does every sequence of IS deviations starting from $\spartition$ reach an IS partition?}
 
We mainly investigate this problem via the study of its complement: given a hedonic game and an initial partition, does there exist a sequence of IS deviations that cycles? 

A common idea behind hardness reductions concerning these two problems is to exploit the existence of instances without an IS partition or instances which allow for cycling starting from a certain partition. These can be used to create prohibitive subconfigurations in reduced instances. 

\section{Anonymous Hedonic Games}

\citet{BoJa02a} showed that IS partitions always exist in AHGs under naturally single-peaked preferences, and proved that this does not hold under general preferences, by means of a 63-agent counterexample. Here, we provide a counterexample that only requires 15 agents and additionally satisfies general single-peakedness. Note that smaller counterexamples were repeatedly asked for in the literature \citep{Ball04a,BoEl20a}.

\begin{proposition}\label{prop:noIS-AHG}
There may not exist an IS partition in AHGs even when $n=15$ and the agents have strict and generally single-peaked preferences.
\end{proposition}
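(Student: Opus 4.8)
The plan is to construct an explicit anonymous hedonic game on $n=15$ agents and verify, by a case analysis on possible coalition-size profiles, that no partition is individually stable. Since preferences in an AHG are entirely determined by preferences over coalition sizes in $[15]$, I would not try to treat all $15$ agents distinctly but rather partition them into a small number of \emph{groups}, each group consisting of agents with identical size-preferences. The idea is to set up a cyclic ``chasing'' structure: group $A$ wants to be in a coalition of a size that group $B$ can only provide by contributing members, but once $B$ contributes, $A$'s coalition has the wrong size for $B$'s own members, forcing further movement, and so on around a cycle. The global single-peaked order will just be the natural order on $[15]$, and each group's peak is chosen so that the preferences are strict and single-peaked while still encoding the obstruction.

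Concretely, I would first fix the group sizes (a few blocks summing to $15$, e.g. something like blocks of sizes giving rise to small target coalitions such as $2$, $3$, $4$, $5$) and, for each block, specify a strict single-peaked ranking over $[15]$ whose top few entries are the ``relevant'' sizes that drive the dynamics; all larger sizes are ranked in decreasing order after the peak, and all sizes below the smallest relevant one are ranked in increasing order up to the peak, which automatically guarantees natural single-peakedness. The key design constraint is the IS condition: an agent $i$ moving into coalition $C$ requires $|C|+1 \succ_i |\pi(i)|$ \emph{and} every member $j$ of $C$ must have $|C|+1 \succsim_j |C|$. So to block a deviation I can either make the mover not want it, or make some incumbent of the target coalition strictly prefer the current size to the enlarged one — this second lever is what distinguishes the IS obstruction from a Nash-stability obstruction and is what makes small counterexamples possible.

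The verification step is a finite but somewhat delicate case distinction over the multiset of coalition sizes that can appear in a partition of $15$ agents, further refined by how the groups are distributed among those coalitions. For each candidate partition I would exhibit one agent with an available IS deviation. To keep this manageable I would prove a few reduction lemmas first: that in any IS partition no coalition can have a size that all of its members strictly dislike relative to a size reachable by a single departure (ruling out many size profiles immediately), and symmetric statements ruling out coalitions that a non-member would want to join with unanimous consent. These lemmas prune the search tree to a handful of ``stubborn'' configurations, each of which I then kill by hand.

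The main obstacle is the design problem itself: finding size-preferences for only a handful of groups that are simultaneously (i) strict, (ii) single-peaked with respect to the natural order, and (iii) collectively force a deviation out of \emph{every} partition of exactly $15$ agents — there is very little slack, since single-peakedness sharply limits which size can be each group's strict favorite and forces a monotone tail on both sides of the peak. Getting the arithmetic of $15 = $ (block sizes) to line up with the target coalition sizes so that the chasing cycle actually closes, while no ``escape'' partition (e.g. the grand coalition, or some coarse partition into two or three large coalitions) is stable, is where the real work lies; once the instance is pinned down, the remaining case check is routine though tedious, and is the part I would relegate to the appendix.
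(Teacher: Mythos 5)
Your plan has a fatal flaw at its core: you commit to taking the single-peaked axis to be \emph{the natural order on $[15]$}. But an IS partition always exists in AHGs with naturally single-peaked preferences --- this is the classical result of Bogomolnaia and Jackson that the proposition is explicitly contrasted with, and the paper even strengthens it to guaranteed convergence of the dynamics (Theorem~\ref{thm:convAHG}). So no counterexample of the kind you describe can exist; the word ``generally'' in the statement is doing essential work. The paper's construction uses the axis $1>2>3>13>12>15>14>11>10>\dots>4$, and the whole point of the design is that the two deviating agents have peaks on sizes ($2$ and $13$) that are adjacent on this artificial axis but far apart numerically, so that one agent's improving move (e.g.\ from size $13$ down to size $3$) simultaneously pushes the other agent from a tolerable size to an intolerable one. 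Your observation that single-peakedness ``sharply limits which size can be each group's strict favorite and forces a monotone tail on both sides of the peak'' is exactly the obstruction that kills the natural-order version; the resolution is to choose a different order, not to fight the arithmetic.

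Beyond this, the proposal is a design brief rather than a proof: you never exhibit the groups, their sizes, or their rankings, and you defer ``the design problem itself'' and the closing of the chasing cycle to future work. The structural intuition (a few blocks with identical preferences, a cyclic chase, pruning lemmas that rule out most size profiles) does match the paper's proof in spirit --- the paper uses blocks $\{1\},\{2\},\{3,4\},\{5,\dots,15\}$, establishes four structural claims (agents $3,4$ end up together in a coalition of size at most $3$; agents $5$--$15$ end up together; etc.), and then shows the handful of surviving partitions form a $6$-cycle of IS deviations between agents $1$ and $2$. But without the concrete instance and with the wrong axis, the argument as proposed cannot be completed.
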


\begin{proof}
Let us consider an AHG with 15 agents with the following (incompletely specified) preferences ([\dots] denotes an arbitrary order over the remaining coalition sizes). The preferences for agents $5$ through $15$ are identical.

\medskip
{\centering
\begin{tabular}{r*{15}{c}}
$1:$ & $2$ & $\succ$ & $3$ & $\succ$ & $13$ & $\succ$ & $12$ & $\succ$ & $1$ & $\succ$ & $[\dots]$ \\
$2:$ & $13$ & $\succ$ & $3$ & $\succ$ & $2$ & $\succ$ & $1$ & $\succ$ & $12$ & $\succ$ & $[\dots]$ \\
$3,~4:$ & $3$ & $\succ$ & $2$ & $\succ$ & $1$ & $\succ$ & $[\dots]$ \\
$5,\dots,15:$ & $13$ & $\succ$ & $12$ & $\succ$ & $15$ & $\succ$ & $14$ & $\succ$ & $11$ & $\succ$ & $10$ & $\succ$ & $\dots$ & $\succ$ & $1$ \\
\end{tabular}}
\medskip

These preferences can be completed to be generally single-peaked with respect to axis $1>2>3>13>12>15>14>11>10>\dots>4$. 

Note that in an IS partition,
\begin{enumerate}[label=\textit{(\roman*)}]
\item agents $3$ and $4$ are in a coalition of size at most 3: Otherwise, they prefer to deviate to be alone.
\item agents $5$ to $15$ are in the same coalition:  Suppose, for the sake of contradiction, that agents $5$ to $15$ are not in the same coalition. 
By $(i)$, at most two of them are with agent $3$ and at most two of them with agent $4$. 
Agents $1$ and $2$ cannot be in a coalition of size $12$ or $13$ unless at least $10$ of the $11$ agents $5$ to $15$ are with them (agents $3$ and $4$ cannot be in such a big coalition by $(i)$). In this case, any remaining agent from $5$ to $15$ would be with them because she would join them, otherwise. Hence, the assertion is true.
Therefore, we may assume that agents $1$ and $2$ are not in a coalition of size larger than $3$. Otherwise, they would deviate to be alone.
It follows that at most two agents from $\{5,\dots,15\}$ are with agent $1$ and at most two of them with agent $2$.
Then, in the worst case, there remain only three agents within $\{5,\dots,15\}$ who are not in a coalition with agents $1$, $2$, $3$, or $4$.
These three agents cannot enter into the other coalitions but they prefer to group together, forming a coalition of size three.
Afterwards, all the agents from $\{5,\dots,15\}$ that are in coalitions with agents $1$, $2$, $3$ or $4$ will deviate to join them because they prefer to be in bigger coalitions, and they can benefit from a coalition of size at least four by joining these remaining agents, whereas they are blocked in a coalition of size at most three, a contradiction.
\item agents $3$ and $4$ are in the same coalition: Suppose for the sake of contradiction that agents $3$ and $4$ are not in the same coalition. 
By $(i)$, none of them can belong to the big coalition containing the agents $5$ to $15$, which exists according to $(ii)$. 
Moreover, if they are both alone, then they have an incentive to group together, contradicting the stability.
Therefore, at least one of them must form a coalition with agents $1$ or $2$. 

If agents $1$ and $2$ are both with agent $3$ (or $4$) and agent $4$ (or $3$) is alone, then agent $1$ has an incentive to leave the coalition $\{1,2,3\}$ (or $\{1,2,4\}$) to join agent $4$ (or $3$), contradicting the stability.
Now, consider the case where one of agents $3$ and $4$, say agent $4$, is alone. If agent $3$ forms a coalition with agent $2$, then agent $4$ would join them by an IS deviation. If agent $3$ forms a coalition with agent $1$, then agent $2$ is in a coalition of size $12$ or size $1$ and would join agent $4$ by an IS deviation.

Therefore, each agent among $1$ and $2$ must be with either agent~$3$ or agent $4$. 
But, in such a case, the agent among $3$ and $4$, say $3$, who is with agent $1$ will move to the coalition with agent $2$ and agent $4$, contradicting the stability.
Therefore, agents $3$ and $4$ must be in the same coalition.
\item agents $1$ and $2$ cannot be both alone: Otherwise, they would deviate to group together.
\end{enumerate}

From the previous observations, we get that 
agents $3$ and $4$ must be together in a coalition, 
while agents $5$ to $15$ must be together in another coalition. 
The remaining question concerns the coalitions to which agents $1$ and $2$ belong.
It is not possible that both agents $1$ and $2$ are in a coalition with agents $3$ and $4$, otherwise it would contradict condition $(i)$.
If one agent among agents $1$ and $2$ is alone and the other one is with agents $5$ to $15$, then the alone agent can deviate to join them, contradicting the stability.
The remaining possible partitions 
are present in the following cycle of IS deviations (the deviating agent is written on top of the arrows).

{\centering\footnotesize
\begin{tikzpicture}
\node (p1) at (-4.5,0) {$\{\{1\},\{2,3,4\},\{5,\dots,15\}\}$};
\node (p2) at (0,0) {$\{\{2,3,4\},\{1,5,\dots,15\}\}$};
\node (p3) at (4.5,0) {$\{\{3,4\},\{1,2,5,\dots,15\}\}$};
\node (p4) at (4.5,-1.5) {$\{\{1,3,4\},\{2,5,\dots,15\}\}$};
\node (p5) at (0,-1.5) {$\{\{2\},\{1,3,4\},\{5,\dots,15\}\}$};
\node (p6) at (-4.5,-1.5) {$\{\{1,2\},\{3,4\},\{5,\dots,15\}\}$};
\draw[arrow] (p1) -- node[midway,above] {1}  (p2);
\draw[arrow] (p2) -- node[midway,above] {2}  (p3);
\draw[arrow] (p3) -- node[midway,right] {1}  (p4);
\draw[arrow] (p4) -- node[midway,above] {2}  (p5);
\draw[arrow] (p5) -- node[midway,above] {1}  (p6);
\draw[arrow] (p6) -- node[midway,left] {2}  (p1);
\end{tikzpicture}\par}

Hence, there is no IS partition in this instance.
\end{proof}

The construction in Proposition~\ref{prop:noIS-AHG} does not seem to leave room for improvements, and we conjecture that the counterexample may even be minimal, that is, an IS partition always exists when $n<15$. However, even when $n<15$ and IS partitions do exist, there may still be cycles in the dynamics.

\begin{proposition}\label{prop:cycleAHG-singleton}
The dynamics of IS deviations may cycle in AHGs even when starting from the \singleton or grand coalition, preferences are strictly generally single-peaked, and $n = 7$.
\end{proposition}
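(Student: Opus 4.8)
The plan is to build a small anonymous hedonic game on $n=7$ agents whose preferences are strictly generally single-peaked but which nonetheless admits a cycle of IS deviations reachable from both the \singleton and the grand coalition. The design principle mirrors Proposition~\ref{prop:noIS-AHG}: partition the agent set into a small group of ``core'' agents (say agents $1,2,3,4$) whose interaction creates the cycle, and a block of ``filler'' agents (agents $5,6,7$) whose preferences are chosen so that, once they clump together into a coalition of size~$3$, they never want to leave it and never let anyone in — effectively freezing them out of the dynamics. Concretely I would give agents $5,6,7$ a preference with peak at size $3$ that is strictly decreasing away from $3$, so that no agent can IS-deviate into $\{5,6,7\}$ (they are already worse off) and none of $5,6,7$ wishes to move. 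That reduces the problem on the remaining four agents to a cycle among coalition structures of a four-agent AHG, where the relevant sizes are small.

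First I would fix the single-peaked axis — something like $1 > 2 > 3 > \dots$ restricted appropriately, or a permuted axis as in the previous proof — and then hand-pick the size preferences of agents $1,2,3,4$ so that a cycle through four or six partitions of $\{1,2,3,4\}$ closes up, each arrow being a legal IS deviation: the deviator strictly improves, and every agent already in the target coalition is weakly better off (here, since agents are anonymous, ``weakly better off'' just means the new coalition size is weakly preferred to the old one for each incumbent). A natural candidate cycle alternates a single agent shuttling between being alone and joining a pair/triple, pushing the configuration around, analogous to the six-partition cycle drawn at the end of the proof of Proposition~\ref{prop:noIS-AHG} but now with the big filler coalition inert. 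Second, I would verify reachability from the \singleton: starting from all singletons, agents $5,6,7$ can successively merge into $\{5,6,7\}$ by IS deviations (each step is an improvement toward their peak and the incumbents also move toward their peak), and agents $1,2,3,4$ can be maneuvered by a specific sequence of IS deviations into one of the partitions lying on the cycle. Third, reachability from the grand coalition: from $\{N\}$, I would exhibit a sequence of IS deviations that splits off $\{5,6,7\}$ and brings $1,2,3,4$ onto the cycle — this requires that leaving the grand coalition be individually improving for the relevant agents, which I can arrange by making size $7$ sufficiently low in everyone's preference order.

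The main obstacle I anticipate is the simultaneous satisfaction of three competing constraints on the core agents' preferences: (a) the four prescribed deviation steps must each be \emph{individually stable}, not merely Nash, so every incumbent's consent condition must hold at every step; (b) the preferences must be \emph{strictly} generally single-peaked with respect to one fixed axis that works for all seven agents at once, which tightly couples the relative order of sizes appearing in the cycle; and (c) the cycle must actually be reachable from both designated starting points, which forces additional monotonicity along the entry paths. Balancing (a) and (b) is the delicate part, since single-peakedness on a common axis severely limits how one agent's ranking of sizes $\{1,2,3,4\}$ can differ from another's, and the cycle needs genuine disagreement to exist. I expect to resolve this exactly as in Proposition~\ref{prop:noIS-AHG} — by using a \emph{non-natural} axis (a permutation of the sizes) so that each core agent's peak and descent can be placed where the cycle needs it — and then to check the six (or four) deviation arrows and the two reachability sequences by direct inspection, which is routine once the numbers are chosen. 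I would present the game as an explicit preference table plus an axis, then a labeled cycle diagram in \texttt{tikz} exactly in the style already used, followed by the two short reachability arguments.
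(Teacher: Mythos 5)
There is a genuine gap, and it lies in your central design decision: freezing agents $5,6,7$ at a peak of size $3$ and running the cycle entirely on agents $1,2,3,4$. The paper's construction does the opposite — the block $\{5,6,7\}$ is an \emph{active} ingredient of the cycle, not an inert bystander. Agents $5,6,7$ are given peak $5$ with $5\succ 4\succ 3\succ 2\succ 1$, precisely so that they consent to agents $1$ and $2$ joining them; the cycle repeatedly visits the coalitions $\{1,5,6,7\}$ (size $4$) and $\{1,2,5,6,7\}$ (size $5$). These two large sizes are what make the construction work: each shuttling deviator must, for some size $s$ she occupies, satisfy $s\succ t \succ s-1$ for an intermediate size $t$ she occupies later (this is the only way she can be hurt by an abandonment and still have somewhere strictly better than $s-1$ to go). Realizing this pattern simultaneously for two interleaved deviators requires the sizes $\{p+1,p+2,\,q+1,\,r+1,r+2\}$ of the anchor coalitions (of base sizes $p,q,r$) to be pairwise distinct; in the $7$-agent example these are $p=0$, $q=2$ ($\{3,4\}$), $r=3$ ($\{5,6,7\}$), giving the distinct sizes $1,2,3,4,5$.

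If you confine the cycle to four agents, the only available sizes are $1,2,3,4$, and $2+p+q+r=4$ forces $p+q+r=2$; every choice of $(p,q,r)$ then makes at least two of the required sizes collide, so the interleaving $s\succ t\succ s-1$ cannot be set up for both deviators on a common strictly single-peaked axis. In other words, your ``reduced problem on the remaining four agents'' has no solution, and the whole plan collapses at that point regardless of how the axis is permuted. The fix is exactly what the paper does: give the filler block a peak at $5$ so that it welcomes the deviators and supplies the sizes $4$ and $5$, and choose agent $1$'s order as $2\succ 3\succ 5\succ 4\succ 1$ and agent $2$'s as $5\succ 3\succ 2\succ 1\succ 4$ on the axis $1>2>3>5>4>6>7$. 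Your reachability arguments (building the cycle configuration from the \singleton, and splitting the grand coalition) are fine in outline and match the paper's.
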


\begin{proof}
Let us consider an AHG with 7 agents with the following (incompletely specified) preferences ([\dots] denotes an arbitrary order over the remaining coalition sizes).

{\centering
\begin{tabular}{r*{11}{c}}
$1:$ & $2$ & $\succ$ & $3$ & $\succ$ & $5$ & $\succ$ & $4$ & $\succ$ & $1$ & $\succ$ & $[\dots]$ \\
$2:$ & $5$ & $\succ$ & $3$ & $\succ$ & $2$ & $\succ$ & $1$ & $\succ$ & $4$ & $\succ$ & $[\dots]$ \\
$3,4:$ & $3$ & $\succ$ & $2$ & $\succ$ & $1$ & $\succ$ & $[\dots]$ \\
$5,6,7:$ & $5$ & $\succ$ & $4$ & $\succ$ & $3$ & $\succ$ & $2$ & $\succ$ & $1$ & $\succ$ & $[\dots]$ \\
\end{tabular}\par}

They can be completed to be generally single-peaked with respect to the axis $1>2>3>5>4>6>7$.
We represent below a cycle of IS deviations. 

{\centering\footnotesize
\begin{tikzpicture}
\node (p1) at (-4,0) {$\{1,2\},\{3,4\},\{5,6,7\}$};
\node (p2) at (0,0) {$\{1\},\{2,3,4\},\{5,6,7\}$};
\node (p3) at (4,0) {$\{2,3,4\},\{1,5,6,7\}$};
\node (p4) at (4,-1) {$\{3,4\},\{1,2,5,6,7\}$};
\node (p5) at (0,-1) {$\{1,3,4\},\{2,5,6,7\}$};
\node (p6) at (-4,-1) {$\{2\},\{1,3,4\},\{5,6,7\}$};
\draw[arrow] (p1) -- node[midway,above] {$2$}  (p2);
\draw[arrow] (p2) -- node[midway,above] {$1$}  (p3);
\draw[arrow] (p3) -- node[midway,right] {$2$}  (p4);
\draw[arrow] (p4) -- node[midway,above] {$1$}  (p5);
\draw[arrow] (p5) -- node[midway,above] {$2$}  (p6);
\draw[arrow] (p6) -- node[midway,left] {$1$}  (p1);
\end{tikzpicture}\par}
This cycle can be reached from the \singleton or the grand coalition. Indeed, the partition $\{\{1,2\},\{3,4\},\{5,6,7\}\}$ can be reached from the \singleton by forming each coalition. It can also be reached by the grand coalition by having agents $1$, $2$, $3$, and $4$ leave and form their desired coalitions. 
\end{proof}

Note that $\{\{1\},\{3,5,6\},\{2,4,7\}\}$ is an IS partition in the example of the previous proposition.

We know that it is \np-complete to recognize instances for which an IS partition exists in AHGs, even for strict preferences~\citep{Ball04a}.
We prove that both checking the existence of a sequence of IS deviations ending in an IS partition and checking convergence are hard. 

\begin{restatable}{theorem}{existAHG}\label{thm:hardness-existencepath-AHG}
\existpb{IS}{AHG} is \np-hard and \convergpb{IS}{AHG} is \conp-hard, even for strict preferences.
\end{restatable}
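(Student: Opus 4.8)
The plan is to reduce from a known \np-complete problem and to reuse the structural ``gadget'' already available from Proposition~\ref{prop:noIS-AHG}, namely a 15-agent AHG with no IS partition, together with the cycling construction of Proposition~\ref{prop:cycleAHG-singleton}. Since recognizing whether an IS partition exists in an AHG is \np-complete even for strict preferences \citep{Ball04a}, the natural approach is to reduce from that recognition problem (or directly from the problem underlying Ballester's hardness proof). Given an instance $H$ of the recognition problem on agent set $N$, I would build an AHG $H'$ on an enlarged agent set together with an initial partition $\spartition$ so that the existence of a path to an IS partition in $H'$ from $\spartition$ is equivalent to the existence of an IS partition in $H$. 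The key design idea is to glue onto $H$ a ``switchable'' no-IS gadget $G$ (a copy of the 15-agent instance, with sizes shifted appropriately so that the two parts do not interfere): the gadget is activated unless the agents of the $H$-part have already settled into an IS configuration. Concretely, one routes a few ``trigger'' agents between the $H$-part and the gadget so that (a) if $H$ admits an IS partition, the dynamics can first drive the $H$-part into it, then deactivate the gadget and terminate in an IS partition of $H'$; (b) if $H$ has no IS partition, then no matter how the dynamics proceeds, either the $H$-part keeps moving forever or the gadget stays active and cycles, so no IS partition of $H'$ is ever reached. This yields \np-hardness of \existpb{IS}{AHG}.

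For the \conp-hardness of \convergpb{IS}{AHG}, I would use essentially the same reduction but complement it: take the complementary problem (no IS partition exists in $H$) and argue that $H$ has \emph{no} IS partition if and only if \emph{every} sequence of IS deviations from $\spartition$ in $H'$ fails to converge. The ``if an IS partition of $H$ exists then some bad sequence exists'' direction would instead be arranged by making the gadget always reachable and cyclable regardless of the $H$-part, so that a divergent sequence always exists when... (here the polarity has to be set up carefully): more precisely, I would design $H'$ so that \emph{some} cycling sequence exists iff $H$ has an IS partition is the wrong direction — rather, $H'$ should cycle on some sequence iff $H$ \emph{fails} to have one, so that ``every sequence converges'' iff $H$ has an IS partition, giving a reduction from the \coNP-complete complement of recognition. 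The cleanest way is to present one master construction and extract both statements from it, choosing the initial partition so that in the ``yes'' case of existence all paths still risk entering the gadget (for the convergence lower bound) while in the ``path-to-stability'' version a good path exists.

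The key steps, in order, would be: (1) fix the source problem as the \np-complete IS-existence recognition problem for strict AHGs from \citet{Ball04a}; (2) build a size-disjoint combination of the input game $H$ with a shifted copy of the 15-agent no-IS gadget of Proposition~\ref{prop:noIS-AHG}, padding with indifferent/dummy agents so that coalition sizes in the two parts never collide and preferences stay strict; (3) add a constant number of trigger agents whose preferences make the gadget ``dormant'' exactly when the $H$-part is in an IS configuration, and ``active'' (forced into the cycling behavior of Proposition~\ref{prop:cycleAHG-singleton}, or simply into the no-IS region) otherwise; (4) choose $\spartition$ (e.g.\ the singleton partition, or a specified partition) and verify the forward direction: a path to stability in $H'$ exists $\Rightarrow$ one can first reach an IS partition of $H$, then deactivate and freeze the gadget; (5) verify the backward direction: if $H$ has no IS partition, every configuration of $H'$ either admits an IS deviation inside the $H$-part or keeps the gadget active and hence permits a cycling deviation, so no terminal IS partition exists and moreover there is an infinite sequence; (6) re-read the same construction through its complement to extract the \conp-hardness of \convergpb{IS}{AHG}.

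The main obstacle I anticipate is step (3) together with the interference control in step (2): making the trigger mechanism robust so that the gadget is deactivated \emph{only} when the $H$-part is genuinely in an IS partition — not merely in a locally frozen but non-IS state — and ensuring that activating the gadget really does prevent \emph{all} paths from stabilizing (for the convergence direction) rather than merely \emph{some} path. Anonymity is a severe constraint here: agents only see coalition sizes, so the ``communication'' between the $H$-part and the gadget must be encoded purely through sizes, which forces careful size-shifting and padding and a delicate choice of the trigger agents' single-peaked-compatible preference orders. I expect the bulk of the proof (largely deferred to the appendix, consistent with the paper's stated convention) to be the case analysis verifying that the combined game's IS partitions are exactly the intended ones.
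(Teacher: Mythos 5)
There is a genuine gap here, and it sits exactly at the conceptual heart of the paper. You propose to reduce from the \np-complete recognition problem ``does $H$ admit an IS partition?'' and to argue that $H'$ has a path to stability iff $H$ has an IS partition. The backward direction of that equivalence requires that whenever $H$ has an IS partition, the dynamics \emph{starting from $\spartition$} can actually reach it (``first drive the $H$-part into it''). But existence of an IS partition does not imply its reachability by IS deviations --- distinguishing these two is one of the main points of the paper --- and Ballester's construction gives you no control over reachability. The paper avoids this entirely by reducing from (3,B2)-SAT and designing the instance from scratch so that, in the ``yes'' case, an explicit sequence of IS deviations from the constructed $\spartition$ (literal-agents joining clause coalitions, then variable coalitions, then variable-agents retreating into dormant gadget coalitions) provably terminates in an IS partition. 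Your second unresolved problem is the trigger mechanism itself: whether the $H$-part is in an IS configuration is a global property of the partition, and in an AHG the only information any agent can ``see'' is the size of coalitions she could join. A constant number of trigger agents cannot detect global stability of $H$ through sizes alone. The paper's triggers are instead \emph{local and per-variable}: agents $z_i,\overline z_i$ have preferences that directly chain the sizes $|Z_i|+2$ and $|\overline{Z_i}|+2$ (reachable only if both occurrences of one literal join the corresponding variable coalition, i.e., only under a consistent assignment) to the three gadget coalitions $G_i^1,G_i^2,G_i^3$ of pairwise-distinct, non-colliding sizes; failure of satisfiability forces both $z_i$ and $\overline z_i$ into the gadget, which then reproduces the six-state two-deviator cycle.

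The \conp{} part of your proposal does not go through as stated either. You cannot obtain it by ``complementing'' the first reduction: the paper needs a second, structurally different construction in which \emph{a cycle of IS deviations exists iff $\varphi$ is satisfiable} (reduction to the complement of \convergpb{IS}{AHG}). That cycle is not a single activated gadget but a long synchronized chain --- one deviating literal-agent per clause and two per variable, plus one extra agent $t$, alternating between consecutive coalitions arranged in one big cyclic order $K_1<\dots<K_{m+1}<Z_1^1<\dots<Z_p^2<K_1$ --- and the proof that any cycle must have exactly this form (hence must encode a satisfying assignment with the clause-deviators and variable-deviators disjoint) is where the work lies. Your polarity confusion in that paragraph is a symptom of the fact that the ``yes'' instances of the two reductions point in opposite directions (unsatisfiable $\Rightarrow$ forced cycling for the $\exists$ problem; satisfiable $\Rightarrow$ possible cycling for the $\forall$ problem), so a single master construction read two ways, as you suggest, will not deliver both results.
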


However, these hardness results do not hold under naturally single-peaked preferences, even if preferences may be weak.
Indeed, we show in the next proposition that \emph{every} sequence of IS deviations is finite under such a restriction.
We thus
complement a result by \citet{BoJa02a} 
who have proved that there always exists an IS partition if the preferences are naturally single-peaked with a dynamical version. \citet{BoJa02a} provide a constructive proof involving specific IS deviations starting
from the \singleton which constructs an IS partition that is additionally weakly Pareto-optimal. By contrast, dynamics can reach every IS partition, and therefore may end up in partitions that are not Pareto-optimal.

\begin{theorem}\label{thm:convAHG}
The dynamics of IS deviations always converges to an IS partition in AHGs for naturally single-peaked preferences.
\end{theorem}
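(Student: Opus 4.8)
The plan is to prove the stronger statement that \emph{every} sequence of IS deviations, starting from an arbitrary initial partition, is finite; this rules out cycles (the partition space is finite), and every terminal partition is IS by definition of the dynamics. The approach is to exhibit a lexicographically ordered tuple of potential functions that strictly decreases with each IS deviation.

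First I would record the structure of IS deviations under naturally single-peaked preferences. For an agent~$i$, let her \emph{peak interval} $P_i=[\underline{p}_i,\overline{p}_i]\subseteq[n]$ be the set of her most preferred coalition sizes; single-peakedness on the natural order forces this to be an interval. Three facts drive the argument. (a)~If $|\partition(i)|\in P_i$ (call~$i$ \emph{content}), then~$i$ has no IS deviation available, since no size is strictly better. (b)~If~$i$ performs an IS deviation from a coalition of size~$s$ to a coalition that grows from $t-1$ to~$t$, then either $s<\underline{p}_i$ and $t>s$ (a ``grower'' moving to a strictly larger coalition) or $s>\overline{p}_i$ and $t<s$ (a ``shrinker'' moving to a strictly smaller coalition), because preferences weakly increase below the peak and weakly decrease above it. (c)~Since the welcoming coalition grows by exactly one step on the axis and every member weakly consents, each member~$j$ of it satisfies $t\le\overline{p}_j$ unless~$j$ is already at or above her peak and locally indifferent between $t-1$ and~$t$; hence, besides the mover, the only agents whose standing can worsen are those left behind in the shrinking coalition.

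Next I would build the potential. The basic bookkeeping object is the ``staircase'' of coalition sizes, $c_k(\partition):=|\{C\in\partition:|C|\ge k\}|$ for $k=1,\dots,n$, which encodes the multiset of coalition sizes. An IS deviation from~$s$ to~$t$ transfers exactly one unit of this staircase from level~$s$ to level~$t$ --- upward for a grower, downward for a shrinker --- and leaves all other levels untouched. Because grower and shrinker moves push the size distribution in opposite directions, no single monotone function of the sizes alone can work; instead I would combine $(c_k)_k$, read in a suitable order of levels, with a second family of potentials measuring how far agents are displaced from their peak intervals, and arrange the lexicographic priority so that whichever component a given move increases is strictly dominated by an earlier component that the same move decreases --- this is the intertwined-potentials mechanism hinted at in the abstract. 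Concretely, I expect to process the levels from the largest coalition size downward and to pair, at higher priority, the damage a grower move can do at a high level with the progress of growers toward their peaks (and symmetrically for shrinker moves), then verify strict lexicographic decrease by a case analysis over grower moves and shrinker moves, using (b) and (c) to pin down exactly which agents change status at each step.

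The main obstacle is precisely this tension between growers and shrinkers, aggravated by two phenomena that break naive monotonicity: an agent may \emph{overshoot} her peak interval in one move (a grower landing above her peak, or a shrinker below it), so ``distance from the peak'' is not monotone; and a previously content agent can be temporarily displaced when a mover leaves her coalition and it shrinks, after which she may climb back, which is exactly the ingredient of a cycle. These are also the obstructions that make the analogous dynamics cycle for \emph{generally} (as opposed to \emph{naturally}) single-peaked preferences, so the proof must genuinely use that a coalition changes size one step at a time along the very axis on which preferences are single-peaked. Finding the right lexicographic order of the components so that the ``displaced content agent'' case and the ``overshoot'' case are always charged to an earlier component is the delicate part; the remaining verification should be routine.
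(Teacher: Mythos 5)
Your preliminary observations (a)--(c) are correct and do isolate exactly where natural single-peakedness enters (a deviation from below the peak goes to a strictly larger coalition, one from above the peak to a strictly smaller one, and welcomed agents move weakly toward their peaks), and you have correctly diagnosed the obstructions: overshooting, displaced formerly-content agents, and the fact that grower and shrinker moves push the size profile $(c_k)_k$ in opposite directions. But the proposal stops precisely where the proof would have to begin. No lexicographic potential is actually defined; you only state that you ``expect'' a suitable interleaving of the staircase components with peak-displacement terms to exist and that finding it ``is the delicate part.'' That delicate part is the entire theorem. There is concrete reason to doubt it is routine: any state-only potential must strictly decrease on \emph{every} IS deviation, yet a grower move raises $c_t$ at a level $t$ strictly above the level $s$ it lowers, while a shrinker move does the opposite, and the natural ``distance to peak'' correction is non-monotone exactly because of the overshoot and displacement phenomena you name. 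Tellingly, when the authors do resort to potential arguments for the quantitative version of this result (the $\mathcal O(n^3)$ bound under \emph{strict} preferences, Theorem~3.4), they need a potential that depends on the entire history of the dynamics, not just the current partition, and they leave the weak-preference running-time question open; this strongly suggests that the state-only lexicographic object you are hoping for is not there to be found by a routine case analysis.

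The paper's actual proof avoids potentials altogether. Assuming a cycle exists, it picks a coalition $C$ of \emph{minimum cardinality} among all coalitions that are ever abandoned by a deviator during the cycle, with deviator $d$, and shows that $C\setminus\{d\}$ can never be altered again: minimality rules out anyone leaving it, and if some agent $x$ joined it, $x$ would have to come from a strictly larger coalition $C'$, forcing $p_x<|C'|$; an induction over the subsequent coalitions containing $x$ (using exactly your facts (b) and (c)) shows $x$ can never again occupy a coalition of size at least $|C'|$, so the cycle cannot restore the configuration from which $x$ deviated --- a contradiction. If you want to salvage your plan, you would need to exhibit the potential explicitly and verify strict decrease in all cases, including the displaced-content-agent case; as it stands, the argument is a research programme rather than a proof.
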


\begin{proof}
	Let an AHG be given with naturally single-peaked preferences. Assume for contradiction that there exists a cycle of IS deviations. We will identify a specific deviation within this cycle that cannot be repeated throughout the execution of the dynamics, obtaining a contradiction. 
	
	Let $C$ be a coalition within the cycle of smallest cardinality such that there exists an agent $d\in C$ that performs a deviation leaving $C$. Such a coalition exists, because there are only finitely many different deviations performed in the cycle. We will argue that it is impossible to alter the coalition $C\setminus \{d\}$ within the execution of the cycle. Hence, the coalition $C$ will never be reached again. 
	
	First, our assumption of minimality implies that $C\setminus \{d\}$ cannot be altered by having an agent leave this coalition. Therefore, we have to show that it cannot happen that an agent ever joins $C\setminus \{d\}$. Assume for contradiction that there exists an agent $x$ that joins $C\setminus \{d\}$. Using our minimality assumption again, the deviation of agent $x$ when joining $C\setminus \{d\}$ must originate from a coalition $C'$ with $|C'| > |C|$. Hence, agent $x$ deviates towards a smaller coalition. Hence, single-peakedness implies that the peak $p_x$ of agent $x$ must satisfy $p_x < |C'|$. In particular, it follows by single-peakedness that it cannot be the case that $y\succ_x^S |C'|$ for any $y \ge |C'|$. 
	
	We claim that, within the cycle, it is impossible that agent~$x$ ever reaches a coalition of size at least $|C'|$ again. To see this, let $C_k$ be the $k$-th coalition that $x$ is part of after leaving $C'$, i.e., $C_1 = (C\setminus \{d\})\cup \{x\}$ and $C_{k+1}$ evolves from $C_k$ by having some agent join or leave $C_k$, or $C_{k+1}$ is the new coalition of $x$ if $x$ performs a deviation to leave $C_k$. 
	We will show by induction over $k$ that, for every $k$, $|C_k| < |C|$ or $C_k\succ_x C'$. Since $C_1\succ_x C'$, the claim is true for $k = 1$. Now, let $k\ge 1$ and assume that $|C_k| < |C|$ or $C_k\succ_x C'$. Consider first the case that $|C_k| < |C|$. By our minimality assumption, $x$ is not allowed to perform a deviation and therefore $C_{k+1}$ evolves by having an agent leave or join $C_k$. Also by the minimality assumption, no other agent may leave the coalition. If an agent joins the coalition, then the size remains to be smaller than $|C|$, or is exactly $|C|$ and we already know that $|C|\succ_x^S |C'|$. 
	
	It remains to consider the case that $C_k\succ_x C'$. If $C_{k+1}$ forms via a deviation of agent~$x$, then $C_{k+1}\succ_x C_k \succ_x C'$ and the claim is true. If some agent joins $C_k$ to form $C_{k+1}$ then $x$ has to approve this, and we can conclude that $C_{k+1}\succsim_x C_k \succ_x C'$. Hence, it remains to consider the case that some agent leaves $C_k$ and the remaining coalition is $C_{k+1}$. 
	As we have argued above, $|C_k|\succ_x^S |C'|$ implies that $|C_k| < |C'|$. If $|C_{k+1}| < |C|$, then the induction hypothesis is true for $k+1$. If $|C_{k+1}| = |C|$, then $C_{k+1}\succ_x C'$, and the induction hypothesis is also true. 
	Finally, it remains the case that $|C| < |C_{k+1}| = |C_k| - 1 < |C_k| < |C'|$, and we know that $|C|\succ_x^S |C'|$ and $|C_k|\succ_x^S |C'|$. 
	Hence, single-peakedness implies that $C_{k+1}\succ_x C'$. This completes the proof of the induction hypothesis.
	
	It follows that agent~$x$ cannot reach a coalition of size at least $|C'|$ again, a contradiction. 
	Hence, there cannot be an agent joining $C\setminus \{d\}$. This shows that $C\setminus \{d\}$ can never be altered again, our final contradiction.
\end{proof}

Our final goal in this section is to provide a polynomial bound on the running time of the dynamics. Unfortunately, our proof relies on strict preferences, leaving the case of weak preferences as an interesting open problem. Even the proof under strict preferences needs far more sophisticated methods than the proof of existence. The key idea is to distinguish deviations towards a smaller and a larger coalition, and to make use of a potential function that aggregates values for agents and coalitions to bound the number of deviations towards a larger coalition by $n^2$. Using a second, much simpler potential function yields an overall polynomial running time. 

\begin{theorem}\label{thm:fastConvAHG}
The dynamics of IS deviations always converges to an IS partition in AHGs for strict naturally single-peaked preferences in $\mathcal O(n^3)$ steps.
\end{theorem}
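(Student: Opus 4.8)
The plan is to classify the deviations occurring in any run of the dynamics into those \emph{towards a larger coalition} and those \emph{towards a smaller coalition}, to bound the number of the former by $\mathcal{O}(n^2)$ by means of an elaborate potential function, and then to bound the number of the latter — and hence the total length of the run — by $\mathcal{O}(n^3)$ by means of an elementary potential function. Two structural facts about strict naturally single-peaked preferences are used repeatedly. First, an agent who performs an IS deviation towards a larger coalition must be strictly to the left of her peak before the deviation, since once her coalition size has reached her peak, single-peakedness makes every larger size strictly worse. Second, whenever an agent $i$ joins a coalition $D$ by an IS deviation, \emph{every} member of $D$ must strictly improve (the preferences are strict and $i$'s presence genuinely changes $D$), i.e.\ $|D|+1\succ_j|D|$ and thus $|D|<p_j$ for all $j\in D$; in other words, every member of a welcoming coalition lies strictly to the left of her peak. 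Symmetric statements hold for deviations towards a smaller coalition, and an agent whose coalition size equals her peak never deviates.

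For the first and harder part I would design a potential $\potential_1$ that aggregates, for every agent, a term measuring (in the single-peaked order) how far the agent is from her peak while weakly to its left, together with correction terms attached to the coalitions of the partition. A deviation towards a larger coalition decreases the agent term of the deviator (she moves strictly closer to her peak, or over it) and of every welcoming member (each moves one step closer to her peak). The obstacle — and the reason naive agent terms do not suffice — is that every agent \emph{abandoned} by a deviator moves one step further from her peak and so raises the agent part, and additionally a deviator in a smaller-coalition deviation may cross her peak from the right, again raising the agent part; the coalition correction terms must be engineered to absorb precisely these increases, so that a deviation towards a larger coalition still strictly decreases $\potential_1$ while the cumulative increase of $\potential_1$ over the rest of the run stays $\mathcal{O}(n^2)$. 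For this last bound I would invoke the Deviation Lemma of \citet{BBT22a} to control globally, by $\mathcal{O}(n^2)$, the aggregate change of $\potential_1$ caused by agents that are joined or abandoned rather than by deviators. As $\potential_1$ is non-negative and $\mathcal{O}(n^2)$, it follows that there are at most $\mathcal{O}(n^2)$ deviations towards a larger coalition. Making the correction terms work — genuinely exploiting strictness and controlling the ``recharging'' of $\potential_1$ caused by smaller-coalition deviations — is the main difficulty of the theorem; the rest is bookkeeping.

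The second part is then straightforward. Set $\potential_2(\partition)=\sum_{C\in\partition}|C|^2=\sum_{i\in N}|\partition(i)|$, so that $0\le\potential_2(\partition)\le n^2$. If a deviation moves an agent out of a coalition of size $c$ and into a coalition that thereby attains size $d+1$, then $\potential_2$ changes by exactly $(c-1)^2-c^2+(d+1)^2-d^2=2(d-c+1)$. For a deviation towards a larger coalition we have $d+1>c$, so $\potential_2$ increases, by at most $2(n-1)$; for a deviation towards a smaller coalition we have $d+1<c$, so $\potential_2$ decreases by at least $2$. Summing these changes along the whole run, the total decrease attributable to smaller-coalition deviations equals $\potential_2(\spartition)$ minus the final value plus the total increase attributable to larger-coalition deviations, which is at most $n^2+2(n-1)\cdot\mathcal{O}(n^2)=\mathcal{O}(n^3)$. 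Since each smaller-coalition deviation consumes at least $2$ units of this budget, there are $\mathcal{O}(n^3)$ such deviations, hence $\mathcal{O}(n^2)+\mathcal{O}(n^3)=\mathcal{O}(n^3)$ deviations in total. As no IS deviation is available from a terminal partition, the dynamics reaches an IS partition within $\mathcal{O}(n^3)$ steps; in particular this reproves convergence for strict naturally single-peaked preferences (Theorem~\ref{thm:convAHG}).
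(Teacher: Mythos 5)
Your two-phase architecture --- bound the deviations towards larger coalitions by $n^2$ with a sophisticated potential, then bound the deviations towards smaller coalitions by $\mathcal O(n^3)$ with a quadratic ``pair-counting'' potential --- is exactly the architecture of the paper's proof, and your second phase is correct and essentially identical to the paper's (the paper uses $\Gamma(\partition)=\sum_{C\in\partition}|C|(|C|-1)/2$ where you use $\sum_C|C|^2$; the accounting is the same). The problem is that your first phase is not a proof but a specification of what a proof would have to deliver. You write that the coalition correction terms ``must be engineered to absorb precisely these increases'' and concede that making them work ``is the main difficulty of the theorem''; that engineering is the entire content of the hard part, and nothing in the proposal constructs the terms or verifies the two properties you need from them (every larger-coalition deviation strictly decreases $\potential_1$, and the cumulative increase over the whole run is $\mathcal O(n^2)$).

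Worse, the specific route you propose --- agent terms measuring distance to the peak, corrected so that the Deviation Lemma of \citet{BBT22a} absorbs the changes of the affected agents --- is precisely the route that breaks down here. The Deviation Lemma controls only the aggregate potential change of the agents who are \emph{joined or abandoned}; it says nothing about the deviators' own terms. The dangerous deviations are those in which the deviator crosses her peak (in particular a right-to-left crossing during a smaller-coalition deviation), where her own distance-to-peak term can jump up by an amount that is not paired with any single compensating decrease; one would like to charge each such crossing against an earlier left-to-right deviation of the same agent, but that pairing need not yield a net decrease in the potential, so the $\mathcal O(n^2)$ bound on the total increase does not follow. The paper's actual proof abandons any static distance-to-peak function altogether and instead builds a \emph{history-dependent} potential $\Lambda(\partition_k)=\sum_j v^k_j+\sum_C v^k_C$, where the agent values $v^k_j$ and coalition values $v^k_C$ are defined by explicit update rules depending on the sequence of deviations (including tracking, for each coalition, the last agent who entered it), and then proves by induction five invariants guaranteeing that $\Lambda$ never decreases, strictly increases on every deviation towards a larger coalition, and is bounded by $n^2$. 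Reproducing that construction (or an equivalent one) is what your proposal is missing.
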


\begin{proof}
	Let an AHG be given with strict and naturally single-peaked preferences where the peak of agent $j$ is at position $p_j$. Consider a sequence of IS deviations starting at some initial partition $\spartition$. Assume that the deviations lead to the sequence $(\partition_k)_{k = 0}^m$ where, for $k = 0,\dots, m-1$, $\partition_{k+1}$ evolves from $\partition_k$ through an IS deviation of agent $d_k$. We call a deviation an R-move (or L-move) if $d_k$ deviates towards a larger (or smaller) partition. The main part of the proof provides a bound of $n^2$ for the number of R-moves. As we will see, this implies that there are at most $n^3$ L-moves.

	The idea is to define a potential function that is based on a value $v^k_j$ for each agent $j\in N$ and a value $v^k_C$ for each coalition $C\in \partition_k$. This potential function will not only depend on the partition $\partition_k$, but also on the starting partition and the specific sequence of deviations to derive $\partition_k$. It will be increased strictly during an R-move and will not decrease during an L-move. We also need to keep track of the last agent $l^k_C$ that entered a coalition $C\in \partition_k$ if this agents plays a `special role' within her coalition.\footnote{Basically, we must be careful not to increase the potential function by too much after an R-move. If an agent performs an R-move, she might land up far right of her peak, and we keep track of this possibility by labeling the agent as `special'. Intuitively, by the strictness and single-peakedness of preferences, only the last agent that has joined some coalition can be special.}
	The potential function will have a close relationship to the peaks of agents. It maintains the invariant that an agent has a value that is always smaller than her peak (but also reflects her coalition size). 
	
	Let us define the values that lead to the potential function.
	Initially, define $v^0_j = 0$ for all agents $j\in N$, and $v^0_C = 0$ for all coalitions $C\in \spartition$. Also, there is no last agent that entered a coalition so far, so we initiate $l^0_C = \bot$. Now, assume that we transition from partition $\partition_{k}$ to partition $\partition_{k+1}$ through an IS deviation of agent $d_k$. Denote $D_k = \partition_{k}(d_k) \setminus \{d_k\}$ and $E_k = \partition_{k+1}(d_k)$. Note that since the only change in the partition is caused by agent~$d_k$, this corresponds exactly to the two new coalitions in $\partition_{k+1}$ compared to $\partition_{k}$. 
	Also, denote $l_k = l^k_{\partition_{k}(d_k)}$, which will be the last agent that entered $\partition_{k}(d_k)$ (unless $l^k_{\partition_{k}(d_k)} = \bot$, in which case such an agent does either not exist or its identity is unimportant for the updates of the values). 
	We specify first the updates that are done independently of the kind of deviation.
	
	We set $v^{k+1}_j = v^{k}_j$ for all $j\in N \setminus (D_k \cup E_k)$, i.e., the value of agents not involved in the deviation does not change. Similarly, we set $v^{k+1}_C = v^{k}_C$ and $l^{k+1}_C = l^{k}_C$ for all coalitions $C\in \partition_k \cap \partition_{k+1} = \partition_{k+1} \setminus \{D_k,E_k\}$. The updates for the last agents do not depend on the kind of move. Set $l^{k+1}_{E_k} = d_k$, and set $l^{k+1}_{D_k} = \bot$ if $l_k = d_k$ and $l^{k+1}_{D_k} = l_k$, otherwise. 
	
	It remains to specify new values for the agents which are part of one of the coalitions involved in the deviation, and the values of these coalitions. The intuition for defining the agent and coalition values is as follows. The value of an agent is always strictly smaller than her peak. In addition, it represents a size that is preferred at most as much as an agent's current coalition size. 
		
	We first consider the deviating agent. 
	If she performs an R-move, then an appropriate bound for her value is the size of the abandoned coalition. By 	this, we do not increase her value by too much if she ends up far to the right of her peak. 
	In case of an L-move, the deviating agent can only be the last agent who joined the coalition, unless she abandons a coalition that was never joined by an agent (this intuition follows from the third invariant in the formal proof below). 
	Hence, we assign the `right' value by maintaining the current value.

	Formally, we set $v^{k+1}_{d_k} = |\partition_{d_k}(\partition_k)| = |D_k| + 1$ if the deviation was an R-move and  $v^{k+1}_{d_k} = v^{k}_{d_k}$ if it was an L-move. 
	
	Next, we consider the agents which are joined. Their consent to increase the coalition size together with strictness and single-peakedness of the preferences imply that they move towards their peak. 
	We set $v^{k+1}_j = |E_k| - 1$ for all $j\in E_k\setminus \{d_k\}$, which represents a bound that is strictly smaller than their peak.

	This leads us to the coalition value of $E_k$. The role of this value is to anticipate behavior of the deviating agent that goes against the drift of coalition sizes to the right. In all cases, we simply update $v^{k+1}_{E_k} = v^{k+1}_{d_k}$, which will be sufficient to serve this purpose. The intuition of this value is that we reach a critical point in the deviation when a coalition has been abandoned so often that its size reaches the value of the special last agent. In this case, the special last agent cannot be in the situation anymore where she can be too far on the right of the peak. We can then `spread' the coalition value among all agents (of which there is exactly the right amount).

	The most complicated part of the update rules is to specify new values for the agents in the abandoned coalition. Here, we have to update very differently dependent on the behavior of the dynamics. We distinguish several cases that we will reference to in the following proof. The first considers the case where the deviating agent performs an L-move. The four other cases consider different situations during an R-move. The second and third case consider two simple situations where either the abandoned agents did not change their value since initialization, and where $d_k = l_k$, i.e., the deviating agent is the special last agent in the abandoned coalition. Note that the update rules are the same for the first three cases whereas the values of the agents can still be very different. The final two cases consider the situation where $d_k \neq l_k$. The update rules distinguish whether we reach a threshold specified by the value of $l_k$ in the forth case, or whether not in the fifth case. Note that the case distinction is exhaustive because of invariant~(\ref{cl:notbotcoal}) in the proof of Claim~\ref{cl:AHGaux}.
	
	Note that the forth case covers the case where $l_k = \bot$ but agents in the abandoned coalition already have a value because it .
	\begin{itemize}
		\item[(i)] If the deviation was an L-move, we update $v^{k+1}_j = v^{k}_j$ for all $j\in D_k$ and $v^{k+1}_{D_k} = 0$.
\end{itemize}
In all other cases, we assume that the deviation was an R-move.
\begin{itemize}	
		\item[(ii)] If $v^{k}_j = 0$ for all $j\in D_k$, we maintain $v^{k+1}_j = v^{k}_j$ for all $j\in D_k$ and $v^{k+1}_{D_k} = 0$. 
		\item[(iii)] If $d_k = l_k$, we update $v^{k+1}_j = v^{k}_j$ for all $j\in D_k$ and $v^{k+1}_{D_k} = 0$.
\end{itemize}
For the remaining cases, we assume that $d_k \neq l_k$. 
\begin{itemize}	

		\item[(iv)] 
		Consider the case that $l_k = \bot$. Then, set $v_j^{k+1} = |D_k|$ for all $j\in D_k$, $v^{k+1}_{D_k} = 0$, and maintain $l^{k+1}_{D_k} = \bot$.
		
		\item[(v)] Otherwise, 
		there exists an agent $l_k\in N$ and $v^{k}_{l_k} \le |D_k|$ (shown in the proof of Claim~\ref{cl:AHGaux}).
		If $v^{k}_{l_k} = |D_k|$, then set $v_j^{k+1} = |D_k|$ for all $j\in D_k$, $v^{k+1}_{D_k} = 0$, and $l^{k+1}_{D_k} = \bot$ (we update this last agent \emph{again} because all agents play the same role now within $D_k$). 
		\item[(vi)] If $v^{k}_{l_k} < |D_k|$, set $v^{k+1}_j = |D_k|-1$ for all $j\in D_k\setminus \{l_k\}$, $v^{k+1}_{l_k} =v^{k}_{l_k}$, and $v^{k+1}_{D_k} = v^{k}_{D_{k}\cup\{d_k\}}$ (the last agent of $D_{k}$ does not need a second update since it still plays a special role).
	\end{itemize}
	
	We are ready to specify our potential function.
	Given a partition $\partition_k$ that occurs in the dynamics, define its potential $\Lambda(\partition_{k}) = \sum_{j \in N} v^{{k}}_j + \sum_{C\in \partition_k}v^{k}_C$. Again, note that this potential can depend both on the starting partition and the specific sequence of deviations. The core of the proof is the following claim.
	\begin{claim}\label{cl:AHGaux}
		For all $k = 1,\dots, m$, $\Lambda(\partition_{k})\ge \Lambda(\partition_{k-1})$. If $\partition_{k+1}$ evolved from $\partition_k$ by an R-move, then $\Lambda(\partition_{k+1}) > \Lambda(\partition_{k})$.
	\end{claim}
	
	\begin{proof}
		\renewcommand\qedsymbol{$\vartriangleleft$}
	We will prove this main claim along with the following useful invariants by induction over $k = 0,\dots, m$:
	
	\begin{enumerate}
		\item \label{cl:coalboun} For all $C\in \partition_k$, $v_C^{k} \le |C|-1$. 
		\item \label{cl:agposcoal} For all $C\in \partition_k$ and $j\in C$, if $v_C^{k} > 0$, then $l_C^k \neq \bot$ and $v_j^{k} \le |C|-1$.
		\item \label{cl:peakinvar} For all agents $j\in N$ with $v_j^k>0$, $v_j^{k}\preceq_j |\partition_k(j)|$ and $p_j > v^{k}_j$.
		\item \label{cl:notbotcoal} For all $C\in \partition_k$ with $l^k_C\neq \bot$, $v^k_C = v^k_{l^k_C}$ or $v^k_C = 0$. Moreover, in this case, $v^k_{l^k_C}\le |C|-1$.
		\item \label{cl:botcoal}  For all $C\in \partition_k$ with $l_C^k = \bot$, then $v_j^{k} = |C|$ for all $j\in C$ or $v_j^{k} = 0$ for all $j\in C$. Moreover, in this case,  $v_C^k = 0$.
	\end{enumerate}
	
	For $k=0$, the main claim is vacant, and the four invariants hold by our initialization of the agent and coalition values. So assume that all of them are true for some $0\le k<m$. We use the notation for the agents $d_k$ and $l_k$, and the coalitions $D_k$ and $E_k$ as in the description of the updates of the values. We will proof that all invariants are true for partition $k+1$.
	
	\begin{enumerate}
	\item 	The first invariant follows from the update rules and by induction. Let us provide the details for the two affected coalitions. 
	Specifically, $v_{D_k}^{{k+1}} = 0$, unless we are in the last case of the update rule, where $v_{D_k}^{{k+1}} \overset{\mathrm{(v)}}{=} v_{D_k\cup \{d_k\}}^{{k}} \overset{(\ref{cl:notbotcoal})}{=} v_{l_k}^{{k}} \overset{\mathrm{(v)}}{\le} |D_k|-1$.
	
	 Now, let us consider coalition $E_k$. If $v_{E_k}^{k+1} = 0$, then the invariant holds. Otherwise, if $d_k$ performed an R-move, then $v^{k+1}_{E_k} = |D_k|+1\le |E_k| - 1$. On the other hand, assume that $d_k$ performed an L-move. Recall that by the update rule for $E_k$, $v_{E_k}^k = v^k_{d_k}$. Then, it holds that $v^k_{d_k} = 0$ (and we are done), or $d_k$ was the last agent who joined $\partition_k(d_k)$ (other agents in $\partition_k(d_k)$ can only perform R-moves due to strict preferences and single-peakedness). Assume for contradiction that $v_{E_k}^k \ge |E_k|$. If $v_{E_k}^k = |E_k|$, then induction for invariant~(\ref{cl:peakinvar}) yields $|E_k| = v_{d_k}^k \preceq_{d_k} |\partition_k(d_k)|$, contradicting the fact that $d_k$ must improve her coalition after her deviation. Hence, using invariant~(\ref{cl:peakinvar}) again, we have that $p_{d_k} > v^k_{d_k} > |E_k|$ and $v^k_{d_k}\prec_{d_k} p_k$. Hence, single-peakedness implies that $|E_k| \prec_{d_k} v^k_{d_k}$, and the deviation was again not improving, a contradiction. 
	\item The second invariant follows by induction and the update rules. In particular, it holds for the agent $d_k$ if $d_k$ performed an L-move to join $E_k$, because then $v^{k+1}_{d_k} = v^{k+1}_{E_k}$ and the invariant follows from the first invariant. The value of the coalition $D_k$ will either be set to $0$ or the invariant will be maintained for all agents in $D_k$.
	\item The third invariant holds by induction for the agents who do not change their value. For the agents in $E_k \setminus \{d_k\}$, it holds by definition of an IS deviation, because they improve when agent~$d_k$ joins. The same is true for $d_k$ if she performed an R-move because then $v_{d_k}^{k+1}=|D_k|+1$. Otherwise, if $d_k$ performs an L-move, we can apply induction, because she improved her utility. 
	
	Finally, we have to consider the agents in $D_k$. The invariant holds if $v_j^k = 0$ for all $j\in D_k$. Otherwise, all agents $j\in D_k\setminus \{l_k\}$ have improved when the last agent joined and therefore $p_j \ge |D_k| + 1 > |D_k|$. By the update rules, $v^{k+1}_j \le |D_k|$. In particular, single-peakedness implies that $v_j^{k+1}\preceq_j |D_k| = |\partition_{k+1}(j)|$. Finally, for $l_k$, the argument is the same if $v_{l_k} = |D_k|$ and we can apply induction (and single-peakedness) if $v_{l_k} < |D_k|$. Note that the case $v_{l_k} > |D_k|$ is excluded by induction for invariant~(\ref{cl:notbotcoal}).
	\item 	The forth invariant is immediate for all coalitions in $\partition_{k+1}\setminus \{D_k\}$. The only case where $v^{k+1}_{D_k}\neq 0$ is case $(v)$ of the update rules. Then, $v^{k+1}_{l_k} =v^{k}_{l_k}$, and $v^{k+1}_{D_k} = v^{k}_{D_{k}\cup \{d_k\}}$.	
	 If $v^{k+1}_{l_k}\neq 0$, then $v^{k}_{l_k}\neq 0$, and induction yields for this case that $v^{k+1}_{D_k} = v^{k}_{D_{k}\cup\{d_k\}} = v^k_{l_k} = v^{k+1}_{l_k}$. Moreover, in case $(v)$ of the update rule, it holds that $v^k_{l_k} \le |D_k| - 1$, which achieves the last part of this invariant.
	\item Again, the only coalitions to consider are $D_k$ and $E_k$. Since, $l_{E_k}^{k+1} = d_k$, the invariant is vacant for this coalition. For $D_k$, the invariant can only apply in cases (ii), (iv), and (v) of the update rules. For all of these cases, it holds that $v_{D_k}^{k+1} = 0$, and $v_j^{k+1} = |D_k|$ for all $j\in D_k$ or $v_j^{k+1} = 0$ for all $j\in D_k$.
	\end{enumerate}
	
	An immediate consequence of invariant~(\ref{cl:peakinvar}) is that for all agents $j\in N$, it holds that $v_j^{k} \le |\partition_k(j)|$. Indeed, otherwise $|\partition_k(j)| < v_j^{k} < p_j$ and since $v_j^k \prec_j p_j$, single-peakedness and strictness would imply $|\partition_k(j)| \prec_j v_j^{k}$, contradicting invariant~(\ref{cl:peakinvar}). We call this observation~$(\Delta)$.
	
	Finally, we can show our main claim. We can restrict our attention to coalitions $D_k$ and $E_k$ and agents within these. Note that either $v_{E_k\setminus \{d_k\}}^{k} = 0$, or, by invariant~(\ref{cl:coalboun}), $v_{E_k\setminus \{d_k\}}^{k}\le |E_k|-2$. Moreover, for every $j\in E_k\setminus\{d_k\}$, it holds that  $v_j^{{k+1}} = |E_k| - 1$ according to the update rule, and $v_j^{k} \le (|E_k|-1)-1$ according to invariant~(\ref{cl:agposcoal}) if $v_{E_k\setminus \{d_k\}}^{k} > 0$. Thus, in this case $v_j^{{k+1}} - v_j^{k} \ge 1$ for all $j\in E_k\setminus \{d_k\}$. Hence, in every case
	\begin{align*}
		\sum_{j\in E_k\setminus \{d_k\}}v^{{k+1}}_j-v^{k}_j - v_{E_k\setminus \{d_k\}}^{k} > 0\text.\tag{$*$}
	\end{align*} 
	
	We make a case distinction according to the different update cases for agents in~$D_k$.
	We start with with the cases covering situations with $d_k = l_k$. 
	
	\begin{enumerate}
		\item[(i)] 	If an L-move was performed, then $d_k = l_k$ or the agents in $D_k\cup \{l_k\}$ did not participate in a deviation, yet. In either case, $v_{E_k}^{{k+1}} \ge v^k_{{D_k}\cup\{d_k\}}$ (using the update rules for the values of $l_k$ and $E_k$, and the forth invariant). Hence,
		\begin{align*}
			\Lambda(\partition_{k+1})- \Lambda(\partition_{k}) \overset{(*)}{\ge} v_{D_k}^{{k+1}} + \sum_{j\in D_k\cup \{d_k\}} v^{{k+1}}_j-v^{k}_j \overset{(i)}{=} 0\text.
		\end{align*} 
	\end{enumerate}

	Assume now that an R-move was performed. Then, it holds that $v_{E_k}^{{k+1}} = |D_k|+1$ and by the invariant~(\ref{cl:coalboun}), $v_{D_k\cup\{d_k\}}^k \le |D_k|$. Hence,
	\begin{align*}
		v_{E_k}^{{k+1}} - v_{D_k\cup\{d_k\}}^k > 0\text.\tag{$\mathit{**}$}
	\end{align*} 
	
	\begin{enumerate}
		\item[(ii)] If the agents in $D_k$ had value $0$, then $\Lambda(\partition_{k+1})- \Lambda(\partition_{k})\ge v_{d_k}^{{k+1}} + v_{E_k}^{{k+1}} = 2 (|D_k| + 1) > 0$.

		\item[(iii)] Next, assume that the agents in $D_k$ do not have value $0$, and that $d_k = l_k$. 
		Then, 
		\begin{align*}
			\Lambda(\partition_{k+1})- \Lambda(\partition_{k}) \overset{(*)}{\ge} \underbrace{v_{d_k}^{{k+1}} - v_{d_k}^{{k}}}_{\ge\,0}+ \underbrace{v_{D_k}^{{k+1}}}_{=\, 0\textnormal{ by }  \mathrm{(iii)}} + \underbrace{v_{E_k}^{{k+1}}  - v_{D_k\cup\{d_k\}}^{{k}}}_{>\, 0\textnormal{ by }  (**)} > 0\text.
		\end{align*}
	\end{enumerate}
	
	Now, consider the case that $d_k\neq l_k$. 
	
	\begin{enumerate}
		\item[(iv)] Assume that $l_k = \bot$. Then, by invariant~(\ref{cl:botcoal}), $v_{D_k\cup\{d_k\}}^k = 0$. Moreover, by the update rule, for all $j\in D_k$, $d_j^{k+1} = |D_k|$, while observation $(\Delta)$ yields $d_j^{k} \le |D_k|+1$. Hence, 
		\begin{align*}
			\sum_{j\in D_k} d_j^{k+1} - d_j^{k} \ge -|D_k|\text.
		\end{align*}
	
	Hence,
	\begin{align*}
		\Lambda(\partition_{k+1})- \Lambda(\partition_{k}) \overset{(*)}{\ge} \underbrace{v_{d_k}^{{k+1}} - v_{d_k}^{{k}}}_{\ge\,0}+ \underbrace{v_{D_k}^{{k+1}}}_{=\, 0\textnormal{ by } \mathrm{(iv)}} + \underbrace{v_{E_k}^{{k+1}}}_{=\, |D_k|+1} + \underbrace{\sum_{j\in D_k} d_j^{k+1} - d_j^{k}}_{\ge\, -|D_k|} - \underbrace{v_{D_k\cup\{d_k\}}^{{k}}}_{=\, 0\textnormal{ by }  (\ref{cl:botcoal})} > 0\text.
	\end{align*}
	\end{enumerate}

	Now assume that $l_k\neq \bot$. By invariant~(\ref{cl:notbotcoal}), $v_{l_k}^{k} \le |D_k|$. We consider the two corresponding cases.
	
	\begin{enumerate}
		\item[(v)] Assume that $v_{l_k}^{k} = |D_k|$. If $v_{D_k\cup\{d_k\}}^k > 0$, then 
		$v_j^{{k+1}} = |D_k| \overset{(\ref{cl:agposcoal})}{\ge} v_j^{{k}}$ for all $j\in D_k$. 
		Therefore,
		\begin{align*}
			\Lambda(\partition_{k+1})- \Lambda(\partition_{k}) \overset{(*)}{\ge} \underbrace{v_{d_k}^{{k+1}} - v_{d_k}^{{k}}}_{\ge\,0}+ \underbrace{v_{D_k}^{{k+1}}}_{=\, 0\textnormal{ by } \mathrm{(v)}} + \underbrace{v_{E_k}^{{k+1}}  - v_{D_k\cup\{d_k\}}^{{k}}}_{>\, 0\textnormal{ by }  (**)} > 0\text.
		\end{align*}

		In the case that $v_{D_k\cup\{d_k\}}^k = 0$, we have 
		$v_j^{{k+1}} = |D_k| \overset{(\Delta)}{\ge} v_j^{{k}} - 1$ for all $j\in D_k$. 
		Hence, 
		\begin{align*}
			v_{E_k}^{k+1} - \sum_{j\in D_k} v_j^{{k+1}} -v_j^{{k}} = v_{E_k}^{k+1} - |D_k| > 0\text.\tag{${***}$}
		\end{align*}
		
		Together,
		\begin{align*}
			\Lambda(\partition_{k+1})- \Lambda(\partition_{k}) \overset{(*),(***)}{>} \underbrace{v_{d_k}^{{k+1}} - v_{d_k}^{{k}}}_{\ge\,0}+ \underbrace{v_{D_k}^{{k+1}}}_{=\, 0\textnormal{ by } \mathrm{(vi)}} - \underbrace{ v_{D_k\cup\{d_k\}}^{{k}}}_{=\, 0} > 0\text.
		\end{align*}
	\item[(vi)] If $v_{l_k}^{k} \neq |D_k|$, then, by invariant~(\ref{cl:notbotcoal}), $v_{l_k}^{k} < |D_k|$. Hence, as in the beginning of case (iv), $\sum_{j\in D_k}v_j^{k}-v_j^{{k+1}}\le |D_k|-1$. Moreover, $v_{E_k}^{{k+1}}= |E_k|-1\ge |D_k| + 1$. Also, according to the update rule for case (vi), $v^{k+1}_{D_k} = v^{k}_{D_{k}\cup\{d_k\}}$. Hence, 
	\begin{align*}
		\Lambda(\partition_{k+1})- \Lambda(\partition_{k}) \overset{(*)}{\ge} \underbrace{v_{d_k}^{{k+1}} - v_{d_k}^{{k}}}_{\ge\,0}+ \underbrace{v_{D_k}^{{k+1}}- v_{D_k\cup\{d_k\}}^{{k}}}_{=\, 0\textnormal{ by } \mathrm{(vi)}} + \underbrace{v_{E_k}^{{k+1}}}_{=\, |D_k|+1} + \underbrace{\sum_{j\in D_k} d_j^{k+1} - d_j^{k}}_{\ge\, -(|D_k|-1)} > 0\text.
	\end{align*}
	\end{enumerate}
	
	This completes the induction.
\end{proof}
\renewcommand\qedsymbol{$\square$}
	Now, note that for all $0\le k\le m$, $\Lambda(\partition_k) \ge 0$ and the potential is integer-valued. Moreover, $\Lambda(\partition_m) = \sum_{j \in N} v^m_j + \sum_{C\in \partition_k}v^m_C = \sum_{C\in \partition_k} (v^m_C +  \sum_{j \in C} v^m_j) \le \sum_{C\in \partition} |C|^2 \le n^2$. We use that, by $(\Delta)$, if $v_C^m = 0$, then $v_j^m \le |C|$ for all $j\in C$, and we also use the first and second invariant. Hence, there are at most $n^2$ R-moves.
	
	To get a global bound on the number of moves, we consider the simple potential that counts the pairs of agents that form common coalitions. Given a coalition $C$, this value is precisely $|C|(|C|-1) / 2$. Define therefore the potential $\Gamma(\partition) = \sum_{C\in \partition}|C|(|C|-1) / 2$. Note that $0\le \Gamma(\partition)\le n(n-1)/2$. Now, every R-move raises the potential $\Gamma$ by at most $n-1$, and every L-move diminishes it by at least $1$. Hence, there can be at most $\Gamma(\partition_0) + n^2(n-1) \le n^3$ L-moves. This bounds the total length of the execution of the dynamics by $n^2 + n^3 = \mathcal O(n^3)$.
\end{proof}

\iffalse 
\begin{proposition}\label{prop:convAHGsc}
	The dynamics of IS deviations converges to an IS partition in AHGs for naturally single-caved preferences, when starting from the \singleton or the grand coalition.\todo{MB: define single-cavedness in the preliminaries if you want to keep the result}
\end{proposition}

\begin{proof}
	Consider an AHG with naturally single-caved preferences and a dynamics given as $(\partition_k)_{k=0}^m$, where $\partition_{k+1}$ evolves from $\partition_k$ via an IS deviation. Assume first that $\partition_0$ is the \singleton. We claim that for each $k\ge 0$, every agent $j\in N$ and every positive integer $l\le \partition^k(j)$, $l\precsim_j |\partition_k(j)|$. The claim follows easily by induction. Hence, the only possible deviations are towards larger coalitions and it follows that $m\le {n\choose 2}$. 
	
	Now assume that $\partition_0$ is the grand coalition. We claim that for each $k\ge 0$, every agent $j\in N$ and every coalition $C\in \partition_k\setminus \{\partition_k(j)\}$ with $|C|\ge |\partition_k(j)|$ holds that $|C| +1\precsim_j |\partition_k(j)|$. Note that the claim implies that there can only be deviations towards smaller coalitions, implying that $m\le {n\choose 2}$. 
	
	We prove the claim by induction. Clearly, it holds for $k = 0$. Assume therefore it is true for some $0\le k < m$. Let $j\in N$
\end{proof}

\fi

\section{Hedonic Diversity Games}

Hedonic diversity games take into account more information about the identity of the agents, changing the focus from coalition sizes to proportions of given \emph{types} of agents. Following the definition of HDGs, we assume throughout this section that the agent set is always partitioned into sets $R$ and $B$ of red and blue agents, respectively. It is known that IS partitions always exist in HDGs, even without restrictions such as single-peakedness of preferences~\citep{BoEl20a}. 
However, we prove that the dynamics of IS deviations may cycle, even under very strong restrictions. This stands in contrast to empirical evidence for the convergence of dynamics based on extensive computer simulations by \citet{BoEl20a}. To this end, we consider natural restrictions of the preferences, of the starting partitions, and specific selection rules for the performed deviations. We show that most combinations of them still allow for infinite dynamics. Most surprisingly, we can show that the dynamics may cycle even if we start from the \singleton and the preferences are strict and single-peaked.\footnote{This corrects a statement in the conference version of this paper \citep{BBW21a}.} However, if we add an arguably weak selection rule, we obtain convergence of the dynamics. To define this rule, we call a coalition $C\subseteq N$ \emph{homogeneous} if it consists only of agents of one type, i.e., $C\subseteq R$ or $C\subseteq B$. We say that a deviation satisfies \emph{solitary homogeneity} if, whenever the target coalition of the deviator is homogeneous, then it is a singleton coalition. Note that whenever an agent can perform an IS deviation, then she can perform a deviation satisfying solitary homogeneity, simply by forming the homogeneous singleton coalition instead of joining existing homogeneous coalitions. Hence, assuming solitary homogeneity of deviations yields valid selection rules, i.e., whenever a deviation is possible, then solitary homogeneity does not prohibit all possible deviations.

In the second part of this section, we show that combining all considered restrictions leads to convergence of the dynamics. In other words, the IS dynamics may cycle if and only if any of the four properties of Theorem~\ref{thm:cycleHDG} is violated. For the proof of Theorem~\ref{thm:cycleHDG}, we make use of a lemma which already highlights the special role of homogeneous coalitions.

\begin{restatable}{lemma}{HDGhom}\label{lem:HDG-hom}
	Given a set $R_a$ of red (or set $B_a$ of blue) agents (the subscripts indicate that we use these agents as \emph{auxiliary} agents) whose preferences satisfy $\frac 23 \succ 1 \succ \frac 12$ (or $\frac 13 \succ 0 \succ \frac 12$), it is possible to create the homogeneous coalition $R_a$ (or $B_a$) by means of an individual dynamics starting from singleton coalitions.
\end{restatable}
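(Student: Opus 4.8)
The plan is to build up the homogeneous coalition $R_a$ one agent at a time, exploiting the preference pattern $\frac{2}{3} \succ 1 \succ \frac{1}{2}$ of the auxiliary red agents. Observe that for a red agent, being in a coalition of two reds gives ratio $1$, being in a coalition of three reds gives ratio $\frac{2}{3}$, and being in a pair consisting of one red and one blue gives ratio $\frac{1}{2}$. Since $\frac{2}{3} \succ 1 \succ \frac{1}{2}$, a red auxiliary agent strictly prefers a triple of reds to a pair of reds to a mixed pair, and (crucially) strictly prefers any of these to... we must be careful about where the singleton $\{i\}$ (ratio $1$) sits — note that ratio $1$ is attained both by a singleton red and by a pair of two reds, so these are indifferent for the agent, which is exactly what makes the construction work.

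\textbf{Step 1 (warm-up: merging into larger homogeneous coalitions is an IS deviation).} First I would check that if $C \subseteq R_a$ is an already-formed homogeneous red coalition with $|C| \ge 2$ and $i \in R_a$ is a red agent currently alone (ratio $1$), then $i$ joining $C$ is an IS deviation precisely when $\frac{|C|+1 \text{ reds out of } |C|+1} = 1 \succ 1$ fails — so that does \emph{not} work directly. The trick instead is the other direction: the agents already inside a larger-than-needed homogeneous coalition are willing to \emph{shed} members because a red agent weakly prefers smaller all-red coalitions down to size $2$ (all give ratio $1$, hence indifferent), and strictly prefers size $3$. So I would instead grow the coalition by a careful sequence in which the deviating agent always moves toward a \emph{strictly better} ratio while the receiving agents are only asked to consent to a move between equally-good all-red configurations.

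\textbf{Step 2 (the core gadget).} Concretely, I would argue by induction on the target size $|R_a| = t$. For $t \le 2$ the coalition forms trivially from singletons (a single red agent already forms a homogeneous singleton; two red singletons: one joins the other, moving from ratio $1$ to ratio $1$ — indifferent, not an IS deviation — so here I would route through size $3$ first if $t=2$, or simply note $t=2$ is immediate if we allow a third red to pass through). The real work is the inductive step: given a homogeneous red coalition $C$ of the right size already assembled plus remaining red singletons, form a transient coalition of size $3$ among three reds (the two joining moves go $1 \to 1$ then $1 \to \frac{2}{3}$, both IS since ratio $1 \sim 1$ and $\frac{2}{3} \succ 1$), then have agents peel off this size-$3$ coalition into $C$ — an agent leaving a size-$3$ red coalition to join $C$ goes from ratio $\frac{2}{3}$ to ratio $1$, i.e.\ \emph{worse}, so that is again not an IS deviation. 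This tension is the heart of the matter.

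\textbf{The main obstacle}, and where I expect to spend the most care, is resolving exactly this tension: $\frac{2}{3} \succ 1$ means a red agent does \emph{not} want to leave a triple of reds, yet we need coalitions to eventually coalesce into one big $R_a$. The resolution must be that the consent condition in an IS deviation only constrains the \emph{receiving} coalition, not the abandoned one — so an agent $i$ in a red pair (ratio $1$) can be joined by a third red to form a triple (the two residents go $1 \to \frac{2}{3}$, strictly better, consent granted), and then a \emph{fourth} red joining the triple would move residents to ratio $\frac{3}{4}$... which is between $\frac{2}{3}$ and $1$, and by single-peakedness around the peak $\frac{2}{3}$ this is \emph{worse} than $\frac{2}{3}$ for agents whose preference is $\frac{2}{3}\succ 1\succ\frac12$ — so consent fails. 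Hence I would instead form $R_a$ by repeatedly: (a) building a red pair, (b) letting singletons join it one at a time only when each resident's ratio change is weakly improving, which by the indifference $1 \sim 1$ only works at the very first merge, and thereafter relies on the abandoned-coalition agents being free to leave. I would therefore carefully design the sequence so that the big coalition $R_a$ is grown by having its current members always at ratio $1$ (i.e.\ an all-red coalition, any size $\ge 2$, all indifferent) and each new red singleton joining moves from its own ratio $1$ to ratio $1$ — but that is an indifference, not a strict improvement, so it is \emph{not} a valid IS deviation. The genuine fix, which I would present as the construction, is to exploit that the deviator must be \emph{strictly} better: a red agent strictly improves from a \emph{mixed} pair (ratio $\frac12$) to an all-red coalition (ratio $1$), and strictly improves from anything to an all-red \emph{triple} (ratio $\frac23$); so I would first pair up each auxiliary red with a disposable blue agent to "park" it at ratio $\frac12$, then let these reds cascade one by one into the growing homogeneous coalition $R_a$ (each move $\frac12 \to 1$ is a strict improvement, and the residents of $R_a$, already all-red, are asked only to move $1 \to 1$, an indifference, which they consent to). The symmetric argument with $\frac13 \succ 0 \succ \frac12$ handles the blue case verbatim by swapping the roles of red and blue.

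\textbf{Wrap-up.} Finally I would count: the construction uses $|R_a|$ disposable blue agents parked in mixed pairs, performs $O(|R_a|)$ deviations to park them and then $O(|R_a|)$ cascade deviations to assemble $R_a$, and leaves the parked blues back as singletons at the end (they were at ratio $\frac12$ and return to ratio $0$, which is worse for them, so strictly speaking we must route \emph{their} departure as the very IS deviations that bring reds in — an agent leaving a mixed pair to... no: the red leaves, the blue is abandoned, and abandonment needs no consent). Thus the whole process is a legitimate individual dynamics starting from singletons and ending with $R_a$ as a single homogeneous coalition, as claimed. $\qed$
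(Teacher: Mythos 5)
Your final construction---park each auxiliary red agent in a mixed pair at ratio $\frac12$, then cascade them one by one into the growing homogeneous coalition, each escape $\frac12\to 1$ being a strict improvement for the deviator while the all-red residents move $1\to 1$ and consent by indifference---is exactly the cascade the paper uses (stated there for blue agents, with $0\succ\frac12$ playing the role of $1\succ\frac12$). But there is a genuine gap in how the agents get \emph{parked}, and it is precisely the step that the hypothesis $\frac23\succ 1$ exists to enable. A direct pairing of a red auxiliary $r$ with a disposable blue $b$ is not an IS deviation in either direction: $r$ will not deviate from her singleton (ratio $1$) into $\{r,b\}$ (ratio $\frac12$) because $1\succ\frac12$, and she will not consent to $b$ joining her singleton for the same reason. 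You never address this. Your earlier exploration is also derailed by a factual error: a coalition of three red agents has red-ratio $1$, not $\frac23$---the ratio $\frac23$ is attained by two reds and one blue. Once this is corrected, the role of the peak at $\frac23$ becomes clear and the fix is essentially forced: pre-assemble a mixed pair $\{r',b'\}$ out of freshly introduced auxiliary agents whose preferences (favoring ever-smaller positive red-ratios) make them willing to form and later dissolve it; let the auxiliary red $r$ join it voluntarily, since $\frac23\succ 1$ makes the resulting two-red-one-blue coalition a strict improvement over her singleton; then have $r'$ abandon the triple for a separate ``trash'' coalition that is strictly better for $r'$, stranding $r$ in $\{r,b'\}$ at her worst listed ratio $\frac12$. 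Only now is your cascade move $\frac12\to 1$ available. The paper's proof supplies exactly this machinery (the extreme-preference agent sets and the trash coalition $C_a$); without it, the parking step, and hence the whole construction, does not go through.
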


We are ready to prove the theorem.

\begin{restatable}{theorem}{cycleHDG}\label{thm:cycleHDG}
The dynamics of IS deviations may cycle in HDGs even if any three of the following restrictions apply:
\begin{enumerate}
\item preferences are naturally single-peaked,
\item preferences are strict,
\item the starting partition is the \singleton, or
\item all deviations satisfy solitary homogeneity.
\end{enumerate}
\end{restatable}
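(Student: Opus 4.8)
The plan is to prove the four assertions one at a time: for each choice of a restriction to \emph{drop}, I exhibit a concrete HDG together with an initial partition and a cycling sequence of IS deviations that respects exactly the other three restrictions. In every case the combinatorial skeleton of the cycle mirrors the cycling AHG of Proposition~\ref{prop:cycleAHG-singleton}: a small set of ``pivot'' agents rotates through a handful of coalitions, while a large ``anchor'' coalition stays untouched, the role played by coalition \emph{sizes} in the AHG being taken over here by the \emph{proportions} of red agents. Concretely, I would populate the instance with many red agents (forming the anchor and absorbing ratio arithmetic) and a few blue agents, and specify each pivot's preference order over the finite set of attainable ratios $\{p/q : p\le |R|,\ q\le n\}$ so that the six-or-so transitions of the intended cycle are all IS deviations. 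Verifying an IS deviation amounts to checking that the deviator strictly improves and that every agent of the welcoming coalition is (weakly) happy; since all coalitions involved are small or homogeneous, these are finite case checks.

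The easiest case is dropping restriction~(3) (singleton start): here I may hard-code the bad partition, so there is ample freedom to pick ratio preferences that are simultaneously strict and naturally single-peaked, and one only needs to read off a short cycle of IS deviations from the fixed configuration, checking in passing that every deviation target occurring in it is either non-homogeneous or a singleton, so that solitary homogeneity is respected. For the three cases that keep restriction~(3), the extra work is reachability: one must drive the dynamics from the \singleton to the cyclic configuration by IS deviations. This is exactly where Lemma~\ref{lem:HDG-hom} enters. The anchor coalition will be homogeneous --- say a block of red agents with auxiliary preferences $\tfrac23 \succ 1 \succ \tfrac12$, or a block of blue agents with $\tfrac13 \succ 0 \succ \tfrac12$ --- and Lemma~\ref{lem:HDG-hom} assembles such a block from singletons; the remaining (non-homogeneous) coalitions of the cyclic configuration are then filled one agent at a time, each such move being checked to be an IS deviation (and, where required, to target a non-homogeneous or singleton coalition). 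With this bootstrap in place, dropping restriction~(2) is handled by inserting one carefully located indifference among the pivots' ratios so that the AHG-style rotation becomes possible under natural single-peakedness; dropping restriction~(1) is handled by using strict but non-single-peaked ratio preferences (two local maxima) to power the rotation.

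The subtle case --- the one singled out in the text --- is dropping restriction~(4): here the rotation must work with \emph{strict} and \emph{naturally single-peaked} preferences while starting from the \singleton, the price being that the path from the singleton partition to the cycle (or the cycle itself) necessarily passes through a move that forms a \emph{non-singleton} homogeneous coalition, which is precisely what solitary homogeneity forbids. The main obstacle of the whole proof is to realize this instance: single-peakedness over integers already forces existence of IS partitions in AHGs, so cycling must exploit the fact that ratios arising from coalitions of different sizes interleave on the real line in a way that integer sizes do not. Concretely one must choose the red/blue population and the ratio preference orders so that (a) each order is a strict naturally single-peaked order on the finite set of attainable ratios, (b) the transitions of the intended cycle are all IS deviations, and (c) the cyclic configuration is reachable from the \singleton via IS deviations, accepting that some intermediate deviation targets a non-singleton homogeneous coalition. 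Items (a)--(c) are a finite but delicate computation; once the instance is fixed, the verification is routine bookkeeping, and Lemma~\ref{lem:HDG-hom} supplies whatever homogeneous building blocks the reachability arguments of the other cases require.
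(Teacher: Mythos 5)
Your outline reproduces the paper's proof architecture faithfully: one explicit counterexample per dropped restriction, the observation that dropping restriction~(3) is easiest because the bad configuration can be hard-coded, the use of Lemma~\ref{lem:HDG-hom} to bootstrap coalitions from the \singleton in the cases that keep restriction~(3), indifference to handle $(\lnot 2)$, a second local optimum to handle $(\lnot 1)$, and the identification of $(\lnot 4)$ as the delicate case in which the path from the \singleton to the cycle must pass through deviations into non-singleton homogeneous coalitions. The paper indeed proceeds exactly this way, and in particular obtains the $(\lnot 4)$ instance by constructing, from singletons, the four fixed coalitions of the $(\lnot 3)$ cycle and then injecting the four pivot agents.

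The gap is that you never produce any of the four instances, and for a theorem of the form ``there exists an HDG in which the dynamics cycles under these three restrictions,'' the instances \emph{are} the proof. Everything you defer as ``a finite but delicate computation'' and ``routine bookkeeping'' is the actual mathematical content: in the paper the $(\lnot 3)$ case alone requires 26 agents, four fixed coalitions of carefully chosen red/blue compositions, and eight strict naturally single-peaked ratio orders tuned so that an eight-step rotation of four pivots is IS at every step; the $(\lnot 4)$ case then needs a multi-stage construction (creating $C_1,\dots,C_4$ via auxiliary agents and trash coalitions, with $C_3$ alone requiring an intermediate homogeneous coalition of seven blue agents that is later dismantled by luring its members into seven freshly built decoy coalitions). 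Your picture of a single large homogeneous ``anchor'' is also not how the working examples are structured --- the fixed coalitions are mostly \emph{mixed}, each with a preference order tolerating exactly the ratio perturbations the pivots cause --- which suggests that actually carrying out your plan would force you to redesign it. Until the four concrete games, preference profiles, and verified deviation sequences are written down, the theorem has not been proved.
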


\begin{proof}
	We provide examples for any triple of the four restrictions. The example where all properties except the condition on the starting partition, and where all properties except deviation selection according to solitary homogeneity are satisfied are closely related. First we show how to deal with the former case. Then, we show how to reach a configuration within the cycle of this case by starting from the singleton coalition. However, for reaching this cycle, some of the performed deviations violate solitary homogeneity. We defer the other two examples to the appendix. 
	
\begin{enumerate}
\item[$(\lnot 3)$] We start with an example of an HDG where all preferences are strict and naturally single-peaked and all agents' deviations satisfy solitary homogeneity. Therefore, let us consider an HDG with 26 agents: 12 red agents and 14 blue agents.
There are four deviating agents: red agents $r_1$ and $r_2$ and blue agents $b_1$ and $b_2$, and four fixed coalitions $C_1$, $C_2$, $C_3$ and $C_4$ such that:
\begin{itemize}
\item $C_1$ contains 2 red agents and 4 blue agents;
\item $C_2$ contains 5 red agents;
\item $C_3$ contains 3 red agents and 2 blue agents;
\item $C_4$ contains 6 blue agents.
\end{itemize} 

The relevant part of the preferences of the agents is given below.\footnote{\label{fot:relevantprefs}Throughout this proof, we only specify the relevant part of preferences. Since agents are only in coalitions with the specified ratios, all missing values can be inserted arbitrarily, possibly respecting single-peakedness. Note that if these preferences are naturally single-peaked, then they can easily be completed by inserting fractions in the right intervals. 
	For instance, the preferences by $b_1$ in the following can be completed as follows. The peak is at $\frac 38$, then we prefer values to the right of the peak over values to the left of the peak. Within the interval $[\frac 38, 1]$, smaller values are preferred to larger values. Within the interval $[0,\frac 38]$, larger values are preferred to smaller values. In particular, the displayed part of the preferences is naturally single-peaked because $\frac 27 < \frac 38 < \frac 57 < \frac 56$.}
\smallbreak
\renewcommand{\arraystretch}{1.25}
\begin{minipage}[c]{0.5\columnwidth}
\begin{center}
\begin{tabular}{rl}
$b_1:$ & $\frac{3}{8} \succ \frac{5}{7} \succ \frac{5}{6} \succ \frac{2}{7}$ \\
$b_2:$ & $\frac{5}{7} \succ \frac{4}{7} \succ \frac{1}{2} \succ \frac{5}{6}$ \\
$r_1:$ & $\frac{4}{7} \succ \frac{1}{4} \succ \frac{1}{7} \succ \frac{2}{3}$ \\
$r_2:$ & $\frac{1}{4} \succ \frac{3}{8} \succ \frac{3}{7} \succ \frac{1}{7}$ \\
\end{tabular}
\end{center}
\end{minipage}
\begin{minipage}[c]{0.5\columnwidth}
\begin{center}
\begin{tabular}{rl}
$C_1:$ & $\frac{3}{8} \succ \frac{3}{7} \succ \frac{1}{3}$ \\
$C_2:$ & $\frac{5}{7} \succ \frac{5}{6} \succ 1$ \\
$C_3:$ & $\frac{4}{7} \succ \frac{1}{2} \succ \frac{3}{5}$ \\
$C_4:$ & $\frac{1}{4} \succ \frac{1}{7} \succ 0$ \\
\end{tabular}
\end{center}
\end{minipage}
\smallbreak
Consider the sequence of IS deviations in Figure~\ref{fig:HDG-cycle-SSPSH} that describes a cycle of the dynamics. 
The four deviating agents of the cycle $b_1$, $b_2$, $r_1$ and $r_2$ are marked in bold and the specific deviating agent between two states is indicated next to the arrows.

\begin{figure}
	\centering
\resizebox{\textwidth}{!}{\scriptsize
\begin{tikzpicture}
\node (1) at (0,0) {\begin{tabular}{|cccc|}\hline $C_1\cup\{\mathbf{b_1},\mathbf{r_2}\}$ & $C_2$ & $C_3 \cup \{\mathbf{b_2}\}$ & $C_4 \cup \{\mathbf{r_1}\}$ \\ \hline 
$3/8$ & $1$ & $1/2$ & $1/7$ \\ \hline\end{tabular}};
\node (2) at ($(1)+(0.5,2)$) {\begin{tabular}{|cccc|}\hline $C_1 \cup \{\mathbf{b_1}\}$ & $C_2$ & $C_3 \cup \{\mathbf{b_2}\}$ & $C_4 \cup \{\mathbf{r_1},\mathbf{r_2}\}$ \\ \hline 
$2/7$ & $1$ & $1/2$ & $1/4$ \\ \hline  \end{tabular}};
\node (3) at ($(1)+(4,4)$) {\begin{tabular}{|cccc|} \hline $C_1 \cup \{\mathbf{b_1}\}$ & $C_2$ & $C_3 \cup \{\mathbf{b_2},\mathbf{r_1}\}$ & $C_4 \cup \{\mathbf{r_2}\}$ \\ \hline 
$2/7$ & $1$ & $4/7$ & $1/7$ \\ \hline \end{tabular}};
\node (4) at ($(2)+(7.5,0)$) {\begin{tabular}{|cccc|} \hline $C_1$ & $C_2 \cup\{\mathbf{b_1}\}$ & $C_3 \cup \{\mathbf{b_2},\mathbf{r_1}\}$ & $C_4 \cup \{\mathbf{r_2}\}$ \\ \hline 
$1/3$ & $5/6$ & $4/7$ & $1/7$ \\ \hline \end{tabular}};
\node (5) at ($(1)+(8.5,0)$) {\begin{tabular}{|cccc|} \hline $C_1 \cup \{\mathbf{r_2}\}$ & $C_2 \cup\{\mathbf{b_1}\}$ & $C_3 \cup \{\mathbf{b_2},\mathbf{r_1}\}$ & $C_4$ \\ \hline 
$3/7$ & $5/6$ & $4/7$ & $0$ \\ \hline \end{tabular}};
\node (8) at ($(1)+(0.5,-2)$) {\begin{tabular}{|cccc|} \hline $C_1\cup\{\mathbf{b_1},\mathbf{r_2}\}$ & $C_2 \cup \{\mathbf{b_2}\}$ & $C_3$ & $C_4 \cup \{\mathbf{r_1}\}$ \\ \hline 
$3/8$ & $5/6$ & $3/5$ & $1/7$ \\ \hline  \end{tabular}};
\node (6) at ($(8)+(7.5,0)$) {\begin{tabular}{|cccc|} \hline $C_1 \cup \{\mathbf{r_2}\}$ & $C_2 \cup \{\mathbf{b_1},\mathbf{b_2}\}$ & $C_3\cup \{\mathbf{r_1}\}$ & $C_4$ \\ \hline 
$3/7$ & $5/7$ & $2/3$ & $0$ \\ \hline \end{tabular}};
\node (7) at ($(1)+(4,-4)$) {\begin{tabular}{|cccc|} \hline $C_1 \cup \{\mathbf{r_2}\}$ & $C_2 \cup \{\mathbf{b_1},\mathbf{b_2}\}$ & $C_3$ & $C_4 \cup \{\mathbf{r_1}\}$ \\ \hline 
$3/7$ & $5/7$ & $3/5$ & $1/7$ \\ \hline \end{tabular}};
\draw[arrow] (1) --node[midway,left]{$r_2$} (2); \draw[arrow] (2) --node[midway,left,above]{$r_1$}  (3); \draw[arrow] (3) --node[midway,right,above]{$b_1$}  (4); \draw[arrow] (4) --node[midway,right]{$r_2$}  (5); \draw[arrow] (5) --node[midway,right]{$b_2$}  (6); \draw[arrow] (6) --node[midway,below]{$r_1$}  (7);
\draw[arrow] (7) --node[midway,below]{$b_1$}  (8); \draw[arrow] (8) --node[midway,left]{$b_2$}  (1);
\end{tikzpicture}}
	\caption{Possibility of cycling of IS dynamics in part $(\lnot 3)$ of Theorem~\ref{thm:cycleHDG}. Here, we consider IS dynamics in HDGs under strict and single-peaked preferences, where all deviations satisfy solitary homogeneity.}\label{fig:HDG-cycle-SSPSH}
\end{figure}

Note that all deviations result in non-homogeneous target coalitions, and therefore satisfy solitary homogeneity.

\item[$(\lnot 4)$] Our next goal is to provide an example of cycling under strict and single-peaked preferences while the starting partition is the \singleton. Our example makes use of the previous example. As a first step, we show, how we can create the coalitions $C_i$ for $i\in [4]$ by starting from the singleton coalition. In a second step, we show how to add the deviators $r_1$, $r_2$, $b_1$, and $b_2$ of the previous example to these coalitions to reach a partition from the cycle. As a consequence, cycling can occur by following the cycle from the first part of the proof. For highlighting the relationship of the two examples, we will use bold face for the relevant part of the preferences of the constructed coalitions.

To create the desired coalitions, we use Lemma~\ref{lem:HDG-hom} to create homogeneous auxiliary coalitions. We assume that we take \emph{new} (auxiliary) agents for every step of every construction.

Now, we show how to create the coalitions $C_i$ for $i\in [4]$ one by one.

\paragraph{Creating $C_2$ and $C_4$}
Given the insights gained in Lemma~\ref{lem:HDG-hom}, it is straightforward to manufacture the homogeneous coalitions $C_2$ and $C_4$. Therefore, define the coalitions $C_2 = \{r_{2,i}\colon i\in [5]\}$ and $C_4 = \{b_{4,i}\colon i\in [6]\}$ together with the following strict and naturally single-peaked preferences.

{\centering
	\renewcommand{\arraystretch}{1.25}
	\begin{tabular}{rl}
		$C_2:$ & $\frac 23 \succ \boldsymbol{\frac 57 \succ \frac 56 \succ 1} \succ \frac 12$\\
		$C_4:$ & $\frac 13 \succ \boldsymbol{\frac 14 \succ \frac 17 \succ 0} \succ \frac 12$\\
	\end{tabular}\par}
\medskip

Since the preferences satisfy the assumptions of Lemma~\ref{lem:HDG-hom}, we can apply it to create $C_2$ and $C_4$.

\paragraph{Creating $C_1$}
Creating the coalition~$C_1$ is also not very difficult. We can simply apply Lemma~\ref{lem:HDG-hom} to form a coalition of all the blue agents, and let the red agents join this coalition. More formally, consider the coalition $C_1 = \{b_{1,1}, b_{1,2}, b_{1,3}, b_{1,4}, r_{1,1}, r_{1,2}\}$ with the following strict and naturally single-peaked preferences.

{\centering
	\renewcommand{\arraystretch}{1.25}
	\begin{tabular}{rl}
		$C_1:$ & $\boldsymbol{\frac 38 \succ \frac 37 \succ \frac 13} \succ \frac 15 \succ 0 \succ \frac 12 \succ 1$\\
	\end{tabular}\par}
\medskip

Note that here and in the following constructions, the considered agents can be of any color and we can just ignore the preference for $0$ or $1$ if we consider a red or blue agent from the set, respectively.

To form $C_1$, we can apply Lemma~\ref{lem:HDG-hom} to form $B_a = \{b_{1,1}, b_{1,2}, b_{1,3}, b_{1,4}\}$. Then, $r_{1,1}$ and $r_{1,2}$ can perform IS deviations to join $B_a$ one after another. This results in the coalition $C_1$, as desired.

\paragraph{Creating $C_3$}
The by far most difficult coalition to create is $C_3$, where we have to combine several steps. The central idea is to apply Lemma~\ref{lem:HDG-hom} to create a sufficiently large homogeneous coalition of auxiliary blue agents. Then, the red agents of the future coalition $C_3$ will join. This is followed by having the blue agents of the former homogeneous coalition abandon the so created coalition. An essential step is to create further coalitions to incentivize them to perform the necessary deviations. Finally, the two blue agents from $C_3$ can join.

To this end, consider the coalition $C_3 = \{b_{3,1}, b_{3,2}, r_{3,1}, r_{3,2}, r_{3,3}\}$ with the following strict and naturally single-peaked preferences.

{\centering
	\renewcommand{\arraystretch}{1.25}
	\begin{tabular}{rl}
		$C_3:$ & $\boldsymbol{\frac 47 \succ \frac 12 \succ \frac 35} \succ \frac 34 \succ \frac 3{10} \succ \frac 29 \succ \frac 18 \succ 1 \succ 0$\\
	\end{tabular}\par}
\medskip

For creating $C_3$, we consider a set of auxiliary agents $A$ containing agents  with single-peaked preferences with peak at $\frac{4}{13}$ and satisfying $\frac 13 \succ \frac 3{11} \succ 0 \succ \frac 12 \succ 1$.

Now, let $i\in [7]$ and consider a set of \emph{blue} agents $C_{3,1} = \{b_{a,i}\colon i \in [7]\}\subseteq A$. By Lemma~\ref{lem:HDG-hom}, we can create the coalition $C_{3,1}$. Now, since we move the coalition ratio towards the peak of the blue agents, we can have the red agents from $C_3$ join one by one to form the coalition $C_{3,2} = C_{3,1} \cup \{r_{3,1}, r_{3,2}, r_{3,3}\}$. For this, note that $\frac 3{10} < \frac 4{13}$. The next goal is to get rid of the agents in $C_{3,1}$. To make this happen, we create auxiliary coalitions such that the agents in $C_{3,1}$ can move there and get into a most preferred coalition. Therefore, we create $7$ identical coalitions as follows. By Lemma~\ref{lem:HDG-hom}, we can create a homogeneous coalition consisting of $8$~blue agents from $A$. Then, we let $4$ red agents from $A$ join one after another. Note that this only consists of IS deviations, even though we cross the peak of these agents, because all agents satisfy $\frac 4{12} = \frac 13 \succ \frac 3{11}$. Also, all agents in the resulting coalition would allow another blue agent to join, because this deviation would lead to reaching the peak of $\frac 4{13}$. Hence, we let the agents from $C_{3,1}$, one after another, deviate to distinct auxiliary coalitions. All of these steps are a strict improvement for the deviators leaving $C_{3,2}$. The first of the deviators leaves a coalition of ratio $\frac 3{10}$ and reaches her peak. Afterwards, the ratio of the abandoned coalitions is at least $\frac 39 = \frac 13 > \frac 4{13}$ and therefore all other deviators improve strictly.

We obtain the coalition $\{r_{3,1}, r_{3,2}, r_{3,3}\}$ and the blue agents from $C_3$ can join one after another to form coalition $C_3$.

\paragraph{Starting cycling}
The final step for this example is to show how to start the cycle constructed in the first HDG. Therefore, we have to add the deviator agents $r_1$, $r_2$, $b_1$, and $b_2$ with the following preferences.

{\centering
	\renewcommand{\arraystretch}{1.25}
	\begin{tabular}{rl}
		$b_1:$ & $\frac{3}{8} \succ \frac{5}{7} \succ \frac{5}{6} \succ \frac{2}{7} \succ 0$ \\
		$b_2:$ & $\frac{5}{7} \succ \frac{4}{7} \succ \frac{1}{2} \succ \frac{5}{6} \succ 0$ \\
		$r_1:$ & $\frac{4}{7} \succ \frac{1}{4} \succ \frac{1}{7} \succ \frac{2}{3} \succ 1$ \\
		$r_2:$ & $\frac{1}{4} \succ \frac{3}{8} \succ \frac{3}{7} \succ \frac{1}{7} \succ 1$ \\
	\end{tabular}\par}
\medskip

Given the constructed coalitions $C_i$, $i\in [4]$, we perform the following IS deviations:

\begin{itemize}
	\item Agent $r_2$ joins coalition $C_1$.
	\item Agent $b_1$ joins coalition $C_2$.
	\item Agent $b_2$ joins coalition $C_3$.
	\item Agent $r_1$ joins coalition $C_3\cup \{b_2\}$.
\end{itemize}

This results in a partition that occurs in the cycle of the first example. Hence, the IS dynamics can cycle as shown before.
\end{enumerate}
\end{proof}

The previous examples do not show the impossibility to reach an IS partition since, e.g., in the first case, the IS partition $\{C_1\cup\{b_1,r_2\},C_2,C_3\cup\{r_1,b_2\},C_4\}$ can be reached via IS deviations from some partitions in the cycle. Thus, starting in these partitions, a path to stability may still exist.
Nevertheless, it may be possible that every sequence of IS deviations cycles, even for strict or naturally single-peaked preferences (with indifference), as the next proposition shows. 
An interesting open question is whether strict and naturally single-peaked preferences allow for the existence of a path to stability. 

\begin{proposition}\label{prop:nopath-HDG}
The dynamics of IS deviations may never reach an IS partition in HDGs, whatever the chosen path of deviations, even 
for (1) strict preferences or (2) naturally single-peaked preferences with indifference.
\end{proposition}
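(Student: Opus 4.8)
The plan is to exhibit, for each of the two preference restrictions, an explicit HDG together with an initial partition $\spartition$ and a finite set $\mathcal{R}$ of partitions such that: (a) $\spartition \in \mathcal{R}$; (b) no partition in $\mathcal{R}$ is individually stable; and (c) $\mathcal{R}$ is closed under IS deviations, i.e., applying any IS deviation to a partition in $\mathcal{R}$ yields another partition in $\mathcal{R}$. Properties~(b) and~(c) together imply that every sequence of IS deviations started in $\spartition$ remains inside $\mathcal{R}$ forever and hence never reaches an IS partition, which is exactly the claim. This is perfectly consistent with the fact that IS partitions always exist in HDGs~\citep{BoEl20a}: such a partition simply lies outside the region reachable from $\spartition$. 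The contrast with Theorem~\ref{thm:cycleHDG} (and the AHG analogue in Proposition~\ref{prop:cycleAHG-singleton}) is that there only \emph{some} path cycles, whereas here the reachable region must be a strongly connected trap with no IS partition inside it.

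Concretely, I would build on a cycling gadget in the spirit of the instance behind Theorem~\ref{thm:cycleHDG} and Figure~\ref{fig:HDG-cycle-SSPSH}: a bounded number of ``rigid'' coalitions, each sitting at (or near) the peak of all of its permanent members, together with a small number of ``mobile'' agents that migrate among these coalitions along a directed cycle $\pi_0 \to \pi_1 \to \dots \to \pi_t \to \pi_0$, and take $\mathcal{R}$ to be exactly the partitions on this cycle. Each prescribed transition is an IS deviation for the usual reason: the moving mobile agent strictly improves and every member of the target coalition is shifted (weakly) towards its peak. To rule out escapes, I would tune the preferences of the mobile agents so that in each $\pi_k$ the only coalition that is simultaneously (i) strictly preferred by some mobile agent to her current one and (ii) willing to admit her without hurting a current member is precisely the next coalition prescribed by the cycle; and I would keep each rigid coalition locked by making its permanent members unwilling to leave in every $\pi_k$ and by choosing its red/blue composition so that admitting any further mobile agent strictly hurts some permanent member, except along the intended cyclic transitions. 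For part~(1) strict peaks and strict comparisons suffice, whereas for part~(2) I would realize the same combinatorial behaviour with naturally single-peaked orders, deliberately placing indifferences exactly where strictness would make an unwanted escape deviation improving — this is where the ``with indifference'' hypothesis is unavoidable.

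The bulk of the work, and the main obstacle, is the exhaustive no-escape verification. For every partition $\pi_k \in \mathcal{R}$ one must check every candidate IS deviation — every mobile agent moving to every other coalition (including forming a fresh singleton), every permanent member of a rigid coalition attempting to leave, and every agent attempting to join a rigid coalition — and confirm that the only genuine IS deviations are the prescribed cyclic transition(s). Since the red/blue ratios of the rigid coalitions change as mobile agents pass through them, ``rigidity'' is not automatic and has to be re-established in each of the finitely many states; forcing all of these checks to hold simultaneously is what pins down the exact coalition sizes and the exact preference orders and is the reason such a construction is intricate enough to be deferred to the appendix. Once this verification is complete, property~(c) holds, and together with the immediate property~(b) it yields the proposition; the two cases are handled by two separate instances of this scheme.
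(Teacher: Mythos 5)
Your reduction of the statement to exhibiting a trap region $\mathcal{R}$ (containing $\spartition$, free of IS partitions, and closed under IS deviations) is logically sound and is exactly the structure of the paper's proof: the paper takes $\mathcal{R}$ to be the six partitions of an explicit cycle and argues that from each of them the prescribed transition is the \emph{only} available IS deviation, so closure holds in the strongest possible form. Your intuition about the shape of the witness (a few fixed coalitions plus a small set of migrating agents, with the indifference hypothesis in case~(2) used to neutralize the non-migrating agents) also matches the paper, which uses $10$ agents, three fixed coalitions $C_1,C_2,C_3$, and two movers $r$ and $b$; in case~(2) all agents other than $r$ and $b$ are simply made indifferent between all coalitions, which kills their deviations and grants all consents at once.

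The gap is that you never actually produce the instances. For an existential claim of this kind the counterexample \emph{is} the proof: the specific red/blue compositions of $C_1$, $C_2$, $C_3$, the exact ratio orders of $r$ and $b$, and the state-by-state check that no agent has any IS deviation other than the intended one are not routine consequences of your outline — they are the entire mathematical content, and as you yourself note, making all the no-escape constraints hold simultaneously is what pins down the construction. Phrases like ``I would tune the preferences so that\dots'' and ``I would realize the same combinatorial behaviour with naturally single-peaked orders\dots'' assert that suitable instances exist without demonstrating it, so the proposal as written is a correct proof \emph{strategy} but not a proof. To close the gap you would need to write down two concrete games (one per preference restriction), the initial partition, the full cycle of partitions, and the exhaustive verification that at every state the set of IS deviations is exactly the prescribed singleton.
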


\begin{proof}
Let us consider an HDG with 10 agents: 4 red agents and 6 blue agents.
There are two deviating agents: red agent $r$ and blue agent $b$, and three fixed coalitions $C_1$, $C_2$ and $C_3$ such that:
\begin{itemize}
\item $C_1$ contains 2 red agents;
\item $C_2$ contains 1 red agent and 3 blue agents;
\item $C_3$ contains 2 blue agents.
\end{itemize} 
The relevant part of the preferences of the agents is given below, with on the left the preferences for the case of $(1)$ strict preferences, where $[...]$ denotes an arbitrary order over all possible remaining ratios, and on the right the preferences for the case of $(2)$ naturally single-peaked preferences with indifference, where all agents except $r$ and $b$ are indifferent between all possible coalitions.

\renewcommand{\arraystretch}{1.25}
\begin{minipage}[c]{0.47\columnwidth}
\centering\begin{tabular}{rl}
\multicolumn{2}{c}{$(1)$} \\
$r:$ & $\frac{3}{4} \succ \frac{2}{5} \succ \frac{1}{4} \succ \frac{1}{3} \succ 1$ \\
$b:$ & $\frac{1}{4} \succ \frac{1}{5} \succ \frac{3}{4} \succ \frac{2}{3} \succ 0$ \\
$C_1:$ & $\frac{3}{4} \succ \frac{2}{3} \succ 1 \succ [...]$ \\
$C_2:$ & $\frac{2}{5} \succ \frac{1}{5} \succ \frac{1}{4} \succ [...]$ \\
$C_3:$ & $\frac{1}{4} \succ \frac{1}{3} \succ 0 \succ [...]$ \\
\end{tabular}
\end{minipage}
\begin{minipage}[c]{0.47\columnwidth}
\centering\begin{tabular}{rl}
\multicolumn{2}{c}{$(2)$} \\
$r:$ & $\frac{3}{4} \succ \frac{2}{5} \succ \frac{1}{4} \sim \frac{1}{3} \succ 1$ \\
$b:$ & $\frac{1}{4} \succ \frac{1}{5} \succ \frac{3}{4} \sim \frac{2}{3} \succ 0$ \\
$i\in N\setminus\{r,b\}:$ & $C\sim C'$, \quad $\forall C,~C'\in \N_i$ \\
\\
\\
\end{tabular}
\end{minipage}
\smallbreak
Consider the sequence of IS deviations in Figure~\ref{fig:HDG-cycle-eternal} that describes a cycle of the dynamics. 
The two deviating agents of the cycle $r$ and $b$ are marked in bold and the specific deviating agent between two states is indicated next to the arrows.

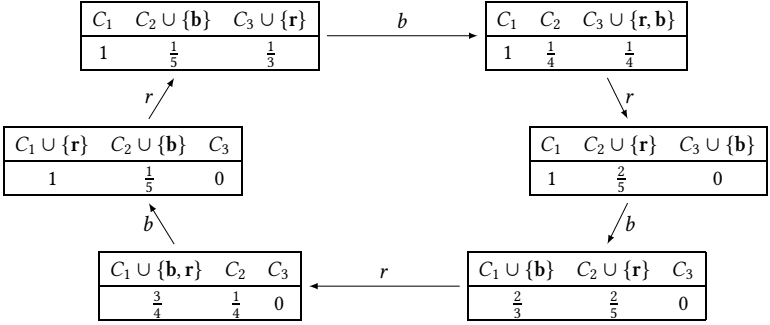
\begin{figure}
{\centering
\resizebox{0.75\columnwidth}{!}{
\begin{tikzpicture}
\node (1) at (0,0) {\begin{tabular}{|ccc|}\hline $C_1\cup\{\mathbf{r}\}$ & $C_2\cup\{\mathbf{b}\}$ & $C_3$ \\ \hline 
$1$ & $\frac{1}{5}$ & $0$ \\ \hline\end{tabular}};
\node (2) at ($(1)+(1.25,2)$) {\begin{tabular}{|ccc|}\hline $C_1$ & $C_2\cup\{\mathbf{b}\}$ & $C_3\cup\{\mathbf{r}\}$ \\ \hline 
$1$ & $\frac{1}{5}$ & $\frac{1}{3}$ \\ \hline\end{tabular}};
\node (3) at ($(2)+(6.25,0)$) {\begin{tabular}{|ccc|}\hline $C_1$ & $C_2$ & $C_3\cup\{\mathbf{r},\mathbf{b}\}$ \\ \hline 
$1$ & $\frac{1}{4}$ & $\frac{1}{4}$ \\ \hline\end{tabular}};
\node (4) at ($(1)+(8.5,0)$) {\begin{tabular}{|ccc|}\hline $C_1$ & $C_2\cup\{\mathbf{r}\}$ & $C_3\cup\{\mathbf{b}\}$ \\ \hline 
$1$ & $\frac{2}{5}$ & $0$ \\ \hline\end{tabular}};
\node (6) at ($(1)+(1.25,-2)$) {\begin{tabular}{|ccc|}\hline $C_1\cup\{\mathbf{b},\mathbf{r}\}$ & $C_2$ & $C_3$ \\ \hline 
$\frac{3}{4}$ & $\frac{1}{4}$ & $0$ \\ \hline\end{tabular}};
\node (5) at ($(6)+(6.25,0)$) {\begin{tabular}{|ccc|}\hline $C_1\cup\{\mathbf{b}\}$ & $C_2\cup\{\mathbf{r}\}$ & $C_3$ \\ \hline 
$\frac{2}{3}$ & $\frac{2}{5}$ & $0$ \\ \hline\end{tabular}};
\draw[arrow] (1) --node[midway,left]{$r$} (2); \draw[arrow] (2) --node[midway,above]{$b$}  (3); \draw[arrow] (3) --node[midway,right]{$r$}  (4); \draw[arrow] (4) --node[midway,right]{$b$}  (5); \draw[arrow] (5) --node[midway,above]{$r$}  (6); \draw[arrow] (6) --node[midway,left]{$b$}  (1);
\end{tikzpicture}}\par}
	\caption{Impossibility of convergence of IS dynamics in the proof of Proposition~\ref{prop:nopath-HDG}.}\label{fig:HDG-cycle-eternal}	
\end{figure}

Note that at each state, the deviation performed by agent $r$ or $b$ is the only possible one that they can do. 
Moreover, by construction of the preferences of the other agents, 
none of them has incentive to deviate at any state.
Therefore, the cycle is the only possible sequence of IS deviations, and the cycle cannot be avoided in this instance.
\end{proof}

It remains open whether there always exists an path to stability when starting from the \singleton.
However, we show now that convergence is guaranteed by combining all restrictions of Theorem~\ref{thm:cycleHDG}. Interestingly, part of the proof is a reduction to the positive result about AHGs, and reveals a close relationship of the two classes of hedonic games if HDGs are sufficiently restricted. However, this reduction requires careful preprocessing of the initial HDG. The first key insight is to show that, for every coalition occurring during the dynamics, there exists a color such that at most one agent of this color is part of the coalition. Consequently, given an instance with $b$ blue and $r$ red agents, the only important ratios (apart from $0$ and $1$) are $\frac{k}{k+1}$ and $\frac 1{k'+1}$, where $1\le k\le r$ and $1\le k' \le b$. The next step is to show how to transition to an HDG where every agent will only end up in coalitions of one of these ratio types. This transition is essentially performed by omitting certain steps in the dynamics. From there, we can observe the structure of an AHG by identifying the ratios $\frac{k}{k+1}$ and $\frac 1{k+1}$ with a coalition size of $k + 1$. This correspondence is reasonable. For instance, the former ratio corresponds to a coalition with one blue and $k$ red agents, i.e., a total number of $k + 1$ agents. We can apply Theorem~\ref{thm:convAHG} to bound the length of the transformed HDG. Interestingly, the identification with an AHG requires some auxiliary agents, and the transformed dynamics is not starting from the \singleton anymore. In this respect, we even need the full power of Theorem~\ref{thm:convAHG}. 

\begin{theorem}\label{thm:converg-HDG-singletonstrictSP}
The dynamics of IS deviations satisfying solitary homogeneity always converges in $\mathcal O(n^5)$ steps when starting from the \singleton in an HDG where agents have strict and naturally single-peaked preferences.
\end{theorem}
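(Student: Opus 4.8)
The plan is to reduce the problem to the convergence result for anonymous hedonic games (Theorem~\ref{thm:convAHG}), after isolating the combinatorial structure that solitary homogeneity and the \singleton start force on the reachable coalitions. The heart of the argument is a structural lemma: under these hypotheses, every coalition that ever appears during the dynamics contains at most one agent of \emph{some} colour (at most one red agent, or at most one blue agent). I would prove this by induction over the sequence of IS deviations, tracking for each coalition how the peaks of its members get constrained. The point is that to enlarge a coalition by a red (resp.\ blue) agent the ratio necessarily changes, so by strictness every agent already in the target coalition must \emph{strictly} improve; natural single-peakedness then pins that agent's peak on the side towards which the ratio moved. Accumulating these constraints for agents of both colours shows that the ratio cannot be driven far in both directions — which is exactly what forming a coalition with two red and two blue agents would require; solitary homogeneity rules out the degenerate shortcut of merging non-trivial homogeneous coalitions, and the \singleton start supplies the base case. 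The lemma restricts the ratios occurring during the dynamics to $0$ and $1$ (only inside homogeneous coalitions) together with $\frac{k}{k+1}$ (one blue agent, $k$ red agents) and $\frac{1}{k+1}$ (one red agent, $k$ blue agents), $k\ge 1$.

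Next I would carry out the preprocessing announced before the statement: transform the dynamics into one in which every agent only ever ends up in a coalition of a \emph{single} ratio type — a homogeneous singleton, a coalition of type $\frac{k}{k+1}$, or a coalition of type $\frac{1}{k+1}$ — by \emph{omitting} the steps that pass through the transient small coalitions (sizes $1$ and $2$, where the two types coincide) and through the homogeneous coalitions of size at least $2$ that are left behind when an agent abandons a mixed coalition. One then has to check that the shortened sequence is a legal infinite sequence of IS deviations in a suitably adjusted HDG whenever the original one was, and that its length differs from the original by at most a polynomial factor. Showing that skipping a step does not invalidate the IS property of later deviations is a delicate but essentially routine bookkeeping task.

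The third step is the identification with an AHG: a coalition of ratio $\frac{k}{k+1}$ or $\frac{1}{k+1}$ is mapped to the coalition size $k+1$, and a homogeneous coalition to its own size. Since an agent's experienced ratio is no longer determined by the size of her coalition alone — she may be the lone minority agent or one of the majority agents — the reduction introduces auxiliary agents that fix the shape of each coalition, so that in the constructed AHG the size does determine the ratio. I would then translate each agent's single-peaked order over the relevant ratios into a preference order over sizes, checking that, after the auxiliary padding, the natural order $\cdots<\frac{1}{k+1}<\cdots<\frac12<\cdots<\frac{k}{k+1}<\cdots$ is compatible with the natural order on sizes, so that \emph{natural} single-peakedness (and strictness) carry over. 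The transformed dynamics is an IS dynamics in this AHG, but — because of the auxiliary agents and the omitted steps — it no longer starts from the \singleton, which is precisely why the full strength of Theorem~\ref{thm:convAHG} (convergence from an \emph{arbitrary} initial partition) is needed; applying it shows the transformed dynamics, and hence the original one, terminates.

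For the quantitative $\mathcal O(n^5)$ bound I would combine the polynomial bound for strict naturally single-peaked AHGs (Theorem~\ref{thm:fastConvAHG}, giving $\mathcal O(N^3)$ steps on $N$ agents) applied to the reduced AHG on $\mathcal O(n)$ agents with the cost of realising one AHG step by HDG deviations: each such step is simulated by building and dismantling homogeneous auxiliary coalitions in the manner of Lemma~\ref{lem:HDG-hom}, costing $\mathcal O(n^2)$ deviations, and there are only polynomially many additional steps removed during preprocessing, which yields $\mathcal O(n^3)\cdot\mathcal O(n^2)=\mathcal O(n^5)$. The main obstacle, in my view, is not the AHG machinery but getting the reduction right: proving the structural lemma cleanly (the peak-tracking induction), setting up the auxiliary agents so that the HDG's single-peaked preferences become genuinely single-peaked AHG preferences without introducing spurious deviations, and verifying that the omitted steps really can be omitted while preserving both legality and non-convergence.
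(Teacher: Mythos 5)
Your qualitative architecture matches the paper's: the structural lemma that every reachable coalition contains at most one agent of some colour (with the lone minority agent's preferences pinned by strictness and single-peakedness), the preprocessing that omits steps so that each agent only ever sits in coalitions of one ratio type, the identification of ratios $\frac{k}{k+1}$ and $\frac{1}{k+1}$ with coalition size $k+1$ via auxiliary agents, and the appeal to the AHG convergence machinery on a dynamics that no longer starts from the \singleton. Up to that point you have reconstructed the paper's proof.

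The genuine gap is in the quantitative accounting, and it is not cosmetic. First, the auxiliary padding cannot be done with $\mathcal O(n)$ AHG agents: to keep the translated preferences \emph{naturally} single-peaked, the ratio $1$ (a red agent alone) must be identified with the \emph{largest} coalition size $|R|+1$, which forces each agent of $R_L$ (and symmetrically $B_S$) to be replaced by $|R|+1$ copies that travel together; the resulting AHG has $\Theta(n^2)$ agents, so Theorem~\ref{thm:fastConvAHG} gives $\mathcal O(n^6)$ off the shelf, and the paper has to re-inspect its potentials (exploiting that the copies of one agent touch at most two coalitions) to bring the AHG dynamics down to $\mathcal O(n^4)$ steps. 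Second, your factor $\mathcal O(n^2)$ per AHG step is derived from the wrong direction of simulation: in a convergence proof the adversary supplies the HDG deviations and you must \emph{project} them onto AHG deviations, discarding the invisible ones; you do not get to \emph{realise} AHG steps by inserting sequences of homogeneous-coalition constructions à la Lemma~\ref{lem:HDG-hom} (that lemma is used only in the cycling counterexamples of Theorem~\ref{thm:cycleHDG}, where one chooses the deviations). The correct link between the two lengths is an argument that each agent can perform at most $n$ consecutive deviations that are invisible to the projected dynamics (via a per-agent potential on her ratio), giving $K \le n^2 + nP$; combined with $P\in\mathcal O(n^4)$ this yields $\mathcal O(n^5)$. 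Your product $\mathcal O(n^3)\cdot\mathcal O(n^2)$ lands on the same exponent only because the unjustified $\mathcal O(n)$ agent count and the misdirected simulation cost happen to cancel.
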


\begin{proof}
	\renewcommand\qedsymbol{$\vartriangleleft$}
	Consider an HDG with agent set $N = R\cup B$, where agents have strict and naturally single-peaked preferences. Let $(\partition_k)_{k = 0}^K$ be a sequence of partitions of an execution of the dynamics of IS deviations satisfying solitary homogeneity, where $\partition_0$ is the \singleton and, for every $1\le k \le K$, $\partition_k$ evolves from $\partition_{k-1}$ by an IS deviation of agent $d_k$.

The first step of the proof is to show the specific structure of the attained coalitions.

\begin{claim}\label{lem:converg-HDG-structure}
	For every $k\ge 0$, it holds that every coalition in $\partition_k$ is of the form $\{r_1\}$, $\{b_1\}$, $\{r_1,b_1,\dots,b_m\}$, or $\{b_1,r_1,\dots,r_{m'}\}$, where $1 \le m\leq |B|$ and $1 \le m'\leq |R|$ and $r_i\in R$ and $b_j\in B$ for every $i\in [m']$, $j\in [m]$. Moreover, the following statements hold:
		\begin{enumerate}
		\item If $\{r_1,b_1,\dots,b_m\}\in \partition_k$ for $m\ge 2$, then
		$\frac{1}{m+1} \succ_{r_1} \frac{1}{m} \succ_{r_1} \dots \succ_{r_1} \frac{1}{2} \succ_{r_1} 1$.
		\item If $\{b_1,r_1,\dots,r_{m'}\}\in \partition_k$ for $m'\ge 2$, then 
		$\frac{m'}{m'+1} \succ_{b_1} \frac{m' - 1}{m'} \succ_{b_1} \dots \succ_{b_1} \frac{1}{2} \succ_{b_1} 0$.
	\end{enumerate}
\end{claim}

\begin{proof}
	We will show by induction over $k$ for $0\le k \le K$ that every coalition in $\partition_k$ is of the form $\{r_1\}$, $\{b_1\}$, $\{r_1,b_1,\dots,b_m\}$, or $\{b_1,r_1,\dots,r_{m'}\}$, where $b_i\in B$ and $r_j\in R$ for every $i\in [m]$, $j\in [m']$, $1 \le m\leq |B|$, and $1 \le m'\leq |R|$. Simultaneously to this main claim, we will prove the additional statements as auxiliary claims. 
	
	Clearly, the \singleton satisfies the main and auxiliary claims. Now, assume that the assertion is true for some fixed $0 \le k < K$.
	Assume without loss of generality that $d_{k+1}$ is a red agent (the case for a blue agent is symmetric and uses the second auxiliary claim where we use the first auxiliary claim). We have to consider the two coalitions affected by $d_{k+1}$ to show that $\partition_{k+1}$ satisfies the claims.
	
	First, assume for contradiction that the coalition~$\partition_{k+1}(d_{k+1})$ breaks the main claim. Then, $\partition_{k+1}(d_{k+1}) \setminus \{d_{k+1}\}$ is of the form $\{r_1\}$ or $\{r_1,b_1,\dots,b_m\}$ with $2 \le m \le |B|$. The former case is excluded as the deviation satisfies solitary homogeneity. In the latter case, $\frac 2{2+m} > \frac 1 m > \frac 1{m+1}$, and we know by the first auxiliary claim in step $k$ that $\frac{1}{m+1} \succ_{r_1} \frac{1}{m}$. Hence, single-peakedness implies $\frac 1m \succ_{r_1} \frac 2{2+m}$, and, by transitivity of the preferences, we obtain that $\frac{1}{m+1} \succ_{r_1} \frac 2{2+m}$. This contradicts the fact that the deviation by $d_{k+1}$ was approved by agent~$r_1$. Hence, $\partition_{k+1}(d_{k+1})$ satisfies the main claim and must be of the form $\{b_1,r_1,\dots,r_{m'}\}$, where $m'\ge 1$. 
	
	We proceed with the auxiliary claims for this coalition. As the coalition contains only one blue agent, the first auxiliary claim is vacant. Further, since $b_1$ gave her consent to letting $d_{k+1}$ join, it satisfies the second auxiliary claim (extending the second auxiliary claim for $b_1$ at step $k$) if $m'\ge 3$. If $m' = 2$, then the consent of $b_1$ implies $\frac 23 \succ_{b_1} \frac 12$, and single-peakedness implies $\frac 23 \succ_{b_1} \frac 12 \succ_{b_1} 0$. If $m' = 1$, this claim is also vacant.
	
	Second, assume for contradiction that the coalition abandoned by agent~$d_{k+1}$ violates the main claim. Then, $\partition_k(d_{k+1})$ was of the form $\{r,b_1,\dots,b_m\}$ with $m\ge 2$ and $r = d_{k+1}$. We already know that $\partition_{k+1}(d_{k+1})$ is of the desired form. It cannot be the coalition $\{r\}$, because of the first auxiliary claim for $r$. Also, it cannot be of the form $\{r,\hat b_1,\dots,\hat b_{\hat m}\}$ with $\hat m\ge 2$, because then $\partition_k(\hat b_1)$ violates the main claim in step $k$. Hence, we know that $\partition_{k+1}(d_{k+1})$ is of the form $\{b,r_1,\dots,r_{m'}\}$ with $m' \ge 1$. Using $m\ge 2$, we have that $\frac 1{m+1} < \frac{m'}{m'+1} < 1$ and, since the deviation was performed by $d_{k+1}$, also $\frac{m'}{m'+1} \succ_{d_{k+1}} \frac 1{m+1}$. Hence, single-peakedness implies $1  \succ_{d_{k+1}} \frac{m'}{m'+1}$, and therefore, by transitivity, $1  \succ_{d_{k+1}} \frac 1{m+1}$. However, this contradicts the first auxiliary claim for agent $d_{k+1} = r$ in $\partition_k$.
	
	It remains to prove the auxiliary claims for the abandoned coalition. The first auxiliary claim is vacant. The second auxiliary claim follows directly by induction, whenever it is not vacant.
\end{proof}

	In the sequel, we use the notation $f_i(\partition) = \frac{|R\cap \partition(i)|}{|\partition(i)|}$, which specifies the fraction of red agents in the coalition of agent~$i$ with respect to partition~$\partition$. 
	Also, given an agent~$i$, denote her peak by $p_i$. We distinguish agents according to their peaks. To this end, define the agent sets
	
	\begin{itemize}
		\item $R_S = \{r\in R\colon 0 < p_r < 1/2\}$,
		\item $R_L = \{r\in R\colon 1/2 \le p_r \le 1\}$,
		\item $B_S = \{b\in B\colon 0 \le p_b \le 1/2\}$, and
		\item $B_L = \{b\in B\colon 1/2 < p_b < 1\}$.
	\end{itemize}
	
	The subscripts indicate whether the peak is large ($L$) or small ($S$). We would like to analyze a dynamics where $f_i(\partition)$ is always close to the peak of an agent. This is achieved by agents in $R_L$ and $B_S$.

\begin{claim}\label{lem:HDG-boring-agents}
	Let $k\ge 0$. Then, the following statements hold:
	
	\begin{enumerate}
		\item If $r\in R_L$, then $f_r(\partition_k)\ge \frac 12$.
		\item If $b\in B_S$, then $f_b(\partition_k)\le \frac 12$.
	\end{enumerate}
\end{claim}	

\begin{proof}
	We show the statement by induction over $k$ for $0\le k \le K$. Clearly, the statement is true for $k = 0$, because $\partition_0$ is the \singleton. Now, assume that the assertion is true for some fixed $0 \le k < K$. We assume without loss of generality that agent~$d_{k + 1}$ is red (the case of a blue agent follows from a symmetric argument). Clearly, all agents not affected by the deviation maintain the two invariants claimed in the lemma. Therefore, we have to consider the abandoned and joined coalitions. 
	
	By Claim~\ref{lem:converg-HDG-structure}, $\partition_k(d_{k+1})$ is of the form $\{d_{k+1}\}$ or $\{b_1, r_1, \dots, r_m, d_{k+1}\}$ for some $m\ge 0$, where $b_1 \in B$ and $r_1,\dots, r_m\in R$. The former case is irrelevant because then the abandoned coalition does not exist anymore. In the latter case, $f_{b_1}(\partition_{k+1}) < f_{b_1}(\partition_k)$, and the second invariant follows by induction if $b\in B_S$. 
	Furthermore, if $m\ge 1$, then $f_{b_1}(\partition_{k + 1}) \ge \frac 12$, and the first invariant is true for all red agents the abandoned coalition (in particular if they are in $R_L$).
	
	Applying Claim~\ref{lem:converg-HDG-structure} again, $\partition_{k+1}(d_{k+1})$ is also of the form $\{d_{k+1}\}$ or $\{b_1, r_1, \dots, r_{m'}, d_{k+1}\}$ for some $m'\ge 0$, where $b_1 \in B$ and $r_1,\dots, r_{m'}\in R$. Hence, the first invariant is satisfied for all red agents and for $b_1$ if $m' = 0$. It remains the case $m' \ge 1$. Then, since the deviation was approved by agent~$b_1$, it holds $\frac{m' + 1}{m'+2} \succ_{b_1} \frac{m'}{m'+1}$. Then, single-peakedness implies that $p_{b_1} > \frac{m'}{m'+1} \ge \frac 12$ and therefore $b_1\notin B_S$. Hence, the second invariant is vacant in this case.
	
	Altogether, we have shown that both invariants are satisfied for partition~$\partition_{k+1}$, which completes the induction step.
\end{proof}

\begin{figure}
	\centering
	\begin{tikzpicture}
		\node (d1) at (0,0) {$(\partition_k)_{k = 1}^K$};
		\node (d2) at (3.5,0) {$(\sigma_k)_{k = 1}^K$};
		\node (d3) at (7,0) {$(\tau_l)_{l = 1}^L$};
		\node (d4) at (10.5,0) {$(\rho_p)_{p = 1}^P$};
		
		\draw[->] (d1) edge node[above] {\footnotesize no undesirable}node[below] {\footnotesize coalitions} (d2);
		\draw[->] (d2) edge node[above] {\footnotesize include merging}node[below] {\footnotesize deviations} (d3);
		\draw[->] (d3) edge node[above] {\footnotesize omit duplicate}node[below] {\footnotesize partitions} (d4);
	\end{tikzpicture}
	\caption{Transformation of the dynamics in the proof of Theorem~\ref{thm:converg-HDG-singletonstrictSP}. In the first modification, we keep red agents with a small peak and blue agents with a large peak in a singleton coalition. This yields a sequence of partitions $(\sigma_k)_{k = 1}^K$ where some transitions are no valid IS deviation. We include some merging deviations to resolve this. Finally, we remove duplicate partitions to end up with a valid IS dynamics $(\rho_p)_{p = 1}^P$.\label{fig:dyntrans}}
\end{figure}

	A similar statement is not true for agents in $R_S$ or $B_L$. Therefore, the next step will \emph{modify} the dynamics such that agents are only contained in coalitions close to their peaks, unless they are in a singleton coalition. This is a sophisticated and tedious construction which will be performed in several steps. An outline is depicted in Figure~\ref{fig:dyntrans}. First, we prevent undesirable deviations by red agents with a small peak and blue agents with a large peak by keeping these agents in singleton coalitions. However, this yields a modified dynamics $(\sigma_k)_{k = 1}^K$ where some steps do not correspond to IS deviations. We therefore have to insert and omit certain steps. 
	Eventually, we will end up at a new dynamics  $(\rho_p)_{p = 1}^P$. This dynamics is easier to analyze because it corresponds to an AHG. Moreover, the convergence behavior of the original dynamics only depends on the convergence behavior of the new dynamics because we only omit a linear number of steps (with respect to $n$). 
	Hence, we can complete the proof by showing how to bound $K$, i.e., the length of the original dynamics, with respect to $P$, i.e., the length of the simpler dynamics.
	
	We start with the first transformation. Given a partition~$\partition$, define the subset of agents $F_{\partition} \subseteq N$ as $F_{\partition} = \{r\in R_S\colon \frac 12 < f_r(\partition) < 1\} \cup \{b\in B_L\colon \frac 12 > f_b(\partition) > 0\}$, i.e., the set of agents whose ratio is \emph{far} from their peak. 
	By definition of $F_\partition$ and Claim~\ref{lem:converg-HDG-structure}, a red agent $r\in R\cap F_\partition$ is in a coalition of the form $\{b_1,r_1,\dots,r_m\}$ in partition~$\partition$ and, symmetrically, a blue agent $b\in B\cap F_\partition$ is in a coalition of the form $\{r_1,b_1,\dots,b_m\}$ in partition~$\partition$. Moreover, by Claim~\ref{lem:converg-HDG-structure} and strict single-peakedness of the preferences, an agent in $F_\partition$ is the last agent who entered her coalition in $\partition$.
	
	Now, consider a modified dynamics $(\sigma_k)_{k = 0}^K$, where, for every $0\le k \le K$, $\sigma_k = \bigcup_{C\in \partition_k} \{C\setminus F_{\partition_k}\}\cup \bigcup_{i\in F_{\partition_k}}\{\{i\}\}$. This modification has essentially the following effects: We omit deviations of agents where they land in the set $F_{\partition}$ while keeping them in a singleton coalition. Sometimes, it can happen that the deviator satisfies $d_k\in F_{\partition_{k - 1}}\setminus F_{\partition_k}$. In this case, the modified dynamics \emph{sees} the deviator join her new coalition from a singleton coalition. The second effect that can happen is the case where a non-deviator is in $F_{\partition_{k-1}}\setminus F_{\partition_k}$. This happens exactly if an agent in $F_{\partition_{k-1}}$ is abandoned, and thereby left in a coalition of size~$2$, which consists of one agent of each type.
	Hence, we insert a suitable deviation to obtain a valid modified dynamics of IS deviations. This requires two lemmas. The first lemma gives more structural insight and establishes that every coalition can contain at most one agent from $F_{\partition_k}$.

\begin{claim}\label{lem:HDG-unique-far}
	Let $0\le k \le K$ and $C\subseteq \partition_k$. Then, $|C\cap F_{\partition_k}|\le 1$. Moreover, consider $m>0$ and agents $r_1,\dots, r_m\in R$ and $b_1,\dots, b_m\in B$. Then, the following statements hold:
	\begin{enumerate}
		\item If $\{b_1, r_1, \dots, r_m\}\in \partition_k$, then, for all $1\le i\le m$ with $r_i\notin F_{\partition_k}$, $f_{r_i}(\partition_k) \succeq_{r_i} f_{r_i}(\sigma_k)$.
		\item If $\{r_1, b_1, \dots, b_m\}\in \partition_k$, then, for all $1\le i\le m$ with $b_i\notin F_{\partition_k}$, $f_{b_i}(\partition_k) \succeq_{b_i} f_{b_i}(\sigma_k)$.
	\end{enumerate}
\end{claim}

\begin{proof}
	We show all statements simultaneously by induction over $k$ for $0\le k \le K$. Clearly, all statements are true for $k = 0$. Now, assume that the statements are true for some fixed $0 \le k < K$. Clearly, all coalitions in $\partition_{k+1}$ except possibly $\partition_k(d_{k+1})\setminus\{d_{k+1}\}$ and $\partition_{k + 1}(d_{k+1})$ satisfy the claim. Assume without loss of generality that $d_{k+1}$ is a red agent (the case of a blue agent is symmetric). 
	
	We start with the abandoned coalition. By Claim~\ref{lem:converg-HDG-structure}, $\partition_k(d_{k+1})\setminus\{d_{k+1}\}$ is the empty set (if $d_{k+1}$ was in a singleton coalition) or of the type $\{b_1,r_1,\dots, r_m\}$ where $b_1\in B$ and $r_1,\dots, r_m\in R$ for some $m\ge 0$. 
	In the former case, all claims are vacant, so assume the latter case. If $m = 0$, the abandoned coalition is a singleton coalition, and the assertions are true (the additional statements are then vacant). If $m = 1$, then $f_{b_1}(\partition_{k + 1}) = \frac 12$, and therefore $\partition_{k + 1}(b_1)\cap F_{\partition_{k+1}} = \emptyset$. 
	In particular, $\partition_{k + 1}(b_1) = \sigma_{k+1}(b_1)$, and all claims are true. If $m\ge 2$, it follows from $f_{b_1}(\partition_{k + 1})\ge \frac 23$ that $b_1\notin F_{\partition_{k+1}}$. Moreover, $\{r_1,\dots, r_m\} \cap F_{\partition_{k+1}}\subseteq \{r_1,\dots, r_m\} \cap F_{\partition_k}$. Therefore, $|\partition_{k + 1}(b_1)\cap F_{\partition_{k+1}}|\le 1$ follows from induction for step $k$, which implies that $|\{r_1,\dots, r_m\} \cap F_{\partition_k}| \le |\partition_k(b_1)\cap F_{\partition_k}|\le 1$. Additionally, the additional statement is trivially true unless $\{r_1,\dots, r_m\} \cap F_{\partition_{k+1}}\neq \emptyset$, say $r_1\in F_{\partition_{k+1}}$. Then, $r_1\in F_{\partition_k}$ and we have that $f_{b_1}(\sigma_{k + 1}) < f_{b_1}(\partition_{k + 1}) = f_{b_1}(\sigma_k) < f_{b_1}(\partition_k)$. Let $2\le i\le m$. Then, induction implies 
	$f_{r_i}(\partition_k) \succ_{r_i} f_{r_i}(\sigma_k)$. Hence, single-peakedness implies $f_{r_i}(\partition_{k+1}) \succ_{r_i} f_{r_i}(\sigma_{k+1})$.
	
	The proof for the joined coalition is similar. Using Claim~\ref{lem:converg-HDG-structure} again, we know that $\partition_{k+1}(d_{k+1})$ is a singleton coalition or of the type $\{b_1,r_1,\dots, r_m, d_{k+1}\}$ where $b_1\in B$ and $r_1,\dots, r_m\in R$ for some $m\ge 0$. A singleton coalition fulfills the claim. Therefore, assume the latter case. Since $f_{b_1}(\partition_{k + 1})\ge \frac 12$, it holds that $b_1\notin F_{\partition_{k+1}}$. Moreover, if $m\ge 1$ and $1\le i\le m$, then $\frac {m+1}{m+2}\succ_{r_i}\frac m{m+1}$ and single-peakedness implies that $p_{r_i} > \frac m{m+1}\ge \frac 12$. Hence, $r_i\notin R_S$, and therefore $r_i\notin F_{\partition_{k+1}}$. Together, $\partition_{k+1}(d_{k+1})\cap F_{\partition_{k + 1}}\subseteq \{d_{k+1}\}$, and the first statement is true. The additional statement is clear if $d_{k+1}\notin F_{\partition_{k + 1}}$, in which case $\partition_{k+1}(b_1) = \sigma_{k+1}(b_1)$. If $d_{k+1}\in F_{\partition_{k + 1}}$, it follows for the other red agents, because they approve that $d_{k+1}$ joins.
\end{proof}

We are ready to show how to obtain the valid dynamics.

\begin{claim}\label{lem:shortcutting}
	Let $1\le k \le K$. If $\sigma_k \neq \sigma_{k-1}$, then $\sigma_k$ evolves from $\sigma_{k-1}$ by performing at most $2$ IS deviations. If two deviations have to be performed, then the intermediate partition evolves from $\sigma_{k-1}$ by merging two agents from singleton coalitions.
\end{claim}
	
\begin{proof}
	Let $1\le k \le K$ with $\sigma_k \neq \sigma_{k-1}$. The only agents that matter to us are in $\partition_k(d_k)\cup \partition_{k - 1}(d_k)$. Other agents did not change their coalition in the original dynamics, and therefore, their membership in $F_{\partition_k}$ is also not affected. Without loss of generality, we assume that $d_k$ is a red agent (the case of a blue agent is again symmetric). 
	
	First, we show that $(\partition_k(d_k)\setminus \{d_k\}) \cap F_{\partition_{k-1}} = \emptyset$ and $\partition_k(d_k)\cap F_{\partition_k} = \emptyset$. Assume for contradiction that this is not the case. By Claim~\ref{lem:converg-HDG-structure}, this can only be the case if $\partition_k(d_k)$ is of the form $\{b_1,r_1,\dots, r_m, d_k\}$ where $b_1\in B$ and $r_1,\dots, r_m\in R$ for some $m\ge 1$. As $f_{b_1}(\partition_k)\ge \frac 23$ and $f_{b_1}(\partition_{k - 1})\ge \frac 12$, $b_1\notin F_{\partition_k}$ and $b_1\notin F_{\partition_{k-1}}$, respectively. 
	For $1\le i \le m$, it holds that $\frac{m+1}{m+2}\succ_{r_i} \frac m{m + 1}$, and therefore, by single-peakedness, $p_i > \frac m{m + 1} \ge \frac 12$. Hence, $r_i\in R_L$, and therefore $r_i\notin  F_{\partition_k}$ and $r_i\notin F_{\partition_{k-1}}$. This shows already that $(\partition_k(d_k)\setminus \{d_k\}) \cap F_{\partition_{k-1}} = \emptyset$.
	
	Finally, it remains to exclude that $d_k \in F_{\partition_k}$. Assume for contradiction that $d_k \in F_{\partition_k}$. Then, our considerations about $\partition_k(d_k)$ imply that $\sigma_k(b_1) = \sigma_{k-1}(b_1)$. Moreover, by Claim~\ref{lem:converg-HDG-structure}, $f_{d_k}(\partition_{k-1}) \ge \frac 12$. Since $d_k \in F_{\partition_k}$, it holds that $p_{d_k} < \frac 12$. As we already know that $f_{d_k}(\partition_k) > \frac 12$, single-peakedness and the fact that $d_k$ has improved her ratio imply that $f_{d_k}(\partition_{k-1}) \ge f_{d_k}(\partition_k) > \frac 12$. Therefore, $d_k \in F_{\partition_{k-1}}$ and Claim~\ref{lem:HDG-unique-far} implies that $(\partition_{k-1}(d_k)\setminus\{d_k\}) \cap F_{\partition_{k-1}} = \emptyset$. Additionally, $f_{d_k}(\partition_{k-1}) > \frac 12$ also implies that $(\partition_{k-1}(d_k)\setminus\{d_k\}) \cap F_{\partition_k} = \emptyset$. Hence, the coalition abandoned by $d_k$ has not changed from $\sigma_{k-1}$ to $\sigma_k$. 
	Together, this contradicts that $\sigma_k \neq \sigma_{k-1}$. Hence, our assumption that $d_k \in F_{\partition_k}$ was wrong, and therefore $\partition_k(d_k)\cap F_{\partition_k} = \emptyset$. In particular, we have shown so far that 
	
	\begin{equation}\label{eq:shortcutlemma-welcoming}
		\sigma_{k-1}(b_1) = \partition_{k-1}(b_1)\textnormal{ and }\sigma_k(b_1) = \partition_k(b_1) = \partition_{k-1}\cup\{d_k\}\text.
	\end{equation}

	Next, we consider the abandoned coalition. By Claim~\ref{lem:converg-HDG-structure}, the coalition~$\partition_{k-1}(d_k)$ is of the form $\{d_k\}$ or $\{b'_1,r'_1,\dots, r'_{m'}, d_k\}$ where $b'_1\in B$ and $r'_1,\dots, r'_{m'}\in R$ for some $m'\ge 0$. In the first case, we know that $\{d_k\}\in \sigma_{k-1}$, and, together with Equation~\ref{eq:shortcutlemma-welcoming}, $\sigma_{k-1} = \partition_{k-1}$ and $\sigma_k = \partition_k$. Therefore, $\sigma_k$ evolves from $\sigma_{k-1}$ by an IS deviation of agent~$d_k$.

	Next, we consider the case that $\partition_{k-1}(d_k)$ is of the form $\{b'_1,r'_1,\dots, r'_{m'}, d_k\}$. Note that $f_{b'_1}(\partition_{k-1})\ge \frac 12$ and $f_{b'_1}(\partition_k)\ge \frac 12$ or $f_{b'_1}(\partition_k) = 0$ and therefore $b'_1\notin F_{\partition_{k-1}}$ and $b'_1\notin F_{\partition_k}$. Moreover, it holds that $\{r'_1,\dots, r'_{m'}\} \cap F_{\partition_k}\subseteq \{r'_1,\dots, r'_{m'}\} \cap F_{\partition_{k - 1}}$. We are ready to consider the final cases. 
	
	First assume that $d_k\in F_{\partition_{k - 1}}$. Then, Claim~\ref{lem:HDG-unique-far} and the considerations in the previous paragraph imply that $\{b'_1,r'_1,\dots, r'_{m'}\} \cap F_{\partition_k} = \emptyset$ and $\{b'_1,r'_1,\dots, r'_{m'}\} \cap F_{\partition_{k - 1}} = \emptyset$. Hence, $\{b'_1,r'_1,\dots, r'_{m'}\}\in \sigma_k$ and  $\{b'_1,r'_1,\dots, r'_{m'}\}\in \sigma_{k-1}$. This, together with Equation~\ref{eq:shortcutlemma-welcoming} and the definition of $\sigma_{k-1}$ implies that $\sigma_k$ evolves from $\sigma_{k-1}$ by a unilateral deviation of agent $d_k$ from a singleton coalition to coalition $\sigma_k(d_k)$. Since $\sigma_k \neq \sigma_{k-1}$, we know that $\sigma_k(d_k) \neq \{d_k\}$. Hence, $\frac 12 \le f_{d_k}(\sigma_k) < 1$. This, together with $d_k\in R_S$ implies that $p_{d_k} \le f_{d_k}(\sigma_k) < 1$. Hence, single-peakedness implies that the deviation was a Nash deviation. By Equation~\ref{eq:shortcutlemma-welcoming}, the deviation was also approved by all agents in the joined coalition. Hence, $\sigma_k$ evolves from $\sigma_{k-1}$ through an IS deviation of agent~$d_k$.
	
	It remains the case that $d_k\notin F_{\partition_{k - 1}}$. If $\{r'_1,\dots, r'_{m'}\} \cap F_{\partition_{k - 1}} = \emptyset$, then $\sigma_{k-1} = \partition_{k-1}$ and, together with Equation~\ref{eq:shortcutlemma-welcoming}, $\sigma_k$ evolves from $\sigma_{k-1}$ by an IS deviation of agent~$d_k$.
	
	Assume therefore that there exists $1\le i\le m'$ with $r'_i\in F_{\partition_{k - 1}}$. If $m'\ge 2$, then $\{b'_1,r'_1,\dots, r'_{m'}\} \cap F_{\partition_{k - 1}} = \{b'_1,r'_1,\dots, r'_{m'}\} \cap F_{\partition_k}$. In this case, $\sigma_k$ evolves from $\sigma_{k-1}$ through a unilateral deviation of $d_k$. 
	Since $d_k\notin F_{\partition_{k - 1}}$, the first additional statement of Claim~\ref{lem:HDG-unique-far} implies that $f_{d_k}(\partition_{k-1}) \succeq_{d_k} f_{d_k}(\sigma_{k-1})$. Therefore, $d_k$ performs a Nash deviation because she performed a Nash deviation from $\partition_{k-1}$ to $\partition_k$. The consent of the joined coalition follows again from Equation~\ref{eq:shortcutlemma-welcoming}.
	
	Finally, assume that $\{r'_1,\dots, r'_{m'}\} \cap F_{\partition_{k - 1}} \neq \emptyset$ and $m' = 1$. Then, $\partition_{k-1}(b'_1) = \{b'_1,r'_1,d_k\}$ and $r'_1 \in  F_{\partition_{k - 1}}$. Hence, $\sigma_k$ evolves from $\sigma_{k-1}$ by transforming $\{r'_1\}$, $\{b'_1,d_k\}$, and $\partition_k(d_k)\setminus\{d_k\}$ into $\{b'_1,r'_1\}$ and $\partition_k(d_k)$. These changes can be achieved by two unilateral deviations. First, $d_k$ joins $\partition_k(d_k)\setminus\{d_k\}$ and then $r'_1$ joins $b'_1$. The first deviation is an IS deviation as in the previous case. The second deviation is also an IS deviation. The approval of $b'_1$ follows from the second auxiliary statement in Claim~\ref{lem:converg-HDG-structure} applied to $\partition_{k-1}(b'_1) = \{b'_1,r'_1,d_k\}$. Also, the deviation is improving for $r'_1$, because $r'_1\in F_{\partition_{k - 1}}$. Therefore $r'_1\in R_S$. Hence, $1 > \frac 12 \ge p_{r'_1}$, and therefore $\frac 12 \succ_{r'_1} 1$.
\end{proof}

Using the insights gained in the previous claim, we can define a valid modified dynamics based on $(\sigma_k)_{k=1}^K$. First, we insert the partitions identified in the proof of Claim~\ref{lem:shortcutting} where two agents are merged. This yields a dynamics $(\tau_l)_{l=1}^L$ such that $\tau_l$ evolves from $\tau_{l-1}$ through a deviation of agent $\hat d_l$ whenever $\tau_l \neq \tau_{l-1}$. Then, we remove all steps where $\tau_l = \tau_{l-1}$ to obtain a dynamics $(\rho_p)_{p=1}^P$, which is a dynamics where every step corresponds to an IS deviation.
Define the index set $I = \{1\le k\le K\colon \sigma_k = \sigma_{k-1}\}$, that is, the set of steps where the modified dynamics remains unchanged. Then, $K \le L = |I| + P$.

Hence, we would like to obtain bounds on each of $|I|$ and $P$.
The next claim allows us to bound $|I|$ by replacing it with an appropriate bound with respect to $P$. The key insight for proving the next claim follows from the observation that essentially every $n$-th deviation of an agent has to correspond to a deviation of the modified dynamics.

\begin{claim}\label{lem:HDG-omittance}
	It holds that $L \le n^2 + nP$.
\end{claim}

\begin{proof}
	We first show that if $\sigma_k = \sigma_{k-1}$, then $d_k\in F_{\partition_k}$. We prove this fact by contraposition. Assume that $d_k\notin F_{\partition_k}$. Note that, by the definition of $F_{\partition}$, it holds for every partition $\partition$ and every coalition $C\in \partition$ with $C\cap F_{\partition}\neq \emptyset$ that $|C|\ge 3$. Now, if $\sigma_k(d_k) \neq \{d_k\}$, then there exists an agent~$x\in \sigma_k(d_k)\setminus\{d_k\}$. Since $d_k$ was the deviator, it holds that $\sigma_k(x)\neq \sigma_{k-1}(x)$ and therefore $\sigma_k \neq \sigma_{k-1}$. It remains the case that $\sigma_k(d_k) = \{d_k\}$. Then, as $d_k\notin F_{\partition_k}$, $\partition_k(d_k) = \{d_k\}$. This implies that $d_k\notin F_{\partition_{k - 1}}$. To see this, we assume without loss of generality that $d_k\in R$. Indeed, if $d_k\in F_{\partition_{k - 1}}$, then $d_k\in R_S$ and $\frac 12 \le f_{d_k}(\partition_{k - 1})$. Then, single-peakedness implies that $f_{d_k}(\partition_{k - 1})\succ_{d_k} 1$, contradicting that $d_k$ performed an IS deviation to form a singleton coalition. Hence, $d_k\notin F_{\partition_{k - 1}}$. As in the first case, we find an agent~$x\in \sigma_{k-1}(d_k)\setminus\{d_k\}$, for which it holds that  $\sigma_k(x)\neq \sigma_{k-1}(x)$ and therefore $\sigma_k \neq \sigma_{k-1}$.
	
	The key insight for this claim is that every agent can only perform few successive deviations corresponding to steps in $I$. Indeed, the first part of the proof implies that $d_k\in F_{\partition_k}$ whenever $k\in I$. 
	Consider an arbitrary agent $r\in R_S$.
	We define a potential function 
	
	$$\lambda_r(\partition) = \begin{cases}
		|R| + 1 & f_r(\partition) \le \frac 12 \textnormal{ or } f_r(\partition) = 1\\
		m & f_r(\partition) = \frac m{m+1} \textnormal{ for } 2\le m \le |R|
	\end{cases}\text.$$
	
	Note that $\lambda_r$ is integer-valued and $2\le \lambda_r\le |R| + 1$. We will show that $\lambda_r$ decreases whenever $r$ performs a deviation at step $k$ where she lands in $F_{\partition_{k}}$, and can only increase through a deviation of $r$ in $(\tau_l)_{l=1}^L$. In particular, we will show that the potential does not increase if another agent performs a deviation, unless when $r\in F_{\partition_{k-1}}\setminus F_{\partition_{k}}$ which corresponds to the case of inserting a deviation by $r$, which also corresponds to a deviation in $(\tau_l)_{l=1}^L$.
	
	Consider a step $k$ in the dynamics where $d_k = r$ and $r\in F_{\partition_k}$. Then, $\frac 12 < f_r(\partition_k) < 1$. By Claim~\ref{lem:converg-HDG-structure}, $f_r(\partition_k) = \frac m{m+1}$ for some $2\le m\le |R|$ and therefore $\lambda_r(\partition_k) = m$. Also, by Claim~\ref{lem:converg-HDG-structure}, $r$ is not allowed to perform a deviation if $f_r(\partition_{k-1}) < \frac 12$ (as then an invalid homogeneous coalition of blue agents would remain). Hence, single-peakedness implies that $f_r(\partition_{k-1}) > \frac m{m+1}$, and therefore $\lambda_r(\partition_k) < \lambda_r(\partition_{k-1})$. 
	
	If $d_k = r$ and $r\notin F_{\partition_k}$, then a deviation happens where $\sigma_k \neq \sigma_{k-1}$, and therefore this corresponds to a deviation of $r$ in $(\tau_l)_{l=1}^L$. 
	Next, we want to inspect how $r$ is affected from a deviator if $d_k \neq r$. In this case, $d_k$ cannot join $\partition_{k-1}(r)$ if $\lambda_r(\partition_{k-1})\le |R|$ (that is, in the case where $r$ is in a coalition of `large' ratio). 
	Indeed, since $r\in R_S$, $r$ would block any red agent to join, and a blue agent cannot join due to Claim~\ref{lem:converg-HDG-structure}. Hence, $\partition_{k-1}(r)$ is only affected if $d_k\in \partition_{k-1}(r)$. By Claim~\ref{lem:converg-HDG-structure}, if $|\partition_{k-1}(r)| =2$, then $f_r(\partition) = \frac 12$, and the potential cannot go up. 
	Otherwise, Claim~\ref{lem:converg-HDG-structure} implies that $d_k$ is red. Since $r\in R_s$, it holds in addition that $r\notin F_{\partition_{k-1}}$. Hence, if $|\partition_{k-1}(r)|\ge 4$, then $\lambda_r(\partition_k) = \lambda_r(\partition_{k-1}) - 1$. If $|\partition_{k-1}(r)|= 3$, then $\partition_{k}(r)$ is of size $2$ and $\lambda_r(\partition_k) = |R| + 1$. As then  $r\in F_{\partition_{k-1}}\setminus F_{\partition_{k}}$, this case corresponds exactly to inserting the deviation of $r$ to form a coalition of size $2$ in $(\tau_l)_{l=1}^L$.
	
	Together, there can be at most $|R| - 1 \le n - 1$ successive deviations by $r$ corresponding to steps in $I$ until there is a deviation by $r$ in $(\tau_l)_{l=1}^L$. We obtain a bound for the deviations by $r$ which matter. To make this formal, we consider the following quantities. Given an agent $x$, define 
	$I_x = |\{k\in I \colon d_k = x\}|$ and $L_x = |\{1\le l \le L\colon \hat d_l = x\}|$. Since at least every $n$-th deviation counts towards $L_r$ but not towards $I_r$, we can conclude that $L_r - I_r \ge \lfloor \frac {L_r} n \rfloor\ge \frac {L_r} n - 1$.
	
	By an analogous argument where we consider an analogous potential function for blue agents, we obtain that $L_b - I_b \ge \frac {L_b} n - 1$ for every $b\in B_L$. Additionally, the definition of $F_{\partition}$ implies that $k\notin I$ if $d_k\in R_L$ or $d_k\in B_S$. Hence, for $x\in R_L\cup B_S$, it holds that $L_x - I_x = L_x \ge \frac {L_x} n - 1$ (where the latter inequality is of course a strong estimate, but it is all we need).
	
	Summing up the inequalities for all agents, we obtain
	
	$$P = L - |I| = \sum_{x\in N} L_x - I_x \ge \sum_{x\in N} \frac {L_x} n - 1 = \frac L n - n\text.$$
	
	Solving for $L$ yields the desired inequality.
\end{proof}

It remains to analyze the dynamics $(\rho_p)_{p=1}^P$. To this end, we will show that this dynamics essentially behaves like a specific AHG, where we have to replace some agents by multiple copies. This yields an AHG with at most $n^2$ agents. Hence, Theorem~\ref{thm:fastConvAHG} would provide a running time of $\mathcal O(n^6)$. However, we can do better. By exploiting structural properties of the AHG and a close inspection of the potentials in the proof of Theorem~\ref{thm:fastConvAHG}, we can reduce the running time of our transformed dynamics on the AHG to $\mathcal O(n^4)$.
 
\begin{claim}\label{lem:HDG-AHG-trafo}
	It holds that $P \in \mathcal O(n^4)$.
\end{claim}
\begin{proof}
	 First, let us note that, by construction of $(\rho_p)_{p=0}^P$, $F_{\rho_p} = \emptyset$ for all $0\le p\le P$.
	  Hence, all agents in $R_S$ (or $B_L$) only perform deviations towards singletons or coalitions of ratio at most $\frac 12$ (or at least $\frac 12$). Moreover, we may assume that an agent $r\in R_S$ never forms a coalition of size $2$ with an agent in $b\in B_L$. Due to single-peakedness and their respective peaks, this can only happen if both of them come from singleton coalitions. However, then no further agent can join $\{r,b\}$. Further red agents would be blocked by $r$ and further blue agents by $b$. Hence, this coalition can only be altered if one of these agents leaves. But this deviation can be performed right away from the singleton coalition. Similarly, we can exclude the formation of coalitions of size $2$ by agents in $R_L$ and $B_S$.
	 
	 As this shortcutting can only remove every second step (and an initial $n/2$ steps for forming a first set of pairs), it leaves us with a dynamics $(\rho'_p)_{p=0}^{P'}$ with $P'\ge \frac {P-n/2}2$ such that, for $1\le p\le P'$, $\rho'_p$ evolves from $\rho'_{p-1}$ through an IS deviation of some agent $d'_p$.
	 
	 Even more, we may assume that agents in $R_S$ (or $B_L$) never perform deviations. First, according to Claim~\ref{lem:converg-HDG-structure}, the only coalition which such an agent can leave is a coalition of size $2$ with ratio $\frac 12$. Hence, single-peakedness implies that forming the singleton coalition is not beneficial.
	 Furthermore, Claim~\ref{lem:converg-HDG-structure} implies that they could only form coalitions of size $2$. By the first part of the proof, their partner has to be from $B_S$ (or $R_L$). Since the preferences are strict, we may assume that their partner performs the deviation.
	 
	 Now, we define an AHG $(N^A, (\succ^A_x)_{x\in N^A})$ as follows. In principle, the only part of the preferences are on ratios $\frac 1{m+1}$ or $\frac m{m+1}$, and we want to identify these ratios with coalitions of size $m + 1$, because all coalitions of these ratios have exactly this size (using Claim~\ref{lem:converg-HDG-structure}). This part of the preferences will also inherit single-peakedness from the HDG. However, we have to deal with the preference over the ratio~$1$ for agents in $R_L$ (or over the ratio~$0$ for agents in $B_S$). To maintain single-peakedness, we should identify these ratios with coalition sizes $|R| + 1$ and $|B| + 1$, respectively. To achieve this goal, we introduce some auxiliary agents. Let the agent set of the AHG therefore be $N^A = R_S \cup B_L \cup \{r_0, \dots, r_{|R|}\colon r\in R_L\} \cup  \{b_0, \dots, b_{|B|}\colon b\in B_S\}$ and define strict and single-peaked preferences as follows (where we present only the relevant part of the preferences).
	 
	 \begin{itemize}
	 	\item If $r\in R_S$ and $2\le i,j \le |B| + 1$, then 
		\begin{itemize}
			\item $i\succ^A_r j$ if and only if $\frac 1i \succ_r \frac 1j$, 
			\item $1\succ^A_r i$ if and only if $1\succ_r \frac 1i$, and 
			\item $i\succ^A_r 1$ if and only if $\frac 1i \succ_r 1$.
		\end{itemize}
	 	\item If $b\in B_L$ and $2\le i,j \le |R| + 1$, then 
		\begin{itemize}
			\item $i\succ^A_b j$ if and only if $\frac {i-1}i \succ_b \frac {j-1}j$, 
			\item $1\succ^A_b i$ if and only if $0\succ_b \frac {i-1}i$, and 
			\item $i\succ^A_b 1$ if and only if $\frac {i-1}i \succ_b 0$.
		\end{itemize}
	 	\item If $r\in R_L$ and $2\le i,j \le |R|$, then 
		\begin{itemize} 
			\item $i\succ^A_{r_0} j$ if and only if $\frac {i-1}i \succ_{r} \frac {j-1}j$, 
			\item $|R|+1\succ^A_{r_0} i$ if and only if $1\succ_{r} \frac {i-1}i$,
			\item $i\succ^A_{r_0} |R|+1$ if and only if $\frac {i-1}i \succ_{r} 1$, and
			\item $|R|+1\succ^A_{r_l} |R|$ if $l\in [|R|]$.
		\end{itemize}
	 	\item If $b\in B_S$ and $2\le i,j \le |B|$, then 
		\begin{itemize}
			\item $i\succ^A_{b_0} j$ if and only if $\frac 1i \succ_{b} \frac 1j$, 
			\item $|B|+1\succ^A_{b_0} i$ if and only if $0\succ_{b} \frac 1i$,
			\item $i\succ^A_{b_0} |B|+1$ if and only if $\frac 1i \succ_{b} 0$, and
			\item $|B|+1\succ^A_{b_l} |B|$ if $l\in [|B|]$.
		\end{itemize}
	 \end{itemize}

	 Next, we define the modified dynamics. Therefore, let $0\le p \le P'$ and define the partition $\omega_p = \{\{r_0, \dots, r_{|R|}\}\colon r\in R_L, \{r\}\in \rho'_p\} \cup \{\{r_1, \dots, r_{|R|}\}\colon r\in R_L, \{r\}\notin \rho'_p\} \cup \{\{b_0, \dots, b_{|B|}\}\colon b\in B_S, \{b\}\in \rho'_p\} \cup \{\{b_1, \dots, b_{|B|}\}\colon b\in B_S, \{b\}\notin \rho'_p\} \cup \{\{b,r_0^1,\dots, r_0^m\}\colon b\in B_S, \rho'_p(b) = \{b,r^1,\dots, r^m\} \allowbreak \textnormal{ for } m\ge 0\} \cup \{\{r,b_0^1,\dots, b_0^m\}\colon r\in R_L, \rho'_p(r) = \{r,b^1,\dots, b^m\} \textnormal{ for } m\ge 0\}$.

	 Note that $\omega_p$ is well-defined, because every agent in $R_L$ (or $B_S$), which is not in a singleton coalition is part of a coalition solely consisting of agents in $R_L$ (or $B_S$), and a unique agent in $B_L$  (or $R_S$).
	 
	 Next, let $1\le p\le P'$. Then, $\omega_p$ evolves from $\omega_{p-1}$ through an IS deviation of some agent. This follows directly from the preferences in the AHG, where a fraction of $1$ (or $0$) plays the role of the coalition size $|R|+1$ (or $|B|+1$) for agents in $R_L$ (or $B_S$).
	 Hence, $(\omega_p)_{p=0}^{P'}$ is an execution of an individual dynamics in AHG $(N^A, (\succ^A_x)_{x\in N^A})$. 
	 
	 To bound its running time, we have to inspect the potentials in the proof of Theorem~\ref{thm:fastConvAHG}. First, $v_j^{P'}\le n$ for all agents $j\in N^A$. Second, $v_C^{P'}\le n$ for all $C\in \omega_{P'}$, and $|\omega_{P'}|\le 2n$. The latter bounds hold, because the copies of every original agent are only part of at most $2$ coalitions. Hence, $\Lambda(\omega_{P'})\le n^3 + 2n^2$, and there can be at most that many R-moves. Moreover, $\Lambda(\omega_k)-\Lambda(\omega_{k-1}) \le n - 1$ for every R-move. Hence, as in the proof of Theorem~\ref{thm:fastConvAHG}, we obtain a bound of $n^4 + 2n^3$ L-moves. Hence, the dynamics on the AHG runs for $P' \in \mathcal O(n^4)$ steps. Therefore, as $P\le 2P' + \frac n2$, we obtain $P \in \mathcal O(n^4)$.
\end{proof}
	\renewcommand\qedsymbol{$\square$}
Finally, we can combine all of our insights. Recall that $(\tau_l)_{l = 1}^L$ is longer than $(\sigma_k)_{k=1}^K$ because the only difference is the insertion of certain deviations. We can apply Claim~\ref{lem:HDG-omittance} and Claim~\ref{lem:HDG-AHG-trafo} to obtain

$$K \le L \overset{\mathrm{Claim~\ref{lem:HDG-omittance}}}{\le}  n^2 + n P \overset{\mathrm{Claim~\ref{lem:HDG-AHG-trafo}}}{\in} \mathcal O(n^5)\text.$$
\end{proof}

Under strict preferences, checking the existence of a path to stability and convergence are hard. 

\begin{restatable}{theorem}{existHDG}\label{thm:hardness-existencepath-HDG}
\existpb{IS}{HDG} is \np-hard and \convergpb{IS}{HDG} is \conp-hard, even for strict preferences.
\end{restatable}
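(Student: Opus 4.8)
\existpb{IS}{HDG} is \np-hard and \convergpb{IS}{HDG} is \conp-hard, even for strict preferences.

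The plan is to reduce from \textsc{Sat} to both problems, using copies of the cycling instance of Proposition~\ref{prop:nopath-HDG} as the ``prohibitive subconfigurations'' announced in the preliminaries. Given a CNF formula $\varphi$ over variables $x_1,\dots,x_m$ with clauses $c_1,\dots,c_t$, I would construct an HDG with strict preferences together with an initial partition $\spartition$ built from three kinds of gadgets: \emph{(i)} a \emph{variable gadget} for each $x_i$, whose agents start in a partition that is not individually stable and from which exactly one of two ``commitment'' deviations is available, corresponding to setting $x_i$ true or false; the gadget is designed so that after committing, no further deviation is possible inside it, so it contributes only a bounded number of steps to any execution (monotone progress); \emph{(ii)} a \emph{clause checker} for each $c_j$ --- one or two agents who can settle into a locally stable position once the literals of $c_j$ have been committed, but who are otherwise forced to deviate into \emph{(iii)} a private copy of the cycling instance of Proposition~\ref{prop:nopath-HDG}. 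All other agents of these gadgets receive strict preferences making them inert, i.e., never offering an IS deviation in any reachable partition. Each gadget has constant size, so the game and $\spartition$ are computable in polynomial time.

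For \existpb{IS}{HDG}, I would wire the checker of $c_j$ so that it can reach a quiescent state precisely when at least one literal of $c_j$ is committed to true, and otherwise its only available IS deviation leads it irreversibly into the armed trap. If $\varphi$ is satisfiable, a guided sequence of IS deviations first commits every variable gadget according to a satisfying assignment and then settles every clause checker; the resulting partition admits no IS deviation and is therefore IS, so a path to stability exists. Conversely, if $\varphi$ is unsatisfiable, then in any partition an uncommitted variable gadget still offers a deviation, so an IS partition requires all gadgets committed; but then the committed assignment violates some clause $c_j$, whose checker is forced into the trap of Proposition~\ref{prop:nopath-HDG}, from which (by that proposition) no IS partition is ever reachable. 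Hence no path to stability exists, and \existpb{IS}{HDG} is \np-hard.

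For \convergpb{IS}{HDG} I would use the complementary wiring: the checker of $c_j$ is given an \emph{optional} IS deviation into its trap that becomes available exactly when $c_j$ is satisfied by the committed assignment, while an unsatisfied committed assignment leaves no route into any trap; together with the bounded progress of the variable and clause gadgets, this makes the entire non-trap part acyclic. If $\varphi$ is satisfiable, an adversarial sequence commits a satisfying assignment, takes some checker into its trap, and cycles forever inside the gadget of Proposition~\ref{prop:nopath-HDG}; thus not every sequence converges. If $\varphi$ is unsatisfiable, no trap is ever reachable and every sequence of IS deviations terminates after polynomially many steps. Therefore \convergpb{IS}{HDG} is \conp-hard, and both claims hold even for strict preferences.

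The main obstacle is the gadget design itself. HDG agents evaluate only the fraction of red agents in their coalition, so encoding a Boolean variable, a clause, and the ``arming'' of a trap --- all with \emph{strict} preferences, with non-gadget agents permanently inert, and with the non-trap part making monotone progress --- requires carefully chosen color counts and preference orders, in the spirit of but more intricate than the gadgets in Proposition~\ref{prop:nopath-HDG} and Theorem~\ref{thm:cycleHDG}. Verifying that every reachable partition which is not already ``doomed'' offers exactly the intended deviation (and no spurious ones) is the technical heart of the argument.
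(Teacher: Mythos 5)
Your high-level architecture is the right one and matches the strategy the paper announces in the preliminaries: reduce from a \textsc{Sat} variant, use a dormant cycling gadget as a ``prohibitive subconfiguration'' that gets armed exactly when the formula cannot be satisfied (for \existpb{IS}{HDG}) or exactly when it can (for the complement of \convergpb{IS}{HDG}). The paper's actual reductions differ in wiring --- for \existpb{IS}{HDG} the trap is attached per \emph{variable} (both variable-agents $z_i$ and $\overline{z}_i$ are forced into a three-coalition gadget when neither of the variable's two literal-coalitions receives both of its literal-agents, exploiting the (3,B2)-SAT structure), and for \convergpb{IS}{HDG} the cycle is not a local armed trap at all but a single global chain of deviators threading through all clause and variable coalitions --- but these are design choices, not corrections to your plan.

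The genuine gap is that the proposal contains no construction, and for HDGs the construction \emph{is} the proof. Three concrete points you cannot defer. First, agents in an HDG see only the red ratio of their own coalition, so to make ``join coalition $K_j$'' a well-defined preference you must choose coalition sizes and color counts (the paper uses separated scales $\alpha=m^2$, $\beta^+=m^4$, $\beta^-=m^3$, $\gamma=m^7$-style parameters) so that every ratio appearing in any preference list is realizable in exactly one way, namely as ``this specific initial coalition plus one or two agents of a specific color''; without that, the claim that inert agents never have a deviation and that no spurious deviations arise cannot even be formulated. Second, you cannot embed ``a private copy of the cycling instance of Proposition~\ref{prop:nopath-HDG}'': that instance cycles unconditionally from its initial partition, so including it verbatim would make an IS partition unreachable regardless of $\varphi$ and destroy the \existpb{IS}{HDG} equivalence. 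The trap must start dormant (three mutually stable fixed coalitions) and begin cycling only once \emph{two} agents of the appropriate, distinct colors enter and play the roles of $r$ and $b$; this forces your checker to inject two differently-colored agents and to guarantee they can never escape to a preferred ratio elsewhere, which is a real constraint on all the other ratios in the game. Third, for \convergpb{IS}{HDG} your claim that the non-trap part is acyclic when $\varphi$ is unsatisfiable is itself a substantial argument --- HDG dynamics cycle readily even under strong restrictions (Theorem~\ref{thm:cycleHDG}) --- and the paper avoids having to prove such a global acyclicity statement by instead characterizing exactly which agents can be ``left'' by which others, showing any cycle must decompose into the intended clause-chain and variable-chain deviators. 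As it stands, your write-up is a credible plan whose correctness rests entirely on the unconstructed gadgets.
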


\section{Fractional Hedonic Games}

Next, we study fractional hedonic games, which are closely related to hedonic diversity games, but instead of agent types, utilities rely on a cardinal valuation function of the other agents.
In fractional hedonic games, the existence of IS partitions is rare. The only known condition for their existence in previous work is that IS---and even NS---partitions exist if all utilities are non-negative. But in such games, utilities over coalitions are non-negative, and therefore the grand coalition is stable because the only possible deviation leaves agents in singleton coalitions where their utility is $0$. In particular, it was not even known whether symmetry of utilities is helpful \citep{BBS14a}. In the first result of this section, we answer this question negatively.\footnote{Symmetry was a reasonable candidate for the existence of IS partitions because it yields existence of Nash stability in additively separable hedonic games \citep{BoJa02a}.} Moreover, we demonstrate that dynamics offer a more fine-grained view on IS when weights are non-negative.

The first part of this section deals with symmetric games, the second part with simple games, that is, games where the utility function only attains values in $\{0,1\}$.
We start with the non-existence of IS partitions, where we provide a counterexample using $15$ agents. The weights were found with the help of a computer.

\begin{theorem}\label{thm:symFHGnoIS}
	There exists a symmetric FHG without an IS partition.
\end{theorem}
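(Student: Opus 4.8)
The plan is to give an explicit symmetric FHG on $15$ agents — a weighted undirected graph whose edge weights we obtained by computer search — and to show directly that it has no individually stable partition. Since there is no anonymity to exploit here, the argument rests on the basic marginal characterisation of IS deviations in FHGs, which I would record first as an observation: an agent $i$ sitting in coalition $C$ with average utility $a_i=\frac{1}{|C|}\sum_{j\in C}\util_i(j)$ strictly benefits from joining another coalition $C'$ exactly when her resulting average exceeds $a_i$, and a member $k\in C'$ consents exactly when $\util_k(i)$ is at least $k$'s current average utility in $C'$. Together with this I would state the two immediate consequences used throughout: in any IS partition every agent has non-negative average utility (otherwise she deviates to a singleton), and an agent never performs or accepts a move that would leave her with a strictly worse average.

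Next I would prove a sequence of structural claims that pin down the shape of any putative IS partition $\partition$: which agents must lie together, which pairs can never coexist in a coalition, and upper bounds on the sizes of the coalitions containing certain distinguished agents. The weights are designed so that a small set of agents plays the role of perpetual deviators — in the spirit of agents $1$ and $2$ in the anonymous counterexample of Proposition~\ref{prop:noIS-AHG} — while the remaining agents are forced into a rigid backbone of coalitions. Each claim is established by assuming the contrary and exhibiting an explicit IS deviation (checking both the deviator's improvement and the consent of the welcoming coalition via the characterisation above). After enough such claims, the candidates for $\partition$ reduce to a short finite list, and I would then verify that each surviving candidate admits an IS deviation, ideally arranging these deviations into a cycle, which simultaneously shows the dynamics cannot stabilise from the associated starting partitions.

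The main obstacle is precisely this final verification that no partition survives. In contrast to the AHG setting, the average $\frac{1}{|C|}\sum_{j\in C}\util_i(j)$ depends on the identities of the coalition members and changes non-monotonically as members are added or removed, so the structural claims require careful bookkeeping of these marginal quantities rather than a single-peakedness shortcut; the case analysis is therefore the bulk of the proof. Finding weights for which the analysis closes cleanly — leaving no stubborn stable partition — is exactly why computer search is needed, and presenting the weights so that the human-checkable argument remains reasonably short is the delicate part.
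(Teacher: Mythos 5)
Your overall strategy---an explicit symmetric FHG on $15$ agents with computer-found weights, no IS partition being established by an exhaustive structural case analysis, and the non-existence being driven by an underlying cycle of deviations---is exactly the approach the paper takes. The problem is that your proposal stops at the strategy: it never exhibits the weights, never states the structural claims, and never carries out the case analysis. For an existence theorem whose proof is a concrete counterexample, the counterexample and its verification \emph{are} the entire mathematical content; you yourself concede that ``finding weights for which the analysis closes cleanly'' is the hard part, which is precisely the part that is absent. As written, there is nothing to check, so this is a genuine gap rather than a stylistic omission.

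Two concrete devices from the paper's proof are worth internalising, because they are what make the verification humanly feasible and your ``rigid backbone plus perpetual deviators'' architecture does not obviously supply substitutes. First, all pairs of agents not explicitly wired together receive a large negative weight ($-2251$, exceeding the sum of positive weights incident to any agent), so in any IS partition every coalition is constrained to the designed adjacency structure; this collapses the search space before any averaging arithmetic is needed. Second, the game is built from five triangles $N_1,\dots,N_5$ arranged rotationally symmetrically in a cycle (indices mod $5$), with the hub agents $a_1,\dots,a_5$ mutually excluded from sharing a coalition except with their cyclic successor; a pigeonhole argument then yields some $a_i$ alone among the hubs, and by symmetry one may take $i=1$ and analyse only the possible coalitions of $a_1$ (five short ``goals''). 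Your proposed architecture, modelled on agents $1$ and $2$ in Proposition~\ref{prop:noIS-AHG}, has no analogous symmetry to quotient by, and in a \emph{symmetric} FHG the backbone agents necessarily react to the deviators' weights, so it is unclear the backbone can be kept rigid at all. Without either the explicit instance or an argument that your architecture can be realised, the proof is incomplete.
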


\begin{proof}

Define the sets of agents $N_i = \{a_i,b_i,c_i\}$ for $i \in \{1,\dots, 5\}$ and consider the FHG on the agent set $N = \bigcup_{i=1}^5 N_i$ where symmetric weights are given by

\begin{itemize}
	\item $\util(a_i, b_i) = \util(b_i, c_i) = \util(a_i,c_i) = 228, i\in \{1,\dots, 5\}$,
	\item $\util(a_i,a_{i+1}) = 436, \util(a_i,b_{i+1}) = 228, \util(a_i,c_{i+1}) = 248, i\in \{1,\dots, 5\}$,
	\item $\util(b_i,a_{i+1}) = 223, \util(b_i,b_{i+1}) = 171, \util(b_i,c_{i+1}) = 236, i\in \{1,\dots, 5\}$,
	\item $\util(c_i,a_{i+1}) = 223, \util(c_i,b_{i+1}) = 171, \util(c_i,c_{i+1}) = 188, i\in \{1,\dots, 5\}$, and
	\item $\util(x,y) = - 2251$ for all agents $x,y\in N$ such that the weight is not defined yet.
\end{itemize}

In the above definition, all indices are to be read modulo $5$ (where the modulo function is assumed to map to $\{1,\dots, 5\}$). Note that the large negative weight exceeds the sum of positive weights incident to any agents. Hence, agents linked by a negative weight, can never be in a common coalition in any IS partition. The FHG consists of five triangles that form a cycle. The structure of the game is illustrated in Figure~\ref{fig:counterex_FHG}. While proving that there does not exist an IS partition requires a lengthy case distinction and many computations, 
the global intuition for the proof is to observe that IS dynamics in this instance always cycle. To see this, start with the partition $(N_5\cup N_1, N_2, N_3, N_4)$. First, $a_1$ deviates by joining $N_2$. Then, $b_1$ joins this new coalition, then $c_1$. After this step, we are in an isomorphic state as in the initial partition.

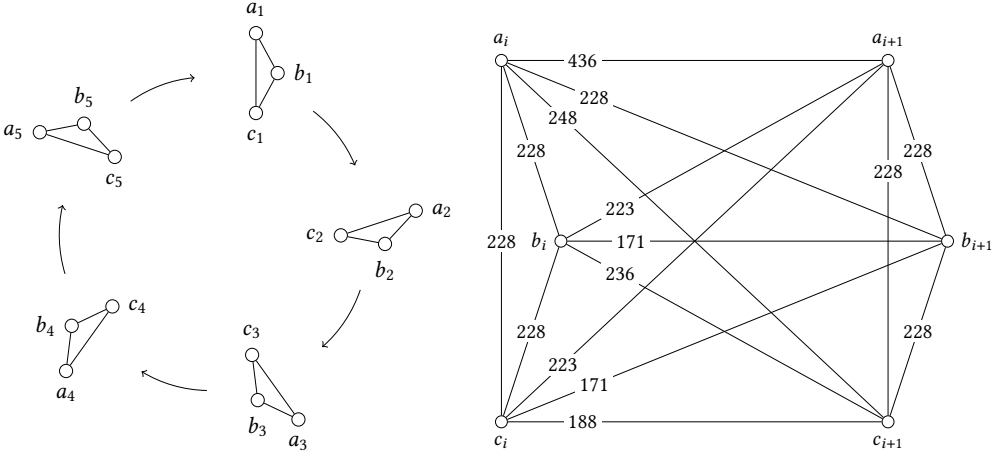
\begin{figure}
\centering
\begin{subfigure}{.45\columnwidth}
  \centering
    \resizebox{\columnwidth}{!}{
	\begin{tikzpicture}[auto,
		complexnode/.pic={
		\node[protovertex] (b) at (0,-0.1){};
		\node[protovertex] (a) at (0,1.1){};
		\node[protovertex] (c) at (0.33,0.5){};
		\path (a) edge (b);
		\path (b) edge (c);
		\path (c) edge (a);},
		complexarc_prev/.pic={
		\node (s) at (-.65,0.3){};
		\node (t) at (.65,0.3){};
		\draw[bend left,->]  (s) to (t);},
		complexarc/.pic={
	\draw[->] (-.6,0.3) arc[radius= 2cm, start angle = 105, end angle= 75];}
	]
		\node[regular polygon,regular polygon sides=10,minimum size=4cm] (p) at (0,0){};
		\draw (p.corner 1) pic {complexnode};
	\node[label = 90:{$a_1$}] at (a) {};
	\node[label = 270:{$c_1$}] at (b) {};
	\node[label = 0:{$b_1$}] at (c) {};
		\draw (p.corner 2) pic[rotate = 20] {complexarc};
		\draw (p.corner 3) pic[rotate = 72] {complexnode};
	\node[label = 180:{$a_5$}] at (a) {};
	\node[label = 270:{$c_5$}] at (b) {};
	\node[label = 90:{$b_5$}] at (c) {};
		\draw (p.corner 4) pic[rotate = 92] {complexarc};
		\draw (p.corner 5) pic[rotate = 144] {complexnode};
	\node[label = 270:{$a_4$}] at (a) {};
	\node[label = 0:{$c_4$}] at (b) {};
	\node[label = 180:{$b_4$}] at (c) {};
		\draw (p.corner 6) pic[rotate = 164] {complexarc};
		\draw (p.corner 7) pic[rotate = 216] {complexnode};
	\node[label = 270:{$a_3$}] at (a) {};
	\node[label = 90:{$c_3$}] at (b) {};
	\node[label = 270:{$b_3$}] at (c) {};
		\draw (p.corner 8) pic[rotate = 236] {complexarc};
		\draw (p.corner 9) pic[rotate = 288] {complexnode};
	\node[label = 0:{$a_2$}] at (a) {};
	\node[label = 180:{$c_2$}] at (b) {};
	\node[label = 270:{$b_2$}] at (c) {};
		\draw (p.corner 10) pic[rotate = 308] {complexarc};
	\end{tikzpicture}}
  \caption{Five triangles are ordered in a cycle such that there is a tendency of agents in $N_i$ to deviate to coalitions in $N_{i+1}$.}
  \label{fig:sub1}
\end{subfigure}% 
\hfil
\begin{subfigure}{.5\columnwidth}
  \centering
  \resizebox{\columnwidth}{!}{
\begin{tikzpicture}[auto]
	\node[protovertex,label = 270:$c_i$] (b1) at (0,0){};
	\node[protovertex,label = $a_i$] (a1) at (0,6){};
	\node[protovertex,label = 180:$b_i$] (c1) at (1,3){};
	\node[protovertex,label = 270:$c_{i+1}$] (b2) at (6.5,0){};
	\node[protovertex,label = $a_{i+1}$] (a2) at (6.5,6){};
	\node[protovertex,label = 0:$b_{i+1}$] (c2) at (7.5,3){};
	\draw (a1) edge node[fill=white,anchor=center, pos=0.5, inner sep =2pt] {$228$} (b1);
	\draw (b1) edge node[fill=white,anchor=center, pos=0.5, inner sep =2pt] {$228$} (c1);
	\draw (c1) edge node[fill=white,anchor=center, pos=0.5, inner sep =2pt] {$228$} (a1);
	\draw (a2) edge node[fill=white,anchor=center, pos=0.3, inner sep =2pt] {$228$} (b2);
	\draw (b2) edge node[fill=white,anchor=center, pos=0.5, inner sep =2pt] {$228$} (c2);
	\draw (c2) edge node[fill=white,anchor=center, pos=0.5, inner sep =2pt] {$228$} (a2);
	\draw (a1) edge node[fill=white,anchor=center, pos=0.2, inner sep =2pt] {$436$} (a2);
	\draw (a1) edge node[fill=white,anchor=center, pos=0.15, inner sep =2pt] {$248$} (b2);
	\draw (a1) edge node[fill=white,anchor=center, pos=0.2, inner sep =2pt] {$228$} (c2);
	\draw (b1) edge node[fill=white,anchor=center, pos=0.15, inner sep =2pt] {$223$} (a2);
	\draw (b1) edge node[fill=white,anchor=center, pos=0.2, inner sep =2pt] {$188$} (b2);
	\draw (b1) edge node[fill=white,anchor=center, pos=0.2, inner sep =2pt] {$171$} (c2);
	\draw (c1) edge node[fill=white,anchor=center, pos=0.17, inner sep =2pt] {$223$} (a2);
	\draw (c1) edge node[fill=white,anchor=center, pos=0.17, inner sep =2pt] {$236$} (b2);
	\draw (c1) edge node[fill=white,anchor=center, pos=0.17, inner sep =2pt] {$171$} (c2);
\end{tikzpicture}}
  \caption{The transition weights between the triangles allow for infinite loops of deviations.}
  \label{fig:sub2}
\end{subfigure}
\caption{% 
Description of the symmetric FHG without IS partition in the proof of Theorem~\ref{thm:symFHGnoIS}.}
\label{fig:counterex_FHG}
\end{figure}

We are ready for the main proof. Let $\partition$ be any partition of the agents and assume that $\partition$ is IS. In particular, no agent receives negative utility. Therefore there exists an $i\in \{1,\dots, 5\}$ such that $\partition(a_i)\cap \{a_1,\dots,a_5\}= \{a_i\}$. We may assume, without loss of generality, that $a_1$ is such an agent. In the following, we will 
distinguish all the possible cases for the coalition of $a_1$ in $\partition$ and show that none of them can occur in an IS partition, deriving a contradiction.

\begin{itemize}
	\item Goal 1: $b_2\notin \partition(a_1)$.

	First, assume for contradiction that $b_2\in \partition(a_1)$, which implies that $\partition(a_1)\subseteq N_1\cup N_2$. If $\partition(a_2)\subseteq N_1\cup N_2$, then $\partition(a_2)\subseteq \{a_2, c_2, c_1, b_1\}$, and therefore $\util_{a_2}(\partition) \le 168.5$ (the best case is $\partition(a_2) = \{a_2, c_2, c_1, b_1\}$ with $\util_{a_2}(\{a_2, c_2, c_1, b_1\}) = \frac {228 + 223 + 223}{4} = 168.5$), while $\util_{a_2} (\{a_2\}\cup \partition(a_1))\ge 221.3$ (the worst case is $\partition(a_1) = \{a_1,b_2\}$ with $\util_{a_2}(\{a_2, a_1, b_2\}) = \frac {436 + 228}{3} = 221.33{\dots}$). 
	Hence, $a_2$ has an incentive to deviate (making no agent in $\partition(a_1)$ worse). Therefore, $\partition(a_2)\subseteq N_2\cup N_3$ and more precisely $\partition(a_2) \subseteq \{a_2,c_2,a_3,c_3,b_3\}$. In addition, by the same potential deviation, $a_3\in \partition(a_2)$ and $|\partition(a_2)|\ge 3$.

	Next, consider the case that $c_2\in \partition(a_2)$. Then, $\util_{b_2}(\partition)\le 142.5$, while $\util_{b_2}(\{b_2\}\cup\partition(a_2))\ge 169.75$ and $b_2$ would have a beneficial IS deviation. Hence, $c_2\notin \partition(a_2)$. 
	If $c_2\notin \partition(a_1)$, then $\util_{c_2}(\partition)\le \max\{141.34,119.67\} = 141.34$ (this is if $c_2$ forms a coalition with $N_1\setminus \{a_1\}$ or $N_3\setminus \{a_3\}$), while $\util_{c_2}(\{c_2\}\cup\partition(a_1))\ge 158.6$ and it is easily seen that $c_2$ can only improve agents in $\partition(a_1)$. It follows that $c_2\in \partition(a_1)$.

	If $c_3\notin \partition(a_2)$, then $\partition(a_2) = \{a_2,a_3,b_3\}$ and $c_3$ would deviate by joining $\partition(a_2)$. Hence, $\{a_2,a_3,c_3\}\subseteq \partition(a_2)$. But then $\util_{b_2}(\partition)\le 159.6$ (the best case being $\partition(b_2) = N_1\cup\{b_2,c_2\}$), while $\util_{b_2}(\{b_2\}\cup \partition(a_2)) \ge 171.6$ (the worst case here is $\partition(b_2) = N_3\cup\{a_2,b_2\}$ which is worse than the smaller $\{a_2,b_2,a_3,c_3\}$) and joining with $b_2$ makes no agent worse. In conclusion, the initial assumption was wrong and $b_2\notin \partition(a_1)$.

	\item Goal 2: $c_2\notin \partition(a_1)$.

	Second, assume for contradiction that $c_2\in \partition(a_1)$. As in the previous case, it is easily seen that $\partition(a_2)\subseteq \{a_2,b_2\}\cup N_3$, $a_3\in \partition(a_2)$, and $|\partition(a_2)|\ge 3$. If $b_2\notin \partition(a_2)$, then $\util_{b_2}(\partition) \le 118$ (the best coalitions in $N_1\cup N_2$ and $N_2\cup N_3$ are $\{b_2,b_1,c_1\}$ and $\{b_2,c_3\}$, respectively) while $\util_{b_2}(\{b_2\}\cup \partition(a_2))\ge 155.5$. Hence, $b_2\in \partition(a_2)$. But then $\util_{c_2}(\partition) \le 168$ while $\util_{c_2}(\{c_2\}\cup \partition(a_2))\ge 169.75$ and $c_2$ would join $\partition(a_2)$ making no agent worse. We conclude that $c_2\notin \partition(a_1)$ and can therefore assume that $\partition(a_1)\subseteq N_1\cup (N_5\setminus\{a_5\})$.

	\item Goal 3: $c_1\notin \partition(a_1)$.

	Third, assume for contradiction that $c_1\in \partition(a_1)$. Then, $\util_{a_5}(\partition)\le \max\{223,171\} = 223$ (where the first utility in the maximum refers to the coalition $N_4\cup N_5$ and the second utility to $N_5\cup \{b_1\}$). However, $\util_{a_5}(\{a_5\}\cup \partition(a_1))\ge 228$. Since joining $\partition(a_1)$ with $a_5$ makes no agent worse, this is not possible. Hence, $c_1\notin \partition(a_1)$.

	\item Goal 4: $b_1\notin \partition(a_1)$.

	Forth, assume for contradiction that $b_1\in \partition(a_1)$. Then, $\util_{c_1}(\{c_1\}\cup \partition(a_1))\ge 152$ and adding $c_1$ to $\partition(a_1)$ leaves no agent worse off. Since, $\util_{c_1}(\{c_1\}\cup N_2) = 145.5$, it must hold that $\partition(c_1)\subseteq \{c_1\}\cup N_5$ and even $\{a_5,b_5\}\subseteq \partition(c_1)$ since otherwise $\util_{c_1}(\partition)\le 145.4$. But then $\util_{a_1}(\partition)\le 150.4$ while $\util_{a_1}(\{a_1\}\cup\partition(c_1)) \ge 221.75$ and $a_1$ would deviate making no agent worse. It follows that $b_1\notin \partition(a_1)$.

	\item Goal 5: $\partition(a_1)\not\subseteq \{a_1,c_5,b_5\}$.

	It remains the case that $\partition(a_1)\subseteq \{a_1,c_5,b_5\}$. If $|\partition(c_1)|\ge 2$, then $a_1$ would deviate by joining $\partition(c_1)$, making no agent worse. If, however, $c_1$ is in a singleton coalition, then $c_1$ would join $\partition(a_1)$, making no agent worse and improving her utility.
\end{itemize}
It follows that no coalition for agent $a_1$ can be possible in an IS partition $\partition$, implying that the instance admits no IS partition.
\end{proof}

Employing this counterexample, the methods of \citet{BBS14a}, which originate from hardness constructions of \citet{SuDi10a}, can be used to show that it is \np-hard to decide about the existence of IS partitions in symmetric FHGs.

For the following corollary, we omit the full proof because it is analogous to the weaker statement by \citet[Theorem~5]{BBS14a}. The main method will also be applied in the proof of Theorem~\ref{thm:hardness-existencepath-symposFHG}, which considers convergence of the IS dynamics in the case that the FHGs even have symmetric, \emph{non-negative} weights.

\begin{corollary}
	Deciding whether there exists an individually stable partition in symmetric FHGs is \np-hard.
\end{corollary}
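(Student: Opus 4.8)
The plan is to give a polynomial-time reduction from the NP-complete problem underlying the reduction of \citet[Theorem~5]{BBS14a} (which itself goes back to the constructions of \citet{SuDi10a}), reusing that reduction essentially verbatim, with the sole modification that the ``punishment'' sub-instance---a fractional hedonic game without an individually stable partition---is now taken to be the \emph{symmetric} $15$-agent FHG $H$ constructed in the proof of Theorem~\ref{thm:symFHGnoIS}. Since every other component of the \citet{BBS14a} construction already uses, or can be realized with, symmetric weights, replacing their asymmetric punishment gadget by $H$ yields a reduction whose output is a symmetric FHG, which is exactly the strengthening over \citet[Theorem~5]{BBS14a}.

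Concretely, from an instance of the source problem one builds a selection gadget $S$ together with a distinguished agent~$h$ such that: $S$ admits an internally individually stable configuration in which $h$ obtains only small utility if and only if the source instance is a yes-instance; and whenever the source instance is a no-instance, every configuration of $S$ in which no agent inside $S$ can deviate leaves $h$ with strictly larger utility. One then attaches $h$ to the no-IS gadget $H$ through a single light symmetric edge and, on the $H$-side, adds $h$ as an extra helper vertex with symmetric weights chosen so that $H\cup\{h\}$ \emph{does} possess an individually stable partition in which $h$ is content. All auxiliary weights are scaled so that the connecting edge never affects the internal (in)stability of $H$ nor tempts agents of $H$ and $S$ to merge, so that $h$ strictly prefers its coalition in $H\cup\{h\}$ to anything available in the satisfying configuration of $S$, and so that, conversely, $h$ strictly prefers any locally stable coalition it could occupy in $S$ to the best coalition reachable inside $H\cup\{h\}$.

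The two directions are then immediate. In the yes-case, combine a satisfying configuration of $S\setminus\{h\}$ with the individually stable partition of $H\cup\{h\}$; the weight choices rule out cross deviations, so the result is individually stable. In the no-case, any candidate partition either already admits a deviation inside $S$, or has $h$ sitting in $S$ with high utility and hence never entering $H$, leaving the agents of $H$ partitioned among themselves; by Theorem~\ref{thm:symFHGnoIS} this sub-partition is not individually stable, so neither is the whole partition.

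The step I expect to require the most care is the bookkeeping of weight magnitudes: the large negative weights inside $H$ (of absolute value exceeding the sum of all positive incident weights) must remain dominant after $h$ and the connector are added, and the two opposing threshold conditions on $h$'s utility must be simultaneously satisfiable together with the requirement that $S\setminus\{h\}$ stays stabilizable. These are precisely the computations that mirror \citet[Theorem~5]{BBS14a} and that we therefore omit; the only genuinely new ingredient is the availability of the symmetric no-IS instance $H$ furnished by Theorem~\ref{thm:symFHGnoIS}.
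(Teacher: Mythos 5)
Your proposal is correct and follows essentially the same route as the paper, which likewise only sketches the argument: reuse the reduction of \citet[Theorem~5]{BBS14a}, substitute the symmetric no-IS game of Theorem~\ref{thm:symFHGnoIS} for their asymmetric punishment gadget, and rescale weights so the connection neither disturbs the gadget's internal instability nor invites cross deviations. The one point worth flagging is how the gadget is ``switched off'' in the yes-case: the paper passes to an \emph{agent-minimal} subgame of the Theorem~\ref{thm:symFHGnoIS} instance so that the gadget minus its linking agent admits an IS partition, whereas you instead attach $h$ externally and assert that $H\cup\{h\}$ can be made IS by a suitable choice of symmetric weights --- a step at the same level of unverified bookkeeping as the paper's own sketch, but one you should check explicitly if you write out the details.
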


\begin{proof}[Proof sketch]
	In the reduction by \citet[Theorem~5]{BBS14a}, we replace the non-symmetric gadget by an agent-minimal symmetric FHG that admits no IS partition. Such an FHG exists according to Theorem~\ref{thm:symFHGnoIS}.\footnote{In fact, we can use a subgame of the game constructed in Theorem~\ref{thm:symFHGnoIS}. Note that we have no proof of agent-minimality of this game but we can simply remove agents until it is agent-minimal.} The weights in the symmetric part of their reduced instances must be large enough to incentivize the agent in the gadget to stay in a coalition outside the gadget.
\end{proof}

If we consider symmetric, non-negative utilities, the grand coalition forms an NS, and therefore IS, 
partition of the agents. However, deciding about the convergence of the IS dynamics starting with the \singleton is \np-hard. The reduction uses similar methods as \citet{SuDi10a} and \citet{BBS14a}. We can avoid negative weights by the fact that, due to symmetry of the weights, in a dynamics starting with the \singleton, all coalitions that can be obtained in the process must have strictly positive mutual utility for all pairs of agents in the coalition.

\begin{restatable}{theorem}{symFHG}\label{thm:hardness-symposFHG}
	\existpb{IS}{FHG} is \np-hard and \convergpb{IS}{FHG} is \conp-hard, even in symmetric FHGs with non-negative weights. The former is even true if the initial partition is the \singleton.
\end{restatable}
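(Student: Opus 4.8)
The plan is to adapt the reduction of \citet[Theorem~5]{BBS14a}, which itself builds on \citet{SuDi10a}, re-using its variable- and clause-gadgets while replacing the non-symmetric prohibitive sub-instance by a symmetric, non-negative \emph{cycling gadget} derived from the five-triangle construction of Theorem~\ref{thm:symFHGnoIS}. The crucial feature of that gadget is not merely that it lacks an IS partition, but that it admits a cycle of IS deviations (the rotation of $a_i,b_i,c_i$ through $N_{i+1}$) all of whose coalitions are of the form $N_i\cup N_{i+1}$ or subsets thereof, so that the large negative weights of Theorem~\ref{thm:symFHGnoIS} are never incident to two agents in a common coalition along this cycle. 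Hence, after replacing every negative weight by $0$, the reduced instance is a symmetric FHG with non-negative weights in which the gadget still exhibits this cycle and, moreover, once the dynamics is inside it there is no escaping IS deviation. To keep negative weights unnecessary in the rest of the construction as well, I will invoke the structural observation stated before the theorem: if the dynamics starts from the \singleton (or, more generally, from a partition whose coalitions are positively connected), then by symmetry and induction every coalition formed during the dynamics is positively connected; consequently, pairs to which we assign weight $0$ never end up "inside" a formed coalition in a way that creates an unwanted stable state or escape move, so the $0$-weight instance behaves on all reachable partitions exactly as the intended (negatively weighted) one.

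For \existpb{IS}{FHG} with initial partition the \singleton, the game is wired so that, from the \singleton, the gadget coalitions can essentially only be assembled in one canonical order, the sole remaining freedom being, for each variable, the choice between two "literal" coalitions, i.e., a truth assignment. If the formula is satisfiable, the corresponding choices let every clause-gadget settle into a stable coalition and no agent is ever routed into the cycling gadget, so the dynamics reaches an IS partition. If the formula is unsatisfiable, then under every assignment some clause-gadget remains unsatisfied and, by construction, forces an agent into the cycling gadget; since that gadget has no escaping IS deviation and admits the perpetual rotation described above, no sequence of IS deviations from the \singleton can terminate, so no path to stability exists. This gives the \np-hardness of \existpb{IS}{FHG}, even for symmetric non-negative weights and even from the \singleton.

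For \convergpb{IS}{FHG} I will use a closely related instance but with a \emph{pre-built} initial partition whose coalitions are positively connected (so the structural observation still applies), set up so that the cycling gadget can only be \emph{armed}, i.e., entered, after the dynamics has first driven the variable-gadgets into a configuration encoding a satisfying assignment. Thus the dynamics admits a non-terminating run iff the formula is satisfiable: given a satisfying assignment one follows it to arm the gadget and then cycles forever, while if the formula is unsatisfiable every run fails to arm the gadget and is forced to terminate in an IS partition. Since the complement of \convergpb{IS}{FHG} is exactly "there exists a cycling run", this yields \conp-hardness. The main obstacle in both parts is the detailed verification that the gadget wiring actually realizes the implications "unsatisfied clause $\Rightarrow$ an agent is irrevocably pushed into the cycling gadget" (for \existpb) and "only a satisfying assignment arms the cycling gadget" (for \convergpb) robustly against all deviation orders, together with re-checking, with the negative weights zeroed out and under the positive-connectedness invariant, that the five-triangle sub-instance indeed admits no IS partition reachable from the armed configuration and no IS deviation leaving the cycle; this is where the bulk of the case analysis, analogous to the one in the proof of Theorem~\ref{thm:symFHGnoIS}, has to be carried out.
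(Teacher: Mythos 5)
Your overall strategy does share the paper's two key ingredients: the symmetric five-triangle gadget of Theorem~\ref{thm:symFHGnoIS} as the prohibitive subconfiguration, and the observation that in a symmetric non-negative FHG whose dynamics starts from the \singleton every coalition ever formed is a clique in the positive-utility graph, so that zeroing out the negative weights creates no new reachable configurations. Your plan for the \conp-hardness part (a pre-built initial partition in which the cycling gadget can only be ``armed'' after the dynamics has encoded a solution of the source instance) is also structurally the paper's argument: there, a single gadget agent $a_1$ can join $\{b_1,c_1\}$ and start the perpetual rotation only once enough of her neighbours have abandoned her, which happens exactly when an exact cover exists. (The paper reduces from \textsc{Exact Cover by $3$-Sets} with element- and set-gadgets rather than from SAT with variable- and clause-gadgets; that difference is inessential. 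Do note, however, that the clique invariant is only guaranteed from the \singleton or from positive cliques, so for the \conp{} part one cannot simply ``invoke'' it for an arbitrary positively connected initial partition; the paper instead analyses the admissible deviations of its hand-built initial partition directly.)

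The genuine problem is the polarity of the gadget interaction in your argument for \existpb{IS}{FHG}. You claim that if the formula is satisfiable then ``no agent is ever routed into the cycling gadget, so the dynamics reaches an IS partition.'' But the gadget agents also start in singleton coalitions, and by the very clique invariant you invoke, the zeroed-out gadget left to itself never reaches an IS partition: every reachable partition of its agents avoids all originally-negative edges, hence is a partition of the negative-weight game of Theorem~\ref{thm:symFHGnoIS}, and the IS deviation witnessing its instability there survives raising the weights to $0$ (the deviator's and the welcoming agents' current utilities are unchanged on a positive clique, while their utilities for the target coalition can only increase). So if nobody interacts with the gadget, the instance has \emph{no} path to stability regardless of the formula, and your yes-direction collapses. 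What is needed is the opposite mechanism: in the yes-case the gadget must be \emph{defused} by pulling a designated gadget agent out into an external coalition attractive enough that she never returns, after which the remaining gadget agents can settle. The paper realizes this with one gadget copy per extra occurrence of each ground element, whose $a_1$-agent must pair up with a set-agent $s^r$ in every reachable IS partition; counting which set-agents are available to absorb these $a_1$'s is precisely what encodes the exact-cover condition. Your sketch contains no such escape route, so the reduction for the existence problem cannot be completed as stated.
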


From now on, we consider simple FHGs. We start with the additional assumption of symmetry.

\begin{proposition}\label{prop:convergence-simplesymFHG}
	The dynamics of IS deviations starting from the \singleton converges in simple symmetric FHGs in at most $\mathcal O(n^2)$ steps. The dynamics may take $\Omega(n\sqrt{n})$ steps. 
\end{proposition}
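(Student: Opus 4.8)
The plan is to establish the two bounds separately, both resting on the observation that, when the dynamics starts from the \singleton, every coalition that ever occurs is a clique of the underlying graph $G=(N,E)$.

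\textbf{Upper bound.} I would first prove by induction over the dynamics that every coalition occurring in the process is a clique. In the \singleton this is trivial. For the inductive step, let agent $d$ perform an IS deviation that joins an existing coalition $C$, which by induction is a clique, say of size $t$. If $t\ge 2$, each $j\in C$ currently has utility $(t-1)/t$, and after $d$ joins her utility becomes $\big((t-1)+[\{j,d\}\in E]\big)/(t+1)$; weak improvement forces $\{j,d\}\in E$, so $C\cup\{d\}$ is again a clique. If $t\le 1$, the strict improvement of $d$ — who, by induction, abandons a clique and hence starts with non-negative utility — forces $d$ to have a neighbour in $C$ (and rules out $t=0$ altogether), again yielding a clique; the abandoned coalition, being a subclique, stays a clique. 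Consequently an agent in a clique of size $s$ has utility $(s-1)/s$, which is strictly increasing in $s$, so an IS deviation must move the deviator from a clique of size $s$ into a clique of size $t$ with $t/(t+1)>(s-1)/s$, i.e.\ $t\ge s$. I would then use the potential $\Phi(\partition)=\sum_{C\in\partition}\binom{|C|}{2}$, the number of same-coalition pairs, which satisfies $0\le\Phi(\partition)\le\binom n2$; a single deviation changes it by $\binom{s-1}{2}-\binom s2+\binom{t+1}{2}-\binom t2=t-s+1\ge 1$. Hence the dynamics terminates after at most $\binom n2=\mathcal O(n^2)$ steps.

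\textbf{Lower bound.} For the $\Omega(n\sqrt n)$ bound I would take $G$ to be the complete graph $K_n$ (so every coalition is automatically a clique and utilities depend only on coalition sizes) and set $k=\lfloor\sqrt n\rfloor$. Starting from the \singleton, I first build a ``staircase'' of cliques $C_1,\dots,C_k$ with $|C_i|=i$, which costs $\sum_{i=1}^k(i-1)=\binom k2$ IS deviations: each new agent joining the current clique strictly improves, and so do all its members. This uses $\binom{k+1}{2}\le n/2+\mathcal O(\sqrt n)$ agents, leaving $\Theta(n)$ agents, which I then ``route'' one at a time through $C_1,C_2,\dots,C_k$: a routed agent $x$ joins $C_1$ from its singleton, then leaves $C_1\cup\{x\}$ (size $|C_1|+1$) to join $C_2$, and so on up to $C_k$, where it remains. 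At the $j$-th hop the size condition $|C_j|\ge|C_{j-1}|+1$ holds — initially with equality, and $C_k$ only grows over time — so $x$ strictly improves at each hop and every welcoming coalition strictly improves as well, while $C_1,\dots,C_{k-1}$ are restored to their original sizes once $x$ has passed through. Each routing contributes $k-1$ deviations, so the whole execution performs $\Theta(n)\cdot\Theta(\sqrt n)=\Omega(n\sqrt n)$ IS deviations.

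\textbf{Main obstacle.} The conceptual heart is the clique invariant underpinning the upper bound; once it is in place, both bounds follow from elementary size arithmetic. For the lower bound the only delicate point is checking that the chain of size inequalities for the routing never breaks even though $C_k$ keeps growing, which is immediate once one observes that only $C_k$ changes and the required inequalities point in the favourable direction. I expect the bulk of the write-up to be a careful, case-free description of the routing deviations together with the verification that each is a valid IS deviation (strict gain for the deviator and weak — in fact strict — gain for every member of the target coalition).
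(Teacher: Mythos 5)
Your proof is correct and follows essentially the same route as the paper: the upper bound rests on the clique invariant together with the pair-counting potential $\sum_{C\in\partition}|C|(|C|-1)/2$, which increases by at least $1$ per step (you additionally spell out the induction establishing the invariant, which the paper only asserts), and the lower bound uses the complete graph with a staircase of cliques $C_1,\dots,C_k$ of sizes $1,\dots,k$. The only cosmetic difference is in the second phase of the lower bound: the paper cascades the agents of each $C_j$ through $C_{j+1},\dots,C_k$, whereas you keep the staircase fixed and route $\Theta(n)$ fresh singleton agents through it one at a time; both give $\Theta(n\sqrt{n})$ deviations.
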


\begin{proof}
	We start with the lower bound. Consider the FHG induced by the complete graph on $n = k(k+1)/2$ agents for some non-negative integer $k\ge 1$. We partition the agents arbitrarily into sets $C_1,\dots, C_k$ where $|C_j| = j$ for $j = 1, \dots, k$. Now, we perform two phases of IS deviations. In the first phase, we form the coalitions $C_j$ by having agents join one by one. In the second phase, there are $k-1$ steps. In step $j$, the agents of coalition $C_j$ join coalitions $C_{j+1},\dots, C_k$ one after each other, thereby performing $k-j$ deviations each. The total number of deviations in the second phase is therefore $\sum_{j=1}^{k-1}j\cdot (k-j) = \frac 16 (k-1) k (k+1) = \Theta(k^3) = \Theta(k^2\sqrt{k^2}) = \Theta (n\sqrt{n})$. In particular, there can be $\Omega (n\sqrt{n})$ IS deviation steps starting from the \singleton.

	For the upper bound, let a simple and symmetric FHG be given. Note that all coalitions formed through the deviation dynamics are cliques. Hence, every deviation step will increase the total number of edges in all coalitions. More precisely, the dynamics will increase the potential $\potential(\partition) = \sum_{C\in\partition} |C|(|C|-1)/2$ in every step by at least $1$. Since the total number of edges is bounded by $n(n-1)/2$, this proves the upper bound.
\end{proof}

Note that there is a simple way to converge in a linear number of steps starting with the \singleton by forming largest cliques and removing them from consideration\footnote{The number of steps is linear even though finding such a sequence for an external coordinator would be computationally hard because it would require to solve a maximum clique problem.}. Surprisingly, it seems a lot harder to prove (non-)convergence of the dynamics if we start from an arbitrary partition, and we leave this as an open problem.

If we allow for asymmetries, the dynamics is not guaranteed to converge anymore. For instance, the IS dynamics on an FHG induced by a directed triangle will not converge for any initial partition except for the grand coalition. We can, however, characterize convergence on simple asymmetric FHGs. Tractability depends on the structure of the utilities. First, we consider dynamics 
starting from the \singleton on simple asymmetric FHGs. 

The key insight is that throughout the dynamic process on a simple asymmetric FHG starting from the \singleton, the subgraphs induced by coalitions are always transitive and complete.\footnote{We identify the directed graphs $G = (V,A)$ with relations $R$ where $i\mathrel{R}j$ if and only if $(i,j)\in A$. Hence, we call a directed graph $G = (V,A)$ \emph{transitive} if $(i,j)\in A$ and $(j,k)\in A$ implies $(i,k)$ for every triple $i,j,k\in V$. We call it \emph{complete} if $(i,j)\in A$ or $(j,i)\in A$ for every pair $i,j\in V$.} Convergence is then shown by a potential function argument.

\begin{proposition}\label{thm:convergence-asymmetric+acyclicFHG}
	The dynamics of IS deviations starting from the \singleton converges in simple asymmetric FHGs if and only if the underlying graph is acyclic. Moreover, under acyclicity, it converges in $\mathcal O(n^4)$ steps.
\end{proposition}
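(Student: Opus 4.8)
The plan is to prove both directions of the equivalence and then establish the running-time bound. For the ``only if'' direction, suppose the underlying directed graph $G=(N,E)$ of the simple asymmetric FHG contains a directed cycle $i_1 \to i_2 \to \dots \to i_t \to i_1$. I would argue that starting from the \singleton we can reach a configuration from which the dynamics cycles forever: intuitively, each agent $i_j$ can chase agent $i_{j+1}$, and since utilities are asymmetric, $i_{j+1}$ does not reciprocate, so a coalition of size~2 of the form $\{i_j, i_{j+1}\}$ always admits an IS deviation where $i_j$ leaves to join $\{i_{j+1}, i_{j+2}\}$ (the agent $i_{j+1}$ is weakly better off, gaining a fan, and $i_{j+2}$ is indifferent). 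The cleanest argument mirrors the directed-triangle remark just before the statement: restrict attention to the cycle, reach the partition where the cycle agents are paired up (or nearly so), and exhibit an infinite sequence of IS deviations rotating around the cycle. The key point to check carefully is that each such deviation is genuinely an IS deviation (strict improvement for the deviator, weak improvement for everyone in the target coalition) and that it can be embedded in the full game without the other agents interfering — this is where the \singleton starting point and the fact that coalitions stay ``small and clique-like'' along the cycle help.

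For the ``if'' direction, assume $G$ is acyclic. The stated key insight is that every coalition arising during the dynamics (started from the \singleton) induces a subgraph that is \emph{transitive and complete}, i.e., a total preorder; under acyclicity this is in fact a strict linear order (a transitive tournament). I would prove this by induction on the number of deviation steps: the \singleton trivially satisfies it; and when an agent~$d$ performs an IS deviation into a coalition~$C$ whose induced subgraph is a transitive tournament, the consent of every agent in $C$ (no one is worse off) together with $d$'s strict improvement forces the new subgraph on $C\cup\{d\}$ to again be a transitive tournament — essentially $d$ must dominate everyone she did not already dominate appropriately, and the arcs into $d$ are constrained by asymmetry. Crucially one also checks that the abandoned coalition $C'\setminus\{d\}$ remains a transitive tournament, which is immediate since deleting a vertex preserves transitivity and completeness. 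Given this invariant, each coalition of size~$s$ contributes exactly $\binom{s}{2}$ arcs, and the utility of an agent in such a coalition equals (number of out-neighbours within the coalition)$/s$, which by transitivity is determined by her rank in the linear order.

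With the invariant in hand, convergence follows from a potential function argument. The natural candidate is a lexicographic or weighted combination: for instance $\Phi(\pi) = \sum_{C\in\pi}\Phi_0(C)$ where $\Phi_0$ assigns to a transitive-tournament coalition a value capturing both its size and the ``positions'' of its members — something like summing $f(\text{rank of }i\text{ in }C, |C|)$ over $i\in C$ — designed so that every IS deviation strictly increases $\Phi$. One must verify that when $d$ leaves $C'$ for $C$, the gain in $\Phi$ from $d$'s strict improvement plus the (weak, by consent) gains of members of $C$ outweighs any loss incurred by the members of $C'\setminus\{d\}$; this is the step where asymmetry and transitivity are doing real work, because the agent who leaves is necessarily the ``bottom'' of her coalition (she is the one strictly improving by leaving, so she must be receiving the least in $C'$), so the members left behind are not harmed much — in fact with the right potential they are not harmed at all or the harm is strictly dominated. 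I expect this balancing of potential changes across the two affected coalitions to be the main obstacle, exactly as it is in the AHG proofs earlier in the paper; a safe fallback is to bound the two types of moves (towards a larger vs.\ a smaller coalition) separately as in Theorem~\ref{thm:fastConvAHG}, using a coarse potential such as $\Gamma(\pi)=\sum_{C\in\pi}\binom{|C|}{2}$ to bound the ``growing'' moves by $O(n^2)$ and a second potential to bound the ``shrinking'' moves, yielding the claimed $\mathcal O(n^4)$ total. Finally, termination of the dynamics at an IS partition is automatic since every terminal state of IS dynamics is by definition IS.
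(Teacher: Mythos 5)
Your transitive-tournament invariant is exactly the paper's first step, but the heart of the argument --- why the dynamics terminates and why in $\mathcal O(n^4)$ steps --- is missing, and the candidates you offer for it do not close the gap. The paper's linchpin is an observation you never derive: because every agent of the welcoming coalition with positive utility must \emph{like} the entrant (and hence, by asymmetry, is not liked back), the deviator always lands at the \emph{bottom} of the target tournament with utility exactly $1/k$ in a coalition of size $k$. Consequently a deviator who already had positive utility must move to a \emph{strictly smaller} coalition, and only the unique zero-utility agent of a coalition (the one of maximal topological score) can ever move to a larger one. The paper exploits this by fixing a topological order $\sigma$ and tracking the vector of values $\max_{i\in C}\sigma(i)$ over coalitions $C$, sorted decreasingly: this vector lexicographically decreases whenever the zero-utility agent deviates (``primal'' moves), and stays fixed otherwise, in which case the sorted vector of coalition sizes lexicographically increases (``secondary'' moves). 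Primal moves are then bounded by $n^2$ via a bookkeeping argument over the topological order, and at most $n^2$ secondary moves occur between consecutive primal ones, giving $n^4$. Your proposed single additive potential is only a wish (you concede you cannot verify the balancing step), and your fallback is circular: with $\Gamma(\pi)=\sum_{C\in\pi}\binom{|C|}{2}$, growing moves increase $\Gamma$ while shrinking moves decrease it, so $\#\text{growing}\le O(n^2)+\#\text{shrinking}$ and $\#\text{shrinking}\le O(n^2)+(n-1)\cdot\#\text{growing}$; neither bound is usable without the other. Breaking this circularity is precisely what the topological-order potential is for, and nothing in your sketch supplies a substitute.

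The cycling direction also has a concrete error. The deviation you describe --- ``$i_j$ leaves $\{i_j,i_{j+1}\}$ to join $\{i_{j+1},i_{j+2}\}$'' --- is ill-formed ($i_{j+1}$ cannot be in both coalitions), and the justification that $i_{j+1}$ is ``weakly better off, gaining a fan'' is false in FHGs: an agent with positive utility who is joined by someone she values at $0$ sees her average strictly drop and therefore blocks the move. The correct rotation, which the paper uses, is the opposite one: in $\{c_i,c_{i+1}\}$ with arc $(c_i,c_{i+1})$, the agent $c_{i+1}$ has utility $0$ and deviates to the \emph{singleton} $\{c_{i+2}\}$, gaining $1/2$ while $c_{i+2}$ remains at utility $0$; a single pair thus propagates around the cycle forever. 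This is fixable, but as written your construction does not consist of IS deviations.
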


\begin{proof}
	Let $G= (V,A)$ be a simple asymmetric graph on $n = |V|$ vertices.
	If the graph contains a cycle, it is easy to find a non-converging series of deviations. There exists a cycle of length at least $3$. We can then let a coalition of size $2$ propagate along the cycle. More formally, assume that $\{c_1,\dots, c_j\}\subseteq V$ induce a directed cycle, where $(c_i,c_{i+1})\in A$ for $1\le i \le j$ (here and in the remainder of the proof, read indices modulo $j$ mapping to the representative in $[j]$). Define $(\partition_k)_{k\ge 0}$ by letting $\spartition$ be the \singleton, and for $p\ge 0$, $1\le i\le j$, let $\partition_{pj + i} = \{\{c_i,c_{i+1}\}\}\cup \{\{x\}\colon x\in V \setminus \{c_i,c_{i+1}\}\}$. Then, $(\partition_k)_{k\ge 0}$ is an IS dynamics of infinite length.
	
	Assume that the graph is acyclic. Our first observation is that, in every step of the dynamics, all subgraphs induced by coalitions are transitive and complete. Indeed, by induction, in a deviation, the coalition that is left still induces a transitive and complete subgraph, and the new coalition induced a transitive and complete subgraph before the deviation. 
	Hence, every agent except one has at least one outgoing edge and will only accept the new agent if she likes her. 
	Since the deviating agent must have non-negative utility after the deviation, she needs to approve the single agent without outgoing edge. 
	Hence, the newly formed coalition still induces a transitive and complete subgraph. 
	
	The last argument also implies that the deviating agent has a utility of $1/k$ if she ends up in a coalition of size $k$ after her deviation. We refer to this fact as $(*)$ in the sequel.

	We will now define two potentials based on the agents that receive $0$ utility in a partition, and based on the coalition sizes. The first potential is monotonically decreasing and bounded. The second potential is strictly increasing whenever the first potential is not strictly decreasing, and bounded. Hence, we establish convergence of the dynamics.

	First, fix a topological order of the agents, i.e., a bijection $\sigma: V \to [n]$ such that for all $(v,w)\in A$, $\sigma(v) < \sigma(w)$.
	For a given partition $\partition$ of the agents, we define the vector $v^\sigma(\partition)$ of length $|\partition|$ that sorts the numbers $\max_{i\in C}\sigma(i)$ for $C\in \partition$ in decreasing order, that is it sorts the coalitions in decreasing topological score of the agent with the highest number due to the topological order. This is exactly the unique agent in every coalition receiving $0$ utility.
	In addition, we define the vector $w(\partition)$ of length $|\partition|$ that sorts the coalition sizes in increasing order. Note that this vector does not depend on the underlying topological order.

	For two vectors $v = (v_i)_{i=1}^k$ and $w = (w_i)_{i=1}^l$, not necessarily of the same length, we say
	\begin{align*}
	v >_{lex} w \iff &\textnormal{there is } i < \max\{k,l\} \textnormal{ with }\\ & v_j = w_j \,\forall 1\le j\le i\textnormal{ and } v_{i+1}>w_{i+1},\textnormal{ or }\\
	& k > l \textnormal{ and } v_j = w_j \,\forall 1\le j \le k\text.
	\end{align*}

	In other words, $v >_{lex} w$ if $v$ is lexicographically greater than $w$.

	The key insight is that, for $\partition'$ formed from $\partition$ by an IS deviation, $v^\sigma(\partition') <_{lex} v^\sigma(\partition)$, or  $v^\sigma(\partition') =_{lex} v^\sigma(\partition)$ and  $w(\partition') >_{lex} w(\partition)$. 
	For a proof, assume that $\partition'$ is formed from $\partition$ by an IS deviation of agent $i$. 
	Note that $\max_{j\in \partition'(i)}\sigma(j) = \max_{j\in \partition'(i)\setminus \{i\}}\sigma(j) $. We distinguish two cases. Either $i = \arg\max_{j\in \partition(i)}\sigma(j)$ and it follows $v^\sigma(\partition') <_{lex} v^\sigma(\partition)$. 
	Otherwise, $\util_i(\partition(i))\ge \frac 1{|\partition(i)|}$, and because $i$ is improving her utility, $\frac 1{|\partition'(i)|} \overset{(*)}{=} \util_i(\partition') > \frac 1{|\partition(i)|}$. It follows that $|\partition(i)| > |\partition'(i)|$. Hence, $|\partition'(i)|-1 < \min\{|\partition'(i)|,|\partition(i)|\}$, and therefore $w(\partition') >_{lex} w(\partition)$.

We estimate the running time in two steps. First, we bound the number of times that the lexicographic score of  $v^\sigma(\partition)$ can decrease. Then, we estimate the number of deviations that can happen while this score does not change. We call the first kind of deviations \emph{primal} and the second type \emph{secondary}. Note that after an deviation, the maximal topological score in the joined coalition remains the same because the deviating agent has to like some agent who therefore has higher topological score. Hence, a primal deviation happens if and only if the agent with highest topological score of the abandoned coalition performs the deviation.

Let us first discuss the idea how to bound the number of the primal deviations. To this end, given a partition $\pi$ and an agent $v\in V$, we define a set $D^{\partition}_v$ that stores a certain amount of deviating agents. This set depends on the agent $v$ and the history of the dynamics until reaching $\pi$. In every step of the dynamics, the sum $\sum_{v\in V} |D^{\partition}_v|$ will be exactly the number of primal deviations so far. We ensure that we can always add the agent $v$ performing a deviation to a set $D^{\partition}_w$ such that $\sigma(v) > \sigma(x)$ for all $x\in D^{\partition}_w$. Hence, at the end of the sequence of deviations, $\sum_{v\in V} |D^{\partition}_v|\le n^2$.

Initially, set $D^{\spartition}_v = \emptyset$ for all $v\in V$ and the starting partition $\spartition$ of the dynamics. Assume first that agent $v$ performs a primary deviation that changes partition $\partition$ into partition $\partition'$. Recall that in this case, $v = \arg\max_{x\in \partition(v)} \sigma(x)$. If $v$ was in a singleton coalition, update $D^{\partition'}_v = \{v\}$ and leave all other sets the same, i.e., $D^{\partition'}_x = D^{\partition}_x$ for all $x\neq v$. Otherwise, let $w = \arg\max_{x\in \partition(v)\setminus \{v\}} \sigma(x)$ be the agent in $\partition(v)$ different from $v$ of highest topological score, i.e., the agent in $\partition(v)$ of second-highest topological score. We update $D^{\partition'}_v = D^{\partition}_w \cup \{v\}$, $D^{\partition'}_w = \emptyset$, and $D^{\partition'}_x = D^{\partition}_x$ for all $x\neq v,w$.
If a secondary deviation is performed from $\partition$ to $\partition'$, leave all sets the same, i.e., $D^{\partition'}_x = D^{\partition}_x$ for all $x\in V$.

Given a set of agents $W\subseteq V$, let $m_W = \arg\max_{x\in W} \sigma(x)$ be the agent in $W$ maximizing the topological score. We have the following invariants for every partition $\partition$ during the dynamics and for every agent $v\in V$:

\begin{itemize}
	\item If $v = m_{\partition(v)}$, then $D^{\partition}_v = \emptyset$.
	\item If $v \neq m_{\partition(v)}$, then $\sigma(x) \le \sigma(v) < \sigma(m_{\partition(v)})$ for all $x\in D^{\partition}_v$.
	\item The number of primal deviations of the dynamics until partition $\partition$ is $\sum_{v\in V} |D^{\partition}_v|$.
\end{itemize}

The first invariant follows directly from the update rules. Indeed, the agent in the newly formed coalition of maximal topological score is the same, and if the agent of highest topological score in the abandoned coalition changes, then we update her set to be the empty set. This proves the first invariant.

The second invariant follows by induction. Assume that $v$ performs a deviation from $\partition$ to $\partition'$. If $v$ performs a primary deviation and $w = \arg\max_{x\in \partition(v)\setminus \{v\}} \sigma(x)$, then $D^{\partition'}_v\setminus\{v\} = D^{\partition}_w$, and therefore $\sigma(x) \le \sigma(w) < \sigma(m_{\partition(w)}) = \sigma(v) < \sigma(m_{\partition'(v)})$ for all $D^{\partition'}_v\setminus\{v\}$ where we apply induction for $w$ and the fact that the agent in $\partition'(v)$ which gives positive utility to $v$ has a higher topological score than $v$. If $v$ performs a secondary deviation, then for all $x\in D^{\partition'}_v$, $\sigma(x)\le \sigma(v) < \sigma(m_{\partition'(v)})$, where the first inequality follows by induction for $v$. 

The third invariant follows from the update rules because the agent newly added to a set has not been in this set due to the second invariant.
The third invariant implies that there can be at most $n^2$ primal deviations, because for the terminal partition $\partition^*$ of the dynamics, $\sum_{v\in V} |D^{\partition^*}_v| \le n^2$.

While the topological score is the same, there can be at most $n^2$ secondary deviations, which follows from the same reasoning as in the proof of Proposition~\ref{prop:convergence-simplesymFHG}. Hence, together there are at most $n^4$ deviations.
\end{proof}

In the previous proposition, it seems that there is still space for improvement of the bound on the running time, in particular due to the interplay of the two nested potentials.

The previous statement shows convergence of the dynamics for simple asymmetric, acyclic FHGs. In addition, it is easy to see that there is always a sequence converging after $n$ steps, starting with the \singleton.
More precisely, one can use a topological order of the agents and allow agents to deviate in decreasing topological order. 
It can be observed that only coalitions of at most size two would form and that no agent would deviate from her current non-singleton coalition. 
The idea is that when an agent $i$ wants to join a coalition of at least two agents, there must exist an agent $j$ in this coalition, with a greater index in the topological order and non-null current utility, therefore agent $j$ does not want agent $i$ to join. 

There are two interesting further directions. One can weaken either the restriction on the initial partition or on asymmetry. If we allow for general initial partitions, we immediately obtain hardness results for simple asymmetric FHGs which are in particular a subclass of simple FHGs.

\begin{restatable}{theorem}{hardExPathAsymFHG}\label{thm:hardness-existencepath-asymFHG}
	\existpb{IS}{FHG} is \np-hard and \convergpb{IS}{FHG} is \conp-hard, even in simple asymmetric FHGs.
\end{restatable}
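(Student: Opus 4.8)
The plan is to reduce from a known \np-complete (respectively \conp-complete) problem following the template the paper has already described: exhibit a "prohibitive subconfiguration" — here a small simple asymmetric FHG with a bad convergence behavior — and glue it into a reduced instance so that the cycling/termination of the dynamics encodes the YES/NO answer. For the \np-hardness of \existpb{IS}{FHG}, I would reduce from a satisfiability-type or clique-type problem (the natural candidate is the same source used by \citet{SuDi10a} and \citet{BBS14a}, adapted to the simple asymmetric setting), but the crucial building block is the directed triangle: as already observed in the text, the IS dynamics on an FHG induced by a directed triangle fails to converge from every initial partition except the grand coalition, and in fact from a non-grand initial configuration it is \emph{doomed} to cycle (a $2$-coalition propagates around the triangle forever, and no terminal IS state is reachable). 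This gives a gadget with no path to stability when it is "activated" by a bad initial partition.

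The key steps, in order: (1) fix the source problem and its standard gadgetry; (2) replace any non-asymmetric or weighted gadget in the prior reductions by a simple \emph{asymmetric} gadget — the directed triangle (or a small acyclic-plus-one-back-edge gadget) — whose dynamics cycles exactly when a certain "trigger" agent is forced out of a designated coalition; (3) choose the initial partition $\spartition$ of the reduced instance so that the trigger is (or is not) activated depending on the instance being a YES/NO instance; (4) argue completeness: if the source instance is a YES instance, there is a sequence of IS deviations that defuses every triangle gadget (by routing the trigger agents appropriately) and reaches an IS partition, establishing membership in \existpb{IS}{FHG}; (5) argue soundness: if the source instance is a NO instance, then in every reachable partition at least one triangle gadget is stuck in its cycling regime, so no IS partition is reachable. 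For the \conp-hardness of \convergpb{IS}{FHG}, I would use the complementary construction: the reduced instance admits a cycling sequence of IS deviations if and only if the source instance is a YES instance, so "every sequence converges" is \conp-hard; here one typically does \emph{not} need to control the initial partition, so a slightly simpler wiring suffices, and the triangle gadget again supplies the cycle. Membership in \np (resp.\ \conp) is routine: a path to stability / a cycling sequence is a polynomial-size certificate once one bounds the relevant length, or one argues via a short reachable-cycle witness.

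The main obstacle I expect is step (4)/(5): ensuring that the gadgets are \emph{independent} — i.e., that defusing one triangle does not inadvertently reactivate another, and that in a NO instance no clever interleaving of deviations can simultaneously escape all gadgets. Simple asymmetric FHGs are quite rigid (utilities in $\{0,1\}$, no edge and its reverse both present), which both helps — coalition values are highly constrained, every non-sink agent refuses newcomers she does not point to — and hurts, since one has less freedom to build "firewalls" between gadgets than in a weighted construction. I would handle this by making the connector edges between a gadget and the rest of the instance point \emph{into} the gadget only through a unique sink-like interface vertex, so that no agent outside a gadget ever has an incentive to enter it and no gadget agent can escape except along the designated trigger route; this localizes the dynamics to within each gadget and makes the independence argument a finite case check on a constant-size graph. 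A secondary, lighter obstacle is the \conp case: one must be careful that the constructed cycle is genuinely reachable from $\spartition$ by IS deviations (consent constraints can block the obvious route), which again is resolved by the sink-interface design together with the observation that a $2$-coalition can always be pushed one more step around a directed cycle.
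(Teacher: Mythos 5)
Your high-level plan matches the paper's actual proof: the paper also reduces from \textsc{Exact Cover by $3$-Sets} (the source you guessed at via \citet{SuDi10a} and \citet{BBS14a}), uses directed triangles as the cycling gadgets, and wires each triangle to the rest of the instance through a single interface vertex ($a_1^v$, resp.\ $b_1$) that can only be entered, never profitably left, which is exactly your ``sink-interface'' idea for gadget independence. So the architecture is right.

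The genuine gap is that you never specify the \emph{encoding mechanism}, i.e., what forces the set of ``defused'' triangles to correspond to an exact cover rather than to an arbitrary choice of trigger agents. This is the entire content of the soundness direction and it is a counting argument, not a wiring argument: in the paper's construction for \existpb{IS}{FHG} there are exactly $l=|S|-|R|/3$ triangles, each set-agent $s$ can reach a triangle's interface vertex only after all three of its companions $r^s$ have left to join copies $r_1,\dots,r_{m_r}$, and there are exactly $m_r=|\{s\colon r\in s\}|-1$ such copies per element $r$. Hence at most $l$ set-agents can ever be freed, each copy $r_j$ absorbs exactly one $r^s$, and the $|R|/3$ sets that remain intact must cover every element exactly once. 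Your sketch says the triangles are defused ``by routing the trigger agents appropriately'' and that in a NO-instance ``at least one triangle is stuck,'' but without the copy-counting there is nothing preventing all triangles from being defused in a NO-instance, and your sink-interface design alone does not supply this. A secondary inaccuracy: for \convergpb{IS}{FHG} you claim one typically need not control the initial partition, but the paper's \conp-reduction relies essentially on a non-trivial starting partition (the deviating vertex $b_1$ of the cycle gadget begins inside a large coalition it can only escape after enough $s_2$-agents have been extracted, which again happens iff an exact cover exists); with the \singleton as the start this gadget would fire unconditionally. Finally, membership in \np/\conp is not ``routine'' and is not claimed by the theorem, since no polynomial bound on the length of a shortest path to stability or shortest reachable cycle is known here.
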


On the other hand, if we transition to simple FHGs while starting the dynamics from the \singleton, the problem of deciding whether a path to stability exists becomes hard. We leave the complexity of \convergpb{IS}{FHG} for simple FHG as an open problem.

\begin{restatable}{theorem}{hardnessSimpleFHG}\label{thm:hardness-existencepath-simpleFHG}
	\existpb{IS}{FHG} is \np-hard even in simple FHGs when starting from the \singleton. 
\end{restatable}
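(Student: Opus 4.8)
The plan is to give a polynomial reduction from a suitable \np-complete problem --- SAT, or an exact-cover variant in the style of the constructions of \citet{SuDi10a} and \citet{BBS14a} --- to $\existpb{IS}{FHG}$ restricted to simple FHGs with the \singleton as the initial partition. The target instance is a directed unweighted graph $G$ designed so that the IS dynamics started from the \singleton can reach an IS partition if and only if the input instance is positive.

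The core gadget is a \emph{trap}: the directed triangle $a\to b\to c\to a$. Started from singleton coalitions, its only reachable partitions are the \singleton and the three partitions $\{\{a,b\},\{c\}\}$, $\{\{a\},\{b,c\}\}$, $\{\{a,c\},\{b\}\}$; the grand coalition $\{a,b,c\}$ is unreachable, because in each reachable two-agent coalition the ``hub'' agent already has her peak utility $\tfrac12$ and would be made strictly worse off if the third agent joined. None of these partitions is IS (in each two-agent partition the non-hub agent strictly prefers to form a fresh two-agent coalition), so the dynamics is forced to cycle through the three of them. Hence any reachable configuration of $G$ in which a directed triangle remains present among agents with no further way out forbids a path to stability. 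The trap is made \emph{disarmable} by a dedicated agent $e$ with a mutual edge (i.e.\ $\util_c(e)=\util_e(c)=1$) to one triangle agent: once $c$ can retreat to $\{c,e\}$ and reach her peak there, $c$ leaves the cycle, $a$ and $b$ are left with the individually stable coalition $\{a,b\}$, and the trap is neutralized. Crucially, $e$ will be held ``hostage'' at its peak inside another coalition unless released, so that the availability of $e$ for this retreat is the quantity controlled by the encoded truth assignment.

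On top of this, standard variable and clause gadgets wire the disarming agents to an assignment. Each variable gadget presents the dynamics with a one-shot, irreversible binary choice: a configuration in which exactly one of two agents (representing $x_i$ and $\bar x_i$) can move into a ``satisfied'' slot where it attains its peak --- and hence never moves again --- while simultaneously making the opposite slot permanently unfillable. Each clause gadget contains one trap triangle whose disarming retreat is enabled precisely when at least one of its three literal agents is in the ``true'' state, the literal edges being set so that a false literal does not help. For the forward direction, from a satisfying assignment one drives every variable gadget to encode it, then clause by clause uses a true literal to disarm the trap, and finally lets all remaining agents settle into an IS partition. For the backward direction, any path to an IS partition must disarm every clause trap (an armed triangle precludes stability), the variable gadgets force the induced literal states to be consistent, and therefore the formula is satisfied.

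The main obstacle is making all these gadgets coexist inside a single simple FHG run from the \singleton: with weights confined to $\{0,1\}$, every bit of tuning must be realized through the graph structure alone, and since a join that leaves the welcoming agents merely indifferent is still a legal IS deviation, the dynamics enjoys a great deal of freedom. One must show that this freedom cannot be used to escape an armed trap, to back out of a committed variable choice, or to let stray deviations across gadgets accidentally disarm a trap or fill a forbidden slot. Isolation between gadgets must be arranged purely by giving would-be intruders utility $0$ from the coalitions they must not enter and by exploiting that an agent sitting at her peak blocks every newcomer; carefully verifying these blocking properties at every gadget interface, and proving that the variable choices are genuinely irreversible from the \singleton, is the bulk of the work.
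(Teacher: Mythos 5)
Your core gadget is exactly the one the paper uses: a directed $3$-cycle is a ``trap'' that, started from singletons, can only circulate among the three two-agent states (none of which is IS, and the grand coalition is blocked by the hub agent sitting at her peak $\tfrac12$), and it is disarmed precisely when one of its agents escapes into a mutual-utility pair and the remaining arc $\{a,b\}$ becomes stable. Your analysis of the trap is correct. The genuine gap is everything built on top of it: the variable gadget is specified only by the property it should have (``exactly one of two agents can move into a satisfied slot \dots while making the opposite slot permanently unfillable''), with no construction, and this irreversible mutually exclusive choice is precisely the part that is hard to realize in a simple FHG from the \singleton, where any join that leaves the hosts indifferent is legal and nothing prevents the dynamics from simply declining to commit. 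As written, the backward direction (``any path to an IS partition induces a consistent satisfying assignment'') rests entirely on an unbuilt gadget, so the proposal is a plan rather than a proof.

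It is worth knowing that the paper's proof avoids the need for a variable gadget altogether by reducing from \textsc{Exact Cover by $3$-Sets} and replacing logical consistency with counting. Each element $r\in R$ gets a trap triangle that can only be disarmed by pairing $r_1$ with one of the agents $s_1^r$ (one per set $s\ni r$); in addition there are $l=|S|-|R|/3$ extra ``absorber'' triangles $T^w$, each of which can only be disarmed by $t_1^w$ swallowing \emph{all three} agents $s_1^i$, $i\in s$, of some set $s$. A companion agent $s_2^i$ guarantees that in any reachable IS partition $s_1^i$ has utility at least $\tfrac12$, which pins down the exact shape of these coalitions and ensures each $s_1^i$ serves only one master. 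Since every trap must be disarmed and there are exactly enough $s_1$-agents to go around, the $|R|/3$ sets not consumed by absorbers must cover each element exactly once. The price of this route is a block of structural invariants proved by induction along the dynamics (which coalitions the various agent types can ever belong to); that is the analogue of the ``blocking properties at every gadget interface'' you correctly identify as the bulk of the work, but it is carried out for a much more rigid construction than a SAT encoding would require.
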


\section{Dichotomous Hedonic Games}

By taking into account the identity of other agents in the preferences of agents over coalitions, it can be more complicated to get positive results regarding individual stability (see, e.g., Theorem~\ref{thm:symFHGnoIS}). 
However, by restricting the evaluation of coalitions to dichotomous preferences, the existence of an IS partition is guaranteed~\citep{Pete16a}, as well as convergence of the dynamics of IS deviations, when starting from the grand coalition~\citep{BoEl20a}.
Nevertheless, the convergence of the dynamics is not guaranteed for an arbitrary initial partition and no sequence of IS deviations may ever reach an IS partition.

\begin{proposition}\label{prop:noconv-dicho}
The dynamics of IS deviations may never reach an 
IS partition in 
DHGs, whatever the chosen path of deviations, even when starting from the \singleton.
\end{proposition}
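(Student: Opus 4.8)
The plan is to exhibit a three-agent DHG---a dichotomous ``directed triangle''---for which the only individually stable partition is the grand coalition, and then to show that the grand coalition is unreachable from the \singleton. Concretely, I take $N=\{1,2,3\}$ and let each agent approve exactly one two-element coalition, arranged cyclically: $\util_1(\{1,2\})=\util_2(\{2,3\})=\util_3(\{1,3\})=1$, and $\util_i(C)=0$ for every other $C\in\N_i$ (so no agent approves her own singleton or the grand coalition). This is the dichotomous analogue of the directed-triangle FHG mentioned earlier in the paper.

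First I would classify all IS partitions by inspecting the five partitions of $N$. In each of the four non-grand partitions exactly one agent has utility $0$ inside a two-element coalition and has the IS deviation of joining the (currently alone) agent she approves: the welcoming agent has utility $0$ before and after, so she consents, and the agent left behind is irrelevant since IS---unlike contractual IS---does not require the abandoned coalition's consent. Hence none of these four partitions is IS. In the grand coalition everybody has utility $0$ and the only available unilateral move is to split off into a singleton, which is not improving; so the grand coalition is the unique IS partition.

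Next I would establish that the dynamics from the \singleton is trapped away from it: any first deviation from the \singleton produces one of the $2{+}1$ partitions $\{\{1,2\},\{3\}\}$, $\{\{1,3\},\{2\}\}$, $\{\{2,3\},\{1\}\}$ (all other moves out of the \singleton are non-improving); from each such $2{+}1$ partition the \emph{unique} IS deviation is the disapproving member of the pair leaving to form her approved pair with the lone agent, which cyclically produces the next $2{+}1$ partition ($\{\{1,2\},\{3\}\}\to\{\{1\},\{2,3\}\}\to\{\{1,3\},\{2\}\}\to\{\{1,2\},\{3\}\}$); and from a $2{+}1$ partition no agent can return to the \singleton (the happy pair member would lose, the unhappy one would not gain) and the lone agent cannot complete the grand coalition (she does not approve it). Therefore every maximal sequence of IS deviations starting from the \singleton is infinite and never visits the unique IS partition, which is exactly the claim.

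The argument is a finite case check, so there is no genuine obstacle; the two points to be careful about are (i) using the non-contractual notion of IS deviation, so that abandoning an agent in a coalition she likes is permitted, and (ii) keeping the triangle \emph{directed}---each agent approving a single neighbour rather than a symmetric pair---since a symmetric pair would turn one of the $2{+}1$ partitions into an individually stable state. If a larger instance were desired, one may pad the construction with dummy agents who approve every coalition (hence never deviate and never block), but this is unnecessary.
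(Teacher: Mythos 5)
Your proposal is correct and is essentially identical to the paper's own proof: it uses the very same three-agent cyclic approval instance (agent $1$ approves only $\{1,2\}$, agent $2$ only $\{2,3\}$, agent $3$ only $\{1,3\}$), identifies the grand coalition as the unique IS partition, and shows the dynamics from the \singleton is forced into the same $3$-cycle of $2{+}1$ partitions. The only cosmetic difference is that the paper presents the argument as a diagram of all IS deviations among the non-grand partitions rather than as a case check on first moves.
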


\begin{proof}
Let us consider an instance of a DHG 
with three agents.
Their preferences are described in the table below.

\medskip
{\centering
\begin{tabular}{*{4}{c}}
\toprule
Agent & 1 & 2 & 3 \\
\midrule
Approvals & $\{1,2\}$ & $\{2,3\}$ & $\{1,3\}$ \\
Disapprovals & $\{1\},\{1,3\},\{1,2,3\}$ & $\{2\},\{1,2\},\{1,2,3\}$ & $\{3\},\{2,3\},\{1,2,3\}$ \\ 
\bottomrule
\end{tabular}\par}
\medskip

There is a unique IS partition which consists of the grand coalition $\{1,2,3\}$.
We represent below all possible IS deviations between all the other possible partitions.
An IS deviation between two partitions is indicated by an arrow mentioning the name of the deviating agent.

\medskip
{\centering
\begin{tikzpicture}
\node (1) at (0,0) {$\{\{1\},\{2\},\{3\}\}$};
\node (2) at ($(1)+(4.5,0)$) {$\{\{1,2\},\{3\}\}$};
\node (3) at ($(2)+(2.5,2)$) {$\{\{1\},\{2,3\}\}$};
\node (4) at ($(2)+(2.5,-2)$) {$\{\{1,3\},\{2\}\}$};
\draw[arrow] (1) --node[pos=0.6,above]{$1$} (2); \draw[arrow] (2) --node[pos=0.35,above]{$2$}  (3); \draw[arrow] (3) --node[midway,right]{$3$}  (4); \draw[arrow] (4) --node[pos=0.65,below]{$1$}  (2);
\draw[arrow] (1) --node[midway,above]{$2$}  (3); \draw[arrow] (1) --node[midway,below]{$3$}  (4);
\end{tikzpicture}\par}
\medskip

One can check that the described deviations are IS deviations. 
A cycle is necessarily reached when starting from a partition different from the unique IS partition, which can be reached only if it is the initial partition. 
\end{proof}

Moreover, it is hard to decide on the existence of a sequence of IS deviations ending in an IS partition, even when starting from the \singleton, as well as checking convergence. 

\begin{restatable}{theorem}{existDHG}\label{thm:hardness-existencepath-DHG}
\existpb{IS}{DHG} is \np-hard even when starting from the \singleton, and \convergpb{IS}{DHG} is \conp-hard.
\end{restatable}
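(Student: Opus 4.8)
The plan is to prove both statements by polynomial reductions from \textsc{3-Sat}, following the template mentioned earlier in the paper: the $3$-agent DHG of Proposition~\ref{prop:noconv-dicho}, which admits no IS partition and from which the singleton dynamics is doomed to cycle, serves as the prohibitive subconfiguration, and it is placed into the reduced instance so that it can be ``neutralised'' exactly when a satisfying assignment is available. Both reductions share three kinds of components. (a) For each variable $x_i$ a \emph{commitment gadget}: a constant number of agents among which a single IS deviation can be performed, forming an unbreakable coalition that records the value $x_i\mapsto\mathsf{true}$ or $x_i\mapsto\mathsf{false}$; the dichotomous preferences are chosen so that at most one of the two commitments can ever be made, once made the participating agents never deviate again, and no deviation can ever undo it. (b) For each occurrence of a literal $\ell$ in a clause $c_j$ an \emph{escape agent} $e_{j,\ell}$ which is ``released'' (i.e.\ can leave its home coalition) only after the commitment matching $\ell$ has been performed. (c) Copies of the deadlock gadget of Proposition~\ref{prop:noconv-dicho}, whose preferences are augmented so that the presence of a released escape agent of the corresponding clause transforms the gadget into a game that does admit a reachable IS partition, while in the absence of any such agent the gadget is active (an IS deviation is always available) and has no IS partition. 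All other deviations in the instance will be made strictly ``monotone'' so that, apart from the active deadlock gadgets, the dynamics cannot cycle.

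For \existpb{IS}{DHG} starting from the \singleton I attach, to the commitment gadgets and escape agents, one augmented deadlock gadget per clause, so that the partitions in which no IS deviation is possible are precisely those in which every clause gadget has been stabilised by at least one released escape agent. The forward direction is then immediate: given a satisfying assignment, first perform all commitment deviations, then, clause by clause, release an escape agent witnessing a true literal and perform the finitely many IS deviations that move the clause gadget into its IS configuration, reaching an IS partition. For the converse, if a path to an IS partition exists, then in the terminal partition every clause gadget is stabilised, which by the design of the augmentation (and by Proposition~\ref{prop:noconv-dicho}, applied to the gadget's own agents) is impossible unless one of its escape agents was released; the releases are consistent per variable, so they define a truth assignment satisfying all clauses. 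This yields \np-hardness, with the singleton as initial partition.

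For \convergpb{IS}{DHG} I reduce \textsc{3-Sat} to the complement, i.e.\ I construct a DHG in which some IS-deviation sequence cycles if and only if $\varphi$ is satisfiable. Here I use a single deadlock gadget $\{h^1,h^2,h^3\}$ whose agents start tethered, each $h^k$ paired in an IS coalition with a ``tether'' agent $t_k$ that it prefers to every reachable alternative; the $t_k$ can only move into a ``sink'' coalition, and the sink becomes willing to admit them exactly when the committed assignment satisfies all clauses (the escape agents of the satisfied literals reconfigure the sink). If $\varphi$ is satisfiable one commits such an assignment, all tethers leave, the $h^k$ end up in singleton coalitions, and from there the cyclic sequence of Proposition~\ref{prop:noconv-dicho} runs forever. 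If $\varphi$ is unsatisfiable, the sink never opens, no tether leaves, the $h^k$ stay put, and the remaining dynamics (commitment gadgets, escape-agent releases, sink) is acyclic by a potential counting committed variables and released agents; hence every sequence terminates. This gives \conp-hardness.

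The main obstacle in both reductions is \emph{synchronisation}: dichotomous preferences cannot directly encode ``agent $x$ may move only after event $y$'', so the ordering constraints (commit-before-release, release-before-stabilise, all-clauses-before-opening-the-sink) must be simulated purely through which coalitions happen to be IS-available at a given time. The bulk of the (appendix) argument is therefore to rule out ``shortcut'' deviations — a tether leaving early, an escape agent stabilising the wrong clause gadget or being reused, a commitment being circumvented or duplicated, the sink opening from an inconsistent partial assignment, or the commitment phase itself cycling — and to verify, using Proposition~\ref{prop:noconv-dicho}, that an unserved deadlock gadget is genuinely active and IS-partition-free inside the full instance. The remaining bookkeeping (polynomial size, and the initial partition being the \singleton in the first reduction) is routine.
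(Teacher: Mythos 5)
Your plan for \existpb{IS}{DHG} is essentially the paper's: the paper also reduces from a SAT variant, attaches a copy of the $3$-agent deadlock game of Proposition~\ref{prop:noconv-dicho} to each clause (agents $k_j,k_j^2,k_j^3$) \emph{and} to each variable (agents $v_i,v_i^2,v_i^3$), and defuses a gadget exactly when its anchor agent is absorbed into an approved coalition with literal-agents. One simplification worth noting: the paper needs no explicit commit-before-release synchronisation. Consistency comes for free because the variable agent $v_i$ only approves triples containing \emph{both} occurrence-agents of one polarity, so the two agents of the false literal are consumed by the variable gadget while the true literal's agents remain free to serve clauses; your extra ``release'' layer is machinery the dichotomous setting does not require.

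For \convergpb{IS}{DHG} you take a genuinely different route. The paper does not gate the deadlock gadget at all; instead it builds the infinite run directly as a global rotation on a ring of clause coalitions $\{k_j,k'_j,\ell_j\}$ (each $k_j$ chases $k'_{j+1}$ together with a witness of $C_{j+1}$, then $k'_j$ and $\ell_j$ follow), and this rotation can be started if and only if every clause coalition acquires a witnessing literal-agent. Your tethered-gadget design is conceptually cleaner, but it has one concrete gap: the sink/tether agents must approve ``any coalition containing one released escape agent per clause,'' which is a conjunction over all $m$ clauses realised at a single coalition. Under the explicit-list representation of dichotomous preferences used throughout the paper's DHG constructions (each agent approves a constant number of named coalitions), this family has size $3^m$ and the reduction is no longer polynomial; you would need either the propositional-formula representation of Boolean hedonic games or a chained, one-clause-at-a-time opening of the sink. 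You would also have to verify that the tethers cannot leave early — with $0/1$ utilities, $t_k$ deviates as soon as \emph{any} approved coalition is IS-reachable, so the shortcut analysis you defer is where the real work lies. The paper's ring construction avoids both issues, at the cost of a more intricate case analysis showing the rotation is the only possible cycle.
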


Note that the counterexample provided in the proof of Proposition~\ref{prop:noconv-dicho} exhibits a global cycle in the preferences of the agents: $\{1,2\}\succ_1 \{1,3\} \succ_3 \{2,3\} \succ_2 \{1,2\}$.
However, by considering dichotomous preferences with \emph{common ranking property}, that is, each agent has a threshold for acceptance in a given global order of coalitions, we obtain convergence thanks to the same potential function argument used by~\citet{CaKi19a}, for proving the existence of a core-stable partition in hedonic games with common ranking property. 

Also note that when assuming that if a coalition is approved by one agent, then it must be approved by all the members of the coalition (so-called \emph{symmetric dichotomous preferences}), we obtain a special case of preferences with common ranking property where all the approved coalitions are at the top of the global order.
Therefore, convergence is also guaranteed under symmetric dichotomous preferences.

\section{Conclusion}

We have investigated dynamics of deviations based on individual stability in hedonic games. The two main questions we considered were whether there exists \emph{some} sequence of deviations terminating in an IS partition, and whether \emph{all} sequences of deviations terminate in an IS partition, i.e., the dynamics converges. 
Many of our results are negative, that is, examples of cycles in dynamics or even non-existence of IS partitions under strong preference restrictions.
In particular, we have answered a number of open problems proposed in the literature leading to boundaries of dynamics. For all hedonic games under study, it turned out that the existence of cycles for IS deviations is sufficient to prove the hardness of recognizing instances for which there exists a finite sequence of deviations or whether all sequences of deviations are finite, i.e., the dynamics converges. 
On the other hand, we have identified natural conditions for convergence that are based on $i)$ initial conditions, that is, the starting partition, $ii)$ selection rules for the performed deviation, and $iii)$ preference restrictions such as a common scale for the agents (e.g., the common ranking property), single-peakedness, or symmetry. 

An overview of our results can be found in Table~\ref{tab:results}. While our hardness results show boundaries for both the possible and guaranteed convergence of dynamics, our positive results mostly focus on guaranteed convergence. In particular, we have made sophisticated use of potential functions to show the polynomial running time of the dynamics for anonymous hedonic games and hedonic diversity games with restrictions at the boundary of convergence. Our convergence result for HDGs features are rare case in the literature, presenting a highly non-trivial reduction to AHGs and therefore revealing a deep relationship of these two classes under natural single-peakedness. 

\newcommand{\positiveResult}{{\Large\checkmark\xspace}}
\newcommand{\negativeResult}{\makebox[\widthof{\positiveResult}]{\LARGE$\circ$\xspace}}

\newcommand{\existNP}{{\Large$\exists$\xspace}}
\newcommand{\forallcoNP}{{\Large$\forall$\xspace}}

\begin{table}[tb]
\caption{% 
Convergence and hardness results for the dynamics of IS deviations in various classes of hedonic games. Symbol $\checkmark$ marks guaranteed convergence under the given preference restrictions and initial partition (if applicable)
while $\circ$ marks non-convergence, i.e., cycling dynamics. For all of our positive results except Theorem~\ref{thm:convAHG}, we can even show that the dynamics necessarily terminate after a polynomial number of steps. Symbol $\exists$ (or $\forall$) denotes that problem \existpb{IS}{HG} (or \convergpb{IS}{HG}) is \np-hard (or \conp-hard).}
\label{tab:results}
\centering\footnotesize
\resizebox{\textwidth}{!}{
\begin{tabular}{L{\widthof{HDGs}}L{\widthof{spa~common ranking property or symmetric  (Caskurlu and Kizilkaya
[15])}}L{\widthof{hi~ simple; singletons (Theorem~6.3)}}}
\toprule
Class & Guaranteed convergence & Hardness \\
\midrule
AHGs & \positiveResult ~ natural SP (single-peaked) (Theorem~\ref{thm:convAHG}) \newline
\positiveResult ~ neutral (\citet{Suks15a}) \newline
\negativeResult ~ strict \& general SP; singletons / grand coalition (Proposition~\ref{prop:cycleAHG-singleton}) &
\existNP ~ strict (Theorem~\ref{thm:hardness-existencepath-AHG}) \newline \forallcoNP ~ strict (Theorem~\ref{thm:hardness-existencepath-AHG}) \\
\midrule

HDGs & \positiveResult ~ strict \& natural SP; singletons; solitary homogeneity (Theorem~\ref{thm:converg-HDG-singletonstrictSP}) \newline
\negativeResult ~ \parbox[c]{\widthof{any three of: strict, natural SP, singletons, }}{any three of: strict, natural SP, singletons, and solitary homogeneity (Theorem~\ref{thm:cycleHDG})} &
\existNP ~  strict (Theorem~\ref{thm:hardness-existencepath-HDG}) \newline
\forallcoNP ~ strict (Theorem~\ref{thm:hardness-existencepath-HDG}) \\
\midrule

FHGs & \positiveResult ~ simple \& sym.; singletons (Proposition~\ref{prop:convergence-simplesymFHG}) \newline
\positiveResult ~ acyclic digraph (Theorem~\ref{thm:convergence-asymmetric+acyclicFHG}) \newline
\negativeResult ~ symmetric (Theorem~\ref{thm:symFHGnoIS}) &
\existNP ~ symmetric (Theorem~\ref{thm:hardness-symposFHG})\newline
\existNP ~ simple; singletons (Theorem~\ref{thm:hardness-existencepath-simpleFHG})\newline
\existNP ~ simple asym. (Theorem~\ref{thm:hardness-existencepath-asymFHG})\newline
\forallcoNP ~ symmetric (Theorem~\ref{thm:hardness-symposFHG}) \newline
\forallcoNP ~ simple asym. (Theorem~\ref{thm:hardness-existencepath-asymFHG}) \\
\midrule

DHGs & \positiveResult ~ grand coalition (\citet{BoEl20a}) \newline
\positiveResult ~ common ranking property or symmetric (\citet{CaKi19a})\newline
\negativeResult ~ singletons (Proposition~\ref{prop:noconv-dicho}) &
\existNP ~   singletons (Theorem~\ref{thm:hardness-existencepath-DHG}) \newline
\forallcoNP ~ general (Theorem~\ref{thm:hardness-existencepath-DHG}) \\
\bottomrule
\end{tabular}
} 
\end{table}

An important message of our results is that the consideration of dynamics can offer important novel insights regarding the reachability of stable states, even if the static picture drawn by asking for the mere existence of stability seems clear.
For instance, FHGs with non-negative utilities and HDGs always contain stable states, while it is hard to decide if we can reach a stable state from some initial partition, even under severe restrictions. On the other hand, dynamics always converge in naturally single-peaked AGHs, which is in accordance with the existence of stable partitions observed by \citet{BoJa02a}. In other words, the existence and the distributed attainability of stable states do not necessarily coincide.

While our results cover a broad range of hedonic games considered in the literature, there are still promising directions for further research.
First, even though our hardness results hold under strong restrictions, 
the complexity of these questions remains open for some interesting preference restrictions, some of which do not guarantee convergence. Following our work, the most intriguing case for guaranteed convergence are simple symmetric FHGs with arbitrary initial partitions. 
On the other hand, our computational hardness of simple FHGs only covers the \existpb{IS}{FHG} while we leave the complexity of \convergpb{IS}{FHG} open. We conjecture that this problem is hard as well.

Since our positive results mainly concern guaranteed convergence, there are also interesting open problems concerning the existence of a path to stability. In general, there is hope that less of the restrictions necessary for guaranteed convergence suffice for a path to stability. This especially concerns HDGs. Since solitary homogeneity is just a selection rule among possible deviations, Theorem~\ref{thm:converg-HDG-singletonstrictSP} implies that there always exists a path to stability from the \singleton in HDGs where agents have strict and single-peaked preferences. On the other hand, Proposition~\ref{prop:nopath-HDG}, our result about conditions under which cycling can necessarily occur leaves space for possibilities. Two intriguing questions are whether there always exists a path to stability in HDGs where agents' preferences are strict and single-peaked, or when the dynamics start with the \singleton.

Possible convergence of the dynamics is closely related to the investigation of specific selection rules for the performed deviations. With the exception of Theorem~\ref{thm:converg-HDG-singletonstrictSP}, we do not have to pose any assumptions on the performed deviations to obtain our results for guaranteed convergence. However, apart from possible convergence, selecting appropriate deviations may also lead to a fast termination of the dynamics in IS partitions, even in classes of hedonic games that allow for cyclic IS deviations. 
For instance, for simple symmetric FHGs, there is the possibility of convergence such that each agent deviates at most once, but the selection of the deviating agents in this approach requires to solve a maximum clique problem (cf. the discussion after Proposition~\ref{prop:convergence-simplesymFHG}).

An open problem concerning the convergence speed of dynamics is to bound the number of steps until convergence in AHGs under \emph{weak} and single-peaked preferences. Note that considering fastest instead of fast convergence is a problem that is likely to lead to further intractabilities. In this respect, \citet{BBK22a} provide first results by proving hardness of finding the shortest path to stability for several dynamics in additively separable hedonic games.

In principal, one can define dynamics also based on other stability concepts like Nash stability or contractual individual stability. For the latter, cycling is not possible, and therefore an analysis within the complexity class \pls as local search algorithms is a natural approach that measures the complexity of convergence. Such an analysis is also appropriate for dynamics guaranteed to converge based on a potential function argument as it was already done for Nash stability in additively separable hedonic games \citep{BoJa02a,GaSa19a}.

Finally, the final states reached in the dynamics we consider do not provide information beyond individual stability. One could therefore additionally aim to reach efficient outcomes, potentially measured amongst stable outcomes. The notion of Pareto optimality seems natural here because it gives also rise to a natural improvement dynamics. Indeed, \citet{BoJa02a} consider this problem for naturally single-peaked AHGs where their algorithm constructs an IS partition that is weakly Pareto-optimal.

\begin{acks}
This work was supported by the Deutsche Forschungsgemeinschaft under grant BR 2312/12-1. We would like to thank Abheek Ghosh for sharing insights on how to extend our convergence result for AHGs to weak preferences.
We also thank Leo Tappe and the anonymous reviewers from AAAI and TEAC for helpful discussions. We were impressed by the depth of the reviews from TEAC.
A preliminary version of this article appeared in the Proceedings of the 35th AAAI Conference on Artificial Intelligence (February, 2021).
Results from this article were presented at the 3rd Games, Agents, and Incentives Workshop (May, 2021) and the 15th Journ\'ees d'Intelligence Artificielle Fondamentale (July, 2021).
\end{acks}

\newpage

\newpage

\appendix

\section*{APPENDIX: Omitted Proofs}

In the appendix, we provide the proofs omitted in the main part of the paper.

\section{Anonymous Hedonic Games}

\existAHG*

We prove the two hardness results by providing separate reductions for each problem in the next two lemmas.

\begin{lemma}\label{lem:hard-exist-AHG}
\existpb{IS}{AHG} is \np-hard even for strict preferences.
\end{lemma}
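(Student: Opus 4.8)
The plan is to reduce from a known \np-complete problem — the natural candidate is the problem of deciding whether an AHG (with strict preferences) admits an IS partition at all, which \citet{Ball04a} proved \np-complete. Given an instance $G$ of that problem, I would construct an AHG $G'$ together with an initial partition $\spartition$ such that a sequence of IS deviations starting from $\spartition$ can reach an IS partition in $G'$ if and only if $G$ admits an IS partition. The key design idea is to make the dynamics essentially ``simulate'' the existence question: the initial partition $\spartition$ should be chosen so that every IS deviation sequence is forced to first disassemble $\spartition$ and then, in order to terminate, reconstruct an IS partition of (a copy of) $G$.

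First I would take the agent set of $G'$ to contain a copy of the agents of $G$, and arrange the preferences of these agents (shifting coalition sizes by a fixed offset if necessary) so that their IS partitions correspond exactly to those of $G$, while ensuring they can never be ``stuck'' — i.e. from the configuration we place them in, they can always perform deviations, so that the only terminal states reachable are genuine IS partitions of the copy of $G$. Second, I would add an auxiliary gadget — a block of dummy agents together with the $G$-agents in some artificial starting coalition — whose only purpose is that from $\spartition$ the agents are forced into a sequence of IS deviations that eventually frees all the $G$-agents into the ``clean'' configuration above; the dummies then settle into their own stable coalitions and never interfere again. The correctness argument then splits: if $G$ has an IS partition, one exhibits the explicit sequence (dismantle $\spartition$, route dummies to their homes, then have the $G$-agents move into an IS partition of $G$) reaching a global IS partition of $G'$; conversely, if some sequence reaches an IS partition of $G'$, one argues that the terminal partition restricted to the $G$-agents must be an IS partition of $G$ (using that in $G'$ the $G$-agents' preferences over their relevant coalition sizes are inherited from $G$ and no non-$G$ agent ends up with them in a terminal state).

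The main obstacle I anticipate is the \emph{forcing} of the dynamics: one must ensure that (i) no IS deviation sequence can terminate ``prematurely'' in a configuration that is IS in $G'$ but does not encode an IS partition of $G$, and (ii) the dummies cannot obstruct the $G$-agents, nor vice versa, at any point along an arbitrary sequence — not just along the one intended sequence. Handling (i) is delicate because anonymity means agents only see sizes, so one has to be careful that no accidental combination of $G$-agents with dummies yields a locally stable configuration; a standard trick is to give the dummies strict preferences with a unique small ``home'' size that is incompatible with being mixed in with $G$-agents (e.g. by parity or by making the mixed sizes strictly dispreferred to being alone), and to give the $G$-agents preferences that make any coalition containing a dummy strictly worse than some always-available alternative. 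Making this robust against \emph{all} sequences, rather than just the designed one, is where the bulk of the case analysis will go; I would expect to borrow gadget ideas from the construction of Proposition~\ref{prop:noIS-AHG} and from Ballester's reduction. The \conp-hardness of \convergpb{IS}{AHG} would then be handled by the companion lemma via a similar but ``dual'' construction, embedding a no-IS-partition instance (e.g. the $15$-agent gadget of Proposition~\ref{prop:noIS-AHG}, or Ballester's) so that the dynamics cycles iff the embedded instance behaves as intended.
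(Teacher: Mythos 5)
Your reduction source is the wrong problem, and this creates a gap that cannot be patched by more careful gadget engineering. You propose to reduce from the \emph{existence} question (does an AHG admit an IS partition, shown \np-complete by Ballester) and to argue that if $G$ has an IS partition, then after the dummies disassemble $\spartition$ the $G$-agents ``move into an IS partition of $G$.'' But the existence of an IS partition does not imply that one is \emph{reachable by IS deviations} from the configuration your gadget leaves the $G$-agents in: an IS deviation requires the deviator to strictly improve and every member of the welcoming coalition to weakly improve, so the dynamics can easily be unable to navigate to a partition that is nonetheless stable. Separating existence from reachability is precisely the point of this line of results (compare Proposition~\ref{prop:nopath-HDG} and Proposition~\ref{prop:noconv-dicho}, where IS partitions exist but every deviation sequence cycles, and the unique IS partition is unreachable). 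To make your forward direction go through you would need the additional claim that in Ballester's yes-instances an IS partition is always reachable from your designated starting configuration --- a claim that is not established anywhere and is exactly the kind of statement this lemma shows is hard to decide.

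The paper instead reduces directly from (3,B2)-SAT and builds reachability into the construction from the start. Literal-agents can either join their clause coalition or their variable coalition; a satisfying assignment yields an \emph{explicit} terminating sequence of IS deviations (so the forward direction is constructive and does not rely on any existence-implies-reachability principle), while unsatisfiability forces some variable-agent pair $z_i,\overline{z}_i$ into a three-coalition gadget $G_i^1,G_i^2,G_i^3$ in which they cycle forever, so \emph{no} sequence can terminate. Your instinct to embed a cycling gadget (you mention the $15$-agent example of Proposition~\ref{prop:noIS-AHG}) is the right one for the \conp-hardness companion lemma, but for the present lemma the essential missing idea is that the reduction must certify a reachable stable state directly from the SAT witness rather than inherit it from an existence oracle.
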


\begin{proof}
Let us perform a reduction from (3,B2)-SAT, a variant of the \textsc{Satisfiability} problem known to be \np-complete~\citep{BKS03a}. 
In an instance of (3,B2)-SAT, we are given a CNF propositional formula $\varphi$ where every clause $C_j$, for $1\leq  j \leq m$, contains exactly three literals and every variable $x_i$, for $1\leq i \leq p$, appears exactly twice as a positive literal and twice as a negative literal. 
From such an instance, we construct an instance of an anonymous game with initial partition as follows.

\newcommand{\clauseof}[1]{cl(#1)}

For each $\ell^{\text{th}}$ occurrence ($\ell\in\{1,2\}$) of a positive literal $x_i$ (or negative literal $\overline{x}_i$), we create a literal-agent $y_i^\ell$ (or $\overline{y}_i^\ell$).
All literal-agents are singletons in the initial partition $\spartition$.
Let us consider four integers $\alpha$, $\beta^+$, $\beta^-$ and $\gamma$ such that $(1)$ $q\cdot\alpha +x \neq r\cdot \beta^+ + y \neq s\cdot \beta^- + z \neq t \gamma + w$ for every $r,s,t\in [p]$, $q\in [m]$, $x,y,z\in\{0,1,2\}$ and $w\in [7]$ and, without loss of generality, $\alpha>\beta^+>\beta^->\gamma>1$. 
For instance, we can set the following values: $\alpha=m^5$, $\beta^+=m^4$, $\beta^-=m^3$ and $\gamma=m^2$ (condition $(1)$ is satisfied since in a (3,B2)-SAT instance, it holds that $m\geq 4$ and $p=3/4 m$).
For each clause $C_j$, we create $j\alpha$ dummy clause-agents who are all grouped within the same coalition $K_j$ in the initial partition $\spartition$.
For each literal $x_i$ (or $\overline{x}_i$), we create one variable-agent $z_i$ (or $\overline{z}_i$) and $i \beta^+-1$ (or $i\beta^- -1$) dummy variable agents who are all grouped within the same coalition $Z_i$ (or $\overline{Z_i}$) in the initial partition $\spartition$.
Finally, for each variable $x_i$, we create $i\gamma$ dummy agents who are all grouped within the same coalition $G_i^1$ in the initial partition $\spartition$, $i\gamma+3$ dummy agents who are all grouped within the same coalition $G_i^2$ in the initial partition $\spartition$ and $i\gamma+5$ dummy agents who are all grouped within the same coalition $G_i^3$ in the initial partition $\spartition$.
These dummy agents are used as a gadget for a cycle.
Although we have created many agents, the construction remains polynomial by considering reasonable values of $\alpha$, $\beta^+$, $\beta^-$ and $\gamma$, as previously described. 

The preferences of the agents over sizes of coalitions are given in Table~\ref{tab:pref-reduc-exist-AHG}.
By the design of the preferences of the members of the initial coalitions in $\spartition$, i.e., the members of the initial non-singleton coalitions accept at most two additional agents in their coalition and otherwise are happy with their coalition, and by condition~(1), all the sizes of non-singleton coalitions explicitly given in the preferences can be reached only in one way, which is the one described in the preferences, i.e., by the addition of at most two agents in a specific initial non-singleton coalition.
It follows that the preferences of the agents can be expressed in terms of preferences for joining, or that one or two agents join, a specific non-singleton coalition from the initial partition $\spartition$.

\begin{table}
\caption{Preferences of the agents in the reduced instance of Lemma~\ref{lem:hard-exist-AHG}, for every $1\leq i\leq p$, $1\leq j\leq m$, $\ell\in\{1,2\}$. Notation $\clauseof{x_i^\ell}$ (or $\clauseof{\overline{x}_i^\ell}$) stands for the index of the clause to which the $\ell^{\text{th}}$ occurrence of literal $x_i$ (or $\overline{x}_i$) belongs, the framed value is the size of the initial coalition in partition $\spartition$, and $[\dots]$ denotes an arbitrary order over the rest of the coalition sizes.}
\label{tab:pref-reduc-exist-AHG}
\begin{center}
\resizebox{\columnwidth}{!}{
\begin{tabular}{rl}
$z_i:$ & $|Z_i|+2 \succ |G_i^1|+2 \succ |G_i^2|+1 \succ |G_i^3|+2 \succ |G_i^3|+1 \succ |G_i^1|+1 \succ |Z_i|+1 \succ \boxed{|Z_i|} \succ [\dots]$ \\
$\overline{z}_i:$ & $|\overline{Z_i}|+2 \succ |G_i^3|+2 \succ |G_i^2|+1 \succ |G_i^1|+2 \succ |G_i^1|+1 \succ |G_i^3|+1 \succ |\overline{Z_i}|+1 \succ \boxed{|\overline{Z_i}|} \succ [\dots]$ \\
$y_i^\ell:$ & $|K_{\clauseof{x_i^\ell}}|+1 \succ |Z_i|+2 \succ |Z_i|+1 \succ \boxed{1} \succ [\dots]$ \\
$\overline{y}_i^\ell:$ & $|K_{\clauseof{\overline{x}_i^\ell}}|+1 \succ |\overline{Z_i}|+2 \succ |\overline{Z_i}|+1 \succ \boxed{1} \succ [\dots]$ \\
\midrule
$K_j:$ & $|K_j|+1 \succ \boxed{|K_j|} \succ [\dots]$ \\
$Z_i\setminus\{z_i\}:$ & $|Z_i|+2 \succ |Z_i|+1 \succ \boxed{|Z_i|} \succ |Z_i|-1 \succ [\dots]$ \\
$\overline{Z_i}\setminus\{\overline{z}_i\}:$ & $|\overline{Z_i}|+2 \succ |\overline{Z_i}|+1 \succ \boxed{|\overline{Z_i}|} \succ |\overline{Z_i}|-1 \succ [\dots]$ \\
$G_i^1:$ & $|G_i^1|+2 \succ |G_i^1|+1 \succ \boxed{|G_i^1|} \succ [\dots]$ \\
$G_i^2:$ & $|G_i^2|+1 \succ \boxed{|G_i^2|} \succ [\dots]$ \\
$G_i^3:$ & $|G_i^3|+2 \succ |G_i^3|+1 \succ \boxed{|G_i^3|} \succ [\dots]$ \\
\end{tabular}}
\end{center}
\end{table}

We claim that there exists a sequence of IS deviations which leads to an IS partition iff formula $\varphi$ is satisfiable.

Suppose first that there exists a truth assignment of the variables $\phi$ that satisfies all the clauses.
Let us denote by $\ell_j$ a chosen literal-agent associated with an occurrence of a literal true in $\phi$ which belongs to clause $C_j$.
Since all the clauses of $\varphi$ are satisfied by $\phi$, there exists such a literal-agent $\ell_j$ for each clause $C_j$.
For every clause $C_j$, let literal-agent $\ell_j$ join coalition $K_j$. 
These IS deviations make all the dummy clause-agents and the chosen literal-agents the most happy as possible, therefore none of them will deviate afterwards or let other agents enter their coalition.
Then, let all remaining literal-agents $y_i^\ell$ (or $\overline{y}_i^\ell$) deviate by joining coalition $Z_i$ (or $\overline{Z_i}$).
Since $\phi$ is a truth assignment of the variables, for each variable $x_i$, the two literal-agents corresponding to the literal of variable $x_i$ that is false in $\phi$ both deviate in this second round of deviations.
Therefore, there exists a coalition $Z_i$ or $\overline{Z_i}$ that is joined by two literal-agents and thus whose members all reach their most preferred size $|Z_i|+2$ or $|\overline{Z_i}|+2$.
It follows that no member of such a newly formed coalition would move afterwards or let other agents enter the coalition: all members of $Z_i$ or $\overline{Z_i}$ get their most preferred size while the two joining literal-agents get their second most preferred size and their most preferred size is not accessible anymore (their associated clause coalition has already been joined by another literal-agent). 
Consequently, for each variable $x_i$, at most one coalition between $Z_i$ and $\overline{Z_i}$ may not be joined by two literal-agents and, if there is one, it must be the coalition that corresponds to the literal of variable $x_i$ that is true in $\phi$.
In such a case, we let the associated variable-agent $z_i$ or $\overline{z}_i$ deviate for joining coalition $G_i^2$, and if one literal-agent previously joined the corresponding variable-coalition, she deviates to be alone.
Such a literal-agent then gets her fourth most preferred size while her most preferred ones are not accessible anymore (because the variable-agent has left the coalition and her associated clause coalition has already been joined by another literal-agent).
Moreover, such a variable-agent $z_i$ or $\overline{z}_i$, by joining coalition $G_i^2$, gets her third most preferred size while her most preferred ones are not accessible (no two additional agents want to enter the initial coalition $Z_i$ or $\overline{Z_i}$ and only one additional agent, herself, is present in the gadget associated with variable $x_i$).
Also, note that, by the design of the preferences, no dummy agent in the gadget has an incentive to move to another coalition.
All in all, no agent can then move in an IS deviation, and thus the reached partition is IS.

Suppose now that there exists no truth assignment of the variables that satisfies all the clauses.
That means that it is not possible that each clause coalition is joined by a literal-agent associated with this clause 
while two other literal-agents $y_i^1$ and $y_i^2$ join coalition $Z_i$ (or $\overline{y}_i^1$ and $\overline{y}_i^2$ join coalition $\overline{Z_i}$), for each variable $x_i$.
Moreover since, by design of the preferences, each literal-agent prefers to join clause coalitions than variable coalitions, it means that in a maximal sequence of IS deviations, all dummy clause-agents in each coalition $K_j$ will be completely satisfied with a coalition size equal to $|K_j|+1$ (if a clause coalition is not joined by a literal-agent, then a literal-agent associated with this clause has an incentive to join this coalition, no matter her current coalition).
It means that after a maximal sequence of IS deviations, each clause coalition is joined by a literal-agent associated with this clause.
Therefore, by the previous argument derived from the assumption that there exists no truth assignment of the variables that satisfies all the clauses, 
there must exist a variable $x_i$ such that at most one literal-agent joins coalition $Z_i$ and at most one literal-agent joins coalition $\overline{Z_i}$.
It follows that both variable-agents $z_i$ and $\overline{z_i}$ have an incentive to deviate to the gadget associated with variable $x_i$ (their respective most preferred coalition sizes $|Z_i|+2$ and $|\overline{Z_i}|+2$ can never be reached, while they prefer to join some coalitions in the gadget than staying in their current coalition of size $|Z_i|$ or $|Z_i|+1$ for $z_i$, and $|\overline{Z_i}|$ or $|\overline{Z_i}|+1$ for $\overline{z}_i$).
Within the gadget associated with variable $x_i$, variable-agents $z_i$ and $\overline{z_i}$ are the only agents who can deviate and we necessarily reach the cycle illustrated in Figure~\ref{fig:cycle-reduc-AHG}.

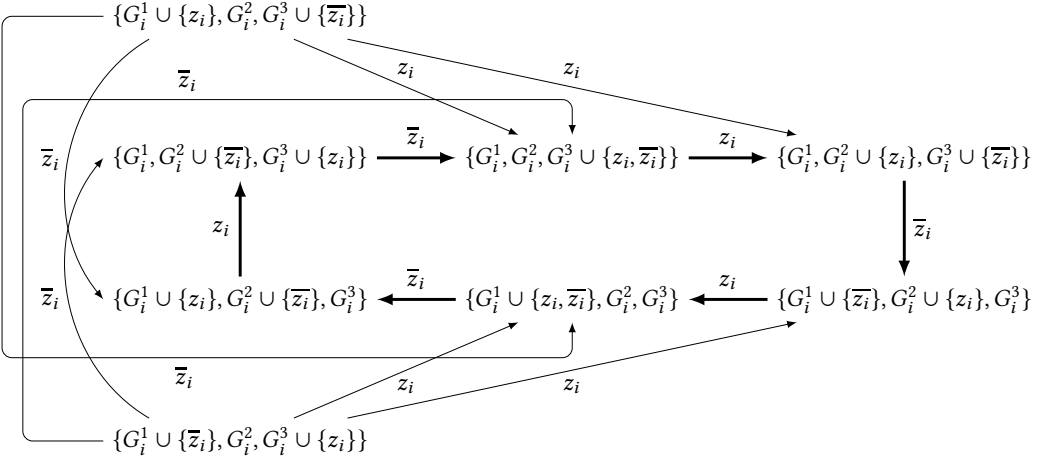
\begin{figure}
\begin{center}
\resizebox{\columnwidth}{!}{
\begin{tikzpicture}
\node (p1) at (-4.75,0) {$\{G_i^1,G_i^2\cup\{\overline{z_i}\},G_i^3\cup \{z_i\}\}$};
\node (p2) at (0,0) {$\{G_i^1,G_i^2,G_i^3\cup\{z_i,\overline{z_i}\}\}$};
\node (p3) at (4.75,0) {$\{G_i^1,G_i^2\cup\{z_i\},G_i^3\cup\{\overline{z_i}\}\}$};
\node (p4) at (4.75,-2) {$\{G_i^1\cup\{\overline{z_i}\},G_i^2\cup\{z_i\},G_i^3\}$};
\node (p5) at (0,-2) {$\{G_i^1\cup\{z_i,\overline{z_i}\},G_i^2,G_i^3\}$};
\node (p6) at (-4.75,-2) {$\{G_i^1\cup\{z_i\},G_i^2\cup\{\overline{z_i}\},G_i^3\}$};
\node (p7) at (-4.75,2) {$\{G_i^1\cup\{z_i\},G_i^2,G_i^3\cup\{\overline{z_i}\}\}$};
\node (p8) at (-4.75,-4) {$\{G_i^1\cup\{\overline{z}_i\},G_i^2,G_i^3\cup\{z_i\}\}$};
\draw[arrow,very thick] (p1) -- node[midway,above] {$\overline{z}_i$}  (p2);
\draw[arrow,very thick] (p2) -- node[midway,above] {$z_i$}  (p3);
\draw[arrow,very thick] (p3) -- node[midway,right] {$\overline{z}_i$}  (p4);
\draw[arrow,very thick] (p4) -- node[midway,above] {$z_i$}  (p5);
\draw[arrow,very thick] (p5) -- node[midway,above] {$\overline{z}_i$}  (p6);
\draw[arrow,very thick] (p6) -- node[midway,left] {$z_i$}  (p1);
\draw[arrow] (p7) -- node[midway,above] {$z_i$}  (p2);
\draw[arrow] (p7) -- node[midway,above] {$z_i$}  (p3);
\draw[arrow,rounded corners] (p7.west) -- ($(p7.west)+(-1.45,0)$) |- node[pos=0.66,below] {$\overline{z}_i$} ($(p5.south)+(0,-0.5)$) -- (p5.south);
\draw[arrow] (p7) to[bend right=50] node[midway,left] {$\overline{z}_i$}  (p6.west);
\draw[arrow] (p8) to[bend left=50] node[midway,left] {$\overline{z}_i$}  (p1.west);
\draw[arrow,rounded corners] (p8.west) -- ($(p8.west)+(-1.15,0)$) |- node[pos=0.65,above] {$\overline{z}_i$} ($(p2.north)+(0,0.5)$) -- (p2.north);
\draw[arrow] (p8) -- node[midway,below] {$z_i$}  (p4);
\draw[arrow] (p8) -- node[midway,below] {$z_i$}  (p5);
\end{tikzpicture}}
\end{center}
\caption{Necessary cycle of IS deviations within the gadget associated with variable $x_i$ in the reduced instance of Lemma~\ref{lem:hard-exist-AHG}.}
\label{fig:cycle-reduc-AHG}
\end{figure} 

It follows that no sequence of IS deviations can reach an IS partition.
\end{proof}

\begin{lemma}\label{lem:hard-converg-AHG}
\convergpb{IS}{AHG} is \conp-hard even for strict preferences.
\end{lemma}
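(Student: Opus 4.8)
The plan is to establish \conp-hardness by exhibiting a polynomial-time reduction from (3,B2)-SAT to the \emph{complement} of \convergpb{IS}{AHG}, i.e. to the question ``is there a sequence of IS deviations from $\spartition$ that fails to reach an IS partition?'' --- which, since there are only finitely many partitions, is the same as ``is there a cycling sequence of IS deviations from $\spartition$?''. Concretely, I would build from a (3,B2)-SAT formula $\varphi$ an AHG together with an initial partition $\spartition$ such that $\varphi$ is satisfiable if and only if some sequence of IS deviations cycles. Because the complement problem is then \np-hard, \convergpb{IS}{AHG} is \conp-hard; this mirrors the ``study via the complement'' approach announced in the preliminaries.

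I would recycle most of the apparatus of Lemma~\ref{lem:hard-exist-AHG}: a literal-agent for each literal occurrence, a pair of variable-agents per variable whose placement encodes a truth value, a clause-coalition per clause that a ``true'' literal-agent can complete, and a battery of dummy agents padding coalitions to widely separated sizes via parameters $\alpha > \beta^+ > \beta^- > \gamma$, so that all coalition sizes occurring in preferences are pairwise distinct and every preference statement effectively reads ``it is beneficial to join (or to be joined in) this particular initial coalition''. The two new ingredients I would add are: (i) a shared bottleneck agent per variable that can lie in at most one of the two ``active'' slots of $x_i$, enforcing that no variable is ever assigned both true and false; and (ii) a global collector coalition $M$ together with a cycle gadget of the six-partition shape used in Proposition~\ref{prop:cycleAHG-singleton}, arranged so that two designated agents can oscillate through the gadget forever precisely when $M$ has size $|M_0|+m$, plus a clause-checker agent $c_j$ for each clause that can migrate into $M$ only once some literal of $C_j$ has been placed in its true slot. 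In this way $|M|$ attains $|M_0|+m$ exactly when all $m$ clauses are simultaneously satisfied by one consistent assignment.

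For the forward direction I would, given a satisfying assignment $\phi$, route the literal-agents so that each clause-coalition is completed by a true literal-agent and, for each variable, the variable-agent on the $\phi$-side takes the active slot; then every $c_j$ can enter $M$, driving $|M|$ to $|M_0|+m$, and the two designated agents start looping through the gadget, so the dynamics does not converge. For the backward direction I would prove, by an invariant maintained along any sequence of IS deviations, that the variable-agent placements always encode a consistent partial assignment and that ``$c_j\in M$'' can hold only if $C_j$ is satisfied by that assignment; unsatisfiability of $\varphi$ then forbids $|M|=|M_0|+m$, so the gadget is never entered, and --- invoking the size separation once more --- with the gadget inert every remaining agent has a unique best reachable coalition, so every maximal sequence of IS deviations terminates in an IS partition, i.e. the dynamics converges.

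The hard part is the backward direction: I must exclude \emph{every} adversarial interleaving of IS deviations from ever reaching a configuration from which the cycle gadget can be started, not merely the intended interleavings. This forces the cyclable configuration to be pinned down essentially uniquely --- distinctness of all coalition sizes together with bottleneck-enforced consistency of the encoded assignment --- so that reaching it provably certifies a genuine satisfying assignment; and it requires separately verifying that everything outside the global gadget admits no cycles on its own (unlike the per-variable gadgets of Lemma~\ref{lem:hard-exist-AHG}), so that failure to trigger the gadget really does force termination.
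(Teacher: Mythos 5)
Your overall frame is the right one and matches the paper's: prove \np-hardness of the complement (``does some sequence of IS deviations cycle?'') by a reduction from (3,B2)-SAT. But the gadgetry you propose has a genuine gap in the soundness direction, and it is precisely the part you defer. Your certificate of satisfiability is a \emph{transient} configuration: the collector $M$ momentarily attaining size $|M_0|+m$. In an anonymous game nothing pins the encoded assignment in place over time. A ``bottleneck agent'' occupies only one slot \emph{at a time}, but the dynamics is temporal: she (or the literal-agents) can migrate from the $x_i$-slot to the $\overline{x}_i$-slot after the checkers of clauses containing $x_i$ have already entered $M$, and those checkers have no anonymous means of detecting that ``their'' clause has become unsatisfied and leaving again. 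So $|M|$ can reach the threshold by accumulating checkers certified under \emph{different, mutually inconsistent} partial assignments, and the backward direction (``unsatisfiable $\Rightarrow$ every sequence converges'') fails unless you make slot assignments irreversible or checkers' membership in $M$ revocable --- neither of which your size-separation machinery provides. Relatedly, your closing claim that ``with the gadget inert every remaining agent has a unique best reachable coalition, so every maximal sequence terminates'' is not a valid inference: an agent can reach her best coalition, be abandoned, drop to a worse size, and deviate again, so ruling out cycles in the non-gadget part needs its own structural or potential argument, which is exactly where the work lies.

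The paper's proof of Lemma~\ref{lem:hard-converg-AHG} sidesteps both problems by making the cycle \emph{itself} the certificate rather than arming a separate gadget via a threshold. It arranges the clause coalitions $K_1,\dots,K_{m+1}$ and variable coalitions $Z_1^1,Z_1^2,\dots,Z_p^2$ in a single ring, and shows that in \emph{any} cycle every deviating agent must eventually be abandoned by another deviating agent, so the deviators form a closed chain around the ring. Closing the chain forces exactly one literal-agent per clause (alternating between $K_j$ and $K_{j+1}$), two literal-agents of the \emph{same} literal per variable (alternating between $Z_i^1$ and $Z_i^2$), and one linking agent $t$, all pairwise distinct; reading off which literal-agents cycle in the clause part versus the variable part yields a consistent satisfying assignment. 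Because the certificate is extracted from behaviour that must be sustained forever, temporal drift is a non-issue, and the ``unsatisfiable $\Rightarrow$ convergence'' direction reduces to showing that no such closed abandonment chain exists. If you want to salvage your threshold-collector design (which does resemble the paper's FHG construction in Lemma~\ref{lem:IScoNPinFHG}), you would need to make every checker's entry into $M$ contingent on an irreversible, mutually exclusive commitment per variable --- and then still prove convergence of the whole non-gadget subsystem under arbitrary interleavings.
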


\begin{proof}
For this purpose, we prove the \np-hardness of the complement problem, which asks whether there exists a cycle of IS deviations. 
Let us perform a reduction from (3,B2)-SAT~\citep{BKS03a}. 
In an instance of (3,B2)-SAT, we are given a CNF propositional formula $\varphi$ where every clause $C_j$, for $1\leq  j \leq m$, contains exactly three literals and every variable $x_i$, for $1\leq i \leq p$, appears exactly twice as a positive literal and twice as a negative literal. 
From such an instance, we construct an instance of an anonymous game with initial partition as follows.

For each $\ell^{\text{th}}$ occurrence ($\ell\in\{1,2\}$) of a positive literal $x_i$ (or negative literal $\overline{x}_i$), we create a literal-agent $y_i^\ell$ (or $\overline{y}_i^\ell$).
We create another agent $t$.
All these agents are singletons in the initial partition $\spartition$.
Let us consider five integers $\alpha$, $\beta^+_1$, $\beta^-_1$, $\beta^+_2$ and $\beta^-_2$ such that $(1)$ $q\cdot\alpha +x \neq r\cdot \beta^+_1 + y \neq s\cdot \beta^-_1 + z\neq t\cdot \beta^+_2 +v\neq u\cdot\beta^-_2 +w$ for every $r,s,t,u\in [p]$, $q\in [m]$ and $x,y,z,v,w\in\{0,1,2\}$ and, without loss of generality, $\alpha>\beta^+_1>\beta^-_1>\beta^+_2>\beta^-_2>1$. 
For instance, we can set the following values: $\alpha=m^5$, $\beta^+_1=m^4$, $\beta^-_1=m^3$, $\beta^+_2=m^2$, $\beta^-_2=m$ (condition $(1)$ is satisfied since in a (3,B2)-SAT instance, it holds that $m\geq 4$ and $p=3/4 m$).
For each clause $C_j$, we then create $j\cdot\alpha$ dummy clause agents grouped within the same coalition $K_j$ in the initial partition $\spartition$.
We also create $(m+1)\cdot\alpha$ dummy agents grouped within the same coalition $K_{m+1}$ in initial partition $\spartition$.
Finally, for each literal $x_i$ (or $\overline{x}_i$) and each $\ell\in\{1,2\}$, we create $i \cdot \beta^+_\ell$ (or $i \cdot \beta^-_\ell$) dummy variable agents grouped within the same coalition $Y_i^\ell$ (or $\overline{Y_i^\ell}$) in the initial partition $\spartition$.
Although we have created many agents, the construction remains polynomial by considering reasonable values of $\alpha$, $\beta^+_1$, $\beta^-_1$, $\beta^+_2$ and $\beta^-_2$, as previously described. 

\newcommand{\clauseof}[1]{cl(#1)}

The preferences of the agents over sizes of coalitions are given in Table~\ref{tab:pref-reduc-converg-AHG}.
By the design of the preferences of the members of the initial coalitions in $\spartition$, i.e., the members of the initial non-singleton coalitions accept at most two additional agents in their coalition and otherwise are happy with their coalition, and by condition~(1), all the sizes of non-singleton coalitions explicitly given in the preferences can be reached only in one way, which is the one described in the preferences, i.e., by the addition of at most two agents in a specific initial non-singleton coalition.
It follows that the preferences of the agents can be expressed in terms of preferences for joining, or that one or two agents join, a specific non-singleton coalition from the initial partition $\spartition$.

\begin{table}
\caption{Preferences of the agents in the reduced instance of Lemma~\ref{lem:hard-converg-AHG}, for every $1\leq i\leq p$, $1\leq i' < p$, $1\leq j\leq m+1$, $\ell\in\{1,2\}$. Notation $\clauseof{x_i^\ell}$ (or $\clauseof{\overline{x}_i^\ell}$) stands for the index of the clause to which the $\ell^{\text{th}}$ occurrence of literal $x_i$ (or $\overline{x}_i$) belongs, the framed value is the size of the initial coalition in partition $\spartition$, and $[\dots]$ denotes an arbitrary order over the rest of the coalition sizes.}
\label{tab:pref-reduc-converg-AHG}
\begin{center}
\resizebox{\columnwidth}{!}{
\begin{tabular}{rl}
$y_i^1:$ & $|K_{\clauseof{x_i^1}}|+2 \succ |K_{\clauseof{x_i^1}+1}|+2 \succ |K_{\clauseof{x_i^1}+1}|+1 \succ |K_{\clauseof{x_i^1}}|+1 \succ |Y_i^1|+2 \succ |Y_i^2|+2 \succ |Y_i^2|+1 \succ |Y_i^1|+1 \succ  \boxed{1} \succ [\dots]$ \\
$y_{i'}^2:$ & $|K_{\clauseof{x_{i'}^2}}|+2 \succ |K_{\clauseof{x_{i'}^2}+1}|+2 \succ |K_{\clauseof{x_{i'}^2}+1}|+1 \succ |K_{\clauseof{x_{i'}^2}}|+1 \succ |Y_{i'}^2|+2 \succ |Y_{i'+1}^1|+2 \succ |Y_{i'+1}^1|+1 \succ |\overline{Y_{i'+1}^1}|+2 \succ $ \\ & \hfill $|\overline{Y_{i'+1}^1}|+1 \succ |Y_{i'}^2|+1 \succ  \boxed{1} \succ [\dots]$ \\
$y_p^2:$ & $|K_{\clauseof{x_p^2}}|+2 \succ |K_{\clauseof{x_p^2}+1}|+2 \succ |K_{\clauseof{x_p^2}+1}|+1 \succ |K_{\clauseof{x_p^2}}|+1 \succ |Y_p^2|+2 \succ |K_1|+2 \succ |K_1|+1 \succ |Y_p^2|+1 \succ  \boxed{1} \succ [\dots]$ \\
$\overline{y}_i^1:$ & $|K_{\clauseof{\overline{x}_i^1}}|+2 \succ |K_{\clauseof{\overline{x}_i^1}+1}|+2 \succ  |K_{\clauseof{\overline{x}_i^1}+1}|+1 \succ |K_{\clauseof{\overline{x}_i^1}}|+1 \succ |\overline{Y_i^1}|+2 \succ |\overline{Y_i^2}|+2 \succ |\overline{Y_i^2}|+1 \succ |\overline{Y_i^1}|+1 \succ  \boxed{1} \succ [\dots]$ \\
$\overline{y}_{i'}^2:$ & $|K_{\clauseof{\overline{x}_{i'}^2}}|+2 \succ |K_{\clauseof{\overline{x}_{i'}^2}+1}|+2 \succ |K_{\clauseof{\overline{x}_{i'}^2}+1}|+1 \succ |K_{\clauseof{\overline{x}_{i'}^2}}|+1 \succ |\overline{Y_{i'}^2}|+2 \succ |Y_{i'+1}^1|+2 \succ |Y_{i'+1}^1|+1 \succ |\overline{Y_{i'+1}^1}|+2 \succ$ \\ & \hfill $|\overline{Y_{i'+1}^1}|+1 \succ |\overline{Y_{i'}^2}|+1 \succ  \boxed{1} \succ [\dots]$ \\
$\overline{y}_p^2:$ & $|K_{\clauseof{\overline{x}_p^2}}|+2 \succ |K_{\clauseof{\overline{x}_p^2}+1}|+2 \succ |K_{\clauseof{\overline{x}_p^2}+1}|+1 \succ |K_{\clauseof{\overline{x}_p^2}}|+1 \succ |\overline{Y_p^2}|+2 \succ |K_1|+2 \succ |K_1|+1 \succ |\overline{Y_p^2}|+1 \succ  \boxed{1} \succ [\dots]$ \\
$t:$ & $|K_{m+1}|+2 \succ |Y_{1}^1|+2 \succ |Y_{1}^1|+1 \succ |\overline{Y_{1}^1}|+2 \succ |\overline{Y_{1}^1}|+1 \succ |K_{m+1}|+1 \succ \boxed{1} \succ [\dots]$ \\
\midrule
$K_j:$ & $|K_j|+2 \succ |K_j|+1 \succ \boxed{|K_j|} \succ 1 \succ [\dots]$ \\
$Y_i^\ell:$ & $|Y_i^\ell|+2 \succ |Y_i^\ell|+1 \succ \boxed{|Y_i^\ell|} \succ 1 \succ [\dots]$ \\
$\overline{Y_i^\ell}:$ & $|\overline{Y_i^\ell}|+2 \succ |\overline{Y_i^\ell}|+1 \succ \boxed{|\overline{Y_i^\ell}|} \succ 1 \succ [\dots]$ \\
\end{tabular}}
\end{center}
\end{table}

We claim that there exists a cycle of IS deviations iff formula $\varphi$ is satisfiable\footnote{Note that the \singleton is nevertheless always IS.}.

The global idea of the proof is that a cycle of IS deviations necessarily involves, as deviating agents, agent $t$ and $(i)$ one literal-agent for each clause (the associated literal occurrence of the literal-agent belongs to the clause), as well as $(ii)$ the two literal-agents associated with a same literal for each variable.
All these agents must be distinct, implying the existence of a truth assignment of the variables that satisfies all the clauses.
The cycle of IS deviations is such that the literal-agents corresponding to case $(i)$ alternate between joining the coalition of dummy clause agents associated with their clause and the one associated with the next clause (w.r.t. the indices of clauses),
and the literal-agents corresponding to case $(ii)$ alternate between joining the coalition of dummy variable agents associated with their literal occurrence and the one associated with the other occurrence of the same literal, if the literal-agent corresponds to the first occurrence of the literal, or the first occurrence of the chosen literal of the next variable (w.r.t. the indices of variables), if the literal-agent corresponds to the second occurrence of the literal.
The example of such a cycle can be found in Figure~\ref{fig:cycle-reduc-conv-AHG}. 

Suppose first that formula $\varphi$ is satisfiable by a truth assignment of the variables denoted by $\phi$.
Let us denote by $\ell_j$ a chosen literal-agent associated with an occurrence of a literal true in $\phi$ which belongs to clause $C_j$.
Since all the clauses of $\varphi$ are satisfied by $\phi$, there exists such a literal-agent $\ell_j$ for each clause $C_j$.
Further, let us denote by $z_i^1$ and $z_i^2$ the literal-agents associated with the two occurrences of the literal of variable $x_i$ which is false in $\phi$.
In the same vein, let us denote by $Z_i^1$ and $Z_i^2$ the coalitions of dummy variable agents associated with $z_i^1$ and $z_i^2$, respectively. 
Since $\phi$ is a truth assignment of the variables, $z_i^1$, $z_i^2$, $Z_i^1$ and $Z_i^2$ all correspond to the same literal (either $x_i$ or $\overline{x}_i$) and it holds that $\bigcup_{1\leq j\leq m}\ell_j \cap \bigcup_{1\leq i\leq p} \{z_i^1,z_i^2\}=\emptyset$.
We will construct a cycle of IS deviations involving, as deviating agents, the literal-agents $\ell_j$, for every $1\leq j\leq m$, the literal-agents $z_i^1$ and $z_i^2$, for every $1\leq i\leq p$, and agent $t$.
Since $m$ is even in a (3,B2)-SAT (recall that $m=4/3p$), there is an odd number of deviating agents in total.
The main steps of the cycle are illustrated in Figure~\ref{fig:cycle-reduc-conv-AHG}.

\begin{figure}
\begin{center}
\resizebox{\columnwidth}{!}{
\begin{tikzpicture}
\node (K1) at (0,-4) {$K_1\cup\{y_1^1,y_3^2\}$};
\node (K2) at ($(K1)+(3,0)$) {$K_2$};
\node (K3) at ($(K2)+(3,0)$) {$K_3\cup\{y_2^2\}$};
\node (K4) at ($(K3)+(3,0)$) {$K_4\cup\{y_1^2,\overline{y}_3^2\}$};
\node (K5) at ($(K4)+(3,0)$) {$K_5$};
\node (Y11) at ($(K1)+(0,-0.75)$) {$Y_1^1$};
\node (Y12) at ($(Y11)+(3,0)$) {$Y_1^2$};
\node (nY11) at ($(Y12)+(3,0)$) {$\overline{Y_1^1}\cup\{t,\overline{y}_1^1\}$};
\node (nY12) at ($(nY11)+(3,0)$) {$\overline{Y_1^2}$};
\node (Y21) at ($(Y11)+(0,-0.75)$) {$Y_2^1$};
\node (Y22) at ($(Y21)+(3,0)$) {$Y_2^2$};
\node (nY21) at ($(Y22)+(3,0)$) {$\overline{Y_2^1}\cup\{\overline{y}_1^2,\overline{y}_2^1\}$};
\node (nY22) at ($(nY21)+(3,0)$) {$\overline{Y_2^2}$};
\node (Y31) at ($(Y21)+(0,-0.75)$) {$Y_3^1\cup\{\overline{y}_2^2,y_3^1\}$};
\node (Y32) at ($(Y31)+(3,0)$) {$Y_3^2$};
\node (nY31) at ($(Y32)+(3,0)$) {$\overline{Y_3^1}$};
\node (nY32) at ($(nY31)+(3,0)$) {$\overline{Y_3^2}$};
\node (y11) at ($(K5)+(-1,-0.325)$) {};
\node (y21) at ($(y11)+(0,-0.75)$) {$y_2^1$};
\node (ny31) at ($(y21)+(2,0)$) {$\overline{y}_3^1$};
\node (pi) at ($(K1)+(-2,-1)$) {$\partition$:};
\draw[arrow,line width=0.001pt] ($(K1)+(0.65,-0.25)$) -- (Y32);
\draw[arrow,line width=0.001pt] ($(Y31)+(0.65,-0.25)$) to[bend right=10] (Y32);
\draw[arrow,line width=0.001pt] ($(Y31)+(0.25,-0.25)$) to[bend right=20] (nY22);
\draw[arrow,line width=0.001pt] ($(nY11)+(0.65,0.15)$) to[bend left=15] (nY12);
\draw[arrow,line width=0.001pt] ($(nY21)+(0.25,0.15)$) -- (nY12);
\draw[arrow,line width=0.001pt] ($(nY21)+(0.65,0.15)$) to[bend left=15] (nY22);
\draw[arrow,line width=0.001pt] ($(K4)+(0.15,0.15)$) to[bend right=20] (K3);
\draw[arrow,line width=0.001pt] ($(K4)+(0.65,0.25)$) to[bend left=20] (K5);
\draw[arrow,rounded corners,line width=0.001pt] ($(nY11)+(0.25,0.1)$) |- ($(K5)+(0,-0.25)$) -- ($(K5)+(0,-0.1)$);
\draw[arrow,line width=3pt] ($(pi)+(8,-1.75)$) -- ($(pi)+(8,-2.75)$);
\node (K1) at (0,-8.25) {$K_1\cup\{y_1^1\}$};
\node (K2) at ($(K1)+(3,0)$) {$K_2$};
\node (K3) at ($(K2)+(3,0)$) {$K_3\cup\{y_2^2,y_1^2\}$};
\node (K4) at ($(K3)+(3,0)$) {$K_4$};
\node (K5) at ($(K4)+(3,0)$) {$K_5\cup\{\overline{y}_3^2,t\}$};
\node (Y11) at ($(K1)+(0,-0.75)$) {$Y_1^1$};
\node (Y12) at ($(Y11)+(3,0)$) {$Y_1^2$};
\node (nY11) at ($(Y12)+(3,0)$) {$\overline{Y_1^1}$};
\node (nY12) at ($(nY11)+(3,0)$) {$\overline{Y_1^2}\cup\{\overline{y}_1^1,\overline{y}_1^2\}$};
\node (Y21) at ($(Y11)+(0,-0.75)$) {$Y_2^1$};
\node (Y22) at ($(Y21)+(3,0)$) {$Y_2^2$};
\node (nY21) at ($(Y22)+(3,0)$) {$\overline{Y_2^1}$};
\node (nY22) at ($(nY21)+(3,0)$) {$\overline{Y_2^2}\cup\{\overline{y}_2^1,\overline{y}_2^2\}$};
\node (Y31) at ($(Y21)+(0,-0.75)$) {$Y_3^1$};
\node (Y32) at ($(Y31)+(3,0)$) {$Y_3^2\cup\{y_3^1,y_3^2\}$};
\node (nY31) at ($(Y32)+(3,0)$) {$\overline{Y_3^1}$};
\node (nY32) at ($(nY31)+(3,0)$) {$\overline{Y_3^2}$};
\node (y11) at ($(K5)+(-1,-0.325)$) {};
\node (y21) at ($(y11)+(0,-0.5)$) {$y_2^1$};
\node (ny31) at ($(y21)+(2,0)$) {$\overline{y}_3^1$};
\node (pi) at ($(K1)+(-2,-1)$) {$\partition_1$:};
\draw[arrow,line width=0.001pt] ($(K1)+(0.35,0.2)$) to[bend left] (K2);
\draw[arrow,line width=0.001pt] ($(K3)+(0.75,0.2)$) to[bend left] (K4);
\draw[arrow,line width=0.001pt] ($(K3)+(0.2,0.2)$) to[bend right] (K2);
\draw[arrow,line width=0.001pt] ($(K5)+(0.25,0.2)$) to[bend right] (K4);
\draw[arrow,line width=0.001pt] ($(Y32)+(0.15,-0.25)$) to[bend left=15] (Y31);
\draw[arrow,line width=0.001pt] ($(nY12)+(0.15,0.2)$) to[bend right=15] (nY11);
\draw[arrow,line width=0.001pt] ($(nY22)+(0.15,0.2)$) to[bend right=10] (nY21);
\draw[arrow,rounded corners,line width=0.001pt] ($(Y32)+(0.65,-0.25)$) |- ($(Y31)+(-0.5,-0.5)$) |- ($(K1)+(-0.45,-0.25)$);
\draw[arrow,line width=0.001pt] ($(nY12)+(0.6,-0.25)$) -- ($(nY21)+(0.25,0.2)$);
\draw[arrow,rounded corners,line width=0.001pt] ($(nY22)+(0.7,-0.2)$) -- ($(nY32)+(0.7,-0.4)$) -| (Y31);
\draw[arrow,line width=3pt] ($(pi)+(8,-1.75)$) -- ($(pi)+(8,-2.75)$);
\node (K1) at (0,-12.5) {$K_1\cup\{y_3^2\}$};
\node (K2) at ($(K1)+(3,0)$) {$K_2\cup\{y_1^1,y_2^2\}$};
\node (K3) at ($(K2)+(3,0)$) {$K_3$};
\node (K4) at ($(K3)+(3,0)$) {$K_4\cup\{y_1^2,\overline{y}_3^2\}$};
\node (K5) at ($(K4)+(3,0)$) {$K_5$};
\node (Y11) at ($(K1)+(0,-0.75)$) {$Y_1^1$};
\node (Y12) at ($(Y11)+(3,0)$) {$Y_1^2$};
\node (nY11) at ($(Y12)+(3,0)$) {$\overline{Y_1^1}\cup\{t,\overline{y}_1^1\}$};
\node (nY12) at ($(nY11)+(3,0)$) {$\overline{Y_1^2}$};
\node (Y21) at ($(Y11)+(0,-0.75)$) {$Y_2^1$};
\node (Y22) at ($(Y21)+(3,0)$) {$Y_2^2$};
\node (nY21) at ($(Y22)+(3,0)$) {$\overline{Y_2^1}\cup\{\overline{y}_1^2,\overline{y}_2^1\}$};
\node (nY22) at ($(nY21)+(3,0)$) {$\overline{Y_2^2}$};
\node (Y31) at ($(Y21)+(0,-0.75)$) {$Y_3^1\cup\{\overline{y}_2^2,y_3^1\}$};
\node (Y32) at ($(Y31)+(3,0)$) {$Y_3^2$};
\node (nY31) at ($(Y32)+(3,0)$) {$\overline{Y_3^1}$};
\node (nY32) at ($(nY31)+(3,0)$) {$\overline{Y_3^2}$};
\node (y11) at ($(K5)+(-1,-0.325)$) {};
\node (y21) at ($(y11)+(0,-0.5)$) {$y_2^1$};
\node (ny31) at ($(y21)+(2,0)$) {$\overline{y}_3^1$};
\node (pi) at ($(K1)+(-2,-1)$) {$\partition_2$:};
\draw[arrow,line width=0.001pt] ($(K2)+(0.1,0.2)$) to[bend right] (K1);
\draw[arrow,line width=0.001pt] ($(K2)+(0.6,0.2)$) to[bend left] (K3);
\draw[arrow,line width=3pt,rounded corners] ($(pi)+(8,-1.75)$) -- ($(pi)+(-0.5,-1.75)$) |- ($(pi)+(1,9)$);
\end{tikzpicture}}
\end{center}
\caption{Example for the cycle of IS deviations described in the reduced instance of Lemma~\ref{lem:hard-converg-AHG}. The illustration is for a yes-instance of (3,B2)-SAT with four clauses and three variables where $\varphi\equiv (x_1 \vee x_2 \vee x_3) \wedge (\overline{x}_1 \vee x_2 \vee \overline{x}_3) \wedge (x_1 \vee \overline{x}_2 \vee x_3) \wedge (\overline{x}_1 \vee \overline{x}_2 \vee \overline{x}_3)$. We consider the truth assignment of the variables $\phi$ where literals $x_1$, $x_2$ and $\overline{x}_3$ are true. Assignment $\phi$ satisfies formula $\varphi$ with, e.g., literal $x_1$ which makes clauses $C_1$ and $C_3$ true, literal $x_2$ which makes clause $C_2$ true, and literal $\overline{x}_3$ which makes clause $C_4$ true.
By using the notations of the proof, we thus have $\ell_1=y_1^1$, $\ell_2=y_2^2$, $\ell_3=y_1^2$, $\ell_4=\overline{y}_3^2$ and, for every $r\in\{1,2\}$, $Z_1^r=\overline{Y_1^r}$, $Z_2^r=\overline{Y_2^r}$, and $Z_3^r=Y_3^r$.}
\label{fig:cycle-reduc-conv-AHG}
\end{figure}
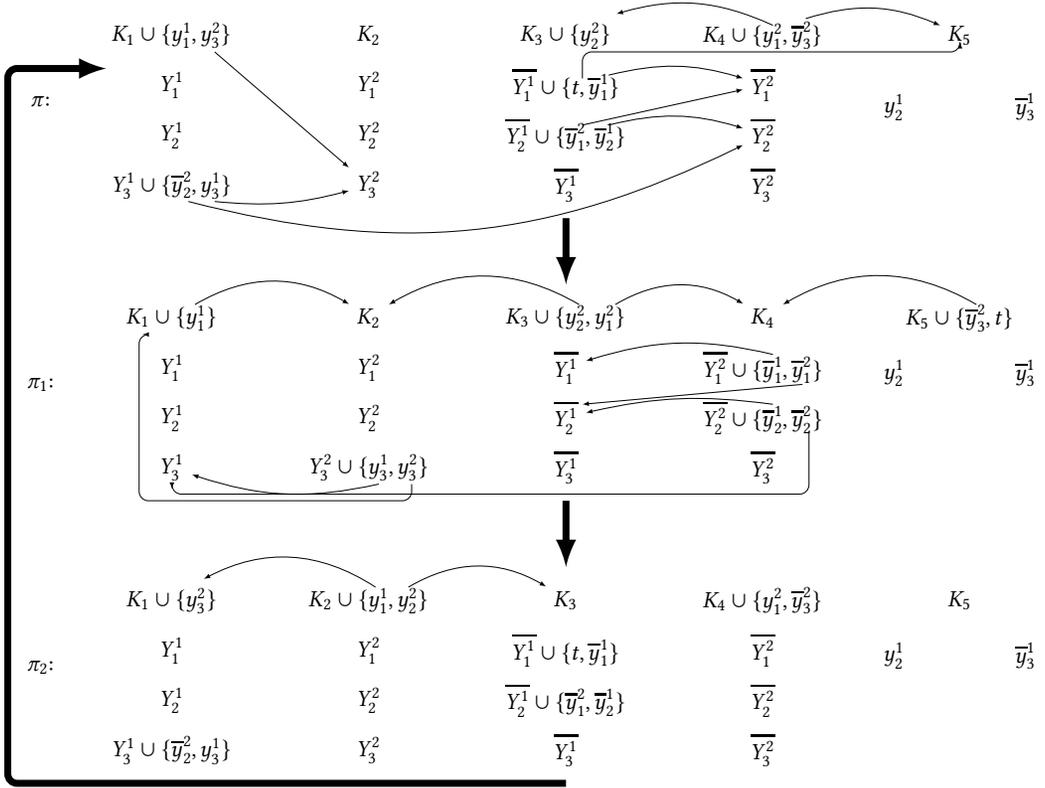

First of all, let agent $z_p^2$ and then agent $\ell_1$ join coalition $K_1$.
For each $1<i\leq p$, let agent $z_{i-1}^2$ and then agent $z_i^1$ join coalition $Z_i^1$.
Let agent $t$ and then agent $z_1^1$ join coalition $Z_1^1$.
For each even $j$ such that $3<j\leq m$, let agent $\ell_{j-1}$ and then agent $\ell_j$ join coalition $K_j$.
Finally, let agent $\ell_2$ join coalition $K_3$. 
The reached partition is 
$\partition:=
\{K_1\cup\{\ell_1,z_p^2\},K_2,K_3\cup\{\ell_2\},K_4\cup\{\ell_3,\ell_4\},K_5,K_6\cup\{\ell_5,\ell_6\},K_7,\dots,K_m\cup\{\ell_{m-1},\allowbreak\ell_m\},K_{m+1},Z_1^1\cup\{t,z_1^1\},Z_1^2,\allowbreak Z_2^1\cup\{z_1^2,z_2^1\},Z_2^2,\dots,Z_p^1\cup\{z_{p-1}^2,\allowbreak z_p^1\},Z_p^2,\overline{Z_1^1},\overline{Z_1^2},\dots,\overline{Z_p^1},\overline{Z_p^2}\}$, where coalition $\overline{Z_i^\ell}$ refers to $\overline{Y_i^\ell}$ if $Z_i^\ell=Y_i^\ell$ and to $Y_i^\ell$ if $Z_i^\ell=\overline{Y_i^\ell}$. 
Partition $\partition$ is the first step of the cycle (see Figure~\ref{fig:cycle-reduc-conv-AHG}).

From partition $\partition$, let literal-agent $\ell_3$ deviate from current coalition $K_4\cup\{\ell_3,\ell_4\}$ to join coalition $K_3\cup\{\ell_2\}$.
This deviation makes literal-agent $\ell_4$ worse off, who then deviates to join coalition $K_5$.
Then, the same deviations occur for literal-agents $\ell_5$ and $\ell_6$, and so on. 
More generally, for every odd $j$ such that $3\leq j\leq m$ by increasing order of indices, literal-agent $\ell_j$ leaves coalition $K_{j+1}\cup\{\ell_j,\ell_{j+1}\}$ to join coalition $K_j\cup\{\ell_{j-1}\}$ and then literal-agent $\ell_{j+1}$, who is worse off by this deviation, deviates to join coalition $K_{j+2}$.
After that, agent $t$ deviates from coalition $Z_1^1\cup\{t,z_1^1\}$ to join coalition $K_{m+1}\cup\{\ell_m\}$, which makes literal-agent $z_1^1$ worse off.
Then, for each $1\leq i\leq p$ by increasing order of indices, let literal-agent $z_i^1$ deviate from coalition $Z_i^1\cup\{z_i^1\}$ to join coalition $Z_i^2$ and then, if $i<p$, literal-agent $z_i^2$ deviates from coalition $Z_{i+1}^1\cup\{z_i^2,z_{i+1}^1\}$ 
to join coalition $Z_i^2\cup\{z_i^1\}$, which makes literal-agent $z_{i+1}^1$ 
worse off.
Afterwards, literal-agent $z_p^2$ deviates from coalition $K_1\cup\{z_i^2,\ell_1\}$ to join coalition $Z_p^2\cup\{z_p^1\}$, which makes literal-agent $\ell_1$ worse off.
We thus reach partition $\partition_1:=\{K_1\cup\{\ell_1\},K_2,K_3\cup\{\ell_2,\ell_3\},K_4,K_5\cup\{\ell_4,\ell_5\},K_6,K_7\cup\{\ell_6,\ell_7\},\dots,K_{m-1}\cup\{\ell_{m-2},\allowbreak\ell_{m-1}\},K_m,K_{m+1}\cup\{\ell_m,t\},Z_1^1,Z_1^2\cup\{z_1^1,z_1^2\},\allowbreak Z_2^1,Z_2^2\cup\{z_2^1,z_2^2\},\dots,Z_p^1,Z_p^2\cup\{z_p^1,\allowbreak z_p^2\},\overline{Z_1^1},\overline{Z_1^2},\dots,\overline{Z_p^1},\overline{Z_p^2}\}$ (see Figure~\ref{fig:cycle-reduc-conv-AHG}).

Then, from partition $\partition_1$, for every odd $j$ such that $1\leq j < m$ by increasing order of indices, let literal-agent $\ell_j$ deviate from coalition $K_j\cup\{\ell_j\}$ to join coalition $K_{j+1}$ and then, if $j<m-1$, literal-agent $\ell_{j+1}$ deviates from $K_{j+2}\cup\{\ell_{j+1},\ell_{j+2}\}$ 
to join coalition $K_{j+1}\cup\{\ell_j\}$, which makes literal-agent $\ell_{j+2}$ 
worse off.
Afterwards, literal-agent $\ell_m$ deviates from $K_{m+1}\cup\{\ell_m,t\}$ to join coalition $K_m\cup\{\ell_{m-1}\}$, which makes literal-agent $t$ worse off.
Let agent $t$ then deviate from coalition $K_{m+1}\cup\{t\}$ to join coalition $Z_1^1$ and literal-agent $z_1^1$ deviate from coalition $Z_1^2\cup\{z_1^1,z_1^2\}$ to join coalition $Z_1^1\cup\{t\}$, which makes literal-agent $z_1^2$ worse off.
Then, for each $1\leq i < p$  by increasing order of indices, let literal-agent $z_i^2$ deviate from coalition $Z_i^2\cup\{z_i^2\}$ to join coalition $Z_{i+1}^1$, and literal-agent $z_{i+1}^1$ deviate from coalition $Z_{i+1}^2\cup\{z_{i+1}^1,z_{i+1}^2\}$ to join coalition $Z_{i+1}^1\cup\{z_i^2\}$, which makes literal-agent $z_{i+1}^2$ worse off.
And then, let literal-agent $z_p^2$ deviate from coalition $Z_p^2\cup\{z_p^2\}$ to join coalition $K_1$, leading to partition $\partition_2$ (see Figure~\ref{fig:cycle-reduc-conv-AHG}). 
Afterwards, literal-agent $\ell_1$ deviates from coalition $K_2\cup\{\ell_1,\ell_2\}$ to join coalition $K_1\cup\{z_p^2\}$, which makes literal-agent $\ell_2$ worse off.
Finally, let literal-agent $\ell_2$ deviate from coalition $K_2\cup\{\ell_2\}$ to join coalition $K_3$, and we have finally reached partition $\partition$, leading to a cycle.

Suppose now that there exists a cycle of IS deviations.
Observe first that no dummy agent can deviate.
Indeed, the only coalition sizes that are preferred by a dummy agent to the size of her initial coalition are the size of the current coalition plus one and the size of the current coalition plus two. 
These sizes cannot be reached by joining other coalitions by condition~$(1)$, and the fact that the other coalitions do not want to integrate more than two additional agents in their coalition.
Therefore, the only possible deviations are when the dummy agents let at most two agents join their coalition.
It follows, by construction of the preferences, that no agent can belong to a coalition whose size is ranked after size 1 in her preferences, i.e., we do not have to care about the preferences within $[\dots]$ in the preference ranking of the agents.
Indeed, a literal-agent or agent $t$ (for the sake of simplicity we also talk about agent $t$ when referring to literal-agents since the behavior is similar), whose initial coalition is of size one, can join some coalitions of dummy agents and sometimes accepts one additional literal-agent in such coalitions. 
The worst thing that can happen to these deviating literal-agents is that the other literal-agent, who has joined the same coalition of dummy agents as her, leaves the coalition. 
However, in such a case, by construction of the preferences, both literal-agents are still in a coalition whose size is ranked before one in their preferences.

Observe that literal-agents can only join coalitions of dummy agents with at most one other literal-agent in this coalition and that no literal-agent can join another literal-agent outside a coalition of dummy agents, because size two is not ranked before size one in the preferences of the agents (recall that a coalition of dummy agents cannot be of size smaller than 2). 
Moreover, since the literal-agents can never be in a coalition of size less preferred than one, once a literal-agent deviates from her initial coalition where she is alone, she has no incentive to come back to the coalition where she is alone.
Hence, the deviations in the cycle must be performed by literal-agents who join different coalitions of dummy agents.

Because only literal-agents can deviate and all non-singleton coalitions of any reached partition must of the form of a coalition of dummy agents (from the initial partition) potentially joined by one or two literal-agents, each deviating literal-agent $ag$ in the cycle must be left at some step in order to come back to a less preferred coalition.
Since literal-agent $ag$ can be left only by one other literal-agent, it follows that the current coalition of agent $ag$ was of size $|K|+2$ for a given coalition $K$ of dummy agents and becomes of size $|K|+1$. 
To be able to come back to a previous less preferred coalition, agent $i$ must prefer $|K|+2$ over $|K|+1$. 
Moreover, since literal-agent $i$ is a deviating agent, there must be intermediate sizes in the preference ranking of agent $ag$ between $|K|+2$ and $|K|+1$.
Say that literal-agent $ag$ corresponds to a literal-agent $z_i^\ell$ where $z_i$ refers either to literal $y_i$ or to literal $\overline{y}_i$, and $Z_i^\ell$ refers to the associated clause of dummy variable agents, i.e., $Z_i^\ell=Y_i^\ell$ if $z_i^\ell=y_i^\ell$, and $Z_i^\ell=\overline{Y_i^\ell}$ if $z_i^\ell=\overline{y}_i^\ell$.
Moreover, denote by $j$ the index of the clause to which the literal associated with $z_i^\ell$ belongs, i.e., $j=\clauseof{x_i^\ell}$ if $z_i^\ell=y_i^\ell$ and $j=\clauseof{\overline{x}_i^\ell}$ if $z_i^\ell=\overline{y}_i^\ell$.
Then, 
by construction of the preferences, coalition $K$ of dummy agents can only be: $(a)$ 
$K_j$, i.e., the coalition associated with the clause to which the occurrence of the literal of the literal-agent belongs, or $(b)$ 
$Z_i^\ell$, i.e., the coalition associated with the occurrence of the literal of the literal-agent.
We detail below the possible cases for coalition $K$.

\begin{enumerate}[label=(\alph*)]
\item If $K$ is the coalition of dummy clause agents $K_j$, 
then the other literal-agent who leaves the coalition cannot be associated with an occurrence of a literal belonging to the corresponding clause. 
Indeed, otherwise, by construction of the preferences, the size of this coalition would be the most preferred one for this leaving literal-agent, contradicting her IS deviation from this coalition.
Therefore, according to the preferences of the literal-agents, the only possibility is that this other literal-agent who leaves coalition $K_j$ 
is associated with an occurrence of a literal belonging to $C_{j-1}$ if $j>1$ or is agent $z_p^2\in\{y_p^2,\overline{y}_p^2\}$ 
if $j=1$.
Due to the preferences of the literal-agents, if a literal-agent leaves such a coalition, it is necessarily for joining the coalition of dummy agents $K_{j-1}$ (which has an additional literal-agent) if $j>1$ or 
$Z_p^2$ (with an additional literal-agent) if $j=1$. 
In the latter case ($j=1$), this is the only possibility even if the associated deviating agent 
$z_p^2$ prefers several other coalition sizes over $|K_1|+2$, because these other choices would prevent her to come back to size $|K_1|+2$: the worst thing that can occur after some steps if $z_p^2$ chooses preferred coalitions other than $Z_p^2$ is that she would be in a coalition of size $|K_{\clauseof{x_p^2}}|+1$ or $|K_{\clauseof{x_p^2}+1}|+1$ if $z_p^2=y_p^2$, or $|K_{\clauseof{\overline{x}_p^2}}|+1$ or $|K_{\clauseof{\overline{x}_p^2}+1}|+1$) if $z_p^2=\overline{y}_p^2$, and both sizes are preferred to $|K_1|+2$, which contradicts the cycle.
\item \begin{enumerate}[label=\arabic*.]
\item If $K$ is the coalition of dummy variable agents $Z_i^1$, 
i.e., $\ell=1$, then the only possibility is that the other literal-agent who leaves the coalition is literal-agent $z_{i-1}^2\in\{y_{i-1}^2,\overline{y}_{i-1}^2\}$ 
if $i>1$, or agent $t$ if $i=1$.

If $i>1$, following the same argument as in case (a) for agent $z_p^2$, literal-agent $z_{i-1}^2$ 
cannot deviate to coalitions of dummy clause agents, otherwise she would never come back to the current coalition size. 
If 
$Z_i^1=Y_i^1$, then the only possibility is that literal-agent $z_{i-1}^2$ 
deviates for joining coalition $Z_{i-1}^2$ 
(and an additional literal-agent).
Otherwise, i.e., if 
$Z_i^1=\overline{Y_i^1}$, then literal-agent $z_{i-1}^2$ 
can deviate to join $Z_{i-1}^2$ 
(and an additional literal-agent), but she could also deviate for joining the coalition of dummy variable agents $Y_i^1$.
In the latter case, by construction of the preferences, the only other agent who can also join coalition $Y_i^1$ is literal-agent $y_i^1$.
If $y_i^1$ never joins this coalition, then agent $z_{i-1}^2$ would impact no deviating agent by staying in this coalition, and it would not enable her to come back to a previous less preferred coalition since no agent can leave the coalition.
Therefore, because she still prefers coalition $Z_{i-1}^2$, she would deviate anyway to coalition $Z_{i-1}^2$.
Otherwise, i.e., if $y_i^1$ joins the coalition (before or after $z_{i-1}^2$), it means that agent $y_i^1$ reaches the best possible coalition of dummy variable agents.
By following the same arguments as in case (a) for agent $z_p^2$, agent $y_i^1$ cannot deviate to coalitions of dummy clause agents, otherwise she would never come back to the current coalition size.
It follows that agent $y_i^1$ has no reason to deviate from the coalition she forms with $z_{i-1}^2$ and $Y_i^1$.
Therefore, in order to be left at some point of the cycle in order to come back to a previous less preferred coalition, agent $z_{i-1}^2$ still needs to deviate and the only possibility is to join coalition $Z_{i-1}^2$ (and an additional literal-agent).

If $i=1$, the same arguments can be applied and then agent $t$ deviates to join the coalition of dummy clause agents $K_{m+1}$.
\item If $K$ is the coalition of dummy variable agents $Z_i^2$, 
i.e., $\ell=2$, then the only possibility is that the other literal-agent who leaves the coalition is literal-agent $z_i^1$ (i.e., the literal-agent that is associated with the first occurrence of the same literal as $z_i^\ell$). 
Since, literal-agent $z_i^1$ 
cannot deviate to join a coalition of dummy clause agents (by following the same arguments as in case (a) for agent $z_p^2$), she will necessarily join the coalition of dummy variable agents $Z_i^1$ 
(with an additional literal-agent).
\end{enumerate}
\end{enumerate}

To summarize, if there is a cycle, only the following can occur: 
\begin{enumerate}
\item agent $t$ in coalition $K_{m+1}$ can only be left by a literal-agent corresponding to an occurrence of a literal belonging to clause $C_m$ who deviates to join the coalition of dummy clause agents $K_m$;
\item a literal-agent $y_i^\ell$ (or $\overline{y}_i^\ell$) in a coalition of dummy clause agents $K_j$ (for $1<j\leq m$), corresponding to the clause $C_j$ to which the $\ell^\text{th}$ occurrence of $x_i$ (or $\overline{x}_i$) belongs, can only be left by a literal-agent corresponding to an occurrence of a literal belonging to clause $C_{j-1}$ who deviates to join the coalition of dummy clause agents $K_{j-1}$;
\item a literal-agent $y_i^\ell$ (or $\overline{y}_i^\ell$) in the coalition of dummy clause agents $K_1$, corresponding to the clause $C_1$ to which the $\ell^\text{th}$ occurrence of $x_i$ (or $\overline{x}_i$) belongs, can only be left by literal-agent $y_p^2$ or $\overline{y}_p^2$ who joins the coalition of dummy variable agents $Y_p^2$ or $\overline{Y_p^2}$, respectively;
\item a literal-agent $y_i^2$ (or $\overline{y}_i^2$), for $1\leq i\leq p$, in a coalition of dummy variable agents $Y_i^2$ (or $\overline{Y_i^2}$) can only be left by literal-agent $y_{i}^1$ (or $\overline{y}_{i}^1$) who joins the coalition of dummy variable agents $Y_{i}^1$ (or $\overline{Y}_{i}^1$);
\item a literal-agent $y_i^1$ (or $\overline{y}_i^1$), for $1<i\leq p$, in a coalition of dummy variable agents $Y_i^1$ (or $\overline{Y_i^1}$) can only be left by a literal-agent $y_{i-1}^2$ or $\overline{y}_{i-1}^2$ who eventually joins the coalition of dummy variable agents $Y_{i-1}^2$ or $\overline{Y}_{i-1}^2$, respectively;
\item literal-agent $y_1^1$ (or $\overline{y}_1^1$) in the coalition of dummy variable agents $Y_1^1$ (or $\overline{Y_1^1}$) can only be left by agent $t$ who joins the coalition of dummy clause agents $K_{m+1}$.
\end{enumerate}

Therefore, as soon as there is one deviating literal-agent in the cycle, it implies that there exists a whole chain of deviating agents in the cycle, who alternate between joining coalitions of dummy agents which are consecutive in the following cycle over initial coalitions: $K_1 < K_2 < \dots < K_m < K_{m+1} < Z_1^1 < Z_1^2 < Z_2^1 < Z_2^2 < \dots < Z_p^1 < Z_p^2 < K_1$, where $Z_i^1$ and $Z_i^2$ refer either to $Y_i^1$ and $Y_i^2$, or to $\overline{Y_i^1}$ and $\overline{Y_i^2}$.
More precisely, for the cycle to occur, we need, as deviating agents, $(i)$ for each clause, a literal-agent corresponding to an occurrence of a literal belonging to this clause who alternates between the coalition of dummy clause agents associated with this clause and the next coalition in the previously mentioned cycle over initial coalitions (cases (1)-(3)) 
and, $(ii)$ for each variable, two literal-agents corresponding to the same literal, who alternate between the coalition of dummy variable agents associated with their literal occurrence and the next coalition in the previously mentioned cycle over initial coalitions (cases (3)-(6)).
As described in the previous case distinctions, these two groups of deviating literal-agents ($(i)$ and $(ii)$) are distinct because,
by construction of the preferences, 
once a literal-agent reaches the coalition of dummy clause agents associated with the clause to which her corresponding literal belongs, she cannot be in a coalition of less preferred size, and in particular the coalition of dummy variable agents associated with her literal 
(recall that only literal-agents can deviate and that at most two literal-agents can join a coalition of dummy agents).
Moreover, as summarized in case (4), we need that the two deviating literal-agents associated with each variable (group $(ii)$) correspond to the same literal of the variable.
Hence, by setting to true the literals associated with the deviating literal-agents of group $(i)$ and to false the literals associated with the deviating literal-agents of group $(ii)$ (and arbitrarily the rest of literals), we get a valid truth assignment of the variables which satisfies all the clauses.
\end{proof}

\section{Hedonic Diversity Games}

In this section, we provide the missing proofs for hedonic diversity games. First, we consider the remaining two examples of cycling under strong assumptions.

\cycleHDG* 

\begin{proof}
\begin{enumerate}
\item[$(\lnot 1)$] Our next example satisfies all properties except single-peakedness of preferences. Let us consider an HDG with 12 agents: 3 red agents and 9 blue agents.
There are two deviating agents in the cycle: blue agents $b_1$ and $b_2$.
In the cycle, there are three fixed coalitions $C_1$, $C_2$ and $C_3$ such that:
\begin{itemize}
\item $C_1$ contains 1 red agent and 2 blue agents;
\item $C_2$ contains 1 red agent and 1 blue agent;
\item $C_3$ contains 1 red agent and 4 blue agents.
\end{itemize}

The relevant part of the preferences of the agents is given below. Recall Footnote~\ref{fot:relevantprefs} where we describe how to complete preferences.

{\centering
\renewcommand{\arraystretch}{1.25}
\begin{tabular}{rl}
$b_1:$ & $\frac{1}{5} \succ \frac{1}{3} \succ \frac{1}{7} \succ \frac{1}{6} \succ \frac{1}{4} \succ 0$ \\
$b_2:$ & $\frac{1}{7} \succ \frac{1}{3} \succ \frac{1}{5} \succ \frac{1}{4} \succ \frac{1}{6} \succ 0$ \\
$C_1:$ & $\frac{1}{5} \succ \frac{1}{4} \succ \frac{1}{3} \succ \frac{1}{2} \succ [1 \text{ if red, } 0 \text{ otherwise}]$  \\ 
$C_2:$ & $\frac{1}{3} \succ \frac{1}{2} \succ [1 \text{ if red, } 0 \text{ otherwise}]$ \\ 
$C_3:$ & $\frac{1}{7} \succ \frac{1}{6} \succ \frac{1}{5} \succ \frac{1}{4} \succ \frac{1}{3} \succ \frac{1}{2} \succ [1 \text{ if red, } 0 \text{ otherwise}]$ \\
\end{tabular}\par}
\medskip
Consider the sequence of IS deviations in Figure~\ref{fig:HDG-cycle-excSP} that describes a cycle in the dynamics. 
The two deviating agents of the cycle $b_1$ and $b_2$ are marked in bold and the specific deviating agent between two states is indicated next to the arrows.

\begin{figure}
{\centering
\resizebox{0.75\textwidth}{!}{
\begin{tikzpicture}
\node (1) at (0,0) {\begin{tabular}{|ccc|}\hline $C_1\cup\{\mathbf{b_1}\}$ & $C_2\cup\{\mathbf{b_2}\}$ & $C_3$ \\ \hline 
$\frac{1}{4}$ & $\frac{1}{3}$ & $\frac{1}{5}$ \\ \hline\end{tabular}};
\node (2) at ($(1)+(1.25,2)$) {\begin{tabular}{|ccc|}\hline $C_1$ & $C_2\cup\{\mathbf{b_2}\}$ & $C_3\cup\{\mathbf{b_1}\}$ \\ \hline 
$\frac{1}{3}$ & $\frac{1}{3}$ & $\frac{1}{6}$ \\ \hline\end{tabular}};
\node (3) at ($(2)+(6.25,0)$) {\begin{tabular}{|ccc|}\hline $C_1$ & $C_2$ & $C_3\cup\{\mathbf{b_1},\mathbf{b_2}\}$ \\ \hline 
$\frac{1}{3}$ & $\frac{1}{2}$ & $\frac{1}{7}$ \\ \hline\end{tabular}};
\node (4) at ($(1)+(8.5,0)$) {\begin{tabular}{|ccc|}\hline $C_1$ & $C_2\cup\{\mathbf{b_1}\}$ & $C_3\cup\{\mathbf{b_2}\}$ \\ \hline 
$\frac{1}{3}$ & $\frac{1}{3}$ & $\frac{1}{6}$ \\ \hline\end{tabular}};
\node (6) at ($(1)+(1.25,-2)$) {\begin{tabular}{|ccc|}\hline $C_1\cup\{\mathbf{b_1},\mathbf{b_2}\}$ & $C_2$ & $C_3$ \\ \hline 
$\frac{1}{5}$ & $\frac{1}{2}$ & $\frac{1}{5}$ \\ \hline\end{tabular}};
\node (5) at ($(6)+(6.25,0)$) {\begin{tabular}{|ccc|}\hline $C_1\cup\{\mathbf{b_2}\}$ & $C_2\cup\{\mathbf{b_1}\}$ & $C_3$ \\ \hline 
$\frac{1}{4}$ & $\frac{1}{3}$ & $\frac{1}{5}$ \\ \hline\end{tabular}};
\draw[arrow] (1) --node[midway,left]{$b_1$} (2); \draw[arrow] (2) --node[midway,above]{$b_2$}  (3); \draw[arrow] (3) --node[midway,right]{$b_1$}  (4); \draw[arrow] (4) --node[midway,right]{$b_2$}  (5); \draw[arrow] (5) --node[midway,above]{$b_1$}  (6); \draw[arrow] (6) --node[midway,left]{$b_2$}  (1);
\end{tikzpicture}}\par}
	\caption{Possibility of cycling of IS dynamics in part $(\lnot 1)$ of Theorem~\ref{thm:cycleHDG}. Here, we consider IS dynamics starting from the \singleton in HDGs under strict preferences, where all deviations satisfy solitary homogeneity.}\label{fig:HDG-cycle-excSP}
\end{figure}

To show that this cycle can be reached from the \singleton, it suffices to observe that the two deviating agents $b_1$ and $b_2$ prefer to join the fixed coalitions than being alone and that each fixed coalition can be formed from the \singleton: the red agent of each future fixed coalition joins first a blue agent and then all the other blue agents of the future fixed coalition successively join. Note that all deviations result in non-homogeneous target coalitions, and therefore satisfy solitary homogeneity.

\item[$(\lnot 2)$] Our final example satisfies all properties except strictness of preferences. Let us consider an HDG with 10 agents: 4 red agents and 6 blue agents.
There are two deviating agents: red agent $r$ and blue agent $b$, and three fixed coalitions $C_1$, $C_2$ and $C_3$ such that:
\begin{itemize}
\item $C_1$ contains 2 red agents;
\item $C_2$ contains 1 red agent and 3 blue agents;
\item $C_3$ contains 2 blue agents.
\end{itemize} 

The relevant part of the preferences of the agents is given below.

\renewcommand{\arraystretch}{1.25}
{\centering
\begin{tabular}{rl}
$r:$ & $\frac{3}{4} \succ \frac{2}{5} \succ \frac{1}{4} \sim \frac{1}{3} \succ 1$ \\
$b:$ & $\frac{1}{4} \succ \frac{1}{5} \succ \frac{1}{2} \sim \frac{2}{3} \sim \frac{3}{4} \succ 0$ \\
$C_1:$ & $\frac{3}{4} \succ \frac{2}{3} \succ \frac{1}{2} \sim \frac{1}{3} \succ 1$ \\
$C_2:$ & $\frac{2}{5} \succ \frac{1}{3} \sim \frac{1}{4} \sim \frac{1}{5} \succ \frac{1}{2} \succ [1 \text{ if red, } 0 \text{ otherwise}]$ \\
$C_3:$ & $\frac{1}{4} \succ \frac{1}{3} \succ \frac{1}{2} \succ 0$ \\
\end{tabular}\par}
\smallbreak
Consider the sequence of IS deviations in Figure~\ref{fig:HDG-cycle-excSt} that describes a cycle in the dynamics. 
The two deviating agents of the cycle $r$ and $b$ are marked in bold and the specific deviating agent between two states is indicated next to the arrows.

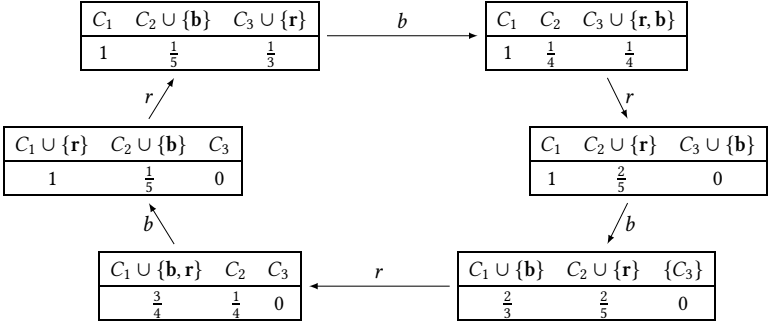
\begin{figure}
{\centering
\resizebox{0.75\textwidth}{!}{
\begin{tikzpicture}
\node (1) at (0,0) {\begin{tabular}{|ccc|}\hline $C_1\cup\{\mathbf{r}\}$ & $C_2\cup\{\mathbf{b}\}$ & $C_3$ \\ \hline 
$1$ & $\frac{1}{5}$ & $0$ \\ \hline\end{tabular}};
\node (2) at ($(1)+(1.25,2)$) {\begin{tabular}{|ccc|}\hline $C_1$ & $C_2\cup\{\mathbf{b}\}$ & $C_3\cup\{\mathbf{r}\}$ \\ \hline 
$1$ & $\frac{1}{5}$ & $\frac{1}{3}$ \\ \hline\end{tabular}};
\node (3) at ($(2)+(6.25,0)$) {\begin{tabular}{|ccc|}\hline $C_1$ & $C_2$ & $C_3\cup\{\mathbf{r},\mathbf{b}\}$ \\ \hline 
$1$ & $\frac{1}{4}$ & $\frac{1}{4}$ \\ \hline\end{tabular}};
\node (4) at ($(1)+(8.5,0)$) {\begin{tabular}{|ccc|}\hline $C_1$ & $C_2\cup\{\mathbf{r}\}$ & $C_3\cup\{\mathbf{b}\}$ \\ \hline 
$1$ & $\frac{2}{5}$ & $0$ \\ \hline\end{tabular}};
\node (6) at ($(1)+(1.25,-2)$) {\begin{tabular}{|ccc|}\hline $C_1\cup\{\mathbf{b},\mathbf{r}\}$ & $C_2$ & $C_3$ \\ \hline 
$\frac{3}{4}$ & $\frac{1}{4}$ & $0$ \\ \hline\end{tabular}};
\node (5) at ($(6)+(6.25,0)$) {\begin{tabular}{|ccc|}\hline $C_1\cup\{\mathbf{b}\}$ & $C_2\cup\{\mathbf{r}\}$ & $\{C_3\}$ \\ \hline 
$\frac{2}{3}$ & $\frac{2}{5}$ & $0$ \\ \hline\end{tabular}};
\draw[arrow] (1) --node[midway,left]{$r$} (2); \draw[arrow] (2) --node[midway,above]{$b$}  (3); \draw[arrow] (3) --node[midway,right]{$r$}  (4); \draw[arrow] (4) --node[midway,right]{$b$}  (5); \draw[arrow] (5) --node[midway,above]{$r$}  (6); \draw[arrow] (6) --node[midway,left]{$b$}  (1);
\end{tikzpicture}}\par}
\caption{Possibility of cycling of IS dynamics in part $(\lnot 2)$ of Theorem~\ref{thm:cycleHDG}. Here, we consider IS dynamics starting from the \singleton in HDGs under naturally single-peaked preferences, where all deviations satisfy solitary homogeneity.}\label{fig:HDG-cycle-excSt}
\end{figure}

To show that this cycle can be reached from the \singleton, it suffices to observe that partition $\{C_1\cup\{b\},C_2\cup\{r\},C_3\}$ belonging to the cycle can be reached from the \singleton.
Indeed, agent $b$ can join a red agent from the future fixed coalition $C_1$ while the other red agent of the future fixed coalition $C_1$ can join a blue agent from the future fixed coalition $C_3$.
The second blue agent of the future fixed coalition $C_3$ then joins them and afterwards, the red agent leaves them to join $b$ and the other red agent of $C_1$.
For forming coalition $C_2$, the red agent joins one of the blue agents, and then the two remaining blue agents join them.
Agent $r$ can then join coalition $C_2$. Note that all deviations result in non-homogeneous target coalitions, and therefore satisfy solitary homogeneity.
\end{enumerate}
\end{proof}

Next, we provide a proof for the lemma used for the part of the proof of the previous theorem given in the body of the paper.

\HDGhom*

\begin{proof}
	In the following proof, we consider various sets of auxiliary agents. We assume that we take \emph{new} agents in every step of the constructions.
	
	We show the statement for homogeneous coalitions of blue agents. The statement for red agents is completely symmetric, by reversing the respective roles. We use a few types of auxiliary agents with extreme preferences that have their peaks at the largest or smallest ratios, except for liking homogeneous coalitions the worst. Specifically, we consider four sets $R_x, B_x, R_y$, and $B_y$ of agents with the following preferences. Note that the sets $R_y$ and $B_y$ are only needed for the statement about red agents, but we state them for completeness.
	
	{\centering
		\renewcommand{\arraystretch}{1.25}
		\begin{tabular}{rl}
			$R_x, B_x:$ & $f\succ g$ if and only if $0 < f < g$ or $f > g = 0$\\
			$R_y, B_y:$ & $f\succ g$ if and only if $1 > f > g$ or $f < g = 1$\\
		\end{tabular}\par}
	\medskip
	
	Now, let $B_a$ be a set of blue agents such that every agent in $B_a$ has preferences satisfying $\frac 13 \succ 0 \succ \frac 12$. Suppose that $|B_a| = k$ and $B_a = \{b_{a,i}\colon i \in [k]\}$. Let $r_{x,i}\in R_x$ for $i\in [2]$ and $b_{x,i}\in B_x$ for $i \in [k + 2]$. We create a \emph{trash coalition} (used to get rid of agents not needed anymore) by creating $C_a = \{r_{x,2}, b_{x,k}, b_{x,k+1}\}$. To create $C_a$, we let the blue agents of this coalition join the red agent~$r_{x,2}$.
	
	Now, we perform for any $i\in [k-1]$ the following steps (in increasing order of indices): $b_{x,i}$ joins $r_{x,1}$, then $b_{a,i}$ joins this coalition. Then, $b_{x,i}$ leaves this coalition to join $C_a \cup \{b_{x,j}\colon j\in [i-1]\}$, and finally $b_{a,i}$ joins $b_{a,k}\cup \{b_{a,j}\colon j\in [i-1]\}$. Note that these are all IS deviations. In particular, $b_{a,i}$'s deviation to join $b_{a,k}$'s coalition is feasible, because she leaves coalition $\{r_{x,1}, b_{a,i}\}$ which has a ratio of $1/2$, which is strictly worse for her than being in a homogeneous coalition. After this procedure, we have obtained the coalition $B_a$.
\end{proof}

\existHDG*

We prove the two hardness results by providing separate reductions for each problem in the next two lemmas.
The proofs for these two lemmas (Lemmas~\ref{lem:exist-HDG-NPH} and \ref{lem:hard-converg-HDG}) work in the same way as the proofs of Lemmas~\ref{lem:hard-exist-AHG} and \ref{lem:hard-converg-AHG}, 
except that we have to ensure appropriate ratios of red agents in each constructed initial coalition in order to guarantee, similarly as in the proofs of Lemmas~\ref{lem:hard-exist-AHG} and \ref{lem:hard-converg-AHG}, that the preferences of the agents over ratios can be expressed in terms of preferences over initial non-singleton coalitions augmented by one or two agents of a given type (red or blue). 
Instead of playing only with the size of coalitions for the design of the reductions, like in the proofs of Lemmas~\ref{lem:hard-exist-AHG} and \ref{lem:hard-converg-AHG}, we need here to play with both the size of the coalitions and the proportion of red and blue agents within them.

\begin{lemma}\label{lem:exist-HDG-NPH}
\existpb{IS}{HDG} is \np-hard even for strict preferences.
\end{lemma}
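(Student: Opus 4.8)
The statement to prove is Lemma~\ref{lem:exist-HDG-NPH}: \existpb{IS}{HDG} is \np-hard even for strict preferences. The plan is to mimic closely the reduction from (3,B2)-SAT used for AHGs in Lemma~\ref{lem:hard-exist-AHG}, replacing the mechanism of ``joining a fixed coalition of a given size'' by ``joining a fixed coalition with a given ratio of red agents''. The excerpt already flags this: the key difference is that in an HDG the only quantity an agent perceives is the fraction $\tfrac{|R\cap\pi(i)|}{|\pi(i)|}$, so I must engineer every fixed coalition so that the ratios it can exhibit after the addition of one or two extra agents of a prescribed type are all \emph{distinct} across the whole instance, and distinct from the ratios reachable through any spurious combination of moves. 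This is the HDG-analogue of condition~(1) of Lemma~\ref{lem:hard-exist-AHG}.

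First I would set up the gadget agents: for each occurrence of a literal I create a \emph{literal-agent} of some fixed color (say blue), starting as a singleton; for each clause $C_j$ I create a fixed \emph{clause-coalition} $K_j$ of dummy agents; for each literal I create a fixed \emph{variable-coalition} $Z_i$ (and $\overline{Z_i}$) of dummy agents; and for each variable a three-coalition gadget $G_i^1,G_i^2,G_i^3$ that admits the same forced cycle of IS deviations exhibited in Figure~\ref{fig:cycle-reduc-AHG} (two deviating agents alternating among three fixed coalitions). Each fixed coalition will be given a carefully chosen number of red and blue dummy agents so that: (a) when no literal-agent is present it shows ratio $\rho$, (b) when one joins it shows ratio $\rho'$, (c) when two join it shows ratio $\rho''$, and these triples, taken over all fixed coalitions, are pairwise disjoint and moreover no ratio of the form achievable by merging singletons or by wrong combinations of moves coincides with any of them. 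Concretely this can be arranged exactly as in the AHG case: pick the cardinalities of the dummy groups to be widely separated powers of $m$, so that the denominators $|K_j|+x$, $|Z_i|+y$, etc.\ are all distinct, and then additionally fix the red/blue split so that the numerators are forced as well (e.g.\ one red dummy in a coalition of $N$ blue dummies gives ratio $\tfrac1{N+1}$, and adding blue literal-agents only decreases the ratio in a controlled, denominator-distinguishing way). Then I rewrite every agent's ratio-preferences as a preference list over ``this specific fixed coalition augmented by $0,1$ or $2$ agents of the prescribed color,'' exactly paralleling Table~\ref{tab:pref-reduc-exist-AHG}, and make all these preferences strict and generically completed outside the relevant part.

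The equivalence proof then runs almost verbatim as in Lemma~\ref{lem:hard-exist-AHG}. If $\varphi$ is satisfiable via $\phi$, pick for each clause a literal-agent of a true literal to join its clause-coalition $K_j$ (these agents and the dummies are then maximally happy); route the remaining literal-agents into their variable-coalitions; for each variable, the two literal-agents of the \emph{false} literal fill the corresponding $Z_i$ or $\overline{Z_i}$ to its optimal ratio, locking it; at most one coalition per variable (that of the true literal) is not filled, and its variable-agent escapes into $G_i^2$ of the gadget, with any lone literal-agent there returning to a singleton; one checks no IS deviation remains, so the reached partition is IS. Conversely, if $\varphi$ is unsatisfiable, then in any maximal sequence of IS deviations every clause-coalition must eventually be joined by one of its literal-agents (else such an agent would deviate in), so as in the AHG argument there is a variable $x_i$ with at most one literal-agent in each of $Z_i,\overline{Z_i}$; then \emph{both} variable-agents $z_i,\overline{z_i}$ are incentivized into the gadget, where they are the only agents that can move, and the instance is forced into the six-partition cycle of Figure~\ref{fig:cycle-reduc-AHG}, so no IS partition is ever reached.

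\textbf{Main obstacle.}
The only genuinely new work compared to the AHG case is verifying the ratio-disjointness and the uniqueness-of-realization property: I must check that no coalition arising during \emph{any} run of the dynamics — including ill-behaved runs with multiple literal-agents piling into one coalition, or literal-agents straying into gadget coalitions — can accidentally hit one of the ``intended'' ratios and thereby create an unintended IS deviation or block an intended one. Unlike sizes, ratios $\tfrac{p}{q}$ can collide in subtle ways ($\tfrac24=\tfrac12$, $\tfrac{k}{k+1}$ close to $1$, etc.), so the bookkeeping on numerators \emph{and} denominators is heavier. I expect this to be handled, as in the AHG proof, by choosing dummy-group cardinalities as separated powers of $m$ (using $m\ge 4$ and $p=\tfrac34 m$), so that all relevant denominators are distinct, and by keeping each fixed coalition to at most one red dummy (for the blue-literal variable/clause coalitions) so that the numerator is pinned down; the argument that dummy agents never want to move is then immediate from the fact that the only sizes/ratios they prefer to their initial one are unreachable by condition~(1). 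I would present this as the HDG analogue of the paragraph following Table~\ref{tab:pref-reduc-exist-AHG}, and otherwise import the combinatorial core of Lemma~\ref{lem:hard-exist-AHG} wholesale.
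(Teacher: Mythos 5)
Your proposal matches the paper's proof essentially step for step: the same reduction from (3,B2)-SAT with clause coalitions, variable coalitions, and a three-coalition cycling gadget per variable, with dummy-group cardinalities chosen as widely separated powers of $m$ so that all reachable ratios are pairwise distinct and each is realizable only by the intended addition of one or two agents of a prescribed color to a specific fixed coalition. The only cosmetic difference is that the paper colors positive-literal agents red and negative-literal agents blue rather than making all literal-agents one color, but this does not change the argument, and your identification of the ratio-disjointness verification as the only genuinely new work relative to the AHG case is exactly how the paper presents it.
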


\begin{proof}
Let us perform a reduction from (3,B2)-SAT~\citep{BKS03a}. 
In an instance of (3,B2)-SAT, we are given a CNF propositional formula $\varphi$ where every clause $C_j$, for $1\leq  j \leq m$, contains exactly three literals and every variable $x_i$, for $1\leq i \leq p$, appears exactly twice as a positive literal and twice as a negative literal. 
From such an instance, we construct an instance of a hedonic diversity game with initial partition as follows.
The proof works in the same way as the proof of Lemma~\ref{lem:hard-exist-AHG}. 

For each $\ell^{\text{th}}$ occurrence ($\ell\in\{1,2\}$) of a positive literal $x_i$ (or negative literal $\overline{x}_i$), we create a red literal-agent $y_i^\ell$ (or a blue literal-agent $\overline{y}_i^\ell$).
All literal-agents are singletons in the initial partition $\spartition$.
Let us consider three integers $\alpha$, $\beta$ and $\gamma$ such that $(1)$ $\alpha>2m-1$, $\beta>\max\{3p-2;3p\alpha+3p-2\}$, $\gamma>\max\{12p-2;6p\beta+12p-1\}$. 
For instance, we can set the following values: $\alpha=m^2$, $\beta=m^4$ and $\gamma=m^7$ (one can verify that condition $(1)$ is satisfied, especially because in a (3,B2)-SAT instance, it holds that $m\geq 4$ and $p=3/4 m$).
For each clause $C_j$, we then create $\alpha$ dummy clause-agents with among them $2j-1$ red agents. 
They are all grouped within the same coalition $K_j$ in the initial partition $\spartition$.
For each literal $x_i$ (or $\overline{x}_i$), we create a red variable-agent $z_i$ (or a blue variable-agent $\overline{z}_i$) and $\beta-1$ (or $\beta -1$) dummy variable-agents with among them $3i-2$ (or $3i-2$) red agents ($z_i$ included).
They are all grouped within the same coalition $Z_i$ (or $\overline{Z_i}$) in the initial partition $\spartition$.
Finally, for each variable $x_i$, we create three coalitions in partition $\spartition$ of dummy agents $G_i^1$, $G_i^2$ and $G_i^3$ of size $i\gamma$ with among them, $6(p-i)+1$, $6(p-i)+3$, or $6(p-i)+5$ red agents, for each coalition respectively.
These dummy agents are used as a gadget for a cycle.
Although we have created many agents, the construction remains polynomial by considering reasonable values of $\alpha$, $\beta$ and $\gamma$, as previously described. 

\newcommand{\clauseof}[1]{cl(#1)}

The preferences of the agents over ratios of red agents are given in Table~\ref{tab:prefs-exist-HDG-NPH}.

\begin{table}
	\caption{Preferences of the agents in the reduced instance of Lemma~\ref{lem:exist-HDG-NPH}, for every $1\leq i\leq p$, $1\leq j\leq m$, $\ell\in\{1,2\}$. Notation $\clauseof{x_i^\ell}$ (or $\clauseof{\overline{x}_i^\ell}$) stands for the index of the clause to which the $\ell^{\text{th}}$ occurrence of literal $x_i$ (or $\overline{x}_i$) belongs, the framed value corresponds to the ratio of the initial coalition in partition $\spartition$, and $[\dots]$ denotes an arbitrary order over the rest of the coalition ratios.\label{tab:prefs-exist-HDG-NPH}}
\begin{center}
\begin{tabular}{rl}
$z_i:$ & $\frac{3i}{\beta+2} \succ \frac{6(p-i)+2}{i\gamma+2} \succ \frac{6(p-i)+4}{i\gamma+1} \succ \frac{6(p-i)+6}{i\gamma+2} \succ \frac{6(p-i)+6}{i\gamma+1} \succ \frac{6(p-i)+2}{i\gamma+1} \succ \frac{3i-1}{\beta+1} \succ \boxed{\frac{3i-2}{\beta}} \succ [\dots]$ \\
$\overline{z}_i:$ & $\frac{3i-2}{\beta+2} \succ \frac{6(p-i)+6}{i\gamma+2} \succ \frac{6(p-i)+3}{i\gamma+1} \succ \frac{6(p-i)+2}{i\gamma+2} \succ \frac{6(p-i)+1}{i\gamma+1} \succ \frac{6(p-i)+5}{i\gamma+1} \succ \frac{3i-2}{\beta+1} \succ \boxed{\frac{3i-2}{\beta}}  \succ [\dots]$ \\
$y_i^\ell:$ & $\frac{2\clauseof{x_i^\ell}}{\alpha+1} \succ \frac{3i}{\beta+2} \succ \frac{3i-1}{\beta+1} \succ \boxed{1} \succ [\dots]$ \\
$\overline{y}_i^\ell:$ & $\frac{2\clauseof{\overline{x}_i^\ell}-1}{\alpha+1} \succ \frac{3i-2}{\beta+2} \succ \frac{3i-2}{\beta+1} \succ \boxed{0} \succ [\dots]$ \\
\midrule
$K_j:$ & $\frac{2j}{\alpha+1} \succ \frac{2j-1}{\alpha+1} \succ \boxed{\frac{2j-1}{\alpha}} \succ [\dots]$ \\
$Z_i\setminus\{z_i\}:$ & $\frac{3i}{\beta+2} \succ \frac{3i-1}{\beta+1} \succ \boxed{\frac{3i-2}{\beta}} \succ \frac{3i-3}{\beta-1} \succ [\dots]$ \\
$\overline{Z_i}\setminus\{\overline{z}_i\}:$ & $\frac{3i-2}{\beta+2} \succ \frac{3i-2}{\beta+1} \succ \boxed{\frac{3i-2}{\beta}} \succ \frac{3i-2}{\beta-1} \succ [\dots]$  \\
$G_i^1:$ & $\frac{6(p-i)+2}{i\gamma+2} \succ \frac{6(p-i)+2}{i\gamma+1} \succ \frac{6(p-i)+1}{i\gamma+1} \succ \boxed{\frac{6(p-i)+1}{i\gamma}} \succ [\dots]$ \\
$G_i^2:$ & $\frac{6(p-i)+4}{i\gamma+1} \succ \frac{6(p-i)+3}{i\gamma+1} \succ \boxed{\frac{6(p-i)+3}{i\gamma}} \succ [\dots]$ \\
$G_i^3:$ & $\frac{6(p-i)+6}{i\gamma+2} \succ \frac{6(p-i)+6}{i\gamma+1} \succ \frac{6(p-i)+5}{i\gamma+1} \succ \boxed{\frac{6(p-i)+5}{i\gamma}} \succ [\dots]$ \\
\end{tabular}
\end{center}
\end{table}

We claim that there exists a sequence of IS deviations which leads to an IS partition iff formula $\varphi$ is satisfiable.

Suppose first that there exists a truth assignment of the variables $\phi$ that satisfies all the clauses.
Let us denote by $\ell_j$ a chosen literal-agent associated with an occurrence of a literal true in $\phi$ which belongs to clause $C_j$.
Since all the clauses of $\varphi$ are satisfied by $\phi$, there exists such a literal-agent $\ell_j$ for each clause $C_j$.
For every clause $C_j$, let literal-agent $\ell_j$ join coalition $K_j$. 
These IS deviations make the chosen literal-agents reach their most preferred ratio so none of them will deviate afterwards. 
For the clause-agents, they all reach either their first or second most preferred ratio but have no possibility to improve their satisfaction in the latter case so none of them will deviate afterwards neither.
Then, let all remaining literal-agents $y_i^\ell$ (or $\overline{y}_i^\ell$) deviate by joining coalition $Z_i$ (or $\overline{Z_i}$).
Since $\phi$ is a truth assignment of the variables, for each variable $x_i$, there exists a coalition $Z_i$ or $\overline{Z_i}$ that is joined by two literal-agents and thus reaches the most preferred ratio $\frac{3i}{\beta+2}$ or $\frac{3i-2}{\beta+2}$.
It follows that no member of such a newly formed coalition would move afterwards or let other agents enter the coalition: all members of $Z_i$ or $\overline{Z_i}$ get their most preferred size while the two joining literal-agents get their second most preferred size and their most preferred size is not accessible anymore (their associated clause coalition has already been joined by another literal-agent). 
For each variable $x_i$, at most one coalition between $Z_i$ and $\overline{Z_i}$ may not be joined by two literal-agents and, if there is one, it must be the coalition that corresponds to the literal of variable $x_i$ that is true in $\phi$. 
In such a case, we let the associated variable-agent $z_i$ or $\overline{z}_i$ deviate for joining coalition $G_i^2$, and if one literal-agent previously joined the corresponding variable-coalition $Z_i$ or $\overline{Z_i}$, she deviates to be alone.
Such a literal-agent then gets her fourth most preferred size while her most preferred ones are not accessible anymore (because the variable-agent has left the coalition and her associated clause coalition has already been joined by another literal-agent).
Moreover, such a variable-agent $z_i$ (or $\overline{z}_i$), by joining coalition $G_i^2$, gets her third most preferred size while her most preferred ones are not accessible: no two additional red (or blue) agents want to enter the initial coalition $Z_i$ (or $\overline{Z_i}$) and only one additional agent, herself, is present in the gadget associated with variable $x_i$, whereas variable-agent $z_i$ (or $\overline{z}_i$) in the gadget prefers ratios which differ by one blue agent (or red agent) from the ratio of the current coalitions.
Also, note that, by the design of the preferences, no dummy agent in the gadget has an incentive to move to another coalition.
All in all, no agent can then move in an IS deviation, and thus the reached partition is IS.

Suppose now that there exists no truth assignment of the variables that satisfies all the clauses.
That means that it is not possible that one literal-agent joins each clause coalition while two literal-agents $y_i^1$ and $y_i^2$ join coalition $Z_i$ or $\overline{y}_i^1$ and $\overline{y}_i^2$ join coalition $\overline{Z_i}$ for each variable $x_i$.
By construction of the preferences, the only agents who want to join a coalition $K_j$ are literal-agents associated with a literal belonging to clause $C_j$ and the only agents who want to join a coalition $Z_i$ (or $\overline{Z_i}$) are literal-agents $y_i^1$ and $y_i^2$ (or $\overline{y}_i^1$ and $\overline{y}_i^2$).
Moreover, since each literal-agent prefers to join clause coalitions than variable coalitions, it means that in a maximal sequence of IS deviations, all clause-agents in $K_j$ will reach one of the two most preferred ratios, $\frac{2j}{\alpha+1}$ in case a red literal-agent joined or $\frac{2j-1}{\alpha+1}$ in case a blue literal-agent joined.
In both cases, they have no incentive to deviate afterwards.
However, in such a case, there exists a variable $x_i$ such that at most one literal-agent joins coalition $Z_i$ and $\overline{Z_i}$.
It follows that both variable-agents $z_i$ and $\overline{z_i}$ have an incentive to deviate to the gadget associated with variable $x_i$ (their respective most preferred ratios $\frac{3i}{\beta+2}$ and $\frac{3i-2}{\beta+2}$ can never be reached).
Within the gadget associated with variable $x_i$, variable-agents $z_i$ and $\overline{z_i}$ are the only agents who can deviate and we necessarily reach a cycle, which is the same as described in Figure~\ref{fig:cycle-reduc-AHG} for the proof of Lemma~\ref{lem:hard-exist-AHG}. 

Finally, we must verify that all the fractions described in the preferences with different variables are indeed different.
\begin{itemize}
\item For gadget coalitions, since $\gamma>12p-2$, it holds that $\frac{6(p-i)+6}{i\gamma+1}>\frac{6(p-i)+6}{i\gamma+2}>\frac{6(p-i)+5}{i\gamma}>\frac{6(p-i)+5}{i\gamma+1}>\frac{6(p-i)+4}{i\gamma+1}>\frac{6(p-i)+3}{i\gamma}>\frac{6(p-i)+3}{i\gamma+1}>\frac{6(p-i)+2}{i\gamma+1}>\frac{6(p-i)+2}{i\gamma+2}>\frac{6(p-i)+1}{i\gamma}>\frac{6(p-i)+1}{i\gamma+1}$ for every $i\in [p]$.
Moreover, it holds that $\frac{6(p-i)+6}{i\gamma+1}>\frac{6(p-(i+1)+1}{(i+1)\gamma+1}$ for every $i\in[p-1]$ so all the values associated with ratios preferred to the initial ones are different for all gadget coalitions.
\item For variable coalitions, since $\beta>3p-2$, it holds that $\frac{3i}{\beta+2}>\frac{3i-1}{\beta+1}>\frac{3i-2}{\beta}>\frac{3i-2}{\beta+1}>\frac{3i-2}{\beta+2}$ for every $i\in[p]$.
Moreover, it holds that $\frac{3i}{\beta+2}<\frac{3(i+1)-2}{\beta+2}$ for every $i\in[p-1]$ so all the values associated with ratios preferred to the initial ones are different for all variable coalitions.
\item For clause coalitions, since $\alpha>2m-1$, it holds that $\frac{2j-1}{\alpha+1}<\frac{2j-1}{\alpha}<\frac{2j}{\alpha+1}$ for every $j\in[m]$.
Moreover, it holds that $\frac{2j}{\alpha+1}<\frac{2(j+1)-1}{\alpha+1}$ for every $j\in [m-1]$ so all the values associated with ratios preferred to the initial ones are different for all clause coalitions.
\end{itemize}
It remains to check that the ratios associated with clause, variable or gadget coalitions do not interfere with each other.
Since $\gamma>6p\beta+12p-1$, it holds that the highest reachable ratio associated with a gadget coalition is smaller than the smallest reachable ratio associated with a variable coalition, i.e., $\frac{6p}{\gamma+1}<\frac{1}{\beta+2}$. 
Since $\beta>3p\alpha+3p-2$, it holds that the highest reachable ratio associated with a variable coalition is smaller than the smallest reachable ratio associated with a clause coalition, i.e., $\frac{3p}{\beta+2}<\frac{1}{\alpha+1}$.
Therefore, all the reachable ratios are indeed different for clause, variable and gadget coalitions.
It follows that the previously described deviations are indeed the only possible ones and hence no sequence of IS deviations can reach an IS partition.
\end{proof}

\begin{lemma}\label{lem:hard-converg-HDG}
\convergpb{IS}{HDG} is \conp-hard even for strict preferences.
\end{lemma}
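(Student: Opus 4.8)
The plan is to mirror the reduction behind Lemma~\ref{lem:hard-converg-AHG}, translating it from coalition sizes to red-agent ratios exactly as Lemma~\ref{lem:exist-HDG-NPH} translates Lemma~\ref{lem:hard-exist-AHG}. As in that proof, I would establish \np-hardness of the complement problem, i.e.\ deciding whether some execution of the IS dynamics cycles, by a reduction from (3,B2)-SAT. For each $\ell$-th occurrence of a positive literal $x_i$ I create a \emph{red} literal-agent $y_i^\ell$, and for each occurrence of a negative literal $\overline x_i$ a \emph{blue} literal-agent $\overline y_i^\ell$; all literal-agents start in singleton coalitions. I keep the auxiliary agent $t$ and the chained structure of dummy coalitions from the AHG proof: dummy clause-coalitions $K_1,\dots,K_{m},K_{m+1}$ and dummy variable-coalitions $Y_i^1,Y_i^2,\overline{Y_i^1},\overline{Y_i^2}$, each built as a homogeneous or almost-homogeneous block whose size and number of red agents are chosen so that the relevant \emph{ratios} behave like coalition sizes in the AHG construction.

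The heart of the adaptation is the choice of block sizes and red-agent counts so that (a) every initial dummy coalition has a ratio that cannot be matched by any other reachable configuration, and (b) from each dummy coalition only the ratio obtained by admitting exactly one, or exactly two, agents of a \emph{prescribed} color is preferred to the initial ratio — this is the analogue of condition~$(1)$ in Lemma~\ref{lem:exist-HDG-NPH}, and I would again use nested scales (a clause scale $\alpha$, two variable scales $\beta^+_1,\beta^-_1$ and $\beta^+_2,\beta^-_2$ as in Lemma~\ref{lem:hard-converg-AHG}, each polynomially large and pairwise ``non-interfering'') together with the observation that adding a red versus a blue agent to a fixed block yields two distinct ratios, so the reduced instance can in fact detect the color of the joining literal-agent. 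With these ratios fixed, I would define strict preferences for the literal-agents and for agent $t$ that are the ratio-translations of the size preferences in Table~\ref{tab:pref-reduc-converg-AHG}, and give every dummy agent the ratio-translation of the dummy preferences there (peak just after the initial ratio, accepting at most two newcomers, everything else ranked last). Because agents of the ``wrong'' color are always blocked and because a dummy coalition can never shrink below its initial composition, the only deviations that can ever occur are literal-agents (and $t$) moving between prescribed dummy coalitions, so the entire analysis — a cycle forces, per clause, a literal-agent whose literal satisfies that clause, and, per variable, the two literal-agents of a single literal, all distinct, hence a satisfying assignment; conversely a satisfying assignment drives the explicit cycle of Figure~\ref{fig:cycle-reduc-conv-AHG} lifted to ratios — carries over verbatim, noting that the \singleton remains IS so the cycle must be genuinely reachable.

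The main obstacle, exactly as in Lemma~\ref{lem:exist-HDG-NPH}, is the bookkeeping that makes all of these ratios simultaneously distinct and non-interfering: I must check that the ratios associated with clause blocks, variable blocks of the two occurrence-types, and the chain through $K_{m+1}$ form disjoint intervals on $(0,1)$, that within each interval the ordering of ``initial ratio'', ``initial ratio plus one agent of the prescribed color'', and ``initial ratio plus two agents of the prescribed color'' is the one demanded by the preferences, and that no literal-agent of the opposite color can ever be the agent that admits or is admitted by a block (so that the $b'_1/r'_1$-type ambiguities of the AHG proof do not reappear). This is the part that needs careful inequalities — e.g.\ choosing the $i$-th clause block to contain $2i-1$ red agents out of $\alpha$, the $i$-th variable block $3i-2$ out of $\beta$, and bounding $\alpha,\beta^\pm_1,\beta^\pm_2$ polynomially while keeping the three families of ratios separated, analogous to the conditions $\gamma > 6p\beta + 12p - 1$ and $\beta > 3p\alpha + 3p - 2$ in Lemma~\ref{lem:exist-HDG-NPH}. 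Once these checks are in place, the equivalence ``a cycle exists iff $\varphi$ is satisfiable'' and hence \conp-hardness of \convergpb{IS}{HDG} under strict preferences follows word-for-word from the argument of Lemma~\ref{lem:hard-converg-AHG}.
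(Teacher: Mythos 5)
Your proposal matches the paper's own proof: the paper likewise reduces from (3,B2)-SAT to the cycle-existence complement by lifting the AHG construction of Lemma~\ref{lem:hard-converg-AHG} to ratios in the style of Lemma~\ref{lem:exist-HDG-NPH}, using red/blue literal-agents, the chained dummy clause- and variable-blocks with nested polynomial scales ($\alpha$, $\beta_1^+$, $\beta_1^-$, $\beta_2$), and a final verification that all reachable ratios are pairwise distinct and non-interfering so that the AHG equivalence argument carries over. The approach and all key ideas coincide; only cosmetic details (e.g.\ the exact number of scales and red-agent counts per block) differ.
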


\begin{proof}
For this purpose, we prove the \np-hardness of the complement problem, which asks whether there exists a cycle of IS deviations. 
Let us perform a reduction from (3,B2)-SAT~\citep{BKS03a}. 
In an instance of (3,B2)-SAT, we are given a CNF propositional formula $\varphi$ where every clause $C_j$, for $1\leq  j \leq m$, contains exactly three literals and every variable $x_i$, for $1\leq i \leq p$, appears exactly twice as a positive literal and twice as a negative literal. 
From such an instance, we construct an instance of a hedonic diversity game with initial partition as follows.
The proof works in the same way as the proof of Lemma~\ref{lem:hard-converg-AHG}. 

For each $\ell^{\text{th}}$ occurrence ($\ell\in\{1,2\}$) of a positive literal $x_i$ (or negative literal $\overline{x}_i$), we create a red literal-agent $y_i^\ell$ (or a blue literal-agent $\overline{y}_i^\ell$).
We create another red agent $t$.
All these agents are singletons in the initial partition $\spartition$.
Let us consider four integers $\alpha$, $\beta_1^+$, $\beta_1^-$ and $\beta_2$ such that $(1)$ $\alpha>6m+2$, $\beta_1^+>\max\{4p-2;(2p+1)\alpha+4p\}$, $\beta_1^->\max\{4p-2;2p\beta_1^++2p-2\}$ and $\beta_2>\max\{3p-2;3p\beta_1^-+4p\}$. 
For instance, we can set the following values: $\alpha=m^3$, $\beta_1^+=m^5$, $\beta_1^-=m^7$ and $\beta_2=m^9$ (one can verify that condition $(1)$ is satisfied, especially because in a (3,B2)-SAT instance, it holds that $m\geq 4$ and $p=3/4 m$).
For each clause $C_j$, we then create $\alpha$ dummy clause-agents with among them $3j-2$ red agents. 
They are all grouped within the same coalition $K_j$ in the initial partition $\spartition$.
We also create $\alpha$ dummy agents with among them $3m+1$ red agents, they are all grouped within the same coalition $K_{m+1}$ in initial partition $\spartition$.
For each first occurrence of literal $x_i$ (or $\overline{x}_i$), we create $\beta_1^+$ (or $\beta_1^-$) dummy variable agents with among them $2i-1$ red agents, they are all grouped within the same coalition $Y_i^1$ (or $\overline{Y_i^1}$) in the initial partition $\spartition$.
Finally, for each second occurrence of literal $x_i$ (or $\overline{x}_i$), we create $\beta_2$ dummy variable agents with among them $3i-2$ red agents, they are all grouped within the same coalition $Y_i^2$ (or $\overline{Y_i^2}$) in the initial partition $\spartition$.
Although we have created many agents, the construction remains polynomial by considering reasonable values of $\alpha$, $\beta^+_1$, $\beta^-_1$ and $\beta_2$, as previously described. 

\newcommand{\clauseof}[1]{cl(#1)}

The preferences of the agents over coalition ratios are given in Table~\ref{tab:pref-reduc-converg-HDG}.

\begin{table}
\caption{Preferences of the agents in the reduced instance of Lemma~\ref{lem:hard-converg-HDG}, for every $1\leq i\leq p$, $1\leq i' < p$, $1\leq j\leq m+1$, $\ell\in\{1,2\}$. Notation $\clauseof{x_i^\ell}$ (or $\clauseof{\overline{x}_i^\ell}$) stands for the index of the clause to which the $\ell^{\text{th}}$ occurrence of literal $x_i$ (or $\overline{x}_i$) belongs, the framed value corresponds to the ratio of the initial coalition in partition $\spartition$, and $[\dots]$ denotes an arbitrary order over the rest of the coalition ratios.}
\label{tab:pref-reduc-converg-HDG}
\begin{center}
\resizebox{\columnwidth}{!}{
\begin{tabular}{rl}
$y_i^1:$ & $\frac{3\clauseof{x_i^1}}{\alpha+2} \succ \frac{3\clauseof{x_i^1}-1}{\alpha+2} \succ \frac{3(\clauseof{x_i^1}+1)}{\alpha+2} \succ \frac{3(\clauseof{x_i^1}+1)-1}{\alpha+2} \succ \frac{3(\clauseof{x_i^1}+1)-1}{\alpha+1} \succ \frac{3\clauseof{x_i^1}-1}{\alpha+1} \succ \frac{2i+1}{\beta_1^++2} \succ \frac{2i}{\beta_1^++2} \succ \frac{3i}{\beta_2+2} \succ \frac{3i-1}{\beta_2+1} \succ \frac{2i}{\beta_1^++1} \succ  \boxed{1} \succ [\dots]$ \\

$y_{i'}^2:$ & $\frac{3\clauseof{x_{i'}^1}}{\alpha+2}  \succ \frac{3\clauseof{x_{i'}^1}-1}{\alpha+2} \succ \frac{3(\clauseof{x_{i'}^1}+1)}{\alpha+2} \succ \frac{3(\clauseof{x_{i'}^1}+1)-1}{\alpha+2} \succ \frac{3(\clauseof{x_{i'}^1}+1)-1}{\alpha+1} \succ \frac{3\clauseof{x_{i'}^1}-1}{\alpha+1} \succ  \frac{3i'}{\beta_2+2} \succ \frac{2(i'+1)+1}{\beta_1^++2} \succ \frac{2(i'+1)}{\beta_1^++2} \succ \frac{2(i'+1)}{\beta_1^++1}  \succ \frac{2(i'+1)+1}{\beta_1^-+2} \succ$ \\ & \hfill $\frac{2(i'+1)}{\beta_1^-+2}  \succ \frac{2(i'+1)}{\beta_1^-+1} \succ \frac{3i'-1}{\beta_2+1} \succ \boxed{1} \succ [\dots]$ \\

$y_p^2:$ & $\frac{3\clauseof{x_p^1}}{\alpha+2} \succ \frac{3\clauseof{x_p^1}-1}{\alpha+2} \succ \frac{3(\clauseof{x_p^1}+1)}{\alpha+2} \succ \frac{3(\clauseof{x_p^1}+1)-1}{\alpha+2} \succ \frac{3(\clauseof{x_p^1}+1)-1}{\alpha+1} \succ \frac{3\clauseof{x_p^1}-1}{\alpha+1} \succ \frac{3p}{\beta_2+2} \succ \frac{3}{\alpha+2} \succ \frac{2}{\alpha+2}  \succ \frac{2}{\alpha+1} \succ \frac{3p-1}{\beta_2+1} \succ  \boxed{1} \succ [\dots]$ \\

$\overline{y}_i^1:$ & $\frac{3\clauseof{\overline{x}_i^1}-1}{\alpha+2} \succ \frac{3\clauseof{\overline{x}_i^1}-2}{\alpha+2} \succ \frac{3(\clauseof{\overline{x}_i^1}+1)-1}{\alpha+2} \succ \frac{3(\clauseof{\overline{x}_i^1}+1)-2}{\alpha+2} \succ
 \frac{3(\clauseof{\overline{x}_i^1}+1)-2}{\alpha+1} \succ \frac{3\clauseof{\overline{x}_i^1}-2}{\alpha+1} \succ \frac{2i}{\beta_1^-+2} \succ \frac{2i-1}{\beta_1^-+2} \succ \frac{3i-2}{\beta_2+2} \succ \frac{3i-2}{\beta_2+1} \succ \frac{2i-1}{\beta_1^-+1} \succ  \boxed{0} \succ [\dots]$ \\
 
$\overline{y}_{i'}^2:$ & $\frac{3\clauseof{\overline{x}_{i'}^1}-1}{\alpha+2} \succ \frac{3\clauseof{\overline{x}_{i'}^1}-2}{\alpha+2} \succ \frac{3(\clauseof{\overline{x}_{i'}^1}+1)-1}{\alpha+2} \succ \frac{3(\clauseof{\overline{x}_{i'}^1}+1)-2}{\alpha+2} \succ \frac{3(\clauseof{\overline{x}_{i'}^1}+1)-2}{\alpha+1} \succ \frac{3\clauseof{\overline{x}_{i'}^1}-2}{\alpha+1} \succ \frac{3i'-2}{\beta_2+2} \succ \frac{2(i'+1)}{\beta_1^++2}  \succ \frac{2(i'+1)-1}{\beta_1^++2}  \succ \frac{2(i'+1)-1}{\beta_1^++1} \succ \frac{2(i'+1)}{\beta_1^-+2} \succ$ \\ & \hfill $ \frac{2(i'+1)-1}{\beta_1^-+2} \succ \frac{2(i'+1)-1}{\beta_1^-+1} \succ \frac{3i'-2}{\beta_2+1} \succ  \boxed{0} \succ [\dots]$ \\

$\overline{y}_p^2:$ & $\frac{3\clauseof{\overline{x}_p^1}-1}{\alpha+2} \succ \frac{3\clauseof{\overline{x}_p^1}-2}{\alpha+2} \succ \frac{3(\clauseof{\overline{x}_p^1}+1)-1}{\alpha+2} \succ \frac{3(\clauseof{\overline{x}_p^1}+1)-2}{\alpha+2} \succ \frac{3(\clauseof{\overline{x}_p^1}+1)-2}{\alpha+1} \succ \frac{3\clauseof{\overline{x}_p^1}-2}{\alpha+1} \succ \frac{3p-2}{\beta_2+2} \succ \frac{2}{\alpha+2} \succ \frac{1}{\alpha+2} \succ \frac{1}{\alpha+1} \succ \frac{3p-2}{\beta_2+1} \succ  \boxed{0} \succ [\dots]$ \\

$t:$ & $\frac{3m+3}{\alpha+2} \succ \frac{3m+2}{\alpha+2} \succ \frac{3}{\beta_1^++2} \succ \frac{2}{\beta_1^-+2} \succ \frac{2}{\beta_1^++1} \succ \frac{2}{\beta_1^-+1} \succ \frac{3m+2}{\alpha+1} \succ \boxed{1} \succ [\dots]$ \\

\midrule

$K_j:$ & $\frac{3j}{\alpha+2} \succ \frac{3j-1}{\alpha+2} \succ \frac{3j-2}{\alpha+2} \succ \frac{3j-1}{\alpha+1} \succ \frac{3j-2}{\alpha+1} \succ \boxed{\frac{3j-2}{\alpha}} \succ [\dots]$ \\

$Y_i^1:$ & $\frac{2i+1}{\beta_1^++2} \succ \frac{2i}{\beta_1^++2} \succ \frac{2i}{\beta_1^++1} \succ \frac{2i-1}{\beta_1^++1} \succ \boxed{\frac{2i-1}{\beta_1^+}} \succ [\dots]$ \\

$\overline{Y_i^1}:$ & $\frac{2i}{\beta_1^-+2} \succ \frac{2i-1}{\beta_1^-+2} \succ \frac{2i}{\beta_1^-+1} \succ \frac{2i-1}{\beta_1^-+1} \succ \boxed{\frac{2i-1}{\beta_1^-}} \succ [\dots]$ \\

$Y_i^2:$ & $\frac{3i}{\beta_2+2} \succ \frac{3i-1}{\beta_2+1} \succ \boxed{\frac{3i-2}{\beta_2}} \succ [\dots]$ \\

$\overline{Y_i^2}:$ & $\frac{3i-2}{\beta_2+2} \succ \frac{3i-2}{\beta_2+1} \succ \boxed{\frac{3i-2}{\beta_2}} \succ [\dots]$ \\
\end{tabular}}
\end{center}
\end{table}

We claim that there exists a cycle of IS deviations iff formula $\varphi$ is satisfiable.
We omit the formal proof of equivalence which follows exactly the same arguments as the proof of Lemma~\ref{lem:hard-converg-AHG} with even the same name of agents and fixed coalitions.
When given a truth assignment of the variables which satisfies formula $\varphi$, it is easy to see that the cycle described in the first part of the proof of Lemma~\ref{lem:hard-converg-AHG} can also occur in this instance (see Figure~\ref{fig:cycle-reduc-conv-AHG} for an example of such a cycle), proving the if part.
For the only if part, the same arguments as the ones given in the second part of the proof of Lemma~\ref{lem:hard-converg-AHG} also hold, except that we need to adapt to the context of evaluations of coalitions based on red agent ratios.
Instead of speaking about agents who prefer one or two additional agents, here we need to speak about agents who prefer one or two red or blue additional agents where, in our construction, literal-agents related to positive literals $x_i$ are red agents and literal-agents related to negative literals $\overline{x}_i$ are blue agents.  
This difference is already reflected in the construction of the preferences of the agents.
Indeed, when in our cycle we want deviating literal-agents that correspond to the same literal, the red/blue type is fixed in the preferences of the agents (e.g., we want that only literal-agents $y_i^1$ and $y_i^2$, who are red agents, can join the coalition of dummy variable agents $Y_i^2$, therefore the preferences of the members of $Y_i^2$ are constructed in such a way that they can only accept up to two additional red agents).
Alternatively, when the type of the deviating literal-agents is not known a priori in the cycle, we specify all possible new ratios in the preferences, in order to take into account every possible combination of types for deviating agents.
It is the case, e.g., for the preferences of the coalitions of dummy clause agents $K_j$ because we do not know a priori which literal will satisfy the corresponding clause. 
Therefore, for the preferences of $K_j$, we specify preferences for two additional literal-agents which can be of any type, by nevertheless ensuring that two additional agents is more preferred than only one (in order to come back to the same idea as in the proof of Lemma~\ref{lem:hard-converg-AHG}).
Even if an arbitrary preference ranking is chosen between coalition ratios that correspond to the same number of additional agents, this does not impact the proof because only one of these ratios can be reached, depending on the type of agent who enters the coalition.

The only point that must be additionally checked is that all the fractions described in the preferences with different variables are indeed different.
\begin{itemize}
\item For clause coalitions, since $\alpha>6m+2$, it holds that $\frac{3j}{\alpha+2} >\frac{3j-1}{\alpha+1} > \frac{3j-1}{\alpha+2} > \frac{3j-2}{\alpha} > \frac{3j-2}{\alpha+1} > \frac{3j-2}{\alpha+2}$ for every $j\in [m+1]$.
Moreover, it holds that $\frac{3j}{\alpha+2} <\frac{3(j+1)-2}{\alpha+2}$ for every $j\in[m]$ so all the values associated with ratios preferred to the initial ones are different for all clause coalitions.
\item For variable coalitions associated with the first positive occurrence of a variable, since $\beta_1^+>4p-2$, it holds that $\frac{2i+1}{\beta_1^++2}>\frac{2i}{\beta_1^++1}>\frac{2i}{\beta_1^++2}>\frac{2i-1}{\beta_1^+}>\frac{2i-1}{\beta_1^++1}$ for every $i\in[p]$.
Moreover, it holds that $\frac{2i+1}{\beta_1^++2}<\frac{2(i+1)-1}{\beta_1^++1}$ for every $i\in[p-1]$ so all the values associated with ratios preferred to the initial ones are different for all variable coalitions associated with the first positive occurrence of a variable.
\item For variable coalitions associated with the first negative occurrence of a variable, since $\beta_1^->4p-2$, it holds that $\frac{2i}{\beta_1^-+2}>\frac{2i}{\beta_1^-+2}>\frac{2i-1}{\beta_1^-}>\frac{2i-1}{\beta_1^-+1}>\frac{2i-1}{\beta_1^-+2}$ for every $i\in[p]$.
Moreover, it holds that $\frac{2i}{\beta_1^-+2}<\frac{2(i+1)-1}{\beta_1^-+2}$ for every $i\in[p-1]$ so all the values associated with ratios preferred to the initial ones are different for all variable coalitions associated with the first negative occurrence of a variable.
\item For variable coalitions associated with the second occurrence of a literal, since $\beta_2>3p-2$, it holds that $\frac{3i}{\beta_2+2}>\frac{3i-1}{\beta_2+1}>\frac{3i-2}{\beta_2}>\frac{3i-2}{\beta_2+1}>\frac{3i-2}{\beta_2+2}$ for every $i\in[p]$.
Moreover, it holds that $\frac{3i}{\beta_2+2}<\frac{3(i+1)-2}{\beta_2+2}$ for every $i\in[p-1]$ so all the values associated with ratios preferred to the initial ones are different for all variable coalitions associated with the second occurrence of a literal.
\end{itemize}
It remains to check that the ratios associated with clause or variable coalitions do not interfere with each other.
Since $\beta_2>3p\beta_1^-+4p$, it holds that the highest reachable ratio associated with a variable coalition related to the second occurrence of a literal is smaller than the smallest reachable ratio associated with a variable coalition related to the first negative occurrence of a variable, i.e., $\frac{3p}{\beta_2+2}<\frac{1}{\beta_1^-+2}$. 
Since $\beta_1^->2p\beta_1^++2p-2$, it holds that the highest reachable ratio associated with a variable coalition related to the first negative occurrence of a variable is smaller than the smallest reachable ratio associated with a variable coalition related to the first positive occurrence of a variable, i.e., $\frac{2p}{\beta_1^-+2}<\frac{1}{\beta_1^++1}$.
Since $\beta_1^+>(2p+1)\alpha+4p$, it holds that the highest reachable ratio associated with a variable coalition related to the first positive occurrence of a variable is smaller than the smallest reachable ratio associated with a clause coalition, i.e., $\frac{2p+1}{\beta_1^++2}<\frac{1}{\alpha+2}$.
Therefore, all the reachable ratios are indeed different for all clause and variable coalitions.
It follows that the deviations described in the second part of the proof of Lemma~\ref{lem:hard-converg-AHG} are the only possible ones. 
Hence the described cycle is actually the only possible one.
\end{proof}

\section{Fractional Hedonic Games}

The hardness reductions in this section are from the \np-complete problem \textsc{Exact Cover by $3$-Sets} \citep{Karp72a}. An instance of \textsc{Exact Cover by $3$-Sets} consists of a tuple $(R,S)$, where $R$ is a ground set together with a set $S$ of $3$-element subsets of $R$. A Yes-instance is an instance so that there exists a subset $S'\subseteq S$ that partitions $R$.

\symFHG*

We provide separate reductions for the two hardness results in the next lemmas.

\begin{restatable}{lemma}{existsymFHG}\label{thm:hardness-existencepath-symposFHG}
	\existpb{IS}{FHG} is \np-hard even in symmetric FHGs with non-negative weights where the initial partition is the \singleton.
\end{restatable}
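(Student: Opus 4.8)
The plan is to reduce from \textsc{Exact Cover by 3-Sets} (X3C), following the general strategy of \citet{SuDi10a} and \citet{BBS14a} for hardness in FHGs, but adapted to make the construction symmetric, non-negative, and compatible with an IS dynamics starting from the \singleton. Given an X3C instance $(R,S)$ with $|R| = 3q$, I would build an agent for every element of $R$ and an agent (or small gadget of agents) for every set in $S$. The weights should be designed so that the natural "good" coalitions are exactly those formed by one set-agent together with the three element-agents it covers, with all internal weights positive and large; all other pairs get weight $0$. The crux is to arrange that an IS partition (reachable by the dynamics) exists \emph{if and only if} the X3C instance is a Yes-instance. To force this, I would attach to each set-agent a private "penalty gadget" --- a symmetric non-negative FHG subgame that admits no IS partition, whose existence is guaranteed by Theorem~\ref{thm:symFHGnoIS} (or a subgame thereof) --- and calibrate the weights so that a set-agent is drawn into the penalty gadget unless it is "used up" by being in a coalition with its three elements. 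Since all element-agents must be covered, this forces a selection of sets that exactly partitions $R$.

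Concretely, the steps I would carry out are: (1) Fix the agent set: element-agents $\{u_r : r \in R\}$, set-agents $\{w_s : s\in S\}$, and for each $s\in S$ a disjoint copy $G_s$ of an agent-minimal symmetric non-negative FHG with no IS partition. (2) Assign symmetric non-negative weights: a large weight $M$ between $w_s$ and each $u_r$ with $r\in s$ (so the coalition $\{w_s\}\cup s$ gives everyone high utility); moderate weights inside each $G_s$; a weight between $w_s$ and one designated "entry" agent of $G_s$ tuned so that $w_s$ prefers the coalition $\{w_s\}\cup s$ to any coalition inside $G_s$, but prefers some coalition inside $G_s$ to being alone or to any coalition with fewer than three covered element-agents. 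Here the freedom to pick $M$ large is exactly the "weights large enough to incentivize the gadget agent to stay outside" trick mentioned after the Corollary in the excerpt. (3) Observe that because weights are non-negative and symmetric and we start from the \singleton, every coalition ever formed in the dynamics has strictly positive mutual utility for all pairs (as the paper notes before Theorem~\ref{thm:hardness-symposFHG}); this restricts reachable coalitions to the "meaningful" ones and lets us avoid negative weights entirely. (4) Prove the forward direction: given an exact cover $S'$, exhibit an explicit sequence of IS deviations from the \singleton that forms the coalitions $\{w_s\}\cup s$ for $s\in S'$, then resolves each remaining $G_s$ by running its dynamics --- but this is the subtle point, since $G_s$ has \emph{no} IS partition. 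So instead the penalty gadget must be designed so that once $w_s$ is \emph{outside} $G_s$ (because $s\in S'$), the remaining agents of $G_s$ \emph{do} admit a reachable IS partition; the non-existence of IS only kicks in when $w_s$ is forced inside. (5) Prove the backward direction: in any IS partition reachable from the \singleton, each element-agent has positive utility hence sits with some set-agent; a set-agent with positive utility from element-agents cannot simultaneously stabilize its gadget unless it has a full triple; a counting argument over the $3q$ element-agents then forces an exact cover.

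The main obstacle, and the step I expect to demand the most care, is designing the penalty gadget so that its "no IS partition" behavior is \emph{conditional}: it must block stability precisely when the set-agent $w_s$ is pulled in (i.e.\ when $s$ is not selected, so $w_s$ has no better home), yet allow a reachable IS configuration when $w_s$ is absent. A clean way to achieve this is to make $w_s$ play the role of one particular agent in the 15-agent counterexample of Theorem~\ref{thm:symFHGnoIS} --- specifically an agent whose removal yields a game that \emph{does} have an IS partition reachable by the dynamics --- so that "$w_s$ present in $G_s$" reconstitutes the full bad instance while "$w_s$ absent" leaves a benign residue. Verifying both that the full gadget has no IS partition (citing Theorem~\ref{thm:symFHGnoIS}) and that the residual gadget converges from the \singleton will require a modest but nontrivial case analysis, and ensuring the weight calibration simultaneously respects all the preference comparisons (triple $\succ$ gadget $\succ$ singleton, for every agent involved) is where most of the bookkeeping lives. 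A secondary subtlety is confirming that in the \emph{reachable} partitions no "spurious" coalitions (e.g.\ two set-agents together, or a set-agent with element-agents it does not cover) can arise --- which follows since those pairs have weight $0$ and, from the \singleton under non-negative symmetric weights, no agent ever joins a coalition in which she has a $0$-weight neighbor and non-positive marginal utility. I would also remark that the \conp-hardness of \convergpb{IS}{FHG} follows from the same construction by noting that whenever the instance is a No-instance, \emph{every} sequence of IS deviations from the \singleton is forced into a cycle within some penalty gadget, so the \singleton-based construction does double duty.
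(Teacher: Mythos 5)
Your high-level strategy matches the paper's: reduce from \textsc{Exact Cover by $3$-Sets}, use the symmetric no-IS game of Theorem~\ref{thm:symFHGnoIS} (with negative weights zeroed out) as a penalty gadget, and exploit the fact that, starting from the \singleton{} with non-negative symmetric weights, all reachable coalitions are cliques of positive mutual utility. However, your construction attaches one penalty gadget to each \emph{set} and makes ``$w_s$ escapes its gadget'' equivalent to ``$w_s$ forms the coalition $\{w_s\}\cup s$'', and this breaks the forward direction. In a Yes-instance only $|R|/3$ of the $|S|$ sets are selected; the remaining $|S|-|R|/3$ set-agents cannot form their designated triples (their elements are taken by selected sets), so by your own calibration they prefer entering $G_s$ over staying alone. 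If they can enter, $G_s\cup\{w_s\}$ reconstitutes the no-IS subgame and no IS partition exists even for Yes-instances; if the residue agents of $G_s$ block them (which they would, having weight $0$ to $w_s$), then the gadget never fires and the backward direction collapses, because in a No-instance the unmatched set-agents and element-agents can simply sit as singletons in a stable configuration. Either way the claimed equivalence fails.

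The paper resolves this by inverting the attachment and adding a multiplicity argument: for each element $r$ it creates $m_r = |\{s\in S : r\in s\}|-1$ copies of the gadget, and for each set $s$ a $4$-clique containing one copy $s^r$ per element $r\in s$, with $s^r$ linked to the $a_1$-agents of $r$'s gadget copies. Each gadget copy can be pacified only by extracting its $a_1$-agent into a pair with some $s^r$, which in turn requires dismantling the clique of $s$. Since each $r$ lies in exactly $m_r+1$ sets and has $m_r$ gadgets to pacify, exactly one set per element survives with its clique intact, and consistency of this choice across all elements is precisely an exact cover. You would need some analogue of this counting device---your single-gadget-per-set design has no mechanism distinguishing ``this set is unselected because a cover exists without it'' from ``no cover exists''. (Your closing remark that the same construction yields \conp-hardness of convergence from the \singleton{} also overclaims: the paper uses a separate construction for the convergence problem and does not establish that variant starting from the \singleton.)
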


\begin{proof}% 
	We provide a reduction from \textsc{Exact Cover by $3$-Sets}. 
	Let $(R,S)$ be an instance of \textsc{Exact Cover by $3$-Sets}. The intuition of the proof is as follows. Elements from the ground set~$R$ are represented by gadgets corresponding to a non-negative version of the game constructed in Theorem~\ref{thm:symFHGnoIS}. 
	The sets in $S$ are represented by cliques of size $4$ where one agent is irrelevant to agents outside the clique, and the other three agents represent the three elements $r\in s$ and are linked to the respective gadgets representing $r$. 
	The correspondence occurs because all cliques corresponding to a set $s$ can simultaneously prevent cycling in all gadgets corresponding to agents in~$s$.
	
	Let us specify the construction. We may assume that every $r\in R$ occurs in at least one set of $S$. Let $m_r : = |\{s\in S\colon r\in s\}|-1\ge 0$, for $r\in R$. We define the symmetric FHG on agent set $N$, where the underlying graph consists of a $4$-clique for every set in $S$, and $m_r$ copies of a non-negative version of the example from Theorem~\ref{thm:symFHGnoIS}. Formally, $N = \bigcup_{s\in S}(\{t_s\}\cup\{s^i\colon i\in s\})\cup \bigcup_{r\in R}\bigcup_{v =1 }^{m_r} \{a_w^{r,v},b_w^{r,v},c_w^{r,v}\colon w = 1,\dots, 5\}$, and non-negative, symmetric weights are given by

	\begin{itemize}
			\item For all $r\in R$, $v\in \{1,\dots, m_r\}$, and $w\in \{1,\dots, 5\}$,
		\begin{itemize}
		\item $\util(a_w^{r,v}, b_w^{r,v}) = \util(b_w^{r,v}, c_w^{r,v}) = \util(a_w^{r,v},c_w^{r,v}) = 228$,
		\item $\util(a_w^{r,v},a_{w+1}^{r,v}) = 436$, $\util(a_w^{r,v},b_{w+1}^{r,v}) = 228$, $\util(a_w^{r,v},c_{w+1}^{r,v}) = 248$,
		\item $\util(b_w^{r,v},a_{w+1}^{r,v}) = 223$, $\util(b_w^{r,v},b_{w+1}^{r,v}) = 171$, $\util(b_w^{r,v},c_{w+1}^{r,v}) = 236$, and
		\item $\util(c_w^{r,v},a_{w+1}^{r,v}) = 223$, $\util(c_w^{r,v},b_{w+1}^{r,v}) = 171$, $\util(c_w^{r,v},c_{w+1}^{r,v}) = 188$.
		\end{itemize}
		\item $\util(t_s, s^i) = 304$, $s\in S, i\in s$,
		\item $\util(s^j, s^i) = 304$, $s\in S, i,j\in s$,
		\item $\util(s^i, a_1^{i,v}) = 304$, $s\in S, i\in s$, $v\in \{1,\dots, m_r\}$, and
		\item $\util(x,y) = 0$ for all agents $x,y\in N$ such that the weight is not defined, yet.
	\end{itemize}

In the above definition, all indices are to be read modulo $5$ (where the modulo function is assumed to map to $\{1,\dots, 5\}$). For $s\in S$, define $N^s = \{t_s\}\cup\{s^i\colon i\in s\}$.

Assume first that $(R,S)$ is a Yes-instance and let $S'\subseteq S$ be a partition of $R$. For $r\in R$, let $\sigma_r: \{s\in S\setminus S'\colon r\in s\}\to \{1,\dots, m_r\}$ be a bijection. Note that the domain and image of $\sigma_r$ have the same cardinality for every $r\in R$, because $S'$ is a partition of $R$. Consider the partition $\partition = \bigcup_{r\in R} \bigcup_{v=1}^{m_r} \{\{a_2^{r,v},b_2^{r,v},c_2^{r,v},a_3^{r,v},b_3^{r,v},c_3^{r,v}\}$, $\{a_4^{r,v},b_4^{r,v},c_4^{r,v},a_5^{r,v},b_5^{r,v},c_5^{r,v}\}, \{b_1^{r,v},c_1^{r,v}\}\}\cup \bigcup_{s\in S'} \{N^s\} \cup \bigcup_{s\in S\setminus S'}\{\{t_s\}\}\cup\{\{s^i,a_1^{i,\sigma_i(s)}\}\colon i\in s\}$. It is quickly checked that $\partition$ is IS. Moreover, $\partition$ can be reached by deviations starting from the \singleton, by forming the coalitions one by one. In particular, coalitions of the type $\{a_2^{r,v},b_2^{r,v},c_2^{r,v},a_3^{r,v},b_3^{r,v},c_3^{r,v}\}$ can be formed by having $a_3^{r,v}$ join $b_3^{r,v}$, forming a coalition that is subsequently joined by $c_3^{r,v}$, $a_2^{r,v}$, $b_2^{r,v}$, and finally $c_2^{r,v}$. Hence, it is possible to reach an IS partition with IS deviations, starting with the \singleton.

Now, assume that it is possible to reach an IS partition $\partition$ by starting the dynamics from the \singleton. Define by $G = (N,E)$ the graph with edge set $E = \{\{d,e\}\colon \util(d,e)>0\}$, a combinatorial representation of the unweighted version of the FHG under consideration. 
Note that all coalitions of $\partition$ are cliques in $G$, because all agents that get part of a coalition of size at least $2$ have positive utility and would block any further agent that does not award them positive utility. 
Now, consider a set of agents $D :=\{a_w^{r,v},b_w^{r,v},c_w^{r,v}\colon w = 1,\dots, 5\}$ for some $r\in R$, $v\in \{1,\dots, m_r\}$. 
Assume for contradiction that for all agents $d\in D$, $\partition(d)\subseteq D$. This yields an IS partition of the game considered in Theorem~\ref{thm:symFHGnoIS} because the agents would also form an IS partition in the game of Theorem~\ref{thm:symFHGnoIS}. This is due to the fact that no agents with mutual negative utility would form a coalition, and a deviation with negative weights would still be a deviation if these weights are set to $0$. This is a contradiction. Hence, some agent in $D$ forms a coalition with an agent outside $D$. By the fact that all coalitions in $\partition$ are cliques in $G$, the only such agent can be $a_1^{r,v}$. By the same fact, $\partition(a_1^{r,v})\cap D = \{a_1^{r,v}\}$ and there exists a unique $s\in S$ with $r\in s$ such that $\partition(a_1^{r,v}) = \{a_1^{r,v}, s^r\}$.

Next, let $s\in S$. We claim that $\{t^s\}\in \partition$ or $N^s\in \partition$. Otherwise, consider $r\in s$ with $s^r\notin \partition(t^s)$. Again by the clique property, $\util_{s^r}(\partition)\le \frac {304}2$ and $\partition(t^s)\subseteq N^s$. Hence, $\util_{s^r}(\partition(t^s)\cup\{s^r\}) \ge \frac {608}3$, and every agent in $\partition(t^s)$ would welcome $s^r$. This contradicts the individual stability of $\partition$.

Consider the set $T = \{s\in S\colon \{t^s\}\in\partition\}$. Then, for every $s\in T$ and $r\in s$, there exists $v\in \{1,\dots, m_r\}$ with $\partition(s^r) = \{a_1^{r,v}, s^r\}$. Otherwise, $\partition(s^r)\subseteq N^s$ and $t^s$ can perform a deviation by joining $\partition(s^r)$. Hence, the sets in $T$ cover every element in $r\in R$ exactly $m_r$ times (in order to form all the required coalitions of the type $\{a_1^{r,v}, s^r\}$). Since $S$ covers every $r\in R$ exactly $m_r + 1$ times, the set $S' = S\setminus T$ forms a partition of $R$. Hence, $(R,S)$ is a Yes-instance.
\end{proof}

\begin{restatable}{lemma}{convergsymFHG}\label{thm:hardness-convergence-symFHG}
\convergpb{IS}{FHG} is \conp-hard even in symmetric FHGs with non-negative weights.
\end{restatable}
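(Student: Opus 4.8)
The plan is to establish \conp-hardness by showing that the complement of \convergpb{IS}{FHG} — deciding whether there exists a cycling sequence of IS deviations from the given initial partition — is \np-hard, via a reduction from \textsc{Exact Cover by $3$-Sets}. The construction will be a variant of the one used for Lemma~\ref{thm:hardness-existencepath-symposFHG}: for every element $r\in R$ I keep a number of copies of the non-negative version of the symmetric FHG of Theorem~\ref{thm:symFHGnoIS} (the gadget whose only source of non-termination is the ``$a_1$-type'' agent sitting inside the gadget rather than being neutralised in an external pair), for every set $s\in S$ a clique $N^s=\{t_s\}\cup\{s^i\colon i\in s\}$ of equal mutual weight, and the same connecting edges $\util(s^i,a_1^{i,v})$ that let a set-agent $s^r$ neutralise a copy of $r$'s gadget. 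As in Lemma~\ref{thm:hardness-existencepath-symposFHG}, the number of copies per element, $m_r=|\{s\colon r\in s\}|-1$, is calibrated so that the gadget copies of all elements can be simultaneously neutralised precisely when the sets that are spared (reassembled into full cliques $N^s$) form an exact cover. The difference is the choice of initial partition: $\spartition$ is a ``locked'' near-stable partition in which every gadget copy is neutralised according to one fixed (not necessarily exact) cover assignment, every $N^s$ is disassembled, and the $t_s$ agents sit in singletons, with weights tuned so that from $\spartition$ only a restricted repertoire of IS deviations is available.

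First I would handle the easy direction. Given an exact cover $S'\subseteq S$, I exhibit an explicit sequence of IS deviations from $\spartition$ that rebuilds the cliques $N^s$ for $s\in S'$; because $S'$ is exact, each such rebuilding is carried out by a chain of IS deviations in which the surplus set-agents migrate so that every gadget copy continues to be neutralised by some set outside $S'$, leaving exactly one copy un-neutralised — its $a_1$-agent re-entering the gadget. From that configuration the gadget behaves exactly as in Theorem~\ref{thm:symFHGnoIS} and admits an infinite sequence of IS deviations, so \convergpb{IS}{FHG} answers ``No''.

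For the converse, assuming $(R,S)$ is a No-instance, I must show that every sequence of IS deviations from $\spartition$ terminates. Here I would first invoke the non-negativity-plus-symmetry argument of \citet{SuDi10a} and \citet{BBS14a} that is also used in Lemma~\ref{thm:hardness-existencepath-symposFHG}: since the weights are non-negative and symmetric and $\spartition$ consists of positive-weight cliques, every coalition ever reached is a positive-weight clique, which confines the dynamics to the intended combinatorial regime — in particular the troublesome mixed coalitions inside a gadget can never form. Within this regime the set-clique bookkeeping part of the dynamics is controlled by a potential function (the total number of intra-coalition positive-weight edges, refined lexicographically by the multiset of current $N^s$-clique sizes), so that part alone cannot sustain an infinite run; hence an infinite run would have to keep some element-gadget active forever, and I would then show by a counting argument over the multiplicities $m_r$ that this is possible only if enough set-cliques can be permanently assembled to form an exact cover, contradicting the No-instance assumption. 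I would also verify separately that in the No-case IS partitions still exist and are in fact reached, so that convergence is the correct terminal behaviour.

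The main obstacle is this converse direction. The delicate points are: (i) proving that the set-clique assembly/disassembly and agent-migration dynamics genuinely cannot cycle, which needs the right nested potential function and a case analysis of all available deviations once coalitions are known to be positive cliques; (ii) making the weights tight enough that the only IS deviations available from reachable configurations are the ones the combinatorial argument accounts for, so no ``shortcut'' deviation can make a gadget active without assembling set-cliques; and (iii) the counting step linking a permanently-active gadget to an exact cover, which is where the calibration $m_r=|\{s\colon r\in s\}|-1$ does its work, mirroring the analysis in the proof of Lemma~\ref{thm:hardness-existencepath-symposFHG}.
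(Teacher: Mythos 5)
Your high-level strategy coincides with the paper's: reduce the complement from \textsc{Exact Cover by $3$-Sets}, use the no-IS gadget of Theorem~\ref{thm:symFHGnoIS} (with negative weights zeroed out) as the cycling engine, and gate its activation by an exact-cover condition, so that a cycle exists iff the instance is a Yes-instance. However, your architecture diverges from the paper's in a way that leaves the crucial mechanism unproven. The paper uses a \emph{single} gadget $C=\{a_w,b_w,c_w\colon w=1,\dots,5\}$ and a single distinguished agent $a_1$ who initially sits in the coalition $\{a_1\}\cup\{s_2\colon s\in S\}$; the weight $\alpha$ is calibrated so that $\frac{l}{l+1}\alpha<152<\frac{l+1}{l+2}\alpha$ with $l=|S|-|R|/3$, which means $a_1$ can perform her one enabling deviation (joining $\{b_1,c_1\}$ for utility $152$) exactly when all but $l$ of the $s_2$-agents have left her, and a chain of ``each auxiliary agent deviates at most once'' arguments shows this happens iff an exact cover exists. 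Your proposal has no analogue of this threshold. Worse, the release mechanism you describe --- an $a_1$-type agent ``re-entering the gadget'' from a neutralising pair --- is not an IS deviation under the weights of Lemma~\ref{thm:hardness-existencepath-symposFHG}: the pair $\{a_1^{r,v},s^r\}$ gives utility $304/2=152$, which equals $\util_{a_1}(\{a_1,b_1,c_1\})=152$, so there is no strict improvement; you would need to recalibrate the weights, and this recalibration is precisely the heart of the construction that is missing.

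The second gap is the converse direction. You correctly identify it as the hard part, but the tools you propose do not close it. The positive-clique invariant you want to import holds for dynamics started from the \singleton; here the initial partition must contain the large ``locking'' coalition (in the paper, $\{a_1\}\cup\{s_2\colon s\in S\}$, whose members have pairwise weight $0$ among the $s_2$'s), so reachable coalitions need not be positive cliques and that invariant cannot be invoked. The paper avoids any potential-function argument for the bookkeeping part: it shows directly that every agent outside $C$ can deviate at most once (agents $r$ and $s_1$ never deviate; $r^s$ and $s_2$ at most once, into coalitions that are never altered again), so an infinite run must involve $C$, which in turn requires $a_1$'s gated deviation. With multiple gadgets and reassembling/disassembling set-cliques as in your design, the termination argument genuinely needs the nested potential you gesture at, and constructing it (together with verifying that no ``shortcut'' deviation activates a gadget) is left entirely open in your proposal. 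As written, the reduction is a plausible plan but not a proof.
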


\begin{proof}% 
For this purpose, we prove the \np-hardness of the complement problem, which asks whether there exists a cycle of IS deviations. 
	We provide a reduction from \textsc{Exact Cover by $3$-Sets}.

	Let $(R,S)$ be an instance of \textsc{Exact Cover by $3$-Sets}. Let $l = |S| - |R|/3$. Choose $\alpha$ with a polynomial-size representation in the input size satisfying $\frac l{l+1}\alpha < 152 < \frac {l+1}{l+2} \alpha$. For the reduction to work, any number satisfying these boundaries suffices, and for a polynomial-size representation, one can for example use $\alpha = \frac 12 \cdot 152\left(\frac {l+1}l + \frac{l+2}{l+1}\right)$.

	Define the symmetric FHG on agent set $N$ where $N = R \cup \{r^s\colon s\in S, r\in s\} \cup \{s_1,s_2\colon s\in S\} \cup \{a_w,b_w,c_w\colon w = 1,\dots, 5\}$. We define $C = \{a_w,b_w,c_w\colon w = 1,\dots, 5\}$. The utilities are given as follows.

	\begin{itemize}
			\item For all $w\in \{1,\dots, 5\}$, reading indices modulo $5$ (where the modulo function is assumed to map to $\{1,\dots, 5\}$),
		\begin{itemize}
			\item $\util(a_w, b_w) = \util(b_w, c_w) = \util(a_w,c_w) = 228$,
			\item $\util(a_w,a_{w+1}) = 436$, $\util(a_w,b_{w+1}) = 228$, $\util(a_w,c_{w+1}) = 248$,
			\item $\util(b_w,a_{w+1}) = 223$, $\util(b_w,b_{w+1}) = 171$, $\util(b_w,c_{w+1}) = 236$, and
			\item $\util(c_w,a_{w+1}) = 223$, $\util(c_w,b_{w+1}) = 171$, $\util(c_w,c_{w+1}) = 188$.
		\end{itemize}
		\item For all $s\in S$,
		\begin{itemize}
			\item $\util(a_1, s_2) = \util(s_1, s_2) = \alpha$,
			\item $\util(s_1,r^s) = \alpha$, $\util(r^s,r) = 2\alpha$, $r\in S$, and
		\end{itemize}
		\item $\util(x,y) = 0$ for all agents $x,y\in N$ such that the weight is not defined, yet.
	\end{itemize}

	 Finally, define $\partition = \{\{r\}\colon r\in R\}\cup \{\{s_1,i^s,j^s,k^s\}\colon \{i,j,k\} = s\in S\} \cup \{\{a_1\}\cup\{s_2\colon s\in S\}\}\cup \{\{b_1, c_1\},\{a_2,b_2,c_2,a_3,b_3,c_3\},\{a_4,b_4,c_4,a_5,b_5,c_5\}\}$. The reduction is illustrated in Figure~\ref{fig:hardness_cycSymFHG}. 

	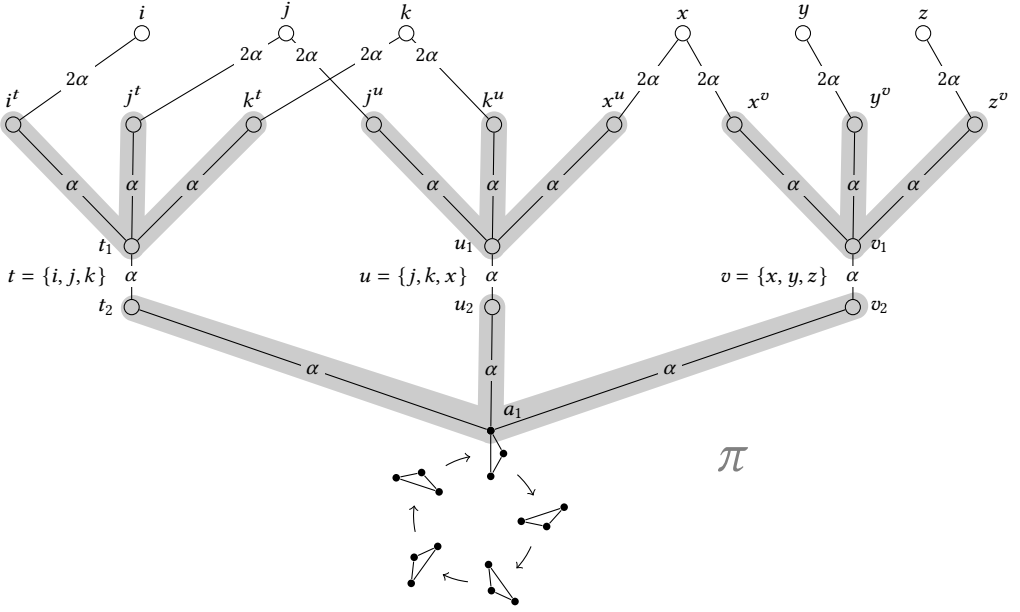
\begin{figure}
		\centering
		\begin{tikzpicture}[auto,scale = 0.8,
			complexnode/.pic={
				\node[indicatorvertex] (b) at (0,-0.05){};
				\node[indicatorvertex] (a) at (0,.55){};
				\node[indicatorvertex] (c) at (0.17,0.25){};
				\path (a) edge (b);
				\path (b) edge (c);
				\path (c) edge (a);},
				complexarc_prev/.pic={
				\node (s) at (-.65,0.3){};
				\node (t) at (.65,0.3){};
				\draw[bend left,->]  (s) to (t);},
				complexarc/.pic={
			\draw[->] (-.15,0.15) arc[radius= 1cm, start angle = 100, end angle= 80];}
			]

		\pgfmathsetmacro{\yshift}{0.11}
		\pgfmathsetmacro{\yshiftt}{0.08}
		\pgfmathsetmacro{\tinyshift}{0.03}

		\node[protovertex,label = 90:\footnotesize $i^t$] (i1s) at (0.86,0.5){};

		\node[protovertex,label = 90:\footnotesize $j^t$] (j1s) at (2.86,0.5){};

		\node[protovertex,label = 90:\footnotesize $k^t$] (k1s) at (4.86,0.5){};

		\node[protovertex,label = 90:\footnotesize $j^u$] (j1t) at (6.86,0.5){};

		\node[protovertex,label = 90:\footnotesize $k^u$] (k1t) at (8.86,0.5){};

		\node[protovertex,label = 90:\footnotesize $x^u$] (x1t) at (10.86,0.5){};

		\node[protovertex,label = 60:\footnotesize $x^v$] (x1u) at (12.86,0.5){};

		\node[protovertex,label = 60:\footnotesize $y^v$] (y1u) at (14.86,0.5){};

		\node[protovertex,label = 60:\footnotesize $z^v$] (z1u) at (16.86,0.5){};

		\node[protovertex,label = 180:{\footnotesize $t_1$}] (s) at (2.83,-1.5){};
		\node[protovertex,label = 180:{\footnotesize $u_1$}] (t) at (8.83,-1.5){};
		\node[protovertex,label = 0:{\footnotesize $v_1$}] (u) at (14.83,-1.5){};
		\node[protovertex,label = 180:{\footnotesize $t_2$}] (s2) at (2.83,-2.5){};
		\node[protovertex,label = 180:{\footnotesize $u_2$}] (t2) at (8.83,-2.5){};
		\node[protovertex,label = 0:{\footnotesize $v_2$}] (u2) at (14.83,-2.5){};

		\draw (s) edge node[fill=white,anchor=center, pos=0.5, inner sep =2pt] {\footnotesize $\alpha$} node[left,midway, xshift = -.2cm] {\footnotesize $t= \{i,j,k\}$} (s2) ;
		\draw (t) edge node[fill=white,anchor=center, pos=0.5, inner sep =2pt] {\footnotesize $\alpha$} node[left,midway, xshift = -.2cm] {\footnotesize $u= \{j,k,x\}$} (t2) ;
		\draw (u) edge node[fill=white,anchor=center, pos=0.5, inner sep =2pt] {\footnotesize $\alpha$} node[left,midway, xshift = -.2cm] {\footnotesize $v= \{x,y,z\}$} (u2) ;

		\foreach \x/\y in {i/s,j/s,k/s,j/t,k/t,x/t,x/u,y/u,z/u}{
		\draw (\y) edge node[fill=white,anchor=center, pos=0.5, inner sep =2pt] {\footnotesize $\alpha$} (\x1\y);
		}

		\node[protovertex,label = {[shift={(0,-0.05)}]90:{\footnotesize $j$}}] (j1) at (5.4,2) {};
		\node[protovertex,label = {[shift={(0,-0.05)}]90:{\footnotesize $k$}}] (k1) at (7.4,2){};
		\node[protovertex,label = {[shift={(0,-0.05)}]90:{\footnotesize $x$}}] (x1) at (12,2){};
		\node[protovertex,label = {[shift={(0,-0.05)}]90:{\footnotesize $i$}}] (i1) at (3,2){};
		\node[protovertex,label = {[shift={(0,-0.05)}]90:{\footnotesize $y$}}] (y1) at (14,2){};
		\node[protovertex,label = {[shift={(0,-0.05)}]90:{\footnotesize $z$}}] (z1) at (16,2){};

		\draw (j1s) edge  node[fill=white,anchor=center, pos=0.8, inner sep =2pt] {\footnotesize $2\alpha$} (j1);
		\draw (j1t) edge  node[fill=white,anchor=center, pos=0.8, inner sep =2pt] {\footnotesize $2\alpha$} (j1);
		\draw (k1s) edge  node[fill=white,anchor=center, pos=0.8, inner sep =2pt] {\footnotesize $2\alpha$} (k1);
		\draw (k1t) edge  node[fill=white,anchor=center, pos=0.8, inner sep =2pt] {\footnotesize $2\alpha$} (k1);
		\draw (x1t) edge  node[fill=white,anchor=center, pos=0.5, inner sep =2pt] {\footnotesize $2\alpha$} (x1);
		\draw (x1u) edge  node[fill=white,anchor=center, pos=0.5, inner sep =2pt] {\footnotesize $2\alpha$} (x1);

		\draw (i1s) edge  node[fill=white,anchor=center, pos=0.5, inner sep =2pt] {\footnotesize $2\alpha$} (i1);
		\draw (y1u) edge  node[fill=white,anchor=center, pos=0.5, inner sep =2pt] {\footnotesize $2\alpha$} (y1);
		\draw (z1u) edge  node[fill=white,anchor=center, pos=0.5, inner sep =2pt] {\footnotesize $2\alpha$} (z1);

		\coordinate (sr) at ($(s)+(.18,-0.12)$){};
		\coordinate (sl) at ($(s)+(-.18,-0.12)$){};
		\coordinate (sa1) at ($(s)+(0.216,0.5)$){};
		\coordinate (sa2) at ($(s)+(-0.216,0.5)$){};

		\coordinate (ksr) at ($(k1s)+(.18,-0.12)$){};
		\coordinate (jsr) at ($(j1s)+(0.216,0)$){};
		\coordinate (isr) at ($(i1s)+(.18,0.12)$){};

		\filldraw[draw = none, fill opacity=0.2,fill] (sr) -- (ksr) arc (-33:147:.22cm) -- (sa1) -- (jsr) arc (0:180:0.216cm) -- (sa2) -- (isr) arc (33:213:0.216cm)-- (sl) arc (213:327:0.215cm) (sr);

		\coordinate (tr) at ($(t)+(.18,-0.12)$){};
		\coordinate (tl) at ($(t)+(-.18,-0.12)$){};
		\coordinate (ta1) at ($(t)+(0.216,0.5)$){};
		\coordinate (ta2) at ($(t)+(-0.216,0.5)$){};

		\coordinate (xtr) at ($(x1t)+(.18,-0.12)$){};
		\coordinate (ktr) at ($(k1t)+(0.216,0)$){};
		\coordinate (jtr) at ($(j1t)+(.18,0.12)$){};

		\filldraw[draw = none, fill opacity=0.2,fill] (tr) -- (xtr) arc (-33:147:.22cm) -- (ta1) -- (ktr) arc (0:180:0.216cm) -- (ta2) -- (jtr) arc (33:213:0.216cm)-- (tl) arc (213:327:0.215cm) (tr);

		\coordinate (ur) at ($(u)+(.18,-0.12)$){};
		\coordinate (ul) at ($(u)+(-.18,-0.12)$){};
		\coordinate (ua1) at ($(u)+(0.216,0.5)$){};
		\coordinate (ua2) at ($(u)+(-0.216,0.5)$){};

		\coordinate (zur) at ($(z1u)+(.18,-0.12)$){};
		\coordinate (yur) at ($(y1u)+(0.216,0)$){};
		\coordinate (xur) at ($(x1u)+(.18,0.12)$){};

		\filldraw[draw = none, fill opacity=0.2,fill] (ur) -- (zur) arc (-33:147:.22cm) -- (ua1) -- (yur) arc (0:180:0.216cm) -- (ua2) -- (xur) arc (33:213:0.216cm)-- (ul) arc (213:327:0.215cm) (ur);

			\node[regular polygon,regular polygon sides=10,minimum size=1.3cm] (p) at (8.55,-6){};
			\draw (p.corner 1) pic {complexnode};
		\node[indicatorvertex, label = 45:{\footnotesize $a_1$}] (b1) at (a) {};
		\draw (b1) edge node[fill=white,anchor=center, pos=0.5, inner sep =2pt] {\footnotesize $\alpha$} (s2);
		\draw (b1) edge node[fill=white,anchor=center, pos=0.5, inner sep =2pt] {\footnotesize $\alpha$} (t2);
		\draw (b1) edge node[fill=white,anchor=center, pos=0.5, inner sep =2pt] {\footnotesize $\alpha$} (u2);		% 
			\draw (p.corner 2) pic[rotate = 20] {complexarc};
			\draw (p.corner 3) pic[rotate = 72] {complexnode};		% 
			\draw (p.corner 4) pic[rotate = 92] {complexarc};
			\draw (p.corner 5) pic[rotate = 144] {complexnode};		% 
			\draw (p.corner 6) pic[rotate = 164] {complexarc};
			\draw (p.corner 7) pic[rotate = 216] {complexnode};		% 
			\draw (p.corner 8) pic[rotate = 236] {complexarc};
			\draw (p.corner 9) pic[rotate = 288] {complexnode};		% 
			\draw (p.corner 10) pic[rotate = 308] {complexarc};

		\coordinate (br) at ($(b1)+(.12,-0.18)$){};
		\coordinate (bl) at ($(b1)+(-.12,-0.18)$){};
		\coordinate (ba1) at ($(b1)+(0.216,0.42)$){};
		\coordinate (ba2) at ($(b1)+(-0.216,0.42)$){};
		\coordinate (ba3) at ($(b1)+(-0.648,0.216)$){};
		\coordinate (ba4) at ($(b1)+(-0.648,-0.216)$){};

		\coordinate (uur) at ($(u2)+(.12,-0.18)$){};
		\coordinate (tur) at ($(t2)+(0.216,0)$){};
		\coordinate (sur) at ($(s2)+(.12,0.18)$){};
		\coordinate (aao) at ($(a1)+(0,0.216)$){};
		\coordinate (aat) at ($(a2)+(-.18,0.12)$){};

		\filldraw[draw = none, fill opacity=0.2,fill] (br) -- (uur) arc (-67:113:.22cm) -- (ba1) -- (tur) arc (0:180:0.216cm) -- (ba2) -- (sur) arc (67:247:0.216cm)-- (bl) arc(238:312:0.218cm) -- (br);

		\node[opacity = 0.5] at (12.83,-5) {\Huge$\partition$};
	\end{tikzpicture}
		\caption[Cycling on simple asymmetric FHGs]{Schematic of the symmetric FHG of the hardness construction in Lemma~\ref{thm:hardness-convergence-symFHG}. The figure is based on the instance $(\{i,j,k,x,y,z\},\{t,u,v\})$ with $t=\{i,j,k\}$, $u=\{j,k,x\}$, and $v=\{x,y,z\}$. The non-singleton coalitions above $a_1$ of the initial partition $\partition$ are depicted in gray. The only possibility for $a_1$ to deviate is if two of $t_2$, $u_2$, or $v_2$ perform a deviation, which in turn can only happen if the coalition partners of their respective counterparts $t_1$, $u_1$, or $v_1$ have been deviating before.\label{fig:hardness_cycSymFHG}}
	\end{figure}

We claim that $(R,S)$ is a Yes-instance if and only if the IS dynamics starting with $\partition$ can cycle.

First assume that $(R,S)$ is a Yes-instance and let $S'\subseteq S$ be a partition of $R$ by the sets in $S$. We consider three stages of deviations. In the first stage, the agents in a coalition with some $s_1$ for $s\in S'$ join the agents of type $r_v$.
This will leave all agents in $\{s_1\colon s\in S'\}$ in singleton coalitions. In the second stage, agents $s_2$ for $s\in S'$ join their copies $s_1$. This leaves the agent $a_1$ with a utility of $\frac l{l+1}\alpha < 152 = \util_{a_1}(\{a_1,b_1,c_1\})$. Therefore, we can have $a_1$ join $\{b_1,c_1\}$. From now on, we consider the subgame induced by the agents in $C$. In this subgame, the partition is currently $\sigma = \{\{a_1,b_1, c_1\},\{a_2,b_2,c_2,a_3,b_3,c_3\},\{a_4,b_4,c_4,a_5,b_5,c_5\}\}$. We can cycle indefinitely by letting the agents of one coalition of size $6$ in $\sigma$ join the coalition of size $3$. More exactly, let $a_5$, $b_5$, and $c_5$ join $\{a_1,b_1,c_1\}$. Then, we reach the partition $\{\{a_1,b_1, c_1,a_5,b_5,c_5\},\{a_2,b_2,c_2,a_3,b_3,c_3\},\{a_4,b_4,c_4\}\}$ which only differs from $\sigma$ by an index shift. Hence, we can repeat the same triplet of deviations indefinitely.

Conversely, assume that there exists an infinite sequence of deviations starting from $\partition$. Agents of the type $r^s$ can perform at most one deviation joining the agent $r$ if she is still in a singleton coalition. After this deviation, they land in a coalition that cannot be altered anymore. Therefore, agents of the type $r$ for $r\in R$ will never deviate, because they cannot receive positive utility, unless joining an agent of the type $r^s$, which will never leave her coalition with $s_1$ unless joining $r$. Agents of the type $s_1$ will never perform a deviation, because every agent that leaves her coalition can never be joined again, and the agent $s_2$ can only perform a deviation by joining $s_1$. In turn, agents of the type $s_2$ can only deviate if their copy $s_1$ is forced into a singleton coalition. At this point, they can deviate exactly once, forming a coalition that can never be changed again.

Agents in $C\setminus \{a_1\}$ can only perform a deviation after $a_1$ has performed a deviation. Thus, the only possibility for an infinite length of deviations is if $a_1$ performs a deviation. Since $a_1$ cannot join the coalition of agents of the type $s_2$ again, once they left her coalition, the only possible deviation is by joining the coalition $\{b_1,c_1\}$, obtaining a utility of $152$. The utility of $a_1$ for any subset $C\subseteq \partition(a_1)$ that can arise as her coalition before she deviated for the first time is $\util_{a_1}(C) = \frac h{1+h}\alpha$ for $h = C\cap \{s_2\colon s\in S\}$. 
It follows that $a_1$ can only deviate once all except $l$ agents of the type $s_2$ have left her coalition. 

Now let $\partition'$ be the partition right before the first deviation of $a_1$ and define $S' = \{s\in S\colon s_2\in \partition'(s_1)\}$. 
Then, $S'$ consists of exactly $|R|/3$ elements. 
Recall our discussion of which deviations must have occurred such that agents of the type $s_2$ can perform a deviation. For this, all of the agents $r^s$ for $r\in s$ must have deviated. Since the agent $r$ can be joined by at most one such agent, the sets in $S'$ must be disjoint. Hence, the only way that all except $l$ agents of type $s_2$ have left $\partition(a_1)$ is if $S'$ covers precisely the elements of $R$. In total, $S'$ forms a partition of $R$. Consequently, $(R,S)$ is a Yes-instance.
\end{proof}

\hardExPathAsymFHG*

We prove the two hardness results by providing separate reductions for each problem in the next two lemmas.

\begin{lemma}\label{lem:ISnpinasymFHG}
\existpb{IS}{FHG} is \np-hard even in simple asymmetric FHGs.
\end{lemma}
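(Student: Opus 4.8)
The plan is to reduce from \textsc{Exact Cover by $3$-Sets}, following the same template as the symmetric reductions (Lemma~\ref{thm:hardness-existencepath-symposFHG} and especially Lemma~\ref{thm:hardness-convergence-symFHG}), but now exploiting the fact that a simple asymmetric FHG induced by a directed $3$-cycle has no IS partition for any initial state other than the grand coalition, and more importantly that it admits a cycling IS dynamics. The idea: given an instance $(R,S)$, I would build a ``core gadget'' consisting of a directed triangle (or a small blow-up thereof) whose only escape from cycling requires that a batch of control agents has vacated a designated coalition; these control agents are linked, element-gadget style, to the elements of $R$, so that the control agents can be pulled away exactly when some subfamily $S'\subseteq S$ exactly covers $R$. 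First I would fix the agent set: one ``selector'' agent $s$ per set $s\in S$ together with auxiliary agents $r^s$ for each $r\in s$, an element-agent $r$ for each $r\in R$, the triangle agents, and a collection of ``blocker'' agents whose role is to inflate a coalition's size so that the triangle agent's average utility there stays below the threshold it would get by re-entering the cycle. The arcs are all chosen to have weight $1$ and to be asymmetric; in particular the arc from a triangle agent into the blocker coalition points the ``wrong'' way so that the triangle agent only gets positive utility there while the coalition is small enough.

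The key steps, in order: (i) define the construction precisely and verify it is a valid simple asymmetric FHG, i.e. that for every pair at most one of the two arcs is present; (ii) show the \emph{completeness} direction — if $(R,S)$ is a Yes-instance with exact cover $S'$, exhibit a sequence of IS deviations that (a) moves each $r^s$ for $s\in S'$ onto its element-agent $r$, thereby emptying the coalitions of the selectors $s$ for $s\in S'$, (b) moves those selectors into the triangle's control coalition, (c) this raises or lowers the average so that the triangle agent is released, and (d) once released, the triangle agents cycle forever via the standard $3$-cycle propagation; (iii) show the \emph{soundness} direction — conversely, any infinite IS dynamics must eventually free a triangle agent, which forces exactly $|R|/3$ disjoint selector coalitions to have been emptied, i.e. a collection of pairwise-disjoint sets from $S$ covering all of $R$, hence an exact cover. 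For the $\convergpb{IS}{FHG}$ part I would argue the same instance works: the dynamics converges (every IS deviation sequence is finite) if and only if $(R,S)$ is a No-instance, because once we are not forced into the triangle, only finitely many ``one-shot'' deviations (each $r^s$ moving once, each selector moving once, etc.) are possible; I would make this finiteness precise with a small potential/credit argument on the one-shot agents, exactly as in the soundness analysis. Thus the single reduction yields both \np-hardness of \existpb{IS}{FHG} and \conp-hardness of \convergpb{IS}{FHG} on simple asymmetric FHGs, and in particular on simple FHGs.

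The main obstacle I expect is the calibration of the ``blocker'' coalition so that a triangle agent is released \emph{exactly} when the right number of selector agents have departed, while simultaneously ensuring (a) asymmetry of every arc, (b) that no unintended IS deviation short-circuits the gadget (e.g. a blocker agent or an $r^s$ agent deviating in a way that resets the gadget), and (c) that with simple $\{0,1\}$ weights one can actually realize the needed threshold — unlike the symmetric weighted case one cannot pick an arbitrary $\alpha$, so the thresholds must be engineered purely through coalition sizes (counts of in-neighbours divided by coalition size). A clean way around this is to replace a single triangle agent's ``comparison value'' by comparing two coalition \emph{sizes}: arrange that the agent leaving the control coalition has utility $1/(\text{size})$ on one side and $1/(\text{size}')$ on the cycling side, and make the number of departed selectors control which size is smaller. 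I would also need to double-check that after the cycling starts the auxiliary/selector agents genuinely have no further deviation available, so the cycle is not broken — this is where the asymmetry is essential and where I would spend the most care. Finally, I would remark that since simple asymmetric FHGs are a subclass of simple FHGs, the hardness transfers, giving the stated theorem.
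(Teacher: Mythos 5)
Your reduction is aimed at the wrong equivalence. The construction you describe — a triangle agent trapped in a large ``control'' coalition who is \emph{released into a cycle} exactly when enough selector agents have been pulled away, which in turn requires an exact cover — establishes ``$(R,S)$ is a Yes-instance $\iff$ a cycling sequence of IS deviations exists.'' That is precisely the statement needed for \conp-hardness of \convergpb{IS}{FHG} (it is, essentially, the construction of Lemma~\ref{lem:IScoNPinFHG}), not for \np-hardness of \existpb{IS}{FHG}. The two problems are not complements of one another: a game may simultaneously admit an infinite IS sequence \emph{and} a finite one ending in an IS partition, so ``a cycle exists'' tells you nothing about ``a path to stability exists.'' Indeed, in your own construction a converging sequence exists regardless of the X3C answer: the one-shot agents ($r^s$, selectors, etc.) can distribute themselves so that the release threshold for the triangle agent is never met (e.g.\ each selector coalition loses only some of its $r^s$ agents, or fewer than the required number of control agents depart), after which no agent has an available IS deviation and the partition is individually stable. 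Hence your reduction maps both Yes- and No-instances to instances where a path to stability exists, and proves nothing about \existpb{IS}{FHG}. Your closing claim that ``the single reduction yields both'' hardness results restates the $\forall$ equivalence twice; the $\exists$ equivalence is never established.

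To prove the lemma you must invert the gadget, which is what the paper does: the $l = |S| - |R|/3$ directed triangles \emph{start as singleton coalitions}, from which the induced dynamics is doomed to cycle forever among the $2{+}1$ splits (the grand coalition of a triangle is IS but unreachable, since the third agent is always blocked from joining a pair). The only way to stabilize triangle $v$ is for a freed selector $s$ to join $a_1^v$ via the arc $(s,a_1^v)$; a selector is freed only when all three of its $r^s$ agents have relocated onto element-slots $r_1,\dots,r_{m_r}$, each of which absorbs exactly one such agent. Counting slots, exactly $l$ selectors can be freed, and this is possible iff the complementary $|R|/3$ sets form an exact cover. Thus a converging sequence exists iff $(R,S)$ is a Yes-instance. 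Note also that this construction needs no ``blocker'' coalitions or threshold calibration — the concern you raise about realizing thresholds with $\{0,1\}$ weights disappears because the gadget is purely combinatorial (one $r^s$ per slot); thresholds of the kind you describe are only needed in the $\forall$-direction reduction.
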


\begin{proof}
	We provide a reduction from \textsc{Exact Cover by $3$-Sets}.

	Let $(R,S)$ be an instance of \textsc{Exact Cover by $3$-Sets}. We may assume that every $r\in R$ occurs in at least one set of $S$. Let $m_r : = |\{s\in S\colon r\in s\}|-1\ge 0$, and $l = |S| - |R|/3$. Define the simple asymmetric FHG based on the directed graph $G = (V,A)$, where $V = \bigcup_{r\in R}\{r_1,\dots, r_{m_r}\} \cup S\cup \bigcup_{s\in S} \{r^s\colon r\in s\}\cup \bigcup_{v = 1}^l \{a_1^v,a_2^v,a_3^v\}$ and $A = \bigcup_{s\in S}(\{(s,r^s),(r^s,r_1),\dots,(r^s,r_{m_r})\colon r\in S\}\cup \{(s,a_1^1),\dots,(s,a_1^l)\})\cup \bigcup_{v=1}^l \{(a_1^v,a_2^v),(a_2^v,a_3^v),(a_3^v,a_1^v)\}$. 
	
	Finally, define the partition $\partition = \bigcup_{a\in V\setminus (S\cup \{r^s\colon s\in S, r\in s\})} \{\{a\}\}\cup \{\{s,i^s,j^s,k^s\}\colon \{i,j,k\} = s\in S\}$. 
	The reduction is illustrated in Figure~\ref{fig:hardness_convSimpleAsymFHG}. 
	It depicts the simple asymmetric directed graph corresponding to a small source instance together with the associated initial partition.

	\begin{figure*}
		\centering
		\begin{tikzpicture}[auto,scale = 0.8]

		\pgfmathsetmacro{\yshift}{0.11}
		\pgfmathsetmacro{\yshiftt}{0.08}
		\pgfmathsetmacro{\tinyshift}{0.03}

		\node[protovertex,label = 90:\footnotesize $i^t$] (i1s) at (0.86,0.5){};

		\node[protovertex,label = 90:\footnotesize $j^t$] (j1s) at (2.86,0.5){};

		\node[protovertex,label = 90:\footnotesize $k^t$] (k1s) at (4.86,0.5){};

		\node[protovertex,label = 90:\footnotesize $j^u$] (j1t) at (6.86,0.5){};

		\node[protovertex,label = 90:\footnotesize $k^u$] (k1t) at (8.86,0.5){};

		\node[protovertex,label = 90:\footnotesize $x^u$] (x1t) at (10.86,0.5){};

		\node[protovertex,label = 90:\footnotesize $x^v$] (x1u) at (12.86,0.5){};

		\node[protovertex,label = 90:\footnotesize $y^v$] (y1u) at (14.86,0.5){};

		\node[protovertex,label = 90:\footnotesize $z^v$] (z1u) at (16.86,0.5){};

		\node (p2) at (0,0){};
		\node (p3) at (0,1){};
		\node (p1) at (0.86,0.5){};

		\node[protovertex,label = 180:{\footnotesize $\{i,j,k\} = t$}] (s) at (2.83,-1.5){};
		\node[protovertex,label = 180:{\footnotesize $\{j,k,x\} = u$}] (t) at (8.83,-1.5){};
		\node[protovertex,label = 0:{\footnotesize $v = \{x,y,z\}$}] (u) at (14.83,-1.5){};

		\foreach \x/\y/\z in {i/s/0,j/s/1.6,k/s/3.2,j/t/4.8,k/t/6.4,x/t/8,x/u/9.6,y/u/11.2,z/u/12.8}{
		\draw (\y) edge[->] (\x1\y);
		}

		\node[protovertex,label = {[shift={(-.1,-0.05)}]45:{\footnotesize $a_1^1$}}] (a1) at (9.25,-3){};
		\node[protovertex,label = 180:{\footnotesize $a_2^1$}] (a2) at (8.85,-3.75){};
		\node[protovertex,label = 0:{\footnotesize $a_3^1$}] (a3) at (9.65,-3.75){};

		\draw (a1) edge[<-] (s);
		\draw (a1) edge[<-] (t);
		\draw (a1) edge[<-] (u);
		\draw (a1) edge[->] (a2);
		\draw (a2) edge[->] (a3);
		\draw (a3) edge[->] (a1);

		\node at (12.7,-3.38) {$|S|-\frac{|R|}3$ many};

		\node[protovertex,label = {[shift={(0,-0.05)}]90:{\footnotesize $j_1$}}] (j1) at (5.4,1.7) {};
		\node[protovertex,label = {[shift={(0,-0.05)}]90:{\footnotesize $k_1$}}] (k1) at (7.4,1.7){};
		\node[protovertex,label = {[shift={(0,-0.05)}]90:{\footnotesize $x_1$}}] (x1) at (12,1.7){};

		\draw (j1s) edge[->] (j1);
		\draw (j1t) edge[->] (j1);
		\draw (k1s) edge[->] (k1);
		\draw (k1t) edge[->] (k1);
		\draw (x1t) edge[->] (x1);
		\draw (x1u) edge[->] (x1);

		\coordinate (sr) at ($(s)+(.18,-0.12)$){};
		\coordinate (sl) at ($(s)+(-.18,-0.12)$){};
		\coordinate (sa1) at ($(s)+(0.216,0.5)$){};
		\coordinate (sa2) at ($(s)+(-0.216,0.5)$){};

		\coordinate (ksr) at ($(k1s)+(.18,-0.12)$){};
		\coordinate (jsr) at ($(j1s)+(0.216,0)$){};
		\coordinate (isr) at ($(i1s)+(.18,0.12)$){};

		\filldraw[draw = none, fill opacity=0.2,fill] (sr) -- (ksr) arc (-33:147:.22cm) -- (sa1) -- (jsr) arc (0:180:0.216cm) -- (sa2) -- (isr) arc (33:213:0.216cm)-- (sl) arc (213:327:0.215cm) (sr);

		\coordinate (tr) at ($(t)+(.18,-0.12)$){};
		\coordinate (tl) at ($(t)+(-.18,-0.12)$){};
		\coordinate (ta1) at ($(t)+(0.216,0.5)$){};
		\coordinate (ta2) at ($(t)+(-0.216,0.5)$){};

		\coordinate (xtr) at ($(x1t)+(.18,-0.12)$){};
		\coordinate (ktr) at ($(k1t)+(0.216,0)$){};
		\coordinate (jtr) at ($(j1t)+(.18,0.12)$){};

		\filldraw[draw = none, fill opacity=0.2,fill] (tr) -- (xtr) arc (-33:147:.22cm) -- (ta1) -- (ktr) arc (0:180:0.216cm) -- (ta2) -- (jtr) arc (33:213:0.216cm)-- (tl) arc (213:327:0.215cm) (tr);

		\coordinate (ur) at ($(u)+(.18,-0.12)$){};
		\coordinate (ul) at ($(u)+(-.18,-0.12)$){};
		\coordinate (ua1) at ($(u)+(0.216,0.5)$){};
		\coordinate (ua2) at ($(u)+(-0.216,0.5)$){};

		\coordinate (zur) at ($(z1u)+(.18,-0.12)$){};
		\coordinate (yur) at ($(y1u)+(0.216,0)$){};
		\coordinate (xur) at ($(x1u)+(.18,0.12)$){};

		\filldraw[draw = none, fill opacity=0.2,fill] (ur) -- (zur) arc (-33:147:.22cm) -- (ua1) -- (yur) arc (0:180:0.216cm) -- (ua2) -- (xur) arc (33:213:0.216cm)-- (ul) arc (213:327:0.215cm) (ur);

		\node[opacity = 0.5] at (11.83,-1.5) {\Huge$\partition$};
	\end{tikzpicture}
		\caption[Convergence on simple asymmetric FHGs]{Schematic of the simple asymmetric FHG of the hardness construction in Lemma~\ref{lem:ISnpinasymFHG}. The figure is based on the instance $(\{i,j,k,x,y,z\},\{t,u,v\})$ with $t=\{i,j,k\}$, $u=\{j,k,x\}$, and $v=\{x,y,z\}$. The non-singleton coalitions of the initial partition $\partition$ are depicted in gray.\label{fig:hardness_convSimpleAsymFHG}}
	\end{figure*}
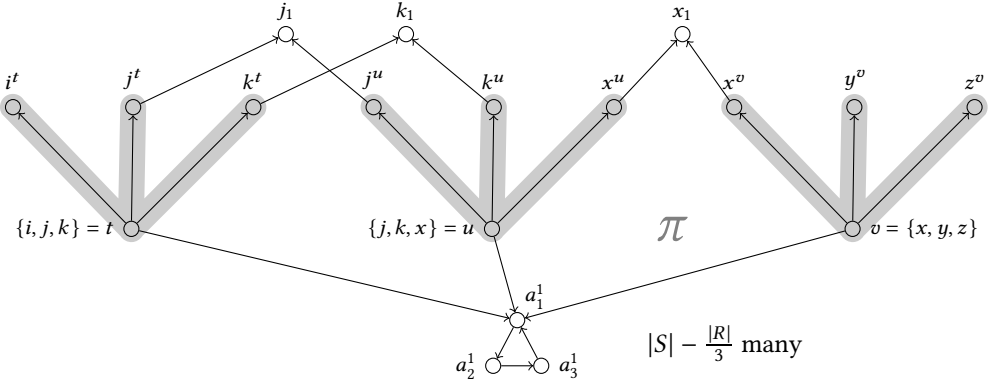

We claim that $(R,S)$ is a Yes-instance if and only if the IS dynamics starting with $\partition$ can converge.

Assume first that $(R,S)$ is a Yes-instance and let $S'\subseteq S$ be a partition of $R$ by sets in $S$. Consider the following deviations. First, the agents in the set $\bigcup_{s\in S\setminus S'} \{r^s\colon r\in s\}$ join one by one the agents in $\bigcup_{r\in R}\{r_1,\dots, r_{m_r}\}$ to end up in coalitions of size $2$. Since $S'$ covers every element of $R$ exactly once, this step can be performed. Next, the agents $\{s\in S\setminus S'\}$ join the agents $\{a_1^1,\dots, a_1^l\}$ in an arbitrary bijective way. Finally, agents $a_2^v$ join agents $a_3^v$. It is quickly checked that the resulting partition is IS.

Conversely, assume that there exists a converging sequence of deviations starting with the partition $\partition$ and terminating in partition $\partition^*$. Then, one agent of every set $\{a_1^v,a_2^v,a_3^v\}$ must form a coalition with an agent outside of this set. The only possibility for this is if $a_1^v$ is joined by an agent of type $s$ corresponding to a set in $s\in S$. Every such agent can only perform a deviation if all the other agents in her initial coalition have deviated before. 
Similar to the proof of Lemma~\ref{thm:hardness-convergence-symFHG}, each agent of the type $r_j$ for $r\in R$ and $j\in [m_r]$ can only joined by exactly one agent of the type $r^s$, while this is the only possible deviation that agents of the type $r_j$ can do.
Hence, the only possibility that $l$ agents of the type $s$ deviate to break cycling is if they correspond to $l$ sets from $S$ which cover each element $r\in R$ for $m_r$ times. Hence, the remaining sets in $S$ form exactly a partition of the elements in $R$. In other words, the set $S' = \{s\in S\colon \partition(s)=\partition^*(s)\}$ forms a partition of $R$. 
Hence, $(R,S)$ is a Yes-instance.
\end{proof}

\begin{lemma}\label{lem:IScoNPinFHG}
\convergpb{IS}{FHG} is \conp-hard even in simple asymmetric FHGs.
\end{lemma}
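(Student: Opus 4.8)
As before, the plan is to establish \np-hardness of the complementary problem: given a simple asymmetric FHG and an initial partition $\spartition$, decide whether there is a sequence of IS deviations from $\spartition$ that cycles. The reduction is from \textsc{Exact Cover by $3$-Sets}, and it combines the ``capacity gadget'' of the proof of Lemma~\ref{lem:ISnpinasymFHG} with a cyclic gadget whose activation is governed by a fine-grained arithmetic threshold, in the spirit of the proof of Lemma~\ref{thm:hardness-convergence-symFHG}. Given an instance $(R,S)$ with $l := |S| - |R|/3$ (assume w.l.o.g.\ $l\ge 1$), I would keep the agents $r_1,\dots,r_{m_r}$ (copies of $r\in R$ with $m_r = |\{s\in S: r\in s\}|-1$), the agents $r^s$ ($s\in S$, $r\in s$) and the set-agents $s$, with arcs $(s,r^s)$ and $(r^s,r_j)$, together with the initial coalitions $\{s,i^s,j^s,k^s\}$ for $s=\{i,j,k\}$. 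Exactly as in Lemma~\ref{lem:ISnpinasymFHG}, each copy $r_j$ can absorb at most one $r^s$-agent, a set-agent $s$ becomes \emph{freed} (its initial coalition dissolves) only once all three $r^s$ with $r\in s$ have escaped, and the freed set-agents correspond to the complement of a cover of $R$; hence at most $l$ set-agents can ever be freed along any sequence, with equality achievable if and only if $(R,S)$ admits an exact cover, and at most $l-1$ otherwise. This portion of the digraph is acyclic and contributes no cycles on its own.

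On top of this I would attach a cyclic gadget consisting of a directed triangle $a_1\to a_2\to a_3\to a_1$, a distinguished agent $p$, a set $H=\{h_s:s\in S\}$ of ``holder'' agents, a set $F$ of $|R|/3$ ``padding'' agents, and a singleton target $t$, with arcs $p\to h_s$ for all $s$, $p\to t$, $h_s\to s$ for all $s$, plus further arcs whose purpose is to let $p$'s departure unlock the triangle. Initially $p$, all of $H$ and all of $F$ form one big coalition $\{p\}\cup H\cup F$, while $\{a_1,a_2,a_3\}$ sits in a locally stable ``locked'' configuration. The role of the padding is purely arithmetic: while $h$ holders remain in the big coalition, $p$'s utility is $\frac{h}{\,h+|R|/3+1\,}$, which is $\ge \frac12$ precisely when $h\ge |R|/3+1$ and drops strictly below $\frac12$ (so that $p$ strictly prefers moving to the inert pair $\{p,t\}$) precisely when $h\le |R|/3$, i.e.\ when at least $l$ holders have departed. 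A holder $h_s$ can detach only by joining its set-agent $s$, which is only possible once $s$ is freed; since under individual stability the abandoned coalition has no veto, each freed set-agent lets exactly one holder leave. Therefore $p$ can be driven out of the big coalition if and only if $l$ set-agents are freed, i.e.\ if and only if $(R,S)$ has an exact cover; and the remaining arcs are to be chosen so that once $p$ is gone the leftover coalition of holders and padding no longer obstructs the triangle, which then admits the infinite IS dynamics $\{a_1,a_2\},\{a_3\}\to\{a_2,a_3\},\{a_1\}\to\{a_1,a_3\},\{a_2\}\to\cdots$ that never re-involves $p$ or the big coalition.

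For correctness: if $(R,S)$ is a Yes-instance, one first runs the capacity gadget to free all $l$ set-agents, then detaches all holders, then moves $p$ to $\{p,t\}$, then unlocks the triangle; this is a finite prefix of IS deviations after which the triangle cycles forever, so a cycling sequence exists. If $(R,S)$ is a No-instance, at most $l-1<l$ set-agents can ever be freed, hence at most $l-1$ holders can leave the big coalition, $p$ never strictly prefers leaving, the triangle stays locked, and the rest of the game is acyclic; consequently every IS deviation sequence terminates, i.e.\ the dynamics converges.

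The hard part is the design and verification of the cyclic gadget. Unlike Lemma~\ref{thm:hardness-convergence-symFHG}, where the crucial vertex \emph{re-enters} a coalition of the cyclic core, here no such move exists: in a simple \emph{asymmetric} FHG an agent joining a coalition strictly lowers the utility of every one of its out-neighbours already there, so a vertex can never profitably join a pair or a transitive sub-clique of the core. Hence the gadget must be unlocked by having $p$ \emph{leave} a padded holder-coalition of precisely calibrated size, while the vacated coalition simultaneously stops obstructing the triangle; making this last step work (while keeping the locked configuration locally stable beforehand) is where most of the case analysis will go. A secondary technicality is to show that, in the No-instance case, the acyclic remainder of the game converges from the arbitrary initial partition, which I expect follows by adapting the nested-potential argument of Proposition~\ref{thm:convergence-asymmetric+acyclicFHG}.
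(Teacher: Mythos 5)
Your overall architecture matches the paper's proof: the capacity gadget from Lemma~\ref{lem:ISnpinasymFHG} counting how many set-agents can ever be freed, together with a padded threshold agent ($p$ sitting with $|S|$ holders and $|R|/3$ padding agents) whose fractional utility drops strictly below $\frac12$ precisely when $l=|S|-|R|/3$ holders have escaped, i.e.\ precisely when an exact cover exists, is exactly the paper's agent $b_1$ together with the agents $s_2$ and $a_1,\dots,a_{|R|/3}$, down to the arithmetic $\frac{h}{h+|R|/3+1}<\frac12\iff h\le |R|/3$. The gap is the part you yourself flag as unresolved: the cyclic gadget. You send $p$ to an inert pair $\{p,t\}$ and want her departure to ``unlock'' a separate directed triangle $\{a_1,a_2,a_3\}$ via unspecified further arcs. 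In a simple asymmetric FHG this is genuinely problematic: if the triangle starts as three singletons it cycles immediately regardless of the instance, and if it starts as the coalition $\{a_1,a_2,a_3\}$ every vertex has utility $\frac13$ and, as you yourself observe, no outsider can be admitted without strictly hurting one of its out-neighbours, so it is unclear what event could ever unlock it. Since you exhibit no arcs that keep the triangle locked in the No-case and unlock it in the Yes-case, the reduction is incomplete at its crux.

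The paper dissolves this difficulty by not separating the threshold agent from the cycle: $b_1$ is itself a vertex of the directed triangle $b_1\to b_2\to b_3\to b_1$, with $b_2$ and $b_3$ initially singletons. Then $b_2$ and $b_3$ can each move at most once before $b_1$ acts, and the only deviation ever available to $b_1$ is joining $b_2$ for utility exactly $\frac12$, which is improving if and only if her utility in the big coalition has dropped strictly below $\frac12$. Once she moves, the size-$2$ coalition propagates around the triangle forever, so no unlocking mechanism is needed; I suggest adopting this identification of $p$ with a triangle vertex. A secondary issue: in the No-case you invoke Proposition~\ref{thm:convergence-asymmetric+acyclicFHG}, but that result assumes the dynamics starts from the \singleton (so that all coalitions are transitive tournaments), which fails for your initial coalitions $\{s,i^s,j^s,k^s\}$, and your digraph is in any case not acyclic because of the triangle. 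The paper instead proves termination directly, showing agent type by agent type that everyone other than $b_1$ can deviate only a bounded number of times unless $b_1$ deviates first; you will need such a direct argument rather than the black-box appeal.
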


\begin{proof}
For this purpose, we prove the \np-hardness of the complement problem, which asks whether there exists a cycle of IS deviations. 
	We provide a reduction from \textsc{Exact Cover by $3$-Sets}.

	Let $(R,S)$ be an instance of \textsc{Exact Cover by $3$-Sets}. We may assume that every $r\in R$ occurs in at least one set of $S$. Let $m_r : = |\{s\in S\colon r\in s\}|-1\ge 0$, and $l = |R|/3$. Define the simple asymmetric FHG based on the graph $G = (V,A)$, where $V = \{r_1,\dots, r_{m_r}\colon r\in R\} \cup \{r^s\colon s\in S, r\in s\} \cup \{s_1,s_2\colon s\in S\} \cup \{b_1,b_2,b_3\} \cup \{a_1,\dots, a_l\}$, and $A = \bigcup_{s\in S} (\{(r^s,r_1),\dots, (r^s,r_{m_s}),(s_1,r^s) \colon r\in s\} \cup \{(s_1,s_2)$,$(b_1,s_2)\}) \cup \{(a_v,b_1)\colon v = 1,\dots, l\} \cup \{(b_2,b_3),(b_3,b_1)\}$. 
	
	Finally, define $\partition = \{\{r_1\},\dots,\{r_{m_r}\}\colon r\in R\}\cup \{\{s_1,i^s,j^s,k^s\}\colon \{i,j,k\} = s\in S\} \cup \{\{b_1\}\cup\{s_2\colon s\in S\}\cup \{a_1,\dots, a_l\}\}\cup \{\{b_2\},\{b_3\}\}$. 
	The reduction is illustrated in Figure~\ref{fig:hardness_cycSimpleAsymFHG}. 

	\begin{figure*}
		\centering
		\begin{tikzpicture}[auto,scale = 0.8]

		\pgfmathsetmacro{\yshift}{0.11}
		\pgfmathsetmacro{\yshiftt}{0.08}
		\pgfmathsetmacro{\tinyshift}{0.03}

		\node[protovertex,label = 90:\footnotesize $i^t$] (i1s) at (0.86,0.5){};

		\node[protovertex,label = 90:\footnotesize $j^t$] (j1s) at (2.86,0.5){};

		\node[protovertex,label = 90:\footnotesize $k^t$] (k1s) at (4.86,0.5){};

		\node[protovertex,label = 90:\footnotesize $j^u$] (j1t) at (6.86,0.5){};

		\node[protovertex,label = 90:\footnotesize $k^u$] (k1t) at (8.86,0.5){};

		\node[protovertex,label = 90:\footnotesize $x^u$] (x1t) at (10.86,0.5){};

		\node[protovertex,label = 90:\footnotesize $x^v$] (x1u) at (12.86,0.5){};

		\node[protovertex,label = 90:\footnotesize $y^v$] (y1u) at (14.86,0.5){};

		\node[protovertex,label = 90:\footnotesize $z^v$] (z1u) at (16.86,0.5){};

		\node[protovertex,label = 180:{\footnotesize $t_1$}] (s) at (2.83,-1.5){};
		\node[protovertex,label = 180:{\footnotesize $u_1$}] (t) at (8.83,-1.5){};
		\node[protovertex,label = 180:{\footnotesize $v_1$}] (u) at (14.83,-1.5){};
		\node[protovertex,label = 180:{\footnotesize $t_2$}] (s2) at (2.83,-2.5){};
		\node[protovertex,label = 180:{\footnotesize $u_2$}] (t2) at (8.83,-2.5){};
		\node[protovertex,label = 180:{\footnotesize $v_2$}] (u2) at (14.83,-2.5){};

		\draw (s) edge[<-] node[right,midway] {\footnotesize $t= \{i,j,k\}$} (s2) ;
		\draw (t) edge[<-] node[right,midway] {\footnotesize $u= \{j,k,x\}$} (t2) ;
		\draw (u) edge[<-] node[right,midway] {\footnotesize $v= \{x,y,z\}$} (u2) ;

		\foreach \x/\y in {i/s,j/s,k/s,j/t,k/t,x/t,x/u,y/u,z/u}{
		\draw (\y) edge[->] (\x1\y);
		}

		\node[protovertex,label = {[shift={(-.1,-0.05)}]45:{\footnotesize $b_1$}}] (b1) at (8.83,-4.5){};
		\node[protovertex,label = 0:{\footnotesize $b_2$}] (b2) at (9.33,-5.33){};
		\node[protovertex,label = 0:{\footnotesize $b_3$}] (b3) at (9.83,-4.5){};

		\node[protovertex,label = 180:{\footnotesize $a_1$}] (a1) at (7.03,-4.5){};
		\node[protovertex,label = 180:{\footnotesize $a_2$}] (a2) at (7.33,-5.5){};

		\draw (b1) edge[->] (s2);
		\draw (b1) edge[->] (t2);
		\draw (b1) edge[->] (u2);
		\draw (b1) edge[->] (b2);
		\draw (b2) edge[->] (b3);
		\draw (b3) edge[->] (b1);
		\draw (b1) edge[<-] (a1);
		\draw (b1) edge[<-] (a2);

		\node[protovertex,label = {[shift={(0,-0.05)}]90:{\footnotesize $j_1$}}] (j1) at (5.4,1.7) {};
		\node[protovertex,label = {[shift={(0,-0.05)}]90:{\footnotesize $k_1$}}] (k1) at (7.4,1.7){};
		\node[protovertex,label = {[shift={(0,-0.05)}]90:{\footnotesize $x_1$}}] (x1) at (11.8,1.7){};

		\draw (j1s) edge[->] (j1);
		\draw (j1t) edge[->] (j1);
		\draw (k1s) edge[->] (k1);
		\draw (k1t) edge[->] (k1);
		\draw (x1t) edge[->] (x1);
		\draw (x1u) edge[->] (x1);

		\coordinate (sr) at ($(s)+(.18,-0.12)$){};
		\coordinate (sl) at ($(s)+(-.18,-0.12)$){};
		\coordinate (sa1) at ($(s)+(0.216,0.5)$){};
		\coordinate (sa2) at ($(s)+(-0.216,0.5)$){};

		\coordinate (ksr) at ($(k1s)+(.18,-0.12)$){};
		\coordinate (jsr) at ($(j1s)+(0.216,0)$){};
		\coordinate (isr) at ($(i1s)+(.18,0.12)$){};

		\filldraw[draw = none, fill opacity=0.2,fill] (sr) -- (ksr) arc (-33:147:.22cm) -- (sa1) -- (jsr) arc (0:180:0.216cm) -- (sa2) -- (isr) arc (33:213:0.216cm)-- (sl) arc (213:327:0.215cm) (sr);

		\coordinate (tr) at ($(t)+(.18,-0.12)$){};
		\coordinate (tl) at ($(t)+(-.18,-0.12)$){};
		\coordinate (ta1) at ($(t)+(0.216,0.5)$){};
		\coordinate (ta2) at ($(t)+(-0.216,0.5)$){};

		\coordinate (xtr) at ($(x1t)+(.18,-0.12)$){};
		\coordinate (ktr) at ($(k1t)+(0.216,0)$){};
		\coordinate (jtr) at ($(j1t)+(.18,0.12)$){};

		\filldraw[draw = none, fill opacity=0.2,fill] (tr) -- (xtr) arc (-33:147:.22cm) -- (ta1) -- (ktr) arc (0:180:0.216cm) -- (ta2) -- (jtr) arc (33:213:0.216cm)-- (tl) arc (213:327:0.215cm) (tr);

		\coordinate (ur) at ($(u)+(.18,-0.12)$){};
		\coordinate (ul) at ($(u)+(-.18,-0.12)$){};
		\coordinate (ua1) at ($(u)+(0.216,0.5)$){};
		\coordinate (ua2) at ($(u)+(-0.216,0.5)$){};

		\coordinate (zur) at ($(z1u)+(.18,-0.12)$){};
		\coordinate (yur) at ($(y1u)+(0.216,0)$){};
		\coordinate (xur) at ($(x1u)+(.18,0.12)$){};

		\filldraw[draw = none, fill opacity=0.2,fill] (ur) -- (zur) arc (-33:147:.22cm) -- (ua1) -- (yur) arc (0:180:0.216cm) -- (ua2) -- (xur) arc (33:213:0.216cm)-- (ul) arc (213:327:0.215cm) (ur);

		\coordinate (br) at ($(b1)+(.12,-0.18)$){};
		\coordinate (bl) at ($(b1)+(-.12,-0.18)$){};
		\coordinate (ba1) at ($(b1)+(0.216,0.32)$){};
		\coordinate (ba2) at ($(b1)+(-0.216,0.32)$){};
		\coordinate (ba3) at ($(b1)+(-1.248,0.216)$){};
		\coordinate (ba4) at ($(b1)+(-0.648,-0.216)$){};

		\coordinate (uur) at ($(u2)+(.12,-0.18)$){};
		\coordinate (tur) at ($(t2)+(0.216,0)$){};
		\coordinate (sur) at ($(s2)+(.12,0.18)$){};
		\coordinate (aao) at ($(a1)+(0,0.216)$){};
		\coordinate (aat) at ($(a2)+(-.18,0.12)$){};

		\filldraw[draw = none, fill opacity=0.2,fill] (br) -- (uur) arc (-67:113:.22cm) -- (ba1) -- (tur) arc (0:180:0.216cm) -- (ba2) -- (sur) arc (67:247:0.216cm)-- (ba3) -- (aao) arc (90:270:0.216cm) --(ba4) -- (aat) arc (123:303:0.216cm) -- (br);

		\node[opacity = 0.5] at (11.83,-2) {\Huge$\partition$};
	\end{tikzpicture}
		\caption[Cycling on simple asymmetric FHGs]{Schematic of the simple asymmetric FHG of the hardness construction in Lemma~\ref{lem:IScoNPinFHG}. The figure is based on the instance $(\{i,j,k,x,y,z\},\{t,u,v\})$ with $t=\{i,j,k\}$, $u=\{j,k,x\}$, and $v=\{x,y,z\}$. The non-singleton coalitions of the initial partition $\partition$ are depicted in gray. The only possibility for $b_1$ to deviate is if one of $t_2$, $u_2$, or $v_2$ performs a deviation, which in turn can only happen if the coalition partners of her respective counterparts $t_1$, $u_1$, or $v_1$ have been deviating before.\label{fig:hardness_cycSimpleAsymFHG}}
	\end{figure*}

We claim that $(R,S)$ is a Yes-instance if and only if the IS dynamics starting with $\partition$ can cycle.

First assume that $(R,S)$ is a Yes-instance and let $S'\subseteq S$ be a partition of $R$ by the sets in $S$. We consider three stages of deviations. In the first stage, the agents in a coalition with some $s_1$ for $s\notin S'$ join the agents of type $r_v$.
This will leave all agents in $\{s_1\colon s\notin S'\}$ in singleton coalitions. In the second stage, agents $s_2$ for $s\notin S'$ join their copies $s_1$. This leaves the agent $b_1$ with a utility of $l/(2l+1)< \frac 12$. Hence, we start cycling in the final stage by having $b_1$ join $b_2$, $b_2$ join $b_3$, $b_3$ join $b_1$, and repeating these deviations.

Now, assume that there exists an infinite sequence of deviations starting from $\partition$. Agents of the type $r_v$ for $v = 1,\dots, m_r$ will never deviate, because they cannot receive positive utility. Agents of the type $r^s$ for $s\in S, r\in s$ can only deviate once to join an agent of the former type. Then, no agent can join their coalition, because the only agents $r^s$ would allow cannot deviate. In addition, $r^s$ can never improve her utility again. Hence, this coalition will stay the same for the remainder of the dynamics. Agents of the type $s_1$ will never deviate, because they are initially in their best coalition, and every agent that leaves can never be joined again. Next, agents of the type $s_2$ can only deviate if their copy $s_1$ is forced into a singleton coalition. At this point, they can deviate exactly once, forming a coalition that can never be changed again. Hence, after an agent $s_2$ performs a deviation, the agent $s_1$ has utility $0$ for the remainder of the dynamics. We refer to this fact as $(*)$. Moreover, since there are only $3|S|-|R|$ agents of the type $r_v$, at most $\frac{3|S|-|R|}{3} = |S| - \frac{|R|}3$ agents of the type $s_2$ can deviate, which means that at least $\frac{|R|}3$ agents of the type $s_1$ maintain a positive utility. We refer to this fact as $(\Delta)$.

Agents $a_v$ for $1\le v\le l$ can never deviate unless $b_1$ leaves their coalition. Agents $b_2$ and $b_3$ can only be involved in a deviation at most once until $b_1$ forms a coalition of her own or performs a deviation. Since $b_1$ can never form a coalition of her own, the only possibility for an infinite length of deviations is if $b_1$ performs a deviation. 
Since $b_1$ cannot join the coalition of agents of the type $s_2$ again, once they left her coalition, the only possible deviation is by joining the agent $b_2$ obtaining a utility of $\frac 12$. The utility of $b_1$ for any subset $C\subseteq \partition(b_1)$ that can arise before she deviated for the first time is $\util_{b_1}(C) = \frac h{l+1+h}$ for $h = |C\cap \{s_2\colon s\in S\}|$. 
It follows that $b_1$ can only deviate once all except $l$ agents of the type $s_2$ have left her coalition. 

Now let $\partition'$ be the partition right before the first deviation of $b_1$ and define $S' = \{s\in S\colon \util_{s_1}(\partition')>0\}$. By $(*)$ and because at least $|S| - l$ agents have left $b_1$, we know that $|S'| \le |S| - (|S| - l) = |R|/3$. This, together with $(\Delta)$ implies that $S'$ consists of exactly $|R|/3$ elements. By the distribution of agents of type $r_v$, the only way that all except $l$ agents of type $s_2$ have left $\partition(b_1)$ is if $S'$ covers precisely the elements of $R$. Hence, $S'$ forms a partition of $R$. Consequently, $(R,S)$ is a Yes-instance.
\end{proof}

\hardnessSimpleFHG*
\begin{proof}
	The reduction is from \textsc{Exact Cover by $3$-Sets}.

	Let an instance $(R,S)$ of \textsc{Exact Cover by $3$-Sets} be given and set $l= |S| -\frac {|R|} 3$. We construct the simple FHG induced by the following directed graph $G = (V,A)$. Let $V = \{r_1,r_2,r_3\colon r\in R\}\cup\{s_v^i\colon v = 1,2, i\in s\textnormal{ for } s\in S\}\cup \{t_v^w\colon v = 1,2,3, w = 1,\dots, l\}$ and edges given by $A = \{(r_1,r_2),(r_2,r_3),(r_3,r_1)\colon r\in R\}\cup\{(r_1,s_1^r),(s_1^r, r_1)\colon r\in s\textnormal{ for }s\in S\}\cup \{(s_1^i,s_2^i),(s_2^i,s_1^i)\colon i\in s\textnormal{ for }s\in S\}\cup \{(t_1^w,t_2^w),(t_2^w,t_3^w),(t_3^w,t_1^w)\colon w = 1,\dots, l\}\cup\{(t_1^w,s_1^i)\colon w = 1,\dots, l, i\in s\textnormal{ for } s\in S\}$. The construction is illustrated in Figure~\ref{fig:hardness_existSimpleFHG}. We define $T^w = \{t_1^w,t_2^w,t_3^w\}$ for $w = 1,\dots, l$.

	\begin{figure}
		\centering
		\begin{tikzpicture}[auto]

		\pgfmathsetmacro{\yshift}{0}
		\pgfmathsetmacro{\xshift}{0.1}
		\pgfmathsetmacro{\yshiftt}{0}
		\pgfmathsetmacro{\tinyshift}{0}

		\node[protovertex,label = {[shift={(\xshift,\yshift)}]180:\footnotesize $i_2$}] (i2) at (0,0){};
		\node[protovertex,label = {[shift={(-\xshift,-\yshift)}]0:\footnotesize $i_3$}] (i3) at (1,0){};
		\node[protovertex,label = 0:\footnotesize $i_1$] (i1) at (0.5,-0.83){};

		\node[protovertex,label = {[shift={(\xshift,\yshift)}]180:\footnotesize $j_2$}] (j2) at (2,0){};
		\node[protovertex,label = {[shift={(-\xshift,-\yshift)}]0:\footnotesize $j_3$}] (j3) at (3,0){};
		\node[protovertex,label = 0:\footnotesize $j_1$] (j1) at (2.5,-0.83){};

		\node[protovertex] (x2) at (4,0){};
		\node[protovertex] (x3) at (5,0){};
		\node[protovertex] (x1) at (4.5,-0.83){};

		\node[protovertex,label = {[shift={(\xshift,\yshift)}]180:\footnotesize $k_2$}] (k2) at (6,0){};
		\node[protovertex,label = {[shift={(-\xshift,-\yshift)}]0:\footnotesize $k_3$}] (k3) at (7,0){};
		\node[protovertex,label = 0:\footnotesize $k_1$] (k1) at (6.5,-0.83){};

		\node[protovertex] (y2) at (8,0){};
		\node[protovertex] (y3) at (9,0){};
		\node[protovertex] (y1) at (8.5,-0.83){};

		\node[protovertex] (z2) at (10,0){};
		\node[protovertex] (z3) at (11,0){};
		\node[protovertex] (z1) at (10.5,-0.83){};

		\foreach \x in {i,j,k,x,y,z}
		{
		\draw (\x1) edge[->] (\x2);
		\draw (\x2) edge[->] (\x3);
		\draw (\x3) edge[->] (\x1);
		}

		\node[protovertex,label = {[shift={(0.2,0)}]90:\footnotesize $u^i_1$}] (si1) at (1.5,-3.5){};
		\node[protovertex,label = {[shift={(-\xshift,0.2)}]0:\footnotesize $u^j_1$}] (sj1) at (3.75,-2.5){};
		\node[protovertex,label = {[shift={(0.3,-.1)}]90:\footnotesize $u^k_1$}] (sk1) at (6,-3.5){};
		\node[protovertex,label = {[shift={(\xshift,0)}]180:\footnotesize $u^i_2$}] (si2) at (.5,-3.5){};
		\node[protovertex,label = {[shift={(-\xshift,0)}]0:\footnotesize $u^j_2$}] (sj2) at (3.75,-1.5){};
		\node[protovertex,label = {[shift={(-\xshift,0)}]0:\footnotesize $u^k_2$}] (sk2) at (7,-3.5){};

		\node at (8,-2.5) {\footnotesize $u = \{i,j,k\}$};

		\draw (si1) edge[<->] (sj1);
		\draw (sj1) edge[<->] (sk1);
		\draw (sk1) edge[<->] (si1);

		\draw (si1) edge[<->] (i1);
		\draw (sj1) edge[<->] (j1);
		\draw (sk1) edge[<->] (k1);

		\draw (si1) edge[<->] (si2);
		\draw (sj1) edge[<->] (sj2);
		\draw (sk1) edge[<->] (sk2);

		\node[protovertex,label = {[shift={(\xshift,\yshift)}]180:{\footnotesize $t_1^1$}}] (t11) at (1.5,-5){};
		\node[protovertex,label = {[shift={(\xshift,\yshift)}]180:\footnotesize $t_2^1$}] (t12) at (1,-5.83){};
		\node[protovertex,label = {[shift={(-\xshift,\yshift)}]0:\footnotesize $t_3^1$}] (t13) at (2,-5.83){};
		\node[protovertex,label = {180:{\footnotesize $t_1^2$}}] (t21) at (3.7,-5){};
		\node[protovertex,label = {[shift={(\xshift,\yshift)}]180:\footnotesize $t_2^2$}] (t22) at (3.2,-5.83){};
		\node[protovertex,label = {[shift={(-\xshift,\yshift)}]0:\footnotesize $t_3^2$}] (t23) at (4.2,-5.83){};
		\node[protovertex,label = {[shift={(-\xshift,\yshift)}]0:{\footnotesize $t_1^l$}}] (t31) at (7.5,-5){};
		\node[protovertex,label = {[shift={(\xshift,\yshift)}]180:\footnotesize $t_2^l$}] (t32) at (7,-5.83){};
		\node[protovertex,label = {[shift={(-\xshift,\yshift)}]0:\footnotesize $t_3^l$}] (t33) at (8,-5.83){};

		\node at (5.6,-5.42) {\Huge $\dots$};

		\foreach \w in {1,2,3}{
		\draw (t\w1) edge[->] (t\w2);
		\draw (t\w2) edge[->] (t\w3);
		\draw (t\w3) edge[->] (t\w1);
		\draw (t\w1) edge[->] (si1);
		\draw (t\w1) edge[->] (sj1);
		\draw (t\w1) edge[->] (sk1);
		}

		\draw [decorate,decoration={brace,amplitude=10pt,mirror},yshift=0pt]
		(.5,-6) -- (8.5,-6) node [black,midway,below,yshift = -.3cm] {\footnotesize 
		$l = |S|-\frac {|R|}3$ many};

	\end{tikzpicture}
		\caption[Hardness of Existence Convergent Sequences]{Schematic of the simple FHG of the hardness construction in Theorem~\ref{thm:hardness-existencepath-simpleFHG}. Bidirected edges indicate a mutual utility of $1$. \label{fig:hardness_existSimpleFHG}}
	\end{figure}
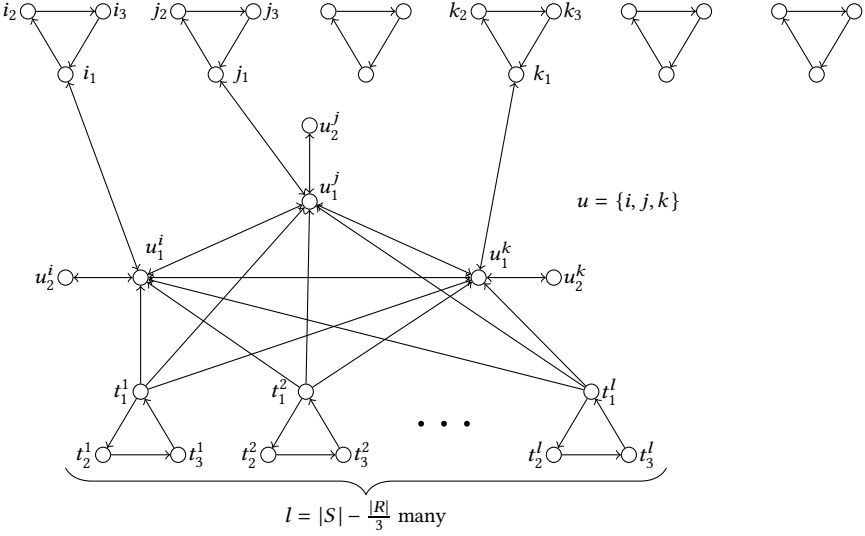

	Assume first that there exists a $3$-cover of $R$ through sets in $S$ and let $S'\subseteq S$ be a set of $3$-element sets partitioning $R$. 
	We will start by defining a partition that we can reach from the \singleton. To this end, we consider two functions the help us to define coalitions. Let $\sigma: \{1,\dots, l\} \to S\setminus S'$ be a bijection and let $\tau: R\to S'$ be the function defined by $\tau(r) = s$ for the unique $s\in S'$ with $r\in s$, i.e., the function that maps an element of $R$ to its partition class.
	By this, we can consider agents $\tau(r)_1^r$ and $\sigma(w)_1^i$ which associate agents in $r$ or elements in $[l]$ with specific agents representing sets in $S'$ and $S\setminus S'$.
	
	We define the partition of agents $\partition = \{\{r_2,r_3\},\{r_1,\tau(r)_1^r\}\colon r\in R\}\cup\{\{t_1^w\}\cup\{\sigma(w)_1^i\colon i\in \sigma(w)\}\colon w= 1,\dots, l\}\cup\{\{s_2^i\}\colon s\in S\}\cup \{\{t_2^w,t_3^w\}\colon w = 1,\dots,l\}$.

	Note that $\partition$ is IS. Let $i\in R$ and $w\in [l]$. Agents of the type $i_2$ or $t_2^w$ are in their best coalitions. Agents of the type $i_3,t_3^w$, or $s_2^i$ could only obtain positive utility by joining a coalition of which at least one agent would get worse if they joined. Agents of the type $i_1$ or $t_1^w$ cannot join another coalition that gives them positive utility because this would be blocked by an agent in that coalition. In particular, $i_1$ cannot join a coalition $\{t_1^j,t_1^k,t_1^v,\sigma^{-1}(t)\}$ for $t\in S\setminus S'$ with $i\in t$, because $\sigma^{-1}(t)$ blocks this. Similarly, $t_1^w$ cannot join a coalition $\{s_1^i,i_1\}$ for $i\in R$ or a coalition $\{t_1^x\}\cup\{\sigma(x)_1^i\colon i\in \sigma(x)\}$ for $x\neq w$, because this is blocked by $i_1$ and $t_1^x$, respectively. Finally, agents of the type $s_1^i$ obtain utility $1/2$ and cannot join $s_2^i$. Any other deviation to a coalition that gives them positive utility is blocked. Hence, $\partition$ is an IS partition of agents.

	Note that $\partition$ can be obtained by IS deviations from the \singleton by forming each of the coalitions in $\partition$. In particular, coalitions of the type $\{t_1^w\}\cup\{\sigma(w)_1^i\colon i\in \sigma(w)\}$ are formed by letting $t_1^w$ join $\sigma(w)_1^i$ for an arbitrary $i\in \sigma(w)$ and then the two $\sigma(w)_1^j$ for $j\in \sigma(w)\setminus \{i\}$ join one after another. This shows that we find a converging sequence if $(R,S)$ is a Yes-instance.

	Conversely, assume that there exists an IS partition $\partition$ of the agents that can be reached by IS deviations starting from the \singleton. We denote the sequence of partitions by $\partition^0,\dots, \partition^L$ for some integer $l$, where $\partition^0 = \{\{v\}\colon v\in V\}$ is the \singleton, $\partition^L = \partition$, and partition $\partition^{p+1}$ can be reached from partition $\partition^p$ by an IS deviation of agent $z^p$ for $0\le p\le l-1$.

	We start with a technical invariant of the IS dynamics that turns out to be very useful in determining the structure of the coalitions that agents of the type $r_1$ and $t_1^w$ eventually will be part of.

To formulate the claim, denote $S_1 = \{s_1^i\colon i\in s \textnormal{ for } s\in S\}$, $V^r = \{r_1,r_2,r_3\}$ for $r\in R$, and $\mathcal N = \{r_1 \colon r\in R\}\cup \{t_1^w\colon w= 1,\dots, l\}\cup\{s_2^i\colon i\in s \textnormal{ for } s\in S\}$. The set $\mathcal N$ contains precisely the agents that have a directed edge to or from an agent in $S_1$, i.e. the outgoing and incoming neighbors of agents in $S_1$. We simultaneously pose the following claims for $0\le p\le L$:

\begin{itemize}
	\item $\partition^p(r_3)\subseteq V^r$ for $r\in R$,
	\item $\partition^p(r_2)\subseteq V^r$ or $\partition^p(r_2) \subseteq \{r_1,r_2\}\cup\{s_1^r\colon s\in S, r\in s\}$ for $r\in R$,
	\item $\partition^p(t_v^w)\subseteq T^w$ for $v = 1,2, w = 1,\dots, l$,
	\item $V^r, T^w \notin \partition^p$ for $r\in R$ and $w = 1,\dots, l$,
	\item $\partition^p(s_2^i)\subseteq \{s_1^i,s_2^i\}$ for $s\in S, i\in s$,
	\item $\partition^p(a)\cap \mathcal N = \{a\}$, for $a\in \mathcal N$, and
	\item $\partition^p(a)\cap S_1\neq \emptyset$ implies $\util_a(\partition^p)>0$, for $a\in\mathcal N$.
\end{itemize}

The claim is initially true for the \singleton $\partition^0$. Assume that it holds after iteration $p$ for $0\le p\le L-1$. Consider the agent $z^p$ that performs the IS deviation to reach $\partition^{p+1}$. 
If $z^p\notin S_1\cup \mathcal N$, the claim holds for $p+1$ because these agents can only join the coalition with agents in their $3$-cycle and if they want to join the coalition of an agent in $\mathcal N$, this agent will block it if she already forms a coalition with an agent in $S^1$. 

If $z^p\in \mathcal N$, she will only deviate if she receives positive utility afterwards. The claim is true by induction if this positive utility comes from an agent outside $S_1$. Otherwise, she joins the coalition of $x\in S_1$. Then, $\partition^p(x)\cap \mathcal N =\emptyset$, because every agent $y\in \partition^p(y)\cap \mathcal N$ would block the inclusion of agent $z^p$ (by the final claim). In addition, since $z^p$ is the deviating agent, she will receive positive utility after this deviation. Hence, all claims hold. 

Finally, if $z^p\in S_1$, she joins an agent in $\mathcal N$ (otherwise she would not receive positive utility in $\partition^{p+1}$). If she joins an agent of type $s_2^i$, the claim follows because $\{s_2^i\}\in\partition^p$ by induction. If she joins an agent of type $i_1$ where $i\in s$, then $i_3\notin \partition^p(i^1)$ (this agent would block the deviation). 
Hence, the claim for the agent $i_2$ follows by induction. In addition, the claim for the agent $i_1$ follows because no agent from $\mathcal N$ joins and she receives positive utility through $s_1^i$ afterwards. Other IS deviations for the agents in $S_1$ are not possible. Together, the claims are established. In particular, they all hold for the IS partition $\partition$.

	We apply the claims to show that for every $w\in \{1,\dots, l\}$, there exists a $s\in S$ and $i\in s$ with $s_1^i\in \partition(t_1^w)$. Otherwise, $\partition(t_v^w)\subseteq T^w$ for $v = 1,2,3$ and $T^w\notin \partition$. Hence, $\partition$ is not IS.

	Now, fix $w\in \{1,\dots, l\}$ and let $s\in S$ and $i\in s$ with $s_1^i\in\partition(t_1^w)$. We claim that $\partition(t_1^w) = \{t_1^w\}\cup \{s_1^j\colon j\in s\}$. By the claims, $\partition(t_1^w)\subseteq \{t_1^w\}\cup S_1$. Under this condition, $\util_{s_1^u}(\partition) \le \frac 12$ and $\util_{s_1^u}(\partition) = \frac 12$ only if $\partition(t_1^w) = \{t_1^w\}\cup \{s_1^j\colon j\in s\}$. Note that $\{s_2^i\}\in \mathcal \partition$. Hence, $\util_{s_1^u}(\partition) \ge \frac 12$ since otherwise $\partition$ is not IS. Hence the claim follows.

	Define $S' = S\setminus \{s\in S\colon t_1^w\in \partition(s_1^i)\textnormal{ for } i\in s\}$. The coalitions of type $\{t_1^w\}\cup \{s_1^j\colon j\in s\}$ imply that $|S'|= |S|- (|S|-|R|/3) = |R|/3$.

	By the above claims, for every $r\in R$, there exists $s\in S$ with $r\in s$ and $s_1^r\in \partition(r_1)$. In particular, $s\in S'$. Hence, $\bigcup_{s\in S'}s = R$ and since $|S'|= |R|/3$ and $|s| =3$ for all $s\in S'$, the sets in $S'$ must be disjoint. Hence, $(R,S)$ is a Yes-instance.
\end{proof}

\section{Dichotomous Hedonic Games}

\existDHG*

We prove the two hardness results by providing separate reductions for each problem in the next two lemmas.

\begin{lemma}\label{lem:hard-exist-DHG}
\existpb{IS}{DHG} is \np-hard even when starting from the \singleton.
\end{lemma}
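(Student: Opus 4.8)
The plan is to reduce from an \np-complete problem---\textsc{Exact Cover by 3-Sets} would be natural here, to keep consistency with the other DHG-style reductions, though in fact the DHG setting is so flexible that \textsc{3-SAT} or \textsc{Vertex Cover} would also work. The central idea, as hinted by the remark after Proposition~\ref{prop:noconv-dicho} and by the ``prohibitive subconfiguration'' strategy announced in the preliminaries, is to use the three-agent gadget from Proposition~\ref{prop:noconv-dicho}: a triple of agents with cyclically shifted approvals $\{1,2\}\succ_1\{1,3\}\succ_3\{2,3\}\succ_2\{1,2\}$ and everything else disapproved. From the singleton partition this gadget is \emph{forced} into the cycle shown there and can never settle into an IS partition, \emph{unless} one of its three agents is lured away by being able to join some external coalition that it approves. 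So I would attach, to each such gadget, an external ``anchor'' structure whose availability corresponds to a choice in the combinatorial instance, and design the dichotomous preferences so that every gadget is neutralized precisely when the instance has a solution.

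Concretely, first I would set up one copy of the cyclic triple for each element $r\in R$ (or for each clause, depending on the source problem), together with, for each set $s\in S$, a small coalition-of-agents gadget $G_s$ that the three element-agents of $s$ each approve (and that approves them). The key is a counting argument analogous to Lemma~\ref{lem:ISnpinasymFHG}: there are more sets than needed to cover $R$, so only an exact-cover's worth of the $G_s$ can simultaneously ``absorb'' one agent from each element-triple. I would add dummy or ``slot'' agents inside each $G_s$ so that $G_s$ can host the three element-agents of $s$ (reaching a mutually approved coalition) but cannot host element-agents from two different sets at once, and so that if $G_s$ is not used to absorb a full triple, it collapses back into singletons. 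The reaching-of-IS statement then becomes: starting from the singleton partition, there is a sequence of IS deviations ending in an IS partition iff one can choose a subfamily $S'\subseteq S$ covering every $r\in R$ exactly once, i.e.\ iff the source instance is a Yes-instance.

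The forward direction is routine: given an exact cover $S'$, walk each element-triple's agent into its chosen $G_s$, complete the coalitions $G_s$ for $s\in S'$, and leave everything else in singletons; one checks each step is an IS deviation and the final partition is IS because every triple has been broken, every used $G_s$ is at a mutually approved configuration, and every other agent is a singleton it cannot improve upon. The reverse direction is where the real work lies: I must argue that in \emph{any} IS partition reachable from singletons, each element-triple $\{r_1,r_2,r_3\}$ must have been broken (otherwise the forced cycle of Proposition~\ref{prop:noconv-dicho} would have trapped those three agents forever), that the only way to break it is for one of them to move into some $G_s$ with $r\in s$, that the $G_s$ structures can host at most a full coordinated triple and do so in a ``one set $\Rightarrow$ exactly that $G_s$ is used'' fashion, and finally that the number of $G_s$ actually used equals $|R|/3$, forcing the used sets to form an exact cover. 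The main obstacle is exactly this last chain: making the dichotomous approvals tight enough that no ``parasitic'' IS partition exists---one where, say, an element-agent parks in a $G_s$ with $r\notin s$, or two triples share a $G_s$, or a gadget reaches an unintended stable configuration. I expect to control this by (i) making all cross-gadget approvals asymmetric or carefully one-directional via dummy ``blocker'' agents who approve only the intended coalitions, so any unintended merge is vetoed, and (ii) sizing the slot agents in each $G_s$ so that ``partially filled'' $G_s$ are never individually stable (some slot agent or element-agent always wants out). Once those invariants are nailed down, the equivalence follows and \np-hardness with the singleton initial partition is established; the \conp-hardness of \convergpb{IS}{DHG} will be proved by a companion reduction (the next lemma) in the same spirit, encoding satisfiability into the \emph{possibility} of an infinite cyclic sequence rather than the impossibility of reaching stability.
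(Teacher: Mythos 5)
Your central idea---reusing the three-agent cyclic gadget of Proposition~\ref{prop:noconv-dicho} as a prohibitive subconfiguration that, starting from singletons, cycles forever unless one of its agents is lured into an external approved coalition---is exactly the engine of the paper's proof. The paper, however, reduces from (3,B2)-SAT rather than \textsc{Exact Cover by 3-Sets}: it attaches one such gadget to every clause-agent $k_j$ and one to every variable-agent $v_i$, defuses a clause gadget by letting a single literal-agent of $C_j$ pair with $k_j$, defuses a variable gadget by letting the two occurrences of one literal of $x_i$ form a pair that $v_i$ then joins, and obtains consistency from the fact that a literal-agent cannot simultaneously sit with its clause-agent and with its twin.

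The concrete instantiation you propose has a gap that is specific to dichotomous preferences and that the paper's design is built to avoid. Your mechanism (ii) asks that a partially filled set-gadget $G_s$ never be individually stable because ``some slot agent or element-agent always wants out.'' In a DHG an agent at utility $0$ deviates only if it can strictly improve, and the only always-available target is its own singleton; if the agent disapproves that singleton it will not leave, and if it approves it, it started the dynamics at utility $1$ and would never have joined $G_s$ in the first place. Symmetrically, you cannot assemble a large mutually approved coalition one agent at a time if the intermediate coalitions are disapproved by the joiner (no strict improvement), and if they are approved you risk exactly the parasitic half-filled stable states you are worried about---the counting argument in the style of Lemma~\ref{lem:ISnpinasymFHG} then collapses, because that argument relies on fractional utilities to destabilize partial configurations. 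The paper sidesteps all of this by making every defusing coalition a pair or a triple built from an approved pair, so no ``partially filled'' intermediate state needs to be destabilized; if you want to keep your X3C framing you would have to redesign $G_s$ along those lines (e.g.\ one absorbing pair per element of $s$, with an auxiliary consistency gadget playing the role of the paper's variable gadgets) before the reverse direction goes through.
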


\begin{proof}
Let us perform a reduction from (3,B2)-SAT~\citep{BKS03a}. 
In an instance of (3,B2)-SAT, we are given a CNF propositional formula $\varphi$ where every clause $C_j$, for $1\leq  j \leq m$, contains exactly three literals and every variable $x_i$, for $1\leq i \leq p$, appears exactly twice as a positive literal and twice as a negative literal. 
From such an instance, we construct an instance of a dichotomous hedonic game with initial partition as follows.

For each clause $C_j$, for $1\leq j\leq m$, we create a clause-agent $k_j$ and agents $k_j^2$ and $k_j^3$.
For each variable $x_i$, for $1\leq i\leq p$, we create a variable-agent $v_i$ and agents $v_i^2$ and $v_i^3$.
The agents $k_j^2$ and $k_j^3$ (or $v_i^2$ and $v_i^3$) are used to form a gadget involving clause-agent $k_j$ (or variable-agent $v_i$) to reproduce the counterexample provided in the proof of Proposition~\ref{prop:noconv-dicho}.
For each $\ell^{\text{th}}$ occurrence ($\ell\in\{1,2\}$) of a positive literal $x_i$ (or negative literal $\overline{x}_i$), we create a literal-agent $y_i^\ell$ (or $\overline{y}_i^\ell$).
The initial partition $\spartition$ is the \singleton, i.e., every agent is initially alone.
The dichotomous preferences of the agents are described in Table~\ref{tab:pref-reduc-exist-DHG}.

\begin{table}
\caption{Coalitions approved by the agents in the reduced instance of Lemma~\ref{lem:hard-exist-DHG}, for every $1\leq j\leq m$, $1\leq i\leq p$ and $\ell\in\{1,2\}$. Notation $lit_j^r$ stands for the literal-agent associated with the $r^\text{th}$ literal of clause $C_j$ ($1\leq r\leq 3$) and $cl(x_i^\ell)$ (or $cl(\overline{x}_i^\ell)$) denotes the index of the clause to which literal $x_i^\ell$ (or $\overline{x}_i^\ell$) belongs. 
All the coalitions that are not mentioned are disapproved by the agents.}
\label{tab:pref-reduc-exist-DHG}
\begin{center}
\begin{tabular}{*{2}{c}}
\toprule
Agents & Approved coalitions \\
\midrule
$k_j$ & $\{k_j,lit_j^1\}$, $\{k_j,lit_j^2\}$, $\{k_j,lit_j^3\}$, $\{k_j,k_j^2\}$ \\
$v_i$ & $\{v_i,y_i^1,y_i^2\}$, $\{v_i,\overline{y}_i^1,\overline{y}_i^2\}$, $\{v_i,v_i^2\}$\\
$y_i^\ell$ & $\{y_i^\ell,y_i^{3-\ell}\}$, $\{y_i^\ell,y_i^{3-\ell},v_i\}$, $\{y_i^\ell,k_{cl(x_i^\ell)}\}$ \\
$\overline{y}_i^\ell$ & $\{\overline{y}_i^\ell,\overline{y}_i^{3-\ell}\}$, $\{\overline{y}_i^\ell,\overline{y}_i^{3-\ell},v_i\}$, $\{\overline{y}_i^\ell,k_{cl(\overline{x}_i^\ell)}\}$ \\
$k_j^2$ & $\{k_j^2,k_j^3\}$ \\
$k_j^3$ & $\{k_j,k_j^3\}$ \\
$v_i^2$ & $\{v_i^2,v_i^3\}$ \\
$v_i^3$ & $\{v_i,v_i^3\}$\\
\bottomrule
\end{tabular}
\end{center}
\end{table}

We claim that there exists a sequence of IS deviations ending in an IS partition iff formula $\varphi$ is satisfiable.

Suppose first that there exists a truth assignment of the variables $\phi$ such that formula $\varphi$ is satisfiable.
Let us denote by $\ell_j$ a chosen literal-agent associated with an occurrence of a literal true in $\phi$ which belongs to clause $C_j$.
Since all the clauses of $\varphi$ are satisfied by $\phi$, there exists such a literal-agent $\ell_j$ for each clause $C_j$.
Now let us denote by $z_i^1$ and $z_i^2$ the literal-agents associated with the two occurrences of the literal of variable $x_i$ which is false in $\phi$.
Since $\phi$ is a truth assignment of the variables that satisfies all the clauses of formula $\varphi$, it holds that $\bigcup_{1\leq j\leq m}\{\ell_j\} \cap \bigcup_{1\leq i\leq n}\{z_i^1,z_i^2\}=\emptyset$.
Let us consider the following sequence of IS deviations starting from the \singleton where every agent has utility 0:
\begin{itemize}
\item For every $1\leq j\leq m$, literal-agent $\ell_j$ joins clause-agent $k_j$, which makes both agents happier since they now belong to an approved coalition;
\item For every $1\leq i\leq n$, literal-agent $z_i^1$ joins literal-agent $z_i^2$, which makes both agents happier since they now belong to an approved coalition (they correspond to two occurrences of the same literal), and then variable-agent $v_i$ joins them, which makes $v_i$ happier without deteriorating the satisfaction of agents $z_i^1$ and $z_i^2$;
\item For every two agents $y_i^1$ and $y_i^2$ (or $\overline{y}_i^1$ and $\overline{y}_i^2$) who were not involved in the previous deviations (i.e., literal $x_i$ (or $\overline{x_i}$) is true in $\phi$ but the two occurrences of this literal have not been used for satisfiability of formula $\varphi$), literal-agent $y_i^1$ joins literal-agent $y_i^2$, which makes both agents happier since they now belong to an approved coalition;
\item For every $1\leq j\leq m$, agent $k_j^2$ joins agent $k_j^3$, which makes agent $k_j^2$ happier and does not deteriorate the satisfaction of agent $k_j^3$ who still belongs to a disapproved coalition;
\item For every $1\leq i\leq n$, agent $v_i^2$ joins agent $v_i^3$, which makes agent $v_i^2$ happier and does not deteriorate the satisfaction of agent $v_i^3$ who still belongs to a disapproved coalition.
\end{itemize}
We claim that the resulting partition is IS. 
Observe that the only dissatisfied agents (who are the only ones who would have an incentive to still perform an IS deviation) are the literal-agents who remained alone, agents $k_j^3$ for every $1\leq j\leq m$ and agents $v_i^3$ for every $1\leq i\leq n$. 
The only better coalition for agent $k_j^3$ is the one she would form with only clause-agent $k_j$. 
However, there is no clause-agent $k_j$ still alone since all the clauses are satisfied by truth assignment $\phi$. 
The only better coalition for agent $v_i^3$ is the one she would form with only variable-agent $v_i$. 
However, there is no variable-agent $v_i$ still alone since $\phi$ is a truth assignment of all variables.
For remaining literal-agents, they must correspond to a true literal in $\phi$ for which the literal-agent associated with the other occurrence of the literal already forms a pair with a clause-agent. 
Therefore, they cannot join this other literal-agent.
Moreover, they cannot join their associated clause-agent because she is not alone anymore.
Hence, there is no IS deviation from this partition, which is then IS.

Suppose now that there does not exist a truth assignment of the variables that satisfies all the clauses of formula $\varphi$.
Suppose that a clause-agent $k_j$ cannot form a coalition with one of the literal-agents associated with the literals of her clause. 
Then, the only possible approved coalition for agent $k_j$ is $\{k_j,k_j^2\}$.
This implies that there will be a cycle of IS deviations among the agents $k_j$, $k_j^2$ and $k_j^3$, as described in the proof of Proposition~\ref{prop:noconv-dicho}, by considering agents $k_j$, $k_j^2$ and $k_j^3$ as agents $1$, $2$, and $3$, respectively, from the counterexample.
Now suppose that a variable-agent $v_i$ cannot form a coalition with either $y_i^1$ and $y_i^2$, or $\overline{y}_i^1$ and $\overline{y}_i^2$. 
Then, the only possible approved coalition for agent $v_i$ is $\{v_i,v_i^2\}$.
Therefore, there will be a cycle among the agents $v_i$, $v_i^2$ and $v_i^3$, as described in the proof of Proposition~\ref{prop:noconv-dicho}, by considering agents $v_i$, $v_i^2$ and $v_i^3$ as agents $1$, $2$, and $3$, respectively, from the counterexample.
Therefore, since there is no possibility to find a truth assignment of the variables which satisfies all the clauses, we cannot simultaneously have that each clause-agent $k_j$ forms a coalition with one of the literal-agents associated with the literals of her clause, and that each variable-agent $v_i$ forms a coalition with either $y_i^1$ and $y_i^2$, or $\overline{y}_i^1$ and $\overline{y}_i^2$. 
Hence, we necessarily get a cycle in a sequence of IS deviations starting from the \singleton.
\end{proof}

\begin{lemma}
\convergpb{IS}{DHG} is \conp-hard.
\end{lemma}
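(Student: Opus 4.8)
The plan is to prove the equivalent statement that the complement problem is \np-hard: given a DHG and an initial partition, decide whether there exists a cycle of IS deviations. This immediately yields \conp-hardness of \convergpb{IS}{DHG}. I would reduce from (3,B2)-SAT~\citep{BKS03a}, designing the instance so that a cycle of IS deviations exists if and only if the formula $\varphi$ is satisfiable. This mirrors the structure of Lemma~\ref{lem:hard-converg-AHG}, but exploits the fact that dichotomous preferences let me prescribe directly which coalitions each agent approves, removing the need for the arithmetic padding over coalition sizes used in the anonymous case.

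The reason the free-standing three-agent engine of Proposition~\ref{prop:noconv-dicho} cannot be reused verbatim is that it cycles \emph{unconditionally} from the \singleton, so it would make convergence fail on every instance. Instead I would build a single global ``chain cycle'' whose closure requires a consistent satisfying assignment. Concretely, I would pre-form fixed coalitions $K_1,\dots,K_m$ (one per clause), a connector coalition $K_{m+1}$ with a special agent $t$, and variable coalitions $Y_i^1,Y_i^2,\overline{Y_i^1},\overline{Y_i^2}$ (per literal occurrence), arranged in a cyclic order $K_1 < \dots < K_{m+1} < Y_1^1 < \dots < \overline{Y_p^2} < K_1$, placing the literal-agents $y_i^\ell,\overline{y}_i^\ell$ as singletons initially. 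The dichotomous approvals of the fixed coalitions' members are chosen so that a coalition $K$ approves being enlarged only by one or two of a designated pair of literal-agents, while each deviating literal-agent approves exactly the two adjacent ``stations'' of the cyclic order. The key mechanism is the dichotomous analogue of ``being left'': an agent $A$ sitting in an approved station $C$ is forced to move precisely when its designated predecessor $B$ departs, because I set $A$ to approve $C$ but to disapprove $C\setminus\{B\}$. Thus a departure propagates forward along the cycle.

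For the forward direction (satisfiability $\Rightarrow$ cycle), given a satisfying assignment $\phi$ I would exhibit the cycle explicitly: for each clause pick one true-literal-agent to occupy the clause station, for each variable use the two occurrences of the \emph{false} literal to occupy the corresponding variable stations, include $t$, and let these agents oscillate along the chain exactly as in the anonymous construction (an analogue of Figure~\ref{fig:cycle-reduc-conv-AHG}), returning to the starting configuration. For the backward direction (cycle $\Rightarrow$ satisfiability) I would argue, as in Lemma~\ref{lem:hard-converg-AHG}, that no fixed-coalition member can ever deviate, that a literal-agent once it has left its singleton can only oscillate between its two designated stations, and that each station-occupant can be ``left'' only by its unique designated predecessor; hence any cycle must thread through the entire chain. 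This forces the presence of one true-literal-agent per clause and both occurrences of a single (false) literal per variable, and the required disjointness of these two groups of agents yields a globally consistent assignment satisfying every clause.

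The main obstacle is the backward direction: I must guarantee that the gadget admits \emph{no spurious cycles} — in particular no short local cycle in the spirit of Proposition~\ref{prop:noconv-dicho} among a handful of agents, and no cycle that threads the chain only partially. The delicate part is tuning the dichotomous approval sets so that every agent's only approved non-trivial coalitions are the intended stations (so that the only way for a deviation to be simultaneously improving and consented-to is to follow the designated chain), and then verifying that the ``leaving hurts the successor'' relation forms a single simple cycle through the clause-stations and variable-stations. Once this structural rigidity is established, extracting the satisfying assignment and checking the equivalence is routine.
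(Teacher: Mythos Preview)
Your plan is viable but takes a more elaborate route than the paper. You port the AHG chain-cycle of Lemma~\ref{lem:hard-converg-AHG}, with clause stations $K_1,\dots,K_m$, a connector $K_{m+1}$, and four variable stations per variable. The paper instead reduces from general \textsc{Satisfiability} with a much leaner gadget: just clause pairs $\{k_j,k'_j\}$ and all literal-agents of each variable pre-grouped together in the initial partition; there are no variable stations, no connector, and no agent $t$. The cycle runs only through the clause coalitions ($k_j$ moves to the next clause coalition because it contains $k'_{j+1}$ and a literal-agent of $C_{j+1}$, then $k'_j$ follows $k_j$, then the abandoned literal-agent follows $k'_j$), and assignment consistency is enforced at the source rather than along the chain: a literal-agent approves its initial variable-group precisely when at least one opposite-sign literal-agent is missing, so once a positive-literal-agent departs, every negative one of that variable is locked in (and vice versa). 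One imprecision in your sketch: you write that each literal-agent ``approves exactly the two adjacent stations,'' but since a literal-agent must be eligible for \emph{either} a clause role (oscillating between $K_{cl}$ and $K_{cl+1}$) \emph{or} a variable role (oscillating between two variable stations) depending on the assignment, its approval set must span the relevant coalitions at up to four base stations, and your backward argument must then rule out role-mixing exactly as in Lemma~\ref{lem:hard-converg-AHG}. Your approach buys mechanical reuse of that argument; the paper's buys a shorter construction and a cleaner backward direction by exploiting DHG expressiveness to enforce the truth-assignment constraint directly at the initial variable-groups rather than via extra stations in the cycle.
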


\begin{proof}
For this purpose, we prove the \np-hardness of the complement problem, which asks whether there exists a cycle of IS deviations. 
Let us perform a reduction from the \textsc{Satisfiability} problem which asks the satisfiability of a CNF propositional formula $\varphi$ given by a set of clauses $C_1,\dots,C_m$ over variables $x_1,\dots,x_p$.
We construct an instance of a dichotomous hedonic game with initial partition as follows.

For each clause $C_j$, for $1\leq j \leq m$, we create two clause-agents $k_j$ and $k'_j$. 
Let us denote by $p_i$ (or $n_i$) the number of positive (or negative) literals of variable $x_i$ in formula $\varphi$.
For each $t^{\text{th}}$ occurrence of literal $x_i$ (or $\overline{x_i}$) of variable $x_i$, we create a literal-agent $y_i^t$ (or $\overline{y}_i^t$).
The initial partition is given by $\partition^0:=\{\{x_i^1,\dots,x_i^{p_i},\overline{x}_i^1,\dots,\overline{x}_i^{n_i}\}_{1\leq i\leq p},\{k_j,k'_j\}_{1\leq j\leq m}\}$.   
The dichotomous preferences of the agents over the coalitions to which they belong are summarized below. 
\begin{itemize}
\item Each literal-agent $y_i^t$ (or $\overline{y}_i^t$), for $1\leq i\leq p$ and $1\leq t\leq p_i$ (or $1\leq t\leq n_i$), gives utility 1 to:
\begin{itemize}
\item the coalitions to which agent $k'_j$ belongs, where $k'_j$ refers to the clause $C_j$ to which the $t^\text{th}$ occurrence of literal $x_i$ (or $\overline{x}_i$) belongs, and
\item all coalitions only composed of literal-agents associated with variable $x_i$ where at least one literal-agent associated with $\overline{x_i}$ (or $x_i$) is missing.
\end{itemize}
All the other coalitions are valued 0.
\item Each clause-agent $k_j$, for $1\leq j\leq m$, only gives utility 1 to the coalitions which contain clause-agent $k'_{j+1}$ and one literal-agent associated with a literal belonging to clause $C_{j+1}$ (where $m+1$ refers to $1$).
All the other coalitions are valued 0.
\item Each clause-agent $k'_j$, for $1\leq j\leq m$, only gives utility 1 to the coalitions which contain agent $k_j$.
All the other coalitions are valued 0.
\end{itemize} 

We claim that there exists a cycle of IS deviations iff formula $\varphi$ is satisfiable.

Suppose first that formula $\varphi$ is satisfiable by a truth assignment of the variables denoted by $\phi$.
For each clause $C_j$, for $1\leq j\leq m$, we choose a literal-agent $y_i^t$ (or $\overline{y}_i^t$) such that the $t^{\text{th}}$ occurrence of literal $x_i$ (or $\overline{x}_i$) belongs to clause $C_j$ and literal $x_i$ (or $\overline{x}_i$) is true in $\phi$.
By satisfiability of formula $\varphi$, there always exists such a literal-agent.
Then, literal-agent $y_i^t$ (or $\overline{y}_i^t$) deviates from her coalition of literal-agents associated with variable $x_i$ to coalition $\{k_j,k'_j\}$.
This deviation is beneficial for the literal-agent because she values her new coalition with utility 1 since $k'_j$ belongs to it and her old coalition with utility 0 since no literal-agent associated with her opposite literal has left the coalition of literal-agents associated with variable $x_i$ (we have chosen only literal-agents associated with literals true in $\phi$).
Moreover, this deviation does not decrease the utility of the agents of the joined coalition: agent $k'_j$ still values the coalition with utility 1 since agent $k_j$ belongs to it and agent $k_j$ still values the coalition with utility 0.
Therefore, this deviation is an IS deviation.
After all these deviations, we reach a partition $\partition$ which contains the coalitions $\{k_j,k'_j,\ell_j\}$ for every $1\leq j\leq m$, where $\ell_j$ denotes a literal-agent associated with a literal true in $\phi$ which belongs to clause $C_j$.

We describe below the deviations that lead to come back to partition $\partition$.
\begin{enumerate}
\item For each $1\leq j\leq m$, by increasing order of the indices, clause-agent $k_j$ deviates to coalition $\{k_{j+1},k'_{j+1},\ell_{j+1}\}$ (where $m+1$ refers to $1$).
This deviation is beneficial for clause-agent $k_j$ since she deviates to a coalition containing $k'_{j+1}$ and a literal-agent associated with a literal belonging to clause $C_{j+1}$.
By design of the preferences, this deviation does not hurt the members of the joined coalition, 
therefore it is an IS deviation.  
However, when clause-agent $k_j$ has left her old coalition, this old coalition becomes either $\{k'_j,\ell_j\}$ if $j=1$ or $\{k'_j,\ell_j,k_{j-1}\}$ otherwise.
Therefore, this deviation hurts clause-agent $k'_j$ from the old coalition.
After all these deviations, we reach a partition which contains the coalitions $\{k_j,k'_{j+1},\ell_{j+1}\}$ for every $1\leq j\leq m$ (where $m+1$ refers to $1$).
\item For each $1\leq j\leq m$, by increasing order of the indices, clause-agent $k'_j$ deviates to coalition $\{k_j,k'_{j+1},\ell_{j+1}\}$ (where $m+1$ refers to $1$), in order to recover utility 1 by belonging to the same coalition as clause-agent $k_j$.
By design of the preferences, this deviation does not hurt the members of the joined coalition, therefore it is an IS deviation.
However, when clause-agent $k'_j$ has left her old coalition, this old coalition becomes either $\{k_{j-1},\ell_j\}$ if $j=1$ or $\{k_{j-1},k'_{j-1},\ell_j\}$ otherwise.
Therefore, this deviation hurts clause-agent $\ell_j$ from the old coalition.
After all these deviations, we reach a partition which contains the coalitions $\{k_j,k'_j,\ell_{j+1}\}$ for every $1\leq j\leq m$ (where $m+1$ refers to $1$).
\item For each $1\leq j\leq m$, by increasing order of the indices, literal-agent $\ell_j$ deviates to coalition $\{k_j,k'_j,\ell_{j+1}\}$, in order to recover utility~1 by belonging to the same coalition as clause-agent $k'_j$.
By design of the preferences, this deviation does not hurt the members of the joined coalition, therefore it is an IS deviation.
After all these deviations, we reach again partition $\partition$. 
\end{enumerate}
Hence, there is a cycle in the sequence of IS deviations.

Suppose now that there exists a cycle of IS deviations.
From $\partition^0$, no clause-agent has incentive to deviate: each clause-agent $k'_j$ already values her current coalition with utility 1 since agent $k_j$ belongs to it, and each clause-agent~$k_j$ values her current coalition with utility 0 but there is no coalition containing both agent $k'_{j+1}$ and a literal-agent associated with a literal belonging to clause $C_{j+1}$.
Therefore, some literal-agents must deviate and leave their initial coalition, that they value with utility 0 since no literal-agent has left it yet.
Observe that once a literal-agent associated with variable $x_i$ has left her initial coalition, no literal-agent associated with the opposite literal can leave the coalition because she values it with utility 1.
If a literal-agent deviates, this is for joining coalition $\{k_j,k'_j\}$ where clause $C_j$ refers to the clause where her associated literal occurrence appears.
After such a deviation which is an IS deviation because it does not decrease the utility of the members of the joined coalition, the only agents with incentive to deviate are clause-agents $k_j$ if a literal-agent $\ell_{j+1}$ has joined coalition $\{k_{j+1},k'_{j+1}\}$.
Suppose that there exists a clause coalition $\{k_j,k'_j\}$ such that no literal-agent has joined it.
Then, consider a clause index $j$ such that no literal-agent has joined coalition $\{k_{j+1},k'_{j+1}\}$ and a clause index $b$ such that for all clause coalitions $\{k_r,k'_r\}$, with $b\leq r\leq j$, a literal-agent $\ell_r$ has joined the coalition but this is not the case for coalition $\{k_{b-1},k'_{b-1}\}$ ($m+1$ refers to $1$, and $0$ to $m$).
Then, by progressive IS deviations, all agents belonging to clause coalitions with index between $b$ and $j$ will deviate for joining coalition $\{k_j,k'_j\}$.
Indeed, clause-agent $k_{b}$ will deviate to coalition $\{k_{b+1},k'_{b+1},\ell_{b+1}\}$, and then clause-agent $k'_{b}$ will follow her in this coalition, and then literal-agent $\ell_{b}$ will also follow $k'_{b}$ in this coalition.
But agent $k_{b+1}$ has incentive to do the same for coalition $\{k_{b+2},k'_{b+2},\ell_{b+2}\}$, which leads agents $k'_{b+2}$ and $\ell_{b+2}$ to follow her, as well as agents $k_{b}$, $k'_{b}$ and $\ell_{b}$.
This process of IS deviations then continues in the same way until all these agents group in coalition $\{k_j,k'_j,\ell_j\}$.
However, since clause-agent $k_j$ can never leave this coalition (there is no coalition containing both agent $k'_{j+1}$ and a literal-agent associated with a literal belonging to clause $C_{j+1}$), no other agent will leave this coalition neither.
We will therefore reach a stable state, a contradiction.
It follows that each clause coalition $\{k_j,k'_j\}$, for $1\leq j\leq m$, must be joined by a literal-agent $\ell_j$ associated with a literal belonging to clause $C_j$.
Therefore, by setting to true the literals associated with literal-agents who have joined clause coalitions (we have previously said that no two literal-agents associated with opposite literals can both leave their initial coalition), we get a truth assignment of the variables which satisfies all the clauses of formula~$\varphi$.
\end{proof}

\end{document}